\DeclareMathAlphabet{\pazocal}{OMS}{zplm}{m}{n}
\newcommand{\awcomplete}[1]{\ensuremath{\mbox{AW$[#1]$-complete}}}
\newcommand{\graphs}{\mathfrak G}
\newcommand{\labelgraphs}{\mathfrak G_{\it lb}}
\def\FOMCG{$p$-$\rm MC(FO,\graphs)$\xspace}
\def\FOMCGL{$p$-$\rm MC(FO,\labelgraphs)$\xspace}
\def\FOMCX{$p$-$\rm MC(FO,{\cal G})$\xspace}
\newcommand{\N}{\ensuremath{\mathbf{N}}\xspace} 
\newcommand{\Q}{\ensuremath{\mathbf{Q}}\xspace} 
\renewcommand{\cal}{\pazocal}
\def\problem#1Input:#2Parameter:#3Problem:#4\par{%
\smallskip
\noindent
\framebox{\parindent=0pt\vbox{\hsize=0.7\hsize
#1\par\smallskip
\halign{\hfil\quad\it##&\quad##\hfil\cr
Input:&#2\cr
Parameter:&#3\cr
Problem:&#4\cr
}}}\par\smallbreak}
\newcommand{\fummler}{tie\xspace}
\newcommand{\fummlers}{ties\xspace}
\newcommand{\Fummler}{Tie\xspace}
\newcommand{\Fummlers}{Ties\xspace}
\newcommand{\excess}{edge-excess\xspace}
\def\mom{\tilde d}
\def\scale{\alpha}
\def\momn{\mom_\scale(n)}
\def\hatdegree{\hat d_\scale(n)}
\def\t{{b}}
\def\y{{\mu}}
\newcommand{\partition}{$\t$-$r$-$\y$-partition\xspace}
\newcommand{\partitionable}{$\t$-$r$-$\y$-partitionable\xspace}
\newcommand{\Partitionable}{$\boldsymbol \t$-$\boldsymbol r$-$\boldsymbol \y$-Partitionable\xspace}
\newcommand{\nhoodpartition}{$\t$-$r$-$\y$-local-protrusion-partition\xspace}
\newcommand{\nhoodpartitions}{$\t$-$r$-$\y$-local-protrusion-partitions\xspace}
\newcommand{\nhoodpartitionable}{$\t$-$r$-$\y$-locally-protrusion-partitionable\xspace}
\newcommand{\Nhoodpartitionable}{$\boldsymbol \t$-$\boldsymbol r$-$\boldsymbol \y$-Locally-Protrusion-Partitionable\xspace}
\newcommand{\OOnhoodpartitionable}{$O(\y^{17}r^3\t)$-$r$-$O(\y)$-locally-protrusion-partitionable\xspace}
\newcommand{\OOnhoodpartition}{$O(\y^{17}r^3\t)$-$r$-$O(\y)$-local-protrusion-partition\xspace}
\newcommand{\dominated}{power-law-bounded\xspace}
\newcommand{\Dominated}{Power-Law-Bounded\xspace}
\newcommand{\domination}{power-law-boundedness\xspace}
\newcommand{\Domination}{Power-Law-Boundedness\xspace}
\newcommand{\Gn}{\cal G_n}
\newcommand{\Gnn}{(\cal G_n)_{n \in \N}}
\def\Nesetril{Ne\v{s}et\v{r}il\xspace}
\def\Erdos{Erd\H{o}s\xspace}
\def\Renyi{R\'{e}nyi\xspace}
\def\ErRe{\Erdos--\Renyi}
\def\ErReGrs{\ErRe graphs\xspace}
\def\eg{e.g\xperiod}
\def\aas{a.a.s\xperiod}
\newcommand{\topnab}{\mathop{\widetilde \triangledown}}
\def\topgrad_#1{\widetilde \nabla\!_{#1}}
\renewcommand{\P}{\Pr}
\newcommand{\E}{\operatorname{E}}
\newtheorem{counter}{counter}[section]
\newenvironment{customcol}[1]
  {\innercustomcol}
  {\endinnercustomcol}
\newenvironment{customthm}[1]
  {\innercustomthm}
  {\endinnercustomthm}
\newtheorem{proposition}[counter]{Proposition}
\newtheorem{lemma}[counter]{Lemma}
\newtheorem{theorem}[counter]{Theorem}
\newtheorem{corollary}[counter]{Corollary}
\theoremstyle{definition}
\newtheorem{definition}[counter]{Definition}
\newenvironment{customdef}[1]
  {\innercustomdef}
  {\endinnercustomdef}
\title{First-Order Model-Checking in\\Random Graphs and Complex Networks%
\thanks{A short version of this paper appeared in the Proceedings of
the 28th Annual European Symposium on Algorithms (ESA 2020).}}
\author{Jan Dreier, Philipp Kuinke, Peter Rossmanith \\
    \small Theoretical Computer Science, RWTH Aachen University \\
    \small \texttt{\string{dreier,kuinke,rossmani\string}@cs.rwth-aachen.de}
    }
\date{}
\begin{document}

\maketitle

\kern-10pt
\begin{abstract}
Complex networks are everywhere. 
They appear for example in the form of
biological networks, social networks, or computer networks
and have been studied extensively.
Efficient algorithms to solve problems on complex networks
play a central role in today's society. 
Algorithmic meta-theorems show that many problems can be solved efficiently.
Since logic is a powerful tool to model problems,
it has been used to obtain very general meta-theorems.
In this work, we consider all problems definable in first-order logic
and analyze which properties of complex networks allow them to be solved efficiently.

The mathematical tool to describe complex networks are random graph models.
We define a property of random graph models called $\scale$-\domination.
Roughly speaking, 
a random graph is $\scale$-\dominated if it does not
admit strong clustering and its degree sequence is
bounded by a power-law distribution with exponent
at least $\scale$ (i.e.\ the fraction of vertices with degree $k$ is
roughly $O(k^{-\scale})$).

We solve the first-order model-checking problem (parameterized by the length of the formula) in almost linear FPT time
on random graph models satisfying this property with $\scale \ge 3$.
This means in particular that one can solve 
every problem expressible in first-order logic
in almost linear expected time on these random graph models.
This includes for example preferential attachment graphs,
Chung--Lu graphs, configuration graphs, and sparse \ErRe graphs.
Our results match known hardness results
and generalize previous tractability results on this topic.

\end{abstract}


\section{Introduction}


Complex networks, as they occur in society, biology and technology, play a central role in our everyday lives.
Even though these networks occur in vastly different contexts, 
they are structured and evolve according to a common set of underlying principles.
Over the last two decades, with the emergence of the field of network science,
there has been an explosion in research to understand these fundamental laws.
%
One well observed property is the \emph{small-world phenomenon},
which means that distances between vertices are very small.
This has been verified for the internet
and many other networks~\cite{albert1999internet,milgram1967small}.
Furthermore, many real networks tend to be \emph{clustered}. 
They contain groups of vertices that are densely connected~\cite{schaeffer2007graph}.
If two vertices share a common neighbor, then there is a high chance
that there is also an edge between them.
A network can be considered clustered
if the ratio between the number of triangles and the number of paths with three
vertices is non-vanishing.
This is formalized by the clustering coefficient,
which is high for many networks~\cite{watts1998collective}.
A third important property is a \emph{heavy tailed degree distribution}.
While most vertices have a low number of connections, there are a few
hubs with a high degree.
Experiments show that the degrees follow for example a
power-law or log-normal distribution.
In a power-law distribution, the fraction of vertices with degree $k$ is proportional to
$k^{-\scale}$ (usually with $\scale$ between 2 and 3).
This behavior makes complex networks highly
inhomogeneous~\cite{prvzulj2007biological,mislove2007measurement,broido2018scale,clauset2009power}.\looseness-1

One important goal of theoretical computer science has always been to explore what kinds of inputs allow or forbid us to construct efficient algorithms.
In this context, \emph{algorithmic meta-theorems}~\cite{kreutzer2008algorithmic} are of particular interest.
They are usually theorems stating that
problems definable in a certain logic can be solved efficiently on 
graph classes that satisfy certain properties.
Logic is a powerful tool to model problems and therefore has been used to obtain
very general meta-theorems.
A well-known example is Courcelle's theorem~\cite{Cou90}, which
states that every problem expressible in counting monadic second-order logic
can be solved in linear time on graph classes with bounded treewidth.
It has been further generalized to graph classes with bounded cliquewidth~\cite{CourcelleMR2000}.
To obtain results for larger graph classes one has to consider weaker logics.
The languages of relational database systems are based on first-order logic.
In this logic, one is allowed to quantify over vertices
and to test equality and adjacency of vertices.
With $k$ existential quantifiers, one may ask for the existence of a fixed graph with $k$ vertices ($k$-subgraph isomorphism),
a problem relevant to motif-counting~\cite{milo2002network,countmotif}.
On the other hand,
connectivity properties cannot be expressed in first-order logic.
We define for every graph class $\cal G$
the parameterized first-order model-checking problem \FOMCX~\cite{grohe2008logic}.
\begin{center}
    \problem \FOMCX
Input: A graph $G\in\cal G$ and a first-order sentence $\varphi$
Parameter: The number of symbols in $\varphi$, denoted by $|\varphi|$
Problem: Does $\varphi$ hold on $G$ (i.e.\ $G \models \varphi$)?

\end{center}

The aim is to show for a given graph class $\cal G$ that \FOMCX is
\emph{fixed parameter tractable} (FPT),
i.e., can be decided in time $f(|\varphi|)n^{O(1)}$ for some function~$f$
(see for example \cite{cygan2015parameterized} for an introduction to fixed parameter tractability).
Since input graphs may be large, a linear dependence on $n$ is desirable.
If one is successful, then every problem expressible in first-order logic can be solved on $\cal G$ in linear time.

For the class of all graphs $\graphs$, \FOMCG is \awcomplete{*}~\cite{downey1996parameterized}
and therefore most likely not fpt.
Over time, tractability of \FOMCX has been shown for more and more sparse graph classes $\cal G$:
bounded vertex degree~\cite{seese1996linear}, forbidden
minors~\cite{flum2001fixed}, bounded local
treewidth~\cite{flum2002query}, and further generalizations~\cite{
dawar2007locally,dvorak2010deciding,DBLP:conf/pods/SchweikardtSV18}. 
Grohe, Kreutzer and Siebertz prove that \FOMCX 
can be solved in almost linear FPT time $f(|\varphi|,\varepsilon) n^{1 + \varepsilon}$ for all $\varepsilon > 0$
if $\cal G$ is a nowhere dense graph class~\cite{grohe2017deciding}.
On the other hand if $\cal G$ is a monotone somewhere dense graph class,
\FOMCX is $\rm AW[\ast]$-hard~\cite{grohe2017deciding}.
Nowhere dense graph classes were introduced by \Nesetril
and Ossona de Mendez as those graph classes where for every $r \in \N$
the size of all $r$-shallow clique minors of all graphs in the graph class is bounded by a function of $r$
(\Cref{sec:prelim:sparsity}).
A graph class is somewhere dense if it is not nowhere dense.
The tractability of the model-checking problem on monotone graph classes
is completely characterized with a dichotomy between nowhere dense and somewhere dense graph classes.
These very general results come at a cost:
Frick and Grohe showed that the dependence of the run time on $\varphi$ is
non-elementary~\cite{frick2004complexity}.
We want to transfer this rich algorithmic theory to complex networks.
But what is the right abstraction to describe complex networks?




Network scientists observed that
the chaotic and unordered structure of real networks can by captured using \emph{randomness}.
There is a vast body of research using random processes
to create graphs that mimic the fundamental properties of complex networks.
The most prominent ones are
the preferential attachment model~\cite{barabasi1999emergence,price1976general}, 
Chung--Lu model~\cite{chung2002average,chung2002connected}, 
configuration model~\cite{molloy1995critical,MR98},
Kleinberg model~\cite{kleinberg2000small,kleinberg2000navigation},
hyperbolic graph model~\cite{krioukov2010hyperbolic},
and random intersection graph model~\cite{karonski1999random,rybarczyk2011diameter}.
All these are random models.
It has been thoroughly analyzed how well they predict
various properties of complex networks~\cite{goldenberg2010survey}.

When it comes to algorithmic meta-theorems on random graph models
``even the most basic questions are wide open,'' as Grohe puts it~\cite{grohe2008logic}.
By analyzing which models of complex networks and which values of the model-parameters
allow for efficient algorithms, we aim to develop an
understanding how the different properties of complex networks control their
algorithmic tractability.

In this work we show for a wide range of models, including the well known preferential attachment model,
that one can solve the parameterized first-order model-checking problem in almost linear FPT time.
This means in particular that one can solve 
every problem expressible in first-order logic
efficiently on these models.
%
Our original goal was to obtain efficient algorithms only
for preferential attachment graphs,
but we found an abstraction that transfers these results to
many other random graph models.
Roughly speaking,
the following two criteria are sufficient for 
efficiently solving first-order definable problems
on a random graph model:
\begin{itemize}
    \item
    The model needs to be unclustered.
    In particular the expected number of triangles needs to be subpolynomial.
    \item
    For every $k$, the fraction of vertices with degree $k$ is roughly
    $O(k^{-3})$.
    In other words, the degree sequence needs to be bounded by a power-law distribution with exponent~$3$ or higher.
\end{itemize}
Models satisfying these properties include
sparse \ErRe graphs, preferential attachment graphs as well as certain Chung--Lu and configuration graphs.
On the other hand, the Kleinberg model, the hyperbolic random graph model,
or the random intersection graph model
do not satisfy these properties.
Our results generalize previous results~\cite{grohe2001generalized,StrucSpars}
and match known hardness results:
The model-checking problem has been proven to be hard on
power-law distributions with exponent smaller than
$3$~\cite{averagehardness}.
We therefore identify the threshold for tractability to be a power-law coefficient of~$3$.
It is also a big open question whether the model-checking problem can
also be solved on clustered random graph~models, especially since
real networks tend to be clustered.
Furthermore, significant engineering challenges need to be overcome
to make our algorithms applicable in practice.

\subsection{Average Case Complexity}
Average-case complexity analyzes the typical run time of algorithms on random
instances (see~\cite{bogdanov2006average} for a survey), based
on the idea that a worst-case analysis often is too pessimistic
as for many problems hard instances occur rarely in the real world. 
Since models of complex networks are probability distributions over graphs,
we analyze the run time of algorithms under average-case complexity.
However, there are multiple notions
and one needs to be careful which one to choose.


Assume a random graph model is asymptotically almost surely (\aas) nowhere dense, i.e.,
a random graph from the model with $n$ vertices
belongs with probability $1-\delta(n)$ to a nowhere dense graph class,
where $\lim_{n \to \infty}\delta(n) = 0$ (\Cref{sec:prelim:sparsity}).
Then the first-order model-checking problem can be efficiently solved with a probability converging to one~\cite{grohe2017deciding}.
However, with probability $\delta(n)$ the run time can be arbitrarily high
and the rate of convergence of $\delta(n)$ to zero can be arbitrarily slow.
These two missing bounds are undesirable from an algorithmic standpoint
and the field of average-case complexity has established a theory
on how the run time needs to be bounded with respect to the
fraction of inputs that lead to this run time.

This is formalized by the well-established notion of 
\emph{average polynomial run time}, introduced by Levin~\cite{levin1986average}. 
An algorithm has
average polynomial run time with respect to a random graph model if there is
an $\varepsilon > 0$ and a polynomial $p$ such that for every $n,t$ the
probability that the algorithm runs longer than $t$ steps on an input of size
$n$ is at most $p(n)/t^\varepsilon$.
This means there is a
polynomial trade-off between run time and fraction of inputs. 
This notion has been widely studied~\cite{bogdanov2006average,arora2009computational}
and is considered from a complexity theoretic standpoint the right notion
of polynomial run time on random inputs. 
It is closed under invoking polynomial subroutines.

In our work, however, we wish to explicitly distinguish linear time.
While Levin's complexity class is a good analogy to the class~P,
it is not suited to capture algorithms with average linear run time.
For this reason, we turn to the expected value of the run time,
a stronger notion than average polynomial time.
In fact, using Markov's inequality we see that if an algorithm has expected linear run time,
all previous measures of average tractability are also bounded.
Their relationship is as follows.
$$
\text{expected linear}
~~\Rightarrow~~
\text{expected polynomial}
~~\Rightarrow~~
\text{average polynomial}
~~\Rightarrow~~
\text{\aas polynomial}
$$



With this in mind we can present
our notion of algorithmic tractability.
A labeled graph is a graph where every vertex can have
(multiple) labels.
First-order formulas can have unary predicates for each type of label.
These predicates test whether a vertex has a label of a certain type.
We define $\graphs$ to be the class of all graphs,
and $\labelgraphs$ to be the class of all vertex-labeled graphs.
A function $L \colon \graphs \to \labelgraphs$ is an
\emph{$l$-labeling function} for $l \in \N$ if for every $G \in \graphs$,
$L(G)$ is a labeling of $G$ with up to $l$ classes of labels
(see \Cref{sec:prelim} for details).
Furthermore, a \emph{random graph model} is a sequence
${\cal G} = (\cal G_n)_{n \in \N}$,
where $\Gn$ is a probability distribution over
unlabeled simple graphs with $n$ vertices.


\begin{customdef}{\ref{def:expectedFPT}}
    We say \FOMCGL can be decided
    on a random graph model $\Gnn$ in \emph{expected time} $f(|\varphi|,n)$ if
    there exists a deterministic algorithm $\cal A$ which decides \FOMCGL on input $G$,
    $\varphi$ in time $t_{\cal A}(G,\varphi)$ and
    if for all $n \in \N$, all first-order sentences $\varphi$ and all $|\varphi|$-labeling functions $L$,
    $
    \E_{G \sim \Gn}\bigl[ t_{\cal A} (L(G),\varphi) \bigr] \le f(|\varphi|,n).
    $
    We say \FOMCGL on a random graph model can be decided in \emph{expected FPT time} 
    if it can be decided in expected time $g(|\varphi|)n^{O(1)}$ for some function~$g$.
\end{customdef}

In particular, this definition implies efficient average run time according to Levin's
notion (which is closed under polynomial subroutines).
We choose to include labels into our notion of
average-case hardness for two reasons:
First, it makes our algorithmic results stronger,
as the expected run time is small, even 
in the presence of an adversary that labels the vertices
of the graph.
Secondly, it matches known hardness results that require adversary labeling.

\subsection{Previous Work}

There have been efforts to transfer the results for classical 
graph classes to random graph models by showing that a graph sampled
from some random graph model belongs with high probability to a certain
algorithmically tractable graph class.
For most random graph models the treewidth is polynomial in the size of the graph~\cite{gao2012treewidth,blasius2016hyperbolic}.
Therefore, people have considered more permissive graph measures than treewidth,
such as low degree~\cite{grohe2001generalized}, or bounded expansion \cite{StrucSpars,farrell2015hyperbolicity}.
Demaine et al.\ showed that some Chung--Lu and configuration graphs have bounded expansion and 
provided empirical evidence that some real-world networks do too~\cite{StrucSpars}.
However, this technique is still limited,
as many random graph models (such as the preferential attachment model~\cite{StrucSpars,cliqueminors})
are not known to be contained in any of the well-known tractable graph classes.

The previous tractability results presented in this section all use the following technique:
Assume we have a formula $\varphi$ and sample a graph of size $n$ from a random graph model.
If the sampled graph belongs to the tractable graph class,
an efficient model-checking algorithm for the graph class can
solve the instance in FPT time.
If the graph does not belong the graph class,
the naive model-checking algorithm can still solve the instance in time $O(n^{|\varphi|})$.
Assume we can show that the second case only happens with probability $\delta(n)$
converging to zero faster than any polynomial.
Then $\delta(n)O(n^{|\varphi|})$ converges to zero and
the expected run time remains bounded by an FPT function.

Let $p(n)$ be a function with $p(n) = O(n^{\varepsilon}/n)$ for all $\varepsilon > 0$.
Grohe showed that one can solve \FOMCGL on \ErRe graphs $G(n,p(n))$
in expected time $f(|\varphi|,\varepsilon)n^{1+\varepsilon}$ for every
$\varepsilon > 0$~\cite{grohe2001generalized}.
This result was obtained by showing that with high probability
the maximum degree of the random graph model is $O(n^\varepsilon)$ for every $\varepsilon > 0$
and then using a model-checking algorithm for low degree graphs.
Later Demaine et al.\ 
and Farrell et al.\ showed that
certain Chung--Lu and configuration graphs
whose degrees follow a power-law distribution with exponent $\alpha > 3$~\cite{StrucSpars} 
as well as certain random intersection graphs~\cite{farrell2015hyperbolicity} belong with high
probability to a graph class with bounded expansion.
While they do not mention it explicitly, the previous argument implies
that one can solve \FOMCGL in expected
time $f(|\varphi|)n$ on these random graph models.

There further exist some average-case hardness results for the model-checking problem.
It has been shown that one cannot decide \FOMCGL
on \ErRe graphs $G(n,1/2)$ or $G(n,p(n))$ with $p(n) = n^\varepsilon/n$ for some $0 < \varepsilon < 1$, $\varepsilon \in \Q$,
in expected FPT time (unless $\rm AW[*] \subseteq FPT/poly$)~\cite{averagehardness}.
The same holds for Chung--Lu graphs with exponent $2.5 < \alpha < 3$, $\alpha \in \Q$.
These hardness results fundamentally require the adversary labeling of \Cref{def:expectedFPT}.
It is a big open question whether they can be transferred
to model-checking without labels.




Another thing to keep in mind when considering logic and random graphs~\cite{spencer2013strange}
are \emph{zero-one laws}.
They state that in many \ErRe graphs
every first-order formula holds in the limit either with probability zero or one~\cite{spencer2013strange,glebskii1969range,fagin1976probabilities}.
Not all random graph models satisfy a zero-one law for first-order logic
(e.g.\ the limit probability of the existence of a $K_4$ in 
a Chung--Lu graph with weights $w_i = \sqrt{n/i}$ is neither zero nor one).

\section{Our Results}

We define a property called \emph{$\scale$-\domination}.
This property depends on a parameter $\scale$ and captures many unclustered random graph models
for which the fraction of vertices with expected degree $d \in \N$ is roughly $O(d^{-\scale})$.
Our main contribution is solving the model-checking problem efficiently
on all $\scale$-\dominated random graph models with $\scale \ge 3$.
This includes preferential attachment graphs, Chung--Lu graphs, \ErRe graphs
and other random graph models.
Note that graphs do not need to have a power-law degree distribution to be $\scale$-\dominated.
Our results hold for arbitrary labelings of the random graph and
are based on a novel decomposition technique for local regions of random graphs.
While all previous algorithms work by placing the
random graph model with high probability in a sparse graph class,
our technique also works for some \aas somewhere dense random graphs
(e.g.\ preferential attachment graphs~\cite{cliqueminors}).

\subsection{\Domination}

We start by formalizing our property.
Since it generalizes the Chung--Lu model, we define this model first.
A Chung--Lu graph with exponent $\scale$ and vertices $v_1,\dots,v_n$ is defined such that two vertices $v_i$ and $v_j$
are adjacent with probability $\Theta(w_{i} w_{j} /n)$ where $w_i = (n/i)^{1/(\scale-1)}$~\cite{chung2002average}.
Furthermore all edges are independent,
which means that the probability
that a set of edges occurs equals the product over the probabilities
of each individual edge.
In our model the probability of a set of edges can be a certain
factor larger than the product of the individual probabilities,
which allows edges to be moderately dependent.

\begin{definition}\label{def:wellBehavedPowerLaw}
    Let $\scale > 2$.
    We say a random graph model $\Gnn$ is $\scale$-\dominated
    if for every $n \in \N$ there exists
    an ordering $v_1,\dots,v_n$ of $V(\Gn)$
    such that for all $E \subseteq {\{v_1,\dots,v_n \} \choose 2}$
    $$
    \P\bigl[E \subseteq E(\Gn) \bigr] \le \\
    \prod_{v_iv_j \in E} 
    \frac{(n/i)^{1/(\scale-1)}(n/j)^{1/(\scale-1)}}{n}
    \cdot
    \begin{cases}
        2^{O(|E|^2)} &\text{if }\scale > 3 \\
        \log(n)^{O(|E|^2)} &\text{if }\scale = 3 \\
        O(n^\varepsilon)^{|E|^2} \text{ for every $\varepsilon > 0$ }
	&\text{if }\scale < 3.
    \end{cases}
    $$
\end{definition}

The probability that a set of edges $E$ occurs may be up to
a factor $2^{O(|E|^2)}$ or $\log(n)^{O(|E|^2)}$ or $O(n^\varepsilon)^{|E|^2}$ (depending on $\scale$)
larger than the probability in the corresponding Chung--Lu graph.
For conditional probabilities this means the following:
The probability bound for an edge under
the condition that some set of $l$ edges is already present may be up 
to a factor $2^{O(l)}$ or $\log(n)^{O(l)}$ or $O(n^\varepsilon)^{l}$ larger than the unconditional probability.
This lets \dominated random graphs capture moderate dependence between edges.
The factor undergoes a phase transition at $\scale = 3$.
The smaller factor $2^{O(|E|^2)}$ for $\scale > 3$ was chosen to guarantee
linear FPT run time of our model-checking algorithm (\Cref{thm:MCpowerlaw2}) if $\scale > 3$.
The slightly larger factor of $\log(n)^{O(|E|^2)}$ for $\scale = 3$
was chosen to capture preferential attachment graphs
while still maintaining a quasilinear FPT run time of our algorithm.

The parameter $\scale$ of an $\scale$-\dominated random graph model controls the degree distribution.
Note that if a graph class is $\scale$-\dominated it is also
$\scale'$-\dominated for all $2 < \scale'<\scale$.
It can be easily seen that a vertex $v_i$ has expected degree at most $O(n^\varepsilon) (n/i)^{1/(\scale-1)}$ for every $\varepsilon > 0$. 
This means the expected degree sequence of an $\scale$-\dominated random graph model
is not power-law distributed with exponent smaller than $\scale$.
The gap is often tight:
For example, Chung--Lu graphs with a power-law degree distribution exponent $\scale$
are $\scale$-\dominated
and preferential attachment graphs have a power-law degree distribution with exponent $3$ and are $3$-\dominated.
For the interesting case $\scale=3$, the inequality in Definition~\ref{def:wellBehavedPowerLaw}
simplifies to
$$
\P\bigl[E \subseteq E(\Gn)\bigr] \le \log(n)^{O(|E|^2)} \prod_{v_iv_j \in E} \frac{1}{\sqrt{ij}}.
$$

\subsection{Model Checking}

We now present our model-checking algorithm for $\scale$-\dominated graphs.
We express its run time relative to the term
$$
    \momn =
    \begin{cases}
        O(1) &\quad \scale > 3 \\
        \log(n)^{O(1)} &\quad \scale = 3 \\
        O(n^{3-\scale}) &\quad \scale < 3.
    \end{cases}
$$
This term is related to an established property of degree distributions,
namely the \emph{second order average degree}~\cite{chung2002average}.
If a random graph with $n$ vertices has expected degrees $w_1,\dots,w_n$
then the second order average degree is defined as $\sum_{i=1}^n w_i^2/\sum_{k=1}^n w_k$.
In graphs with a power-law degree distribution $\scale$ we have $w_i = \Theta((n/i)^{1/(\scale-1)})$.
The second order average degree then equals $\Theta\bigl(\sum_{i=1}^n (n/i)^{2/(\scale-1)}/\sum_{k=1}^n (n/k)^{1/(\scale-1)}\bigr)$.
For $\scale > 3$, this term is constant,
for $\scale = 3$ it is logarithmic,
and for $\scale < 3$ it is polynomial in $n$~\cite{chung2002average}.
Thus, we can interpret $\momn$ as an estimate of the second order average degree.
We prove that the model-checking problem can be solved efficiently if $\momn$ is small.
\begin{customthm}{\ref{thm:MCpowerlaw1}}
    There exists a function $f$ such that one can solve \FOMCGL 
    on every $\scale$-\dominated random graph model 
    in expected time $\momn^{f(|\varphi|)} n$.
\end{customthm}

The term $\momn$ naturally arises in our proofs
and is not a consequence of how we defined the multiplicative factor
(i.e., $2^{O(|E|^2)}$, $\log(n)^{O(|E|^2)}$, $O(n^\varepsilon)^{|E|^2}$)
in \Cref{def:wellBehavedPowerLaw}.
In fact the dependence goes the other way: We defined the factor for each $\scale$
as large as possible such that it does not dominate the run time of the algorithm.
Next we specify exactly those values of $\scale$ where the previous theorem leads to FPT run times.
(In the third case $\varepsilon > 0$ can be chosen arbitrarily small
since we require $\scale$ to be arbitrarily close to $3$.)

\begin{customthm}{\ref{thm:MCpowerlaw2}}
        Let $\Gnn$ be a random graph model.
        There exists a function $f$ such that one can solve \FOMCGL in expected time
        \begin{itemize}
        \item $f(|\varphi|)n$                                                              \tabto{4.5cm}  if $\Gnn$ is $\scale$-\dominated for some $\scale > 3$,
        \item $\log(n)^{f(|\varphi|)}n$                                                    \tabto{4.5cm}  if $\Gnn$ is $\scale$-\dominated for $\scale = 3$,
        \item $f(|\varphi|,\varepsilon) n^{1 + \varepsilon}$ for all $\varepsilon > 0$     \tabto{4.5cm}  if $\Gnn$ is $\scale$-\dominated for every $2 < \scale < 3$.
        \end{itemize}
\end{customthm}

This solves the model-checking problem efficiently on a wide range of random graph models.
These tractability results are matched by previous intractability results.
(Note that the third case of \Cref{thm:MCpowerlaw2} requires \domination for
\emph{every} $2 < \scale < 3$ and thus does not contradict \Cref{prop:blub}.)

\begin{proposition}[\cite{averagehardness} and Lemma~\ref{lem:chungludominated}]\label{prop:blub}
    For every $2 < \scale < 3$
    there exists an $\scale$-\dominated random graph model $\Gnn$
    such that one cannot solve \FOMCGL on $\Gnn$
    in expected {\rm FPT} time unless $\rm AW[*] \subseteq FPT/poly$.
\end{proposition}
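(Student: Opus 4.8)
The plan is to produce the required witness as a \emph{Chung--Lu model} of a suitably chosen exponent and to combine two facts that are already available: that Chung--Lu models are \dominated (\Cref{lem:chungludominated}), and that first-order model-checking on them is average-case intractable for the exponents handled in \cite{averagehardness}. The point that needs attention is that \cite{averagehardness} proves hardness only for \emph{rational} exponents strictly between $2.5$ and $3$, whereas \Cref{prop:blub} asks for a hard $\scale$-\dominated model for \emph{every} $\scale$ with $2 < \scale < 3$. I bridge this gap with the downward closure of $\scale$-\domination observed right after \Cref{def:wellBehavedPowerLaw}: a $\beta$-\dominated model is also $\scale$-\dominated for every $2 < \scale < \beta$.

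In detail, fix $\scale$ with $2 < \scale < 3$. Since $\max\{\scale,\,2.5\} < 3$, choose a rational $\beta$ with $\max\{\scale,\,2.5\} < \beta < 3$; thus $\scale < \beta$ and $\beta \in (2.5,3) \cap \Q$. Let $\Gnn$ be the Chung--Lu model with exponent $\beta$, where $v_i$ and $v_j$ are adjacent independently with probability $\Theta\bigl((n/i)^{1/(\beta-1)}(n/j)^{1/(\beta-1)}/n\bigr)$. By \Cref{lem:chungludominated} this model is $\beta$-\dominated, hence --- because $\scale < \beta$ --- also $\scale$-\dominated, so it is an admissible witness. Since $\beta$ is rational and lies in $(2.5,3)$, the lower bound of \cite{averagehardness} applies verbatim: \FOMCGL cannot be decided on $\Gnn$ in expected {\rm FPT} time unless $\rm AW[*] \subseteq FPT/poly$. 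Together these two statements give \Cref{prop:blub}.

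I expect no real obstacle here; the mathematical content sits entirely in the two cited results, and what remains is bookkeeping. The one step worth double-checking is that the Chung--Lu model of exponent $\beta$ genuinely satisfies \Cref{def:wellBehavedPowerLaw} for that $\beta$, i.e.\ \Cref{lem:chungludominated}: this is almost immediate, since edge independence makes $\P[E \subseteq E(\Gn)]$ equal to the product of the $\Theta\bigl((n/i)^{1/(\beta-1)}(n/j)^{1/(\beta-1)}/n\bigr)$ terms, and the product of the at most $|E| \le |E|^2$ hidden constants is absorbed into the multiplicative slack $O(n^\varepsilon)^{|E|^2}$ that \Cref{def:wellBehavedPowerLaw} allows when the exponent is below $3$. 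It is also worth recording that this witness is $\scale$-\dominated only for $\scale \le \beta < 3$ and not for $\beta < \scale < 3$, which is exactly why \Cref{prop:blub} is consistent with the third case of \Cref{thm:MCpowerlaw2} (whose hypothesis is \domination for \emph{every} $2 < \scale < 3$); and that the adversary labelings built into \FOMCGL and \Cref{def:expectedFPT} are precisely the ones the reduction in \cite{averagehardness} relies on, so the lower bound transfers to our setting without loss.
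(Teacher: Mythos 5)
Your argument is correct and is the natural fleshing-out of the paper's one-line citation: pick a rational $\beta\in(\max\{\scale,2.5\},3)$, invoke Lemma~\ref{lem:chungludominated} and the remark after Definition~\ref{def:wellBehavedPowerLaw} that $\beta$-\domination implies $\scale'$-\domination for all $2<\scale'<\beta$, then apply the hardness result of \cite{averagehardness} to that Chung--Lu model. The paper gives no further detail, so your downward-closure bridging step is exactly what is needed and matches the intended reading.
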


We observe a phase transition in tractability at power-law exponent $\alpha=3$.
Also the run time of our algorithm cannot be linear in $n$ for $\scale \le 3$
as a $3$-\dominated random graph can have for example $n\log(n)$ edges in expectation.
We discuss the algorithmic implications of our result for some well-known random graph models in \Cref{sec:domination}.

\subsection{Structure}

Many algorithmic results are based on 
\emph{structural decompositions.}
For example, bidimensionality theory introduced by Demaine et
al.~\cite{Demaine:2005:SPA:1101821.1101823,DH08} is based on the grid minor theorem, which is itself based on a
structural decomposition into a clique-sum of almost-embeddable graphs
developed by Robertson and Seymour~\cite{RS03}.
The model-checking algorithm for graph classes with bounded expansion by 
Dvo\v{r}ak, Kr{\'a}l, and Thomas~\cite{dvorak2010deciding}
relies on a structural decomposition of bounded expansion graph classes
by \Nesetril and Ossona de Mendez called low tree-depth colorings~\cite{NOdM08}.
Our algorithms are based on a structural decomposition of $\scale$-\dominated random graph models.

All algorithms prior to this work rely on showing that a certain graph model
is with high probability contained in a certain well-known tractable graph class (for example bounded expansion)
and then use the structural decompositions~\cite{NOdM08} of said graph class.
However, these decompositions were not originally designed with random
graphs in mind and therefore may not provide the optimal level of abstraction
for random graphs.
Our algorithms are based on a specially defined structural decomposition.
This direct approach helps us capture
random graph models that could otherwise not be captured
such as the \aas somewhere dense preferential attachment model.
By focusing on $\scale$-\dominated random graph models,
we obtain structural decompositions for a wide range of models.

We observe that $\scale$-\dominated random graphs have mostly an extremely sparse structure
with the exception of a part whose size is bounded by the second order average degree.
However, this denser part can be separated well from the remaining graph.
We show that local regions consist of a 
core part, bounded in size by the second order average degree,
to which trees and graphs of constant size are attached by a constant number of edges.
This decomposition is similar to so called \emph{protrusion
decompositions}, which have been used by Bodlaender
et al.\ to obtain meta-theorems on kernelization~\cite{bodlaender2016meta}.
Our structural decomposition is valid for all graphs that fit into the framework of $\scale$-\domination,
such as preferential attachment graphs or Chung--Lu graphs.
We define an approximation of the second order average degree of the degree distribution
as
$\hatdegree = 2$ for $\scale > 3$,
$\hatdegree = \log(n)$ for $\scale = 3$
and
$\hatdegree = n^{3 - \scale}$ for $\scale < 3$
(similarly to $\momn$ without $O$-notation).

\begin{customthm}{\ref{thm:aasprotrusion}}
    Let $\Gnn$ be an $\scale$-\dominated random graph model.
    There exist constants $c,r_0$ such that for every $r \ge r_0$
    \aas for every $r$-neighborhood $H$ of $\Gn$ one can 
    partition $V(H)$ into three (possibly empty) sets $X$, $Y$, $Z$ with the following properties.
    \begin{itemize}
        \item $|X| \le \hatdegree^{cr^2}$.
        \item Every connected component of $H[Y]$ has size at most $cr$
            and at most $c$ neighbors in~$X$.
        \item Every connected component of $H[Z]$ is a tree with at most one edge to ${H[X \cup Y]}$.
    \end{itemize}
\end{customthm}
Removing a few vertices makes the local neighborhoods even sparser:
\begin{customcol}{\ref{aastreewidth}}
    Let $\Gnn$ be an $\scale$-\dominated random graph model.
    There exist constants $c,r_0$ such that 
    for every $r \ge r_0$ 
    \aas one can remove $\hatdegree^{cr^2}$ vertices from $\Gn$
    such that every $r$-neighborhood has treewidth at most $26$.
\end{customcol}
\Cref{aastreewidth} is a consequence of \Cref{thm:aasabc} from \Cref{sec:structure}.
Further structural results that 
may be interesting beyond the purpose of model-checking
can be found in \Cref{sec:structure}.
We now discuss how we use the decomposition of Theorem~\ref{thm:aasprotrusion}
for our algorithms and why decompositions similar to Corollary~\ref{aastreewidth}
are not sufficient for our purposes.

\section{Techniques}


A first building block of our algorithm is Gaifman's locality theorem~\cite{gaifman1982local}.
It implies that in order to solve the first-order model-checking problem on a graph,
it is sufficient to solve the problem on all $r$-neighborhoods of the graph for some small~$r$.
We can therefore restrict ourselves to the model-checking problem on the neighborhoods
of random graphs.
With this in mind, we want to obtain structural decompositions of
these neighborhoods.

One important thing to note is that a decomposition according to Corollary~\ref{aastreewidth}
is not sufficient.
Let us focus on the interesting case $\scale=3$ where efficient model-checking is still possible.
Corollary~\ref{aastreewidth} then states that the removal of polylogarithmically
many vertices yields neighborhoods with treewidth at most~26.
While we could easily solve the model-checking problem on graphs with treewidth at most 26
via Courcelle's theorem~\cite{Cou90},
we cannot solve it on graphs where we need to remove a set $X$ of $\log(n)$ vertices
to obtain a treewidth of at most~26.
Every vertex not in $X$ may have an arbitrary subset of $X$ as neighborhood.
Since there are $2^{|X|} = n$ possible neighborhoods,
we can encode a large complicated structure into this graph by
stating that two vertices $i,j \in \N$ are adjacent if and only if
there is a vertex whose neighborhood in $X$ represents a binary encoding of the edge $ij$ (omitting some details).
Because of this, the model-checking problem on this graph class is as hard as
on general graphs.
We need the additional requirement that $X$ is only loosely
connected to the remaining graph.
The decomposition in Theorem~\ref{thm:aasprotrusion} fulfills this requirement.
Every component of $H \setminus X$ has at
most a constant number of neighbors in~$X$.

Let us assume we have decompositions of the neighborhoods of a graph according to Theorem~\ref{thm:aasprotrusion}
where the sets $X$ are chosen as small as possible.
We can now use a variant of the Feferman--Vaught theorem~\cite{FV59} for each $r$-neighborhood
to prune the protrusions and thereby construct a smaller graph that satisfies the same (short) first-order formulas as the original graph,
We call this smaller graph the \emph{kernel}.
The size of this kernel will be some function of~$|X|$.
We then use the brute-force model-checking algorithm on the kernel.



For the first steps of the algorithm (decomposition into neighborhoods, kernelization using Feferman--Vaught)
one can easily show that they always take FPT time.
However, the run time of the last step requires a careful analysis.
One can check a formula $\varphi$ on a graph of size $x$ in time
$O(x^{|\varphi|})$ by brute force.
Thus, checking the formula on the kernel of all $n$ many $r$-neighborhoods of a random graph
takes expected time at most
$
n \sum_{x=1}^n p_x O(x^{|\varphi|}),
$
where $p_x$
is the probability that the kernelization procedure on an $r$-neighborhood of a random graph 
yields a kernel of size $x$.
In order to guarantee a run time of the form $\log(n)^{f(|\varphi|)}n$ for some function $f$,
$p_x$ should be of order $\log(n)^{f(|\varphi|)}x^{-|\varphi|}$.

Earlier, we discussed that the size of the kernel will be some function of $|X|$
and that we choose $X$ as small as possible.
It is therefore sufficient to bound the probability that the set $X$ of the decomposition 
of a neighborhood exceeds a certain size.
Parameterizing the decomposition by two values (denoted by $\t$ and $\y$ later on)
gives us enough control to guarantee such a bound on $p_x$.
A large part of this work is devoted to proving a good trade-off between
the size of the set $X$ of the decomposition
and the probability that $X$ is of minimal size.
Furthermore, computing the set $X$ is computationally hard, so the whole procedure
has to work without knowing the set $X$, but only its existence.



Our proofs are structured as follows.
First, we show in \Cref{sec:partitionprob} that $\scale$-\dominated random graph models
have the following structure with high probability:
They can be partitioned into sets $A$, $B$, $C$, where $A \cup B$ is small,
$B \cup C$ is sparse and $A$ and $C$ locally share only few edges.
This is done by characterizing this structure by a collection of small forbidden edge-sets
and then excluding these edge-sets using the union bound and \Cref{def:wellBehavedPowerLaw}.
Then in \Cref{sec:protrusiondecomp} we show that the partition into $A$, $B$, $C$ implies
the protrusion decomposition of \Cref{thm:aasprotrusion}.
In \Cref{sec:kernel}, we partially recover the protrusion decomposition from a given input,
and use it to kernelize each $r$-neighborhood into an equivalent smaller graph.
At last, in \Cref{sec:modelchecking}, we combine Gaifman's locality theorem
with the previous algorithms and probability bounds to obtain our algorithm and bound its run time.
Some proofs are quite tedious, but the nature of this problem seems to stop us from using simpler methods.
Furthermore, in \Cref{sec:structure} give a simpler presentation of our structural results and 
in \Cref{sec:domination}, we discuss the algorithmic implications of our results for various random graph models.

\section{Notations and Definitions}
\label{sec:prelim}

\subsection{Graph Notation}
We use common graph theory notation~\cite{diestel}.
The \emph{length} of a path equals its number of edges.
The \emph{distance} between to vertices $u$ and $v$ ($\text{dist}(u,v)$)
equals the length of a shortest path between $u$ and $v$.
For a vertex $v$ let $N^G_r(v)$ be the set of vertices which have
in $G$ distance at most $r$ to $v$.
The \emph{radius} of a graph is the minimum among all maximum distances from one vertex to all other vertices.
An \emph{$r$-neighborhood} in $G$ is an induced subgraph of $G$ with radius at most $r$.
The \emph{order} of a graph is $|G| = |V(G)|$.
The \emph{size} of a graph is $\|G\| = |V(G) + E(G)|$.
The \emph{\excess} of a graph $G$ is $|E(G)|-|V(G)|$.

In this work we obtain results for \emph{labeled graphs}~\cite{grohe2008logic}.
A labeled graph is a tuple
$G=(V(G),E(G),P_1(G),\dots,P_l(G))$ with $P_i(G) \subseteq V(G)$.
We call $P_1(G),\dots,P_l(G)$ the \emph{labels} of $G$.
We say a vertex $v$ is labeled with label $P_i(G)$ if $v \in P_i(G)$.
A vertex may have multiple labels.
We say the unlabeled simple graph $G'=(V(G),E(G))$ is the
\emph{underlying graph} of $G$
and $G$ is a \emph{labeling} of $G'$.
All notion for graphs extends to labeled graphs as expected.
The union of two labeled graphs $G$ and $H$,
($G\cup H$), is obtained by setting $V(G \cup H)=V(G)\cup V(H)$,
$E(G \cup H)=E(G)\cup E(H)$ and for each label 
$P_i(G \cup H) = P_i(G) \cup P_i(H)$.

For a graph class $\cal G$, we define $\cal G_{\it lb}$ to be the class of all
labelings of $\cal G$.
We define $\graphs$ to be the class of all simple graphs and $\labelgraphs$
to be the class of all labeled simple graphs.

\subsection{Probabilities and Random Graph Models}
We denote probabilities by $\P[*]$ and expectation by $\E[*]$.
We consider a random graph model to be a sequence of probability
distributions. For every $n \in \bf N$ a random graph model describes
a probability distribution on unlabeled simple graphs with $n$ vertices. 
In order to speak of probability distributions over graphs we
fix a sequence of vertices $(v_i)_{i \ge 1}$
and require that a graph with $n$ vertices has the vertex set
$\{v_1,\dots,v_n\}$.  
A random graph model is a sequence ${\cal G} = (\cal G_n)_{n \in \N}$,
where $\Gn$ is a probability distribution over
all unlabeled simple graphs $G$ with $V(G) = \{v_1,\dots,v_n\}$.
Even though some random processes naturally lead to graphs with multi-edges or self-loops,
we interpret them as simple graphs by removing all self-loops and replacing multiple edges
with one single edge.
In slight abuse of notation, we also write $\Gn$ for the random variable
which is distributed according to $\Gn$.
This way, we can lift graph notation to notation for random variables of graphs:
For example edge sets and neighborhoods of a random graph $\Gn$ are represented 
by random variables $E(\Gn)$ and $N^{\Gn}_r(v)$.

\subsection{Sparsity}
\label{sec:prelim:sparsity}

At first, we define nowhere and somewhere density
as a property of \emph{graph classes}
and then lift the notation to \emph{random graph models}.
There are various equivalent definitions and we use the most common definition based on
shallow topological minors.

\begin{definition}[Shallow topological minor~\cite{NOdM08}]
\label{def:shallowtopminor+}
    A graph $H$ is an $r$-shallow topological minor of $G$ if a graph obtained
    from $H$ by subdividing every edge up to $2r$ times is isomorphic to a
    subgraph of $G$.
    The set of all $r$-shallow topological minors of a graph $G$ is denoted by
    $G \topnab r$.
    We define the maximum clique size over all shallow topological minors of $G$ as
    $$\omega(G \topnab r) = \max_{H\in G \topnab r }
    \omega(H).$$
\end{definition}

\begin{definition}[Nowhere dense~\cite{nevsetvril2012sparsity}]
  A graph class~$\cal G$ is nowhere dense if there exists a function $f$,
  such that for all $r \in \N$ and all $G\in\cal G $, $\omega(G \topnab r ) \leq
  f(r)$.
\end{definition}

\begin{definition}[Somewhere dense~\cite{nevsetvril2012sparsity}]
  A graph class~$\cal G$ is somewhere dense if for all functions $f$ there
  exists an $r \in \N$ and a $G\in\cal G $, such that $\omega( G
  \topnab r ) > f(r)$.
\end{definition}

Observe that a graph class is somewhere dense if and only if it is
not nowhere dense. 
We lift these notions to random graph models using the following two definitions.

\begin{definition}[\aas nowhere dense]
    A random graph model~$\cal G$ is \aas nowhere dense if there exists a
    function $f$ such that for all $r \in \N$
   $$
   \lim_{n\to\infty} \P[\omega( {\cal G}_n \topnab r ) \leq f(r)] = 1.
   $$
\end{definition}
\begin{definition}[\aas somewhere dense]\label{def:aas-somewhere-dense}
    A random graph model~$\cal G $ is \aas somewhere dense if for all functions
    $f$ there is an $r \in N$ such that
    $$
    \lim_{n\to\infty} \P[\omega( {\cal G}_n \topnab r ) > f(r)] = 1.
    $$
\end{definition}
While for graph classes the concepts are complementary,
a random graph model can both be \emph{neither} \aas somewhere dense
\emph{nor} \aas nowhere dense (e.g., if the random graph model is 
either the empty or the complete graph, both with a probability of $1/2$).

\subsection{First-Order Logic}
We consider only first-order logic over labeled graphs.
We interpret a labeled graph $G = (V,E,P_1,\dots,P_l)$, 
as a structure with universe $V$ and signature
$(E, P_1,\dots,P_l)$. The binary relation $E$ expresses adjacency between vertices
and the unary relations $P_1,\dots,P_l$ indicate the labels of the vertices.
Other structures can be easily converted into labeled graphs.
We write $\varphi(x_1,\dots,x_k)$ to indicate that a formula $\varphi$ has \emph{free variables} $x_1,\dots,x_k$.
The \emph{quantifier rank} of a formula is the maximum nesting depth of quantifiers in the formula.
Two labeled graphs $G_1, G_2$ with the same signature are $q$-\emph{equivalent} ($G_1 \equiv_q G_2$) if for every first-order sentence $\varphi$ 
with quantifier rank at most $q$ and matching signature holds $G_1 \models \varphi$ if and only if $G_2 \models \varphi$.
Furthermore, $|\varphi|$ is the number of symbols in $\varphi$.
There exists a simple algorithm which decides whether $G \models \varphi$ in time $O(|G|^{|\varphi|})$.

\subsection{Model-Checking}
With all definitions in place, we can now properly
restate the model-checking problem and what it means
to solve it efficiently on a random graph model.
The model-checking problem on labeled graphs is defined as follows.

\begin{center}
    \problem \FOMCGL
Input: A graph $G\in\cal \labelgraphs$ and a first-order sentence $\varphi$
Parameter: $|\varphi|$
Problem: $G\models \varphi$?

\end{center}

Under worst-case complexity, \FOMCGL is \awcomplete{*}~\cite{downey1996parameterized} (and PSPACE-complete when unparameterized~\cite{stockmeyer1976polynomial}).
We want average case algorithms for \FOMCGL to be efficient for
all possible labelings of a random graph model.
A function $L : \graphs \to \labelgraphs$ is called a
\emph{$l$-labeling function} for $l \in \N$ if for every $G \in \graphs$,
$L(G)$ is a labeling of $G$ with up to $l$ labels.

\begin{definition}\label{def:expectedFPT}
    We say \FOMCGL can be decided
    on a random graph model $\Gnn$ in \emph{expected time} $f(|\varphi|,n)$ if
    there exists a deterministic algorithm $\cal A$ which decides \FOMCGL on input $G$,
    $\varphi$ in time $t_{\cal A}(G,\varphi)$ and
    if for all $n \in \N$, all first-order sentences $\varphi$ and all $|\varphi|$-labeling functions $L$,
    $
    \E_{G \sim \Gn}\bigl[ t_{\cal A} (L(G),\varphi) \bigr] \le f(|\varphi|,n).
    $
    We say \FOMCGL on a random graph model can be decided in \emph{expected FPT time} 
    if it can be decided in expected time $g(|\varphi|)n^{O(1)}$ for some function~$g$.
\end{definition}


%

\section{Structure Theorem for \Dominated Random Graph Models}\label{sec:partitionprob}

\newcommand{\PartitionRadiusExtraEdges}{40 \y r}

\newcommand{\PartitionRadiusToA}{20 \y r}

\newcommand{\embeds}{\sqsubseteq}
\def\HH{\cal H_n(\t,r,\y)}
\def\HHminus{\cal H_n(\t-1,r,\y)}
\def\LL{\cal L_n(\t_1,\t_2,l,k)}

The goal of this section is to partition $\scale$-\dominated random graph models.
We show in Theorem~\ref{thm:partitionProbability}
that their vertices can with high probability be partitioned into sets $A,B,C$
with the following properties: The sets $A$ and $B$ are small, the
graph $G[B \cup C]$ is locally almost a tree, i.e., has locally only a
small \excess,
and the set $B$ almost separates $A$ from $C$,
i.e., every neighborhood in $G[C]$
has only a small number of edges to $A$.
We call $(A,B,C)$ an \partition.
We state the formal definition.

\begin{definition}[\partition]\label{def:partition}
    Let $\t,r,\y \in \N^+$. Let $G$ be a graph.
    A tuple $(A, B, C)$ is called an \partition of $G$ if
    \begin{enumerate}
        \item\label{prop:ABCPartition}
            the sets $A,B,C$ are pairwise disjoint and their union is $V(G)$,
        \item\label{prop:ABCSize}
            $|A| \le \t$ and $|B| \le \t^\y$,
        \item\label{prop:ABCExcess}
            every $\PartitionRadiusExtraEdges$-neighborhood in $G[B
            \cup C]$ has an \excess of at most $\y^2$, and
        \item\label{prop:ABCEdgesToA}
            for every $\PartitionRadiusToA$-neighborhood in $G[C]$ there
            are at most $\y$ edges incident to both the neighborhood and to $A$.
    \end{enumerate}
    A graph for which an \partition exists is called \partitionable.
\end{definition}
%

In summary, $B$ and $C$ are well behaved
and the large set $C$ is almost separated from $A$.
Note that the properties of an \partition depend on three parameters $\t$, $r$, $\y$.
The results of this section imply that our random graphs are asymptotically
almost surely \partitionable for $\t = \momn^{\Omega(1)}$ and constant $r,\y$.
It therefore helps to assume that $\t$ is a slowly
growing function in $n$, such as $\log(n)$ and $r,\y$ are constants.
Higher values of $\y$ boost the probability of a random graph being \partitionable.
The parameter $\y$ is therefore crucial for the design of efficient algorithms.

For an $\scale$-\dominated random graph model $\Gnn$, we always assume
the vertices of $\Gn$ to be 
$v_1,\dots,v_n$, ordered as in Definition~\ref{def:wellBehavedPowerLaw}.
We will choose
$A=\{v_1,\dots,v_\t\}$,
$B=\{v_{\t+1},\dots,v_{\t^\y}\}$,
$C=\{v_{\t^\y+1},\dots,v_n\}$
and show that the probability is low that $(A,B,C)$ does not form a \partition.
We do this in two steps:
In \Cref{sec:forbiddenSubgraph}, 
we define $\HH$ to be a set of graphs over the vertex set $\{v_1,\dots,v_n\}$.
We show that if $(A,B,C)$ is not a \partition then
the complete edge-set of some graph in $\HH$ is present in the graph.
In \Cref{sec:subgraphProb} we bound the probability 
of the edge-set of any graph from $\HH$ being present in the random graph model.

At last, in Lemma~\ref{lem:condNhood}, \Cref{sec:nhoodsizes}, 
we bound the sum of the expected
sizes of all $r$-neighborhoods in an $\scale$-\dominated graph class.
This is needed to bound the expected run time of an algorithm
that iterates over all $r$-neighborhoods of a graph.



\subsection{Forbidden Edge-Sets Characterization}\label{sec:forbiddenSubgraph}

Here we show that if a graph is not \partitionable then it contains
some forbidden edge-set.

\begin{definition}
    Let $G$ be a graph and $\cal H$ be a set of graphs over $V(G)$.
    We say $\cal H \embeds G$ if for some $H \in \cal H$, $E(H) \subseteq E(G)$.
\end{definition}

\begin{definition}\label{def:HH}
    Let $\t,r,\y,n \in \N^+$.
    We define $\HH$ to be the set of
    \begin{itemize}
        \item 
            all graphs with vertex set $V \subseteq \{v_{\t+1},\dots,v_{n}\}$ such that
            $|V| \le 200r\y^3$,
            all vertices have degree at least two,
            and the graph has an \excess of $\y^2$, and
        \item
            all graphs $(V_1 \cup V_2, E)$ such that
            $V_1 \subseteq \{v_1,\dots,v_{\t}\}$,
            $V_2 \subseteq \{v_{\t^\y+1},\dots,v_{n}\}$,
            $|V_1 \cup V_2| \le 25r\y^2$,
            $|E| \le 25r\y^2$,
            $|V_1| \le \y$,
            all vertices in $V_2$ have degree at least two,
            and the summed degree of $V_2$ is $2|V_2|-2+\y$.
    \end{itemize}
\end{definition}
\begin{lemma}
  \label{lem:ifnotpartthenembed}
    Let $\t,r,\y,n \in \N^+$.
    If a graph $G$ with vertex set $\{v_1,\dots,v_n\}$ is not \partitionable, 
    then $\HH \embeds G$.
\end{lemma}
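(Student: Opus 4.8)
The plan is to prove the contrapositive in the natural way: assume that the canonical tuple $(A,B,C)$ with $A=\{v_1,\dots,v_\t\}$, $B=\{v_{\t+1},\dots,v_{\t^\y}\}$, $C=\{v_{\t^\y+1},\dots,v_n\}$ fails to be an \partition of $G$, and exhibit an explicit $H \in \HH$ with $E(H) \subseteq E(G)$. Since this tuple automatically satisfies properties~\ref{prop:ABCPartition} and~\ref{prop:ABCSize} of Definition~\ref{def:partition} by construction (the sets are a partition of $V(G)$ and have the prescribed sizes), the failure must be a violation of property~\ref{prop:ABCExcess} or property~\ref{prop:ABCEdgesToA}. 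I would treat these two cases separately, each producing a witness from one of the two bullet families defining $\HH$.

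First I would handle a violation of property~\ref{prop:ABCExcess}: some $\PartitionRadiusExtraEdges$-neighborhood $N$ in $G[B\cup C]$ has \excess exceeding $\y^2$, i.e.\ $|E(G[N])| - |N| \ge \y^2+1$. The point is to extract from $N$ a small subgraph that still has \excess exactly $\y^2$, has minimum degree at least two, and lives on $\{v_{\t+1},\dots,v_n\}$ (which $B\cup C$ does). To bound the number of vertices, I would first repeatedly delete degree-$\le 1$ vertices: removing a vertex of degree $0$ or $1$ does not decrease the \excess, so the 2-core of $G[N]$ still has \excess $\ge \y^2+1$. Then, using the fact that $N$ is a $\PartitionRadiusExtraEdges$-neighborhood (radius $\le \PartitionRadiusExtraEdges = 40\y r$), I would argue one can find a connected sub-object of small radius whose \excess is still large; concretely, grow a BFS-type subtree around an appropriate vertex and add back a bounded number of non-tree edges to account for the excess, showing that one needs at most $O(r\y^3)$ vertices to realize \excess $\y^2$ — this is where the bound $|V| \le 200r\y^3$ comes from. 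Finally, if the resulting graph has \excess strictly larger than $\y^2$, delete edges one at a time (each deletion drops the \excess by one and, after re-taking the 2-core, keeps min-degree $\ge 2$) until the \excess is exactly $\y^2$. The result is the desired member of the first family of $\HH$.

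Next I would handle a violation of property~\ref{prop:ABCEdgesToA}: some $\PartitionRadiusToA$-neighborhood $M$ in $G[C]$ has at least $\y+1$ edges to $A$. Let $V_1 \subseteq A$ be the endpoints in $A$ of some choice of exactly $\y+1$ such edges — note $|V_1| \le \y+1$, but we can throw away one edge to get exactly $\y$ edges to $A$ and $|V_1|\le\y$, matching the bound. Inside $M \subseteq C$, we then need a connected (within $G[C]$) subgraph $V_2$ that contains the $C$-endpoints of these $\y$ chosen edges, has min-degree $\ge 2$ inside, and realizes summed internal degree exactly $2|V_2|-2+\y$ — equivalently, $G[V_2]$ is connected with \excess exactly $\lceil\y/2\rceil - 1$ once one accounts for the $\y$ half-edges leaving toward $A$. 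Since $M$ has radius at most $\PartitionRadiusToA = 20\y r$, one can connect the (at most $\y$) relevant endpoints by paths inside $G[C]$ of length at most $2\cdot 20\y r$ each, giving a connected subgraph on at most $O(r\y^2)$ vertices; pruning degree-$\le 1$ vertices other than the endpoints of the crossing edges and then adjusting the number of internal edges by deletion gets the degree sum to the exact required value $2|V_2|-2+\y$, and the vertex/edge counts land within $25r\y^2$. Add the $\y$ crossing edges and this is a member of the second family.

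\textbf{Main obstacle.} The routine parts are the 2-core reduction and the edge-deletion adjustments; the delicate part is the \emph{locality} bookkeeping — verifying that a neighborhood of radius $\PartitionRadiusExtraEdges$ (resp.\ $\PartitionRadiusToA$) really does contain a witness of the asserted small size ($200r\y^3$, resp.\ $25r\y^2$) with \emph{exactly} the prescribed \excess / degree-sum, rather than merely one with \excess $\ge\y^2$ of uncontrolled size. One has to be careful that shrinking to a connected small-radius piece does not destroy the excess, and that after all the pruning and edge deletions the minimum-degree-two and exact-count constraints hold simultaneously; getting the constants (and the radii $40\y r$, $20\y r$ chosen in the definition of \partition versus the sizes $200r\y^3$, $25r\y^2$ in Definition~\ref{def:HH}) to line up is the real content. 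I expect the proof to proceed by a short lemma of the form ``any graph of \excess $\ge k$ contains a connected subgraph of \excess exactly $k$, min-degree $\ge 2$, and at most $f(k)$ vertices, within bounded radius of any fixed vertex,'' applied with $k=\y^2$ and $k = \lceil\y/2\rceil-1$ respectively.
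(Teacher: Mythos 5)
Your two-case framework, the choice of the canonical $(A,B,C)$, and the entire case-1 argument (BFS tree, $\y^2+1$ extra edges, prune to the 2-core, bound by $O(r\y^3)$ vertices) match the paper's proof. The issue is in case 2: you have misread the second family of Definition~\ref{def:HH}. The degrees of $V_2$-vertices, and the summed degree $2|V_2|-2+\y$, are computed in the graph $H=(V_1\cup V_2,E)$ — i.e.\ \emph{including} the $\y$ edges to $V_1$. Subtracting those $\y$ half-edges, the internal degree sum is $2|V_2|-2$, which forces $H[V_2]$ to be a tree (edge-excess $-1$), not a graph of excess $\lceil\y/2\rceil-1$ as you claim. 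Similarly, ``min-degree $\ge 2$ inside'' is not required: a $w_i$ may be a leaf of $H[V_2]$ and pick up its second edge from $V_1$.

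This matters for your proposed unifying lemma. An ``extract a connected subgraph of excess exactly $k$'' statement is the right tool for case 1, but it cannot be the engine for case 2 with $k=\lceil\y/2\rceil-1$: the violated property there is about edges \emph{from $C$ to $A$}, and $G[C]$ near those crossing edges may well be a forest, with no positive excess to extract. The paper's case-2 construction instead takes a BFS tree $T$ of the offending $\PartitionRadiusToA$-neighborhood of $G[C]$, lets $V_2$ be the union of $T$-paths from each $w_i$ to the root, keeps the tree edges $T[V_2]$ and the $\y$ crossing edges, and is done: $H[V_2]$ is automatically a tree, the degree-sum identity holds exactly, no edge-deletion adjustment is needed, and the size bound $|V_2|\le(\PartitionRadiusToA+1)\y\le 25r\y^2$ follows from the path lengths. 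Your ``connect by paths, prune, then delete internal edges'' route would converge to the same witness, but the plan as written is aimed at the wrong target graph.
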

\begin{proof}
    Assume a graph $G$ is not \partitionable.
    Then the tuple $(A,B,C)$ with
    $A= \{v_1,\dots,v_\t\}$,
    $B=\{v_{\t+1},\dots,v_{\t^\y}\}$,
    $C=\{v_{\t^\y+1},\dots,v_n\}$
    is not a \partition of $G$.
    This means $(A,B,C)$ either does not satisfy
    Property~\ref{prop:ABCExcess} or~\ref{prop:ABCEdgesToA} of
    Definition~\ref{def:partition}.

    Assume now $(A,B,C)$ does not satisfy Property~\ref{prop:ABCExcess}.
    Then there is a $\PartitionRadiusExtraEdges$-neighborhood in $G[B \cup C]$
    with an \excess of at least $\y^2$. Let $T$ be a breadth-first-search
    tree of this neighborhood of depth $\PartitionRadiusExtraEdges$ with a root $v$. There
    are $\y^2+1$ extra edges in this neighborhood which are not contained in $T$. 
    Let $H$ be the graph constructed by the following procedure:
    We induce $G$ on all vertices
    which are either an endpoint of the $\y^2+1$ extra edges
    or lie on the unique path in $T$ from such an endpoint to the root $v$.
    Then we iteratively remove all vertices with degree one.
    Every vertex in $H$ has degree at least two.
    In $T$, each path starting at $v$ has length at most $\PartitionRadiusExtraEdges$,
    and there are at most $(\y^2+1)$ extra edges.
    Therefore, $H$ consists of at most
    $2(\y^2+1)(\PartitionRadiusExtraEdges+1)$ vertices.
    Furthermore, $H$ contains $\y^2$ more edges than vertices. 
    This means $G[B \cup C]$ contains a subgraph with an \excess of $\y^2$
    and $2(\y^2+1)(\PartitionRadiusExtraEdges+1) \le 200r\y^3$ 
    vertices that all have degree at least two.
    Such a graph is contained in $\HH$.

    Assume now $(A,B,C)$ does not satisfy Property~\ref{prop:ABCEdgesToA}.
    Then $G[C]$ contains a $\PartitionRadiusToA$-neighbor\-hood
    such that there are $\y$ edges  going from this $\PartitionRadiusToA$-neighborhood
    to $A$. Let these edges be $u_1w_1,\dots,u_\y w_\y$ with $u_i \in A$ and $w_i \in C$.
    Let $T$ be a breadth-first-search tree of depth $\PartitionRadiusToA$ of this
    $\PartitionRadiusToA$-neighborhood with root $v$. Let $V_1 = \{u_1,\dots,u_\y\}$
    and let $V_2$ be the set of vertices that lie for each $w_i$ on
    the unique path in $T$ of length at most $\PartitionRadiusToA$ from $w_i$ to the
    root $v$, including $w_i$ and $v$. 
    Let $H$ be the graph with vertex set $V_1 \cup V_2$ and 
    all edges from $T[V_2]$, as well as all edges between $V_1$ and $V_2$ in $G$.
    Notice that $|V_1| \le \y$ and 
    $|V_2| \le (\PartitionRadiusToA+1) \y$.
    Also $H[V_2]$ forms a tree with $\y$ outgoing edges to
    $V_1$. Therefore, the vertices in $V_2$ have in $H$ a summed
    degree of $2|V_2|-2 + \y$.
    They also have degree at least two in $H$.
    This means $G$ contains a subgraph $(V_1 \cup V_2, E)$ such that
    $V_1 \subseteq A$, $V_2 \subseteq C$, $|V_1+V_2| \le 25r\y^2$, 
    $|V_1| \le \y$, $|E| \le 25r\y^2$,
    and the vertices in $V_2$ have degree at least two and 
    a summed degree of $2|V_2|-2 + \y$.
    Such a graph is contained in $\HH$.
\end{proof}

\subsection{Bounding Probabilities of Edge-Sets}\label{sec:subgraphProb}

In this section we bound for an $\scale$-\dominated
random graph model $\Gnn$ the probability that $\HH \embeds \Gn$,
thereby bounding the probability that $\Gn$ is not \partitionable.

\begin{definition}\label{def:pbeta}
    Let $\cal H$ be a set of graphs over the vertex set $\{v_1,v_2,\dots\}$,
    and let $E$ be a set of edges over the same vertex set.
    We define
    \begin{align*}
        p_\scale(E,n) =&
        \momn^{|E|^2}
        \prod_{v_iv_j \in E}
        \frac{(n/i)^{1/(\scale-1)}}{n^{1/2}}
        \frac{(n/j)^{1/(\scale-1)}}{n^{1/2}} \\
        p_\scale(\cal H,n) =& \sum_{H \in \cal H} p_\scale(E(H),n).
    \end{align*}
\end{definition}

\begin{lemma}\label{lem:boundEmbeddingProb}
    Let $\Gnn$ be an $\scale$-\dominated random graph model.
    Let $(\cal H_n)_{n \in \N}$ be a sequence of sets of graphs
    over the vertex set $\{v_1,v_2,\dots\}$.
    Then $\P[\cal H_n \embeds \Gn] \le p_\scale(\cal H_n,n)$.
\end{lemma}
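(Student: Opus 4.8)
The plan is a union bound over the graphs in $\cal H_n$, followed by an application of \Cref{def:wellBehavedPowerLaw} to each individual term. Unfolding the definition of $\embeds$, the event that $\cal H_n \embeds \Gn$ equals $\bigcup_{H \in \cal H_n} \{E(H) \subseteq E(\Gn)\}$. Since $\Gn$ is supported on graphs with vertex set $\{v_1,\dots,v_n\}$, only finitely many distinct edge sets $E(H)$ can occur as subsets of $E(\Gn)$, so this is a finite union of events and the union bound gives $\P[\cal H_n \embeds \Gn] \le \sum_{H \in \cal H_n} \P[E(H) \subseteq E(\Gn)]$; adding in the graphs $H$ whose edge set cannot occur (e.g.\ those with an edge incident to some $v_k$ with $k > n$), or counting a repeated edge set multiple times, only enlarges the right-hand side, which is exactly how $p_\scale(\cal H_n,n)$ is defined in \Cref{def:pbeta}.

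Next I fix one $H \in \cal H_n$, set $m = |E(H)|$, and bound $\P[E(H) \subseteq E(\Gn)]$ by $p_\scale(E(H),n)$. Because $\Gnn$ is $\scale$-\dominated, \Cref{def:wellBehavedPowerLaw} applied with $E := E(H)$ bounds this probability by the Chung--Lu product $\prod_{v_iv_j \in E(H)} \frac{(n/i)^{1/(\scale-1)}(n/j)^{1/(\scale-1)}}{n}$ times an error factor which is $2^{O(m^2)}$, $\log(n)^{O(m^2)}$, or $O(n^\varepsilon)^{m^2}$ (for every $\varepsilon > 0$) according to whether $\scale > 3$, $\scale = 3$, or $\scale < 3$. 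In all three cases this error factor is at most $\momn^{m^2}$: for $\scale > 3$ a constant $\momn = O(1)$ absorbs both the base $2$ and the constant in the exponent; for $\scale = 3$ a $\momn = \log(n)^{O(1)}$ does likewise; and for $\scale < 3$ one takes $\varepsilon := 3 - \scale > 0$ and lets the constant hidden in $\momn = O(n^{3-\scale})$ absorb the constant in $O(n^{3-\scale})^{m^2}$; indeed $\momn$ was chosen precisely so that this holds. Finally, rewriting $\frac{(n/i)^{1/(\scale-1)}(n/j)^{1/(\scale-1)}}{n} = \frac{(n/i)^{1/(\scale-1)}}{n^{1/2}} \cdot \frac{(n/j)^{1/(\scale-1)}}{n^{1/2}}$ and multiplying by $\momn^{m^2}$ produces exactly $p_\scale(E(H),n)$ of \Cref{def:pbeta}.

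Summing the per-graph bound over $H \in \cal H_n$ then yields $\P[\cal H_n \embeds \Gn] \le \sum_{H \in \cal H_n} p_\scale(E(H),n) = p_\scale(\cal H_n,n)$, which is the claim. I do not expect a genuine obstacle here: the content is purely a matching-up of \Cref{def:wellBehavedPowerLaw} with \Cref{def:pbeta} via the union bound, and the only step that asks for a little care is the case analysis checking that the multiplicative error factor of an $\scale$-\dominated model is bounded by $\momn^{m^2}$, together with the harmless boundary cases ($m = 0$, or edges not contained in $\{v_1,\dots,v_n\}$).
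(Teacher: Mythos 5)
Your proof is correct and follows exactly the same route as the paper's: a union bound over $H \in \cal H_n$ followed by an application of \Cref{def:wellBehavedPowerLaw} to each individual edge set, then matching the resulting bound with \Cref{def:pbeta}. The only difference is that you spell out the (routine, correct) case analysis verifying that the multiplicative error factor in \Cref{def:wellBehavedPowerLaw} is absorbed by $\momn^{|E|^2}$, which the paper leaves implicit.
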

\begin{proof}
    Using the union bound and Definition~\ref{def:wellBehavedPowerLaw} we see
    \begin{align*}
        \P[\cal H_n \embeds \Gn] \le \sum_{H \in \cal H_n} 
        \P[ E(H) \subseteq E(\Gn)]
        \le \sum_{H \in \cal H_n} p_\scale(H,n)
        = p_\scale(\cal H_n,n).
    \end{align*}
\end{proof}
It is therefore sufficient to bound $p_\scale(\HH)$.
The following two Lemmas prove some technicalities we need to do so.

\begin{lemma}\label{lem:sumbound}
    Let $\scale \ge 2$.
    For $n \in \N^+$
    $$
    \sum_{i=1}^{n} \frac{(n/i)^{1/(\scale-1)}}{n^{1/2}} \le \momn\sqrt{n}, \quad
    \sum_{i=1}^{n} \frac{(n/i)^{2/(\scale-1)}}{n} \le \momn^2.
    $$
    For $\t,n \in \N^+, \delta \in \N$
    $$
    \sum_{i=1}^{\t} \frac{(n/i)^{\delta/(\scale-1)}}{n^{\delta/2}}
    \le \momn^\delta \t.
    $$
    For $\t,n,\delta \in \N^+$, $\delta \ge 2$
    $$
    \sum_{i=\t+1}^{n} \frac{(n/i)^{\delta/(\scale-1)}}{n^{\delta/2}}
    \le \momn^\delta \t^{1-\delta/2}.
    $$
\end{lemma}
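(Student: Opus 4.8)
The plan is to reduce all four bounds to a single genuine computation (the second one) together with one pointwise estimate, and to derive the rest by routine manipulations. Throughout I write $w_i=(n/i)^{1/(\scale-1)}$, so that the four left-hand sides are $\sum_{i=1}^n w_i/\sqrt n$, $\sum_{i=1}^n w_i^2/n$, $\sum_{i=1}^\t (w_i/\sqrt n)^\delta$ and $\sum_{i=\t+1}^n (w_i/\sqrt n)^\delta$, with asserted right-hand sides $\momn\sqrt n$, $\momn^2$, $\momn^\delta\t$ and $\momn^\delta\t^{1-\delta/2}$. Two elementary facts I will use freely: $w_i/\sqrt n$ is nonincreasing in $i$, and we may assume $\momn\ge 1$.

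For the second bound I would argue directly. With $\gamma=2/(\scale-1)$ the sum equals $n^{\gamma-1}\sum_{i=1}^n i^{-\gamma}$, and I would estimate $\sum_{i=1}^n i^{-\gamma}$ by comparison with an integral in the three regimes: for $\scale>3$ we have $\gamma<1$ and the sum is $O(n^{1-\gamma})$, giving a total of $O(1)$; for $\scale=3$ we have $\gamma=1$ and the sum is $O(\log n)$, giving $O(\log n)$; for $2\le\scale<3$ we have $\gamma>1$, the sum is bounded by a constant (essentially $\zeta(\gamma)$), giving $O(n^{\gamma-1})=O(n^{(3-\scale)/(\scale-1)})$. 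In this last case one checks $(3-\scale)/(\scale-1)\le 2(3-\scale)$, which holds precisely because $\scale\ge 3/2$, so the total is $O(n^{2(3-\scale)})$; in all three regimes this matches $\momn^2$. The first bound then follows with no further work by Cauchy--Schwarz: $\bigl(\sum_{i=1}^n w_i/\sqrt n\bigr)^2\le n\sum_{i=1}^n w_i^2/n\le n\,\momn^2$.

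Next I would establish the pointwise estimate $w_k/\sqrt n\le\momn/\sqrt k$ for $1\le k\le n$, i.e.\ $(n/k)^{1/(\scale-1)-1/2}\le\momn$: for $\scale>3$ the exponent is negative and $n/k\ge 1$, so the left side is $\le 1\le\momn$; for $\scale=3$ it is $1$; for $2\le\scale<3$ the exponent is $(3-\scale)/(2(\scale-1))\in(0,\tfrac12]$, so the left side is $\le n^{(3-\scale)/(2(\scale-1))}\le n^{3-\scale}$ (again using $\scale\ge 3/2$), matching $\momn$. The third bound is then immediate: each of the $\t$ terms satisfies $(w_i/\sqrt n)^\delta\le\momn^\delta i^{-\delta/2}\le\momn^\delta$. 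For the fourth bound I would use $\delta\ge 2$ to write $(w_i/\sqrt n)^\delta=(w_i/\sqrt n)^{\delta-2}(w_i/\sqrt n)^2$; by monotonicity $(w_i/\sqrt n)^{\delta-2}\le(w_{\t+1}/\sqrt n)^{\delta-2}$ for $i\ge\t+1$, so the sum is at most $(w_{\t+1}/\sqrt n)^{\delta-2}\sum_{i=1}^n(w_i/\sqrt n)^2\le(w_{\t+1}/\sqrt n)^{\delta-2}\momn^2$ by the second bound, and the pointwise estimate at $k=\t+1$ together with $\sqrt{\t+1}\ge\sqrt\t$ gives $(w_{\t+1}/\sqrt n)^{\delta-2}\le\momn^{\delta-2}\t^{1-\delta/2}$.

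The only part with real content is the second bound (and the closely related pointwise estimate). The one point that needs care is the $\scale<3$ regime: after the integral comparison one must verify that the exponents $(3-\scale)/(\scale-1)$ and $(3-\scale)/(2(\scale-1))$ do not exceed $2(3-\scale)$ and $3-\scale$ respectively, which is exactly where the hypothesis $\scale\ge 2$ is used. The $\scale$-dependent constants arising from $1/(1-\gamma)$, $1/(\gamma-1)$ and $\zeta(\gamma)$ are harmless, since $\scale$ is a fixed parameter and the implicit constants in the definition of $\momn$ absorb them (and they, together with $\momn\ge 1$, also cover the few small values of $n$).
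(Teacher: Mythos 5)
Your proof is correct, and it takes a genuinely different route from the paper's.

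The paper proves all four bounds directly: each by an integral comparison (with the trivial term-by-term bound for the third, and a case split on $\delta\in\{2\}$ vs.\ $\delta\ge 3$ and on $\gamma\le 1/2$ vs.\ $\gamma>1/2$ for the fourth). You instead collapse the work to a single integral comparison (the second bound) plus one pointwise estimate $w_k/\sqrt n\le\momn/\sqrt k$, deriving the first bound from the second by Cauchy--Schwarz, the third from the pointwise estimate, and the fourth by the peel-off $(w_i/\sqrt n)^\delta=(w_i/\sqrt n)^{\delta-2}(w_i/\sqrt n)^2$, bounding the first factor at $i=\t+1$ and reusing the second bound. Your structure is cleaner: it has only one place where a real estimate is done, it avoids the paper's case distinctions on $\delta$ and $\gamma$ in the fourth bound, and the Cauchy--Schwarz step quietly sidesteps a minor blemish in the paper's first-bound argument (the paper writes $\int_0^n t^{-\gamma}\,dt = n^{1-\gamma}/(1-\gamma)$, which degenerates at $\gamma=1$, i.e.\ $\scale=2$, a value the lemma statement allows; your route never touches that integral). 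The trade-off is that you lean on the monotonicity of $w_i/\sqrt n$ and on $\momn\ge 1$, both of which you state and both of which are immediate, so nothing is lost. The one place to be careful, which you flagged, is verifying the exponent inequalities $(3-\scale)/(\scale-1)\le 2(3-\scale)$ and $(3-\scale)/(2(\scale-1))\le 3-\scale$ in the $\scale<3$ regime; both reduce to $\scale-1\ge 1/2$, guaranteed by $\scale\ge 2$, and you get this right.
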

\begin{proof}
    We define $\gamma = 1/(\scale-1)$ and
    $$
    \rho_\gamma(n) =
    \begin{cases}
        O(1) &\quad \gamma < 1/2 \\
        \log(n)^{O(1)} &\quad \gamma = 1/2 \\
        O(n^{\gamma - 1/2}) &\quad \gamma > 1/2.
    \end{cases}
    $$
    For $\scale > 3$ we have $\gamma < 1/2$ and thus $\rho_\gamma(n) = \momn = O(1)$.
    Similarly, for $\scale = 3$ holds $\gamma = 1/2$ and $\rho_\gamma(n) = \momn = \log(n)^{O(1)}$.
    For $2 \le \scale < 3$ we have $\gamma - 1/2 \le 3 - \scale$.
    Therefore 
    $\rho_\gamma(n) \le \momn$ for all values of $\scale \ge 2$.
    It is now sufficient to show
    $$
    \sum_{i=1}^{n} \frac{(n/i)^\gamma}{n^{1/2}} = \rho_\gamma \sqrt{n}, \quad
    \sum_{i=1}^{n} \frac{(n/i)^{2\gamma}}{n} = \rho_\gamma(n)^2,
    $$
    $$
    \sum_{i=1}^{\t} \frac{(n/i)^{\delta\gamma}}{n^{\delta/2}}
    \le \rho_\gamma(n)^\delta \t,
    $$
    and for $\delta \ge 2$
    $$
    \sum_{i=\t+1}^{n} \frac{(n/i)^{\delta\gamma}}{n^{\delta/2}}
    \le 
    \rho_\gamma(n)^\delta
    \t^{1-\delta/2}.
    $$
    We bound with $\gamma < 1$
    $$
    \sum_{i=1}^{n} \frac{(n/i)^{\gamma}}{n^{1/2}}
    \le
    n^{\gamma-1/2}
    \int_{0}^{n} \frac{1}{t^{\gamma}} dt 
    = n^{\gamma-1/2}
    n^{1-\gamma}/(1-\gamma)
    \le \rho_\gamma(n)\sqrt n
    $$
    and
    \begin{equation}\label{eq:asdfasdf}
    \sum_{i=1}^{n} \frac{(n/i)^{2\gamma}}{n} 
    =
    n^{2\gamma-1} +
    n^{2\gamma-1}
    \sum_{2}^{n} \frac{1}{i^{2\gamma}}
    \le
    O(n^{2\gamma-1})
    \int_{1}^{n} \frac{1}{t^{2\gamma}} dt 
    \le \rho_\gamma(n)^2.
    \end{equation}
    We further bound
    $$
    \sum_{i=1}^{\t} \frac{(n/i)^{\delta\gamma}}{n^{\delta/2}}
    \le
    n^{\delta(\gamma-1/2)}\t
    \le \rho_\gamma(n)^\delta \t.
    $$
    To prove the last bound, we make a case distinction over $\delta$ and
    $\gamma$.  
    At first, assume $\delta=2$. Then
    $$
    \sum_{i=\t+1}^{n} \frac{(n/i)^{\delta\gamma}}{n^{\delta/2}}
    \stackrel{(\ref{eq:asdfasdf})}{\leq}
    \rho_\gamma(n)^\delta
    = \rho_\gamma(n)^\delta \t^{1-\delta/2}.
    $$
    Assume now that $\delta \ge 3$.
    We have for $\gamma \le 1/2$
    $$
    \sum_{i=\t+1}^{n} \frac{(n/i)^{\delta\gamma}}{n^{\delta/2}}
    \le \sum_{i=\t+1}^{n} \frac{(n/i)^{\delta/2}}{n^{\delta/2}}
    \le
    \int_{\t}^{n} \frac{1}{t^{\delta/2}} dt
    \le O(1) \t^{1-\delta/2}
    \le \rho_\gamma(n)^\delta \t^{1-\delta/2}
    $$
    and for $\gamma > 1/2$
    $$
    \sum_{i=\t+1}^{n} \frac{(n/i)^{\delta\gamma}}{n^{\delta/2}}
    \le
    n^{\delta(\gamma-1/2)} \int_{\t}^{n} \frac{1}{t^{\delta\gamma}} dt
    \le O(n^{\delta(\gamma-1/2)}) \t^{1-\delta\gamma} 
    \le \rho_\gamma(n)^\delta \t^{1-\delta/2}.
    $$
\end{proof}

\begin{lemma}\label{lem:boundLL}
    Let $\t_1,\t_2,l,k,n \in \N^+$ with $\t_1 \le \t_2$.
    Let $\LL$ be the set of all graphs $(V_1 \cup V_2, E)$ such that
    $V_1 \subseteq \{v_1,\dots,v_{\t_1}\}$,
    $V_2 \subseteq \{v_{\t_2+1},\dots,v_{n}\}$,
    $|V_1 \cup V_2| \le l$,
    $|E| \le l$,
    $|V_1| \le k$,
    all vertices in $V_2$ have degree at least two,
    and the summed degree of $V_2$ is $2|V_2|-2+k$.
    Then $p_\scale(\LL,n) \le \momn^{O(l^2)} \t_1^k \t_2^{1-k/2}$.
\end{lemma}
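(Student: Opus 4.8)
The plan is to bound the sum $p_\scale(\LL,n)=\sum_{H\in\LL}p_\scale(E(H),n)$ directly, using \Cref{lem:sumbound} after reorganizing it so that it separates over the vertices of a ``shape''. Write $\gamma=1/(\scale-1)$ and, for $1\le i\le n$, put $q(i)=(n/i)^{\gamma}/\sqrt n$, so that $p_\scale(E,n)=\momn^{|E|^2}\prod_{v_iv_j\in E}q(i)q(j)$. The first step is the elementary rewriting $\prod_{v_iv_j\in E}q(i)q(j)=\prod_{v_i\in V}q(i)^{\deg(v_i)}$, that is, collecting the two factors contributed by each edge at its endpoints. Since every $H\in\LL$ has $|E(H)|\le l$, the factor $\momn^{|E(H)|^2}$ is at most $\momn^{l^2}$ and can be pulled out of the sum; it then remains to bound $\sum_{H\in\LL}\prod_{v_i\in V(H)}q(i)^{\deg_H(v_i)}$.

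Next I would group the graphs of $\LL$ by their \emph{type}: a labeled graph $J$ with $|V(J)|\le l$ and $|E(J)|\le l$, together with a subset $W_1\subseteq V(J)$ with $|W_1|\le k$ (so $W_2:=V(J)\setminus W_1$), such that every vertex of $W_2$ has $\deg_J\ge2$ and $\sum_{w\in W_2}\deg_J(w)=2|W_2|-2+k$. There are only $2^{O(l^2)}$ such types. Every $H\in\LL$ arises from some type $J$ through an injective map $\sigma\colon V(J)\to\{v_1,\dots,v_n\}$ sending $W_1$ into $\{v_1,\dots,v_{\t_1}\}$ and $W_2$ into $\{v_{\t_2+1},\dots,v_n\}$, and such an injection preserves degrees. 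Hence
\[
\sum_{H\in\LL}\prod_{v_i\in V(H)}q(i)^{\deg_H(v_i)}\ \le\ \sum_{J}\ \sum_{\sigma}\ \prod_{a\in V(J)}q(\sigma(a))^{\deg_J(a)},
\]
where $J$ ranges over the $2^{O(l^2)}$ types, and now I relax the inner sum to run over \emph{all} maps $\sigma$ with the prescribed ranges (dropping injectivity only enlarges a sum of nonnegative terms). The point of the relaxation is that the inner sum then factorizes:
\[
\sum_{\sigma}\prod_{a\in V(J)}q(\sigma(a))^{\deg_J(a)}=\prod_{a\in W_1}\Bigl(\sum_{i=1}^{\t_1}q(i)^{\deg_J(a)}\Bigr)\cdot\prod_{a\in W_2}\Bigl(\sum_{i=\t_2+1}^{n}q(i)^{\deg_J(a)}\Bigr).
\]

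Now each inner sum is exactly of the shape handled by \Cref{lem:sumbound}: for $a\in W_1$ the third inequality there gives $\sum_{i=1}^{\t_1}q(i)^{\deg_J(a)}\le\momn^{\deg_J(a)}\t_1$ (also valid, trivially, when $\deg_J(a)=0$), and for $a\in W_2$, where $\deg_J(a)\ge2$, the fourth inequality gives $\sum_{i=\t_2+1}^{n}q(i)^{\deg_J(a)}\le\momn^{\deg_J(a)}\t_2^{\,1-\deg_J(a)/2}$. Multiplying over all $a\in V(J)$ yields $\momn^{\sum_a\deg_J(a)}\,\t_1^{|W_1|}\,\t_2^{\sum_{w\in W_2}(1-\deg_J(w)/2)}$. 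Here $\sum_a\deg_J(a)=2|E(J)|\le 2l$, and $|W_1|\le k$ with $\t_1\ge1$ gives $\t_1^{|W_1|}\le\t_1^{k}$; the decisive computation is that the defining condition of $\LL$ forces $\sum_{w\in W_2}\bigl(1-\tfrac12\deg_J(w)\bigr)=|W_2|-\tfrac12(2|W_2|-2+k)=1-\tfrac k2$, \emph{independently of $|W_2|$}. So each type contributes at most $\momn^{2l}\t_1^{k}\t_2^{\,1-k/2}$; summing over the $2^{O(l^2)}$ types and reinstating the $\momn^{l^2}$ prefactor gives $p_\scale(\LL,n)\le\momn^{O(l^2)}\t_1^{k}\t_2^{\,1-k/2}$, absorbing the type count into $\momn^{O(l^2)}$.

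The only genuinely delicate point is this last telescoping: the conclusion has no dependence on $|V_2|$ precisely because the ``summed degree $=2|V_2|-2+k$'' clause in the definition of $\LL$ is calibrated so that $\sum_{w\in W_2}(1-\deg(w)/2)$ collapses to $1-k/2$. Recognizing that this is exactly what the $V_2$-condition buys, and that the edge-product must be reorganized as a degree-weighted vertex-product so that the sum over embeddings separates, is the substance of the argument; the remaining ingredients — the union bound over the $2^{O(l^2)}$ shapes, pulling out $\momn^{|E|^2}\le\momn^{l^2}$, and invoking \Cref{lem:sumbound} term by term — are routine.
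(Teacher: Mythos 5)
Your proof is correct and follows essentially the same route as the paper's: both group the graphs of $\LL$ by their isomorphism type (the paper picks the maximizing class and multiplies by the $2^{l^2}$ class count, you sum over the $2^{O(l^2)}$ types — identical accounting), relax the injectivity of the embedding to factor the sum into per-vertex terms, bound each term via \Cref{lem:sumbound}, and exploit the telescoping $\sum_{w\in W_2}(1-\deg(w)/2)=1-k/2$ forced by the degree-sum clause in the definition of $\LL$. The rewrite of the edge-product as $\prod_i q(i)^{\deg(v_i)}$ is exactly the manipulation implicit in the paper's equation (\ref{eq:sum_bound5}), so no substantive difference.
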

\begin{proof}
    We can partition the set $\LL$ into at most $2^{l^2}$ many isomorphism classes.
    Let $\cal L' \subseteq \LL$ be the isomorphism class which
    maximizes $p_\scale(\cal L',n)$.
    We have that $p_\scale(\LL,n) \le 2^{l^2} p_\scale(\cal L',n)$.
    We fix a representative $H = (V_1 \cup V_2, E) \in \cal L'$.

    Let now $\gamma = |V_1|$ and $\Gamma = |V_2|$.
    We order the sets $V_1$ and $V_2$ such that we can speak of the
    first, second, etc.\ vertex in each set.
    Let $F$ be the set of all sequences of integers 
    $(x_1,\dots,x_\gamma,y_1,\dots,y_\Gamma)$
    without duplicates and with $1 \le x_i \le \t_1$ and $\t_2+1 \le y_i \le n$.
    For a sequence $f \in F$ let $f(H)$ be the homomorphism of $H$ 
    where the $i$th vertex from $V_1$ is assigned to $v_{x_i}$ (for $1 \le i \le \gamma$)
    and the $i$th vertex from $V_2$ is assigned to $v_{y_i}$ (for $1 \le i \le \Gamma$).
    Then $\cal L' = \bigcup_{f \in F}f(H)$.

    The vertices in $V_1$ and $V_2$ have a degree sequence 
    $\delta_1,\dots,\delta_{\gamma}, \Delta_1,\dots,\Delta_{\Gamma}$.
    We fix a sequence $f = (x_1,\dots,x_\gamma,y_1,\dots,y_\Gamma) \in F$.
    Then by Definition~\ref{def:pbeta}
    \begin{equation} \label{eq:sum_bound5}
        p_\scale(E(f(H)), n) =
        \momn^{(\delta_1+\dots+\delta_\gamma+\Delta_1+\dots+\Delta_\Gamma)^2/4}
        \prod_{i=1}^{\gamma} \frac{(n/x_i)^{\delta_i/(\scale-1)}}{n^{\delta_i/2}}
        \prod_{i=1}^{\Gamma} \frac{(n/y_i)^{\Delta_i/(\scale-1)}}{n^{\Delta_i/2}}.
    \end{equation}
    Observe that
    \begin{equation} \label{eq:sum_bound2}
        \Delta_i \ge 2 \textnormal{ for $1 \le i \le \Gamma$},
    \end{equation}
    \begin{equation} \label{eq:sum_bound3}
        \delta_1+\dots+\delta_\gamma+\Delta_1+\dots+\Delta_\Gamma \le 2l ,
    \end{equation}
    \begin{equation} \label{eq:sum_bound4}
        \sum_{i=1}^{\Gamma} (1-\Delta_i/2) = \Gamma - 
        \frac12 \sum_{i=1}^{\Gamma}\Delta_i = \Gamma - (\Gamma-1+k/2) = 1-k/2.
    \end{equation}

        We enumerate all sequences in $F$, and use
        (\ref{eq:sum_bound5}),
        (\ref{eq:sum_bound2}),
        (\ref{eq:sum_bound3}),
        (\ref{eq:sum_bound4}),
        and Lemma~\ref{lem:sumbound} to bound
        \begin{align*}
            & p_\scale(\LL,n) \le 2^{l^2} p_\scale(\cal L',n) = 2^{l^2} \sum_{f \in F} p_\scale(E(f(H)),n) \\
        \stackrel{\textnormal{(\ref{eq:sum_bound3})(\ref{eq:sum_bound5})}}{\leq}&
            2^{l^2}
            \sum_{x_1=1}^{\t_1} \dots \sum_{x_{\gamma}=1}^{\t_1}
            \sum_{y_1=\t_2+1}^n \dots \sum_{y_{\Gamma}=\t_2+1}^n
            \momn^{O(l^2)} \\&
            \prod_{i=1}^{\gamma} \frac{(n/x_i)^{\delta_i/(\scale-1)}}{n^{\delta_i/2}}
            \prod_{i=1}^{\Gamma} \frac{(n/y_i)^{\Delta_i/(\scale-1)}}{n^{\Delta_i/2}} \\
        \le&
            \momn^{O(l^2)}
            \sum_{x_1=1}^{\t_1} 
            \frac{(n/x_1)^{\delta_1/(\scale-1)}}{n^{\delta_1/2}}
            \dots 
            \sum_{x_{\gamma}=1}^{\t_1}
            \frac{(n/x_\gamma)^{\delta_\gamma/(\scale-1)}}{n^{\delta_\gamma/2}} \\
        &
            \sum_{y_1=\t_2+1}^n 
            \frac{(n/y_1)^{\Delta_1/(\scale-1)}}{n^{\Delta_1/2}}
            \dots 
            \sum_{y_{\Gamma}=\t_2+1}^n
            \frac{(n/y_\Gamma)^{\Delta_\Gamma/(\scale-1)}}{n^{\Delta_\Gamma/2}} \\
        \stackrel{\textnormal{(\ref{eq:sum_bound2}), Lemma \ref{lem:sumbound}}}{\leq}&
            \momn^{O(l^2)}
            \prod_{i=1}^{\gamma} 
            \momn^{\delta_i}
            \t_1
            \prod_{i=1}^{\Gamma} 
            \momn^{\Delta_i}
            \t_2^{1-\Delta_i/2} \\
        \stackrel{\textnormal{(\ref{eq:sum_bound3})}}{\leq}&
            \momn^{O(l^2)}
            \prod_{i=1}^{\gamma} 
            \t_1
            \prod_{i=1}^{\Gamma} 
            \t_2^{1-\Delta_i/2} 
        \stackrel{(\ref{eq:sum_bound4})}{\leq}
            \momn^{O(l^2)}
            \t_1^{k}
            \t_2^{1-k/2}.
    \end{align*}

\end{proof}

\begin{lemma}
\label{lem:embedprob}
    Let $\t,r,\y,n \in \N^+$ with $\y \ge 5$. Then
    $$
    p_\scale(\HH,n) \le 
    \momn^{O(\y^6r^2)}\t^{-\y^2/10}.
    $$
\end{lemma}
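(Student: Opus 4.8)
The plan is to split $\HH$ into its two defining families from Definition~\ref{def:HH} and to bound the contribution of each to $p_\scale(\HH,n) = \sum_{H \in \HH} p_\scale(E(H),n)$ separately.

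First, the family of graphs $(V_1 \cup V_2, E)$ with $V_1 \subseteq \{v_1,\dots,v_\t\}$ and $V_2 \subseteq \{v_{\t^\y+1},\dots,v_n\}$ is exactly the set $\cal L_n(\t,\t^\y,25r\y^2,\y)$ considered in Lemma~\ref{lem:boundLL} (take $\t_1 = \t$, $\t_2 = \t^\y$, $l = 25r\y^2$, $k = \y$; note $\t \le \t^\y$). Hence Lemma~\ref{lem:boundLL} bounds its contribution by $\momn^{O((25r\y^2)^2)}\t^{\y}(\t^\y)^{1-\y/2} = \momn^{O(r^2\y^4)}\t^{\,2\y-\y^2/2}$. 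Since $\y \ge 5$ we have $2\y - \y^2/2 \le -\y^2/10$, so as $\t \ge 1$ this contribution is at most $\momn^{O(r^2\y^4)}\t^{-\y^2/10}$.

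Second, for the family of graphs $H$ with $V(H) \subseteq \{v_{\t+1},\dots,v_n\}$, $|V(H)| \le 200r\y^3$, minimum degree at least $2$, and \excess exactly $\y^2$, I would mimic the proof of Lemma~\ref{lem:boundLL}. Group these graphs into isomorphism classes; there are at most $200r\y^3 \cdot 2^{(200r\y^3)^2} = 2^{O(r^2\y^6)}$ of them (counting also the choice of the order $\Gamma := |V(H)|$). Fix a representative $H$ with degree sequence $\Delta_1,\dots,\Delta_\Gamma$; then each $\Delta_i \ge 2$, and since the \excess is $\y^2$ we have $|E(H)| = \Gamma + \y^2$ and $\sum_i \Delta_i = 2\Gamma + 2\y^2$. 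Summing $p_\scale(E(f(H)),n)$ over all (injective, hence a fortiori all) assignments $f$ of $V(H)$ into $\{v_{\t+1},\dots,v_n\}$, the edge-product of Definition~\ref{def:pbeta} factors into one sum per vertex, and the fourth inequality of Lemma~\ref{lem:sumbound} — applicable precisely because every $\Delta_i \ge 2$ — gives $\sum_{y=\t+1}^{n} (n/y)^{\Delta_i/(\scale-1)}/n^{\Delta_i/2} \le \momn^{\Delta_i}\t^{\,1-\Delta_i/2}$ for the $i$-th vertex. Multiplying over $i$, the exponent of $\t$ is $\sum_i(1 - \Delta_i/2) = \Gamma - (\Gamma + \y^2) = -\y^2$, while the accumulated power of $\momn$ — combining the factor $\momn^{|E(H)|^2}$ from Definition~\ref{def:pbeta} with the $\sum_i \Delta_i = O(r\y^3)$ further factors of $\momn$ and the $|E(H)| = O(r\y^3)$ bound — is $\momn^{O(r^2\y^6)}$. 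Summing over the $2^{O(r^2\y^6)}$ isomorphism classes, this family contributes at most $\momn^{O(r^2\y^6)}\t^{-\y^2}$.

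Adding the two contributions and using $\t^{-\y^2} \le \t^{-\y^2/10}$ for $\t \ge 1$ yields $p_\scale(\HH,n) \le \momn^{O(r^2\y^6)}\t^{-\y^2/10}$, as claimed. The second family is the main obstacle: one must carefully verify that every vertex has degree at least $2$ (so that the fourth bound of Lemma~\ref{lem:sumbound} applies to each coordinate sum rather than only to those with $\Delta_i \ge 2$) and that the \excess constraint forces $\sum_i(1-\Delta_i/2) = -\y^2$ exactly — this is what produces the decisive negative power of $\t$ — and then the isomorphism-class count and the $\momn$-bookkeeping have to be kept under the target $O(r^2\y^6)$. The first family is, by contrast, an essentially immediate application of Lemma~\ref{lem:boundLL}.
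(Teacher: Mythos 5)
Your proof is correct and the underlying computation is the same as the paper's; the only real difference is one of organization. You handle the $\cal L_n(\t,\t^\y,25r\y^2,\y)$ family exactly as the paper does. For the other family — graphs on $\{v_{\t+1},\dots,v_n\}$ with minimum degree~$2$ and \excess $\y^2$ — you re-derive the bound by mimicking the proof of Lemma~\ref{lem:boundLL} (isomorphism classes, factoring the sum per vertex, the $\delta\ge 2$ case of Lemma~\ref{lem:sumbound}, and the \excess forcing $\sum_i(1-\Delta_i/2)=-\y^2$). The paper instead notices that this family is already a subset of $\cal L_n(1,\t,200r\y^3+\y^2,2\y^2+2)$: take $V_1=\emptyset$, so the constraint ``summed degree of $V_2$ equals $2|V_2|-2+k$'' with $k=2\y^2+2$ becomes $2|E|=2|V_2|+2\y^2$, i.e.\ \excess exactly $\y^2$, and the size bounds match. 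One then invokes Lemma~\ref{lem:boundLL} a second time with $\t_1=1$, $\t_2=\t$, $l=200r\y^3+\y^2$, $k=2\y^2+2$ to get $\t_1^k\t_2^{1-k/2}=\t^{-\y^2}$ directly, exactly the exponent you obtain by hand. After this, the union bound and the final step using $\y\ge 5$ to reduce $\t^{2\y-\y^2/2}$ to $\t^{-\y^2/10}$ are identical. Your argument is sound and gives the same $\momn^{O(r^2\y^6)}$ and $\t^{-\y^2}$ constants; the paper's route just avoids re-verifying the isomorphism-class count and the $\momn$-bookkeeping that you repeat.
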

\begin{proof}
    We compare the definition of $\HH$ and $\LL$ and see that
    $$
    \HH \subseteq \cal L_n(1,\t,200r\y^3+\y^2,2\y^2+2) \cup \cal L_n(\t,\t^\y,25r\y^2,\y).
    $$
    Using Lemma~\ref{lem:boundLL} and the union bound we compute
    \begin{align*}
    &   p_\scale(\HH,n) \\
    &\le p_\scale(L_n(1,\t,200r\y^3+\y^2,2\y^2+2),n) + p_\scale(\cal L_n(\t,\t^\y,25r\y^2,\y),n) \\
    &\le \momn^{O(r \y^3)^2}\t^{-\y^2} + 
         \momn^{O(r\y^2)^2}\t^{\y}(\t^\y)^{-\y/2 +1} \\
    &\le \momn^{O(r^2 \y^6)}(\t^{-\y^2} + 
         \t^{\y}(\t^\y)^{-\y/2 +1}) \\
    &\le \momn^{O(r^2 \y^6)}(\t^{-\y^2} + 
         \t^{-\y^2/2 +2\y}) \\
    &\le \momn^{O(r^2 \y^6)}\t^{-\y^2/2 +2\y} \\
    &\stackrel{\y \ge 5}{\le} \momn^{O(r^2 \y^6)}\t^{-\y^2/10}.
    \end{align*}
\end{proof}

\begin{theorem}\label{thm:partitionProbability}
    Let $\Gnn$ be an $\scale$-\dominated random graph model
    and let $\t,r,\y,n \in \N^+$ with $\y \ge 5$.
    The probability that $\Gn$ is not \partitionable is at
    most 
    $\momn^{O(\y^6r^2)}\t^{-\y^2/10}$.
\end{theorem}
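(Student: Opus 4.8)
The plan is to simply chain together the three ingredients already assembled in this subsection. By \Cref{lem:ifnotpartthenembed}, if $\Gn$ is not \partitionable then $\HH \embeds \Gn$, i.e., the event ``$\Gn$ is not \partitionable'' is contained in the event ``$\HH \embeds \Gn$''. Hence it suffices to bound $\P[\HH \embeds \Gn]$.

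Next I would apply \Cref{lem:boundEmbeddingProb} with the sequence of graph sets $(\cal H_n(\t,r,\y))_{n\in\N}$, which gives $\P[\HH \embeds \Gn] \le p_\scale(\HH,n)$. The hypothesis $\y \ge 5$ in the theorem statement is exactly what is needed to invoke \Cref{lem:embedprob}, which yields $p_\scale(\HH,n) \le \momn^{O(\y^6 r^2)}\t^{-\y^2/10}$. Combining the three inequalities gives
$$
\P[\Gn \text{ is not \partitionable}] \le \P[\HH \embeds \Gn] \le p_\scale(\HH,n) \le \momn^{O(\y^6 r^2)}\t^{-\y^2/10},
$$
which is the claimed bound.

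There is essentially no obstacle left at this level: all the real work has been done in the preparatory lemmas (the combinatorial extraction of a small forbidden edge-set from a failed partition in \Cref{lem:ifnotpartthenembed}, the union bound over $\HH$ together with \Cref{def:wellBehavedPowerLaw} in \Cref{lem:boundEmbeddingProb}, and the summation estimates of \Cref{lem:sumbound} and \Cref{lem:boundLL} feeding into \Cref{lem:embedprob}). The only thing to be careful about is that the constants hidden in the $O(\cdot)$ of the exponent are uniform in $\t$ and $n$ (they depend only on the universal constants appearing in those lemmas), so that the final expression is a genuine bound of the stated shape; this is immediate from how \Cref{lem:embedprob} is phrased.
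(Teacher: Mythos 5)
Your proposal is correct and follows exactly the same route as the paper's proof, which is literally the one-line combination of Lemmas~\ref{lem:ifnotpartthenembed}, \ref{lem:boundEmbeddingProb}, and \ref{lem:embedprob}; you simply spell out the chain of inequalities more explicitly.
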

\begin{proof}
    Combining Lemma~\ref{lem:ifnotpartthenembed},~\ref{lem:boundEmbeddingProb}, and \ref{lem:embedprob}.  
\end{proof}

\subsection{Expected Neighborhood Sizes}
\label{sec:nhoodsizes}

In this section we bound the sum of the expected sizes of all $r$-neighborhoods
of a graph from an $\scale$-\dominated random graph model
under the condition that
$\t \in \N$ is the minimal value such that a graph is \partitionable.
We start with the simpler condition that $\cal H_n \embeds \Gn$
for some set of graphs $\cal H_n$
and then lift this result using Lemma~\ref{lem:ifnotpartthenembed} from the previous section.

\begin{lemma}\label{lem:condNhood1}
    Let $\Gnn$ be an $\scale$-\dominated random graph model.
    Let $r \in \N^+$ and let $(\cal H_n)_{n \in \N}$ be a sequence of sets of graphs
    where every graph has size at most $h \ge 1$.
    Then
    $$
        \E\bigl[\sum_{v \in V(\Gn)} \|\Gn[N^{\Gn}_r(v)]\| \bigm| \cal H_n \embeds \Gn \bigr]  
        \P[\cal H_n \embeds \Gn]
        \le h^{O(r)} \momn^{O(r^2+h^2)} n p_\scale(\cal H_n,n).
    $$
\end{lemma}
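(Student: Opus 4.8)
I would bound the conditional expectation $\E[\sum_v \|\Gn[N^{\Gn}_r(v)]\| \mid \cal H_n \embeds \Gn]$ by first removing the conditioning: multiplying by $\P[\cal H_n \embeds \Gn]$ turns the left-hand side into $\E[\mathbf 1[\cal H_n \embeds \Gn] \cdot \sum_v \|\Gn[N^{\Gn}_r(v)]\|]$, so the target becomes an unconditional expectation. The summand $\|\Gn[N^{\Gn}_r(v)]\|$ counts vertices and edges in the $r$-ball around $v$; I would bound it (up to a factor $h^{O(r)}$ for the constant-size graphs in $\cal H_n$) by a sum over ``witnessing'' edge configurations: a vertex $u$ lies in $N^{\Gn}_r(v)$ only if $\Gn$ contains a path of length $\le r$ from $v$ to $u$, and an edge $u u'$ contributes only if both endpoints are within distance $r$ of $v$, again witnessed by short paths. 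So $\mathbf 1[\cal H_n \embeds \Gn]\cdot \sum_v \|\Gn[N^{\Gn}_r(v)]\|$ is dominated by $\sum_{H \in \cal H_n}\sum_{(\text{configurations }F)} \mathbf 1[E(H)\cup F \subseteq E(\Gn)]$, where $F$ ranges over unions of a bounded number (depending on $r$) of short paths anchored appropriately.

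**Applying the $\scale$-\domination bound.** Each term $\E[\mathbf 1[E(H)\cup F \subseteq E(\Gn)]] = \P[E(H)\cup F \subseteq E(\Gn)]$ is controlled by \Cref{def:wellBehavedPowerLaw} (equivalently by \Cref{lem:boundEmbeddingProb} applied edge-set-wise): it is at most the Chung--Lu-style product over $E(H)\cup F$ times the multiplicative factor, which for a set of $\le |E(H)| + O(r)$ edges is $\momn^{O((|E(H)|+r)^2)} = \momn^{O(h^2 + r^2)}$ since $|E(H)| \le \|H\| \le h$ and the number of path-edges is $O(r)$ (a constant number of paths of length $\le r$). Then I would sum over all placements of the free path-vertices — those not already pinned down by $H$ or by the anchor $v$ — using \Cref{lem:sumbound}. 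Each internal path vertex has degree $\ge 2$ in the configuration, so each such summation over $\{v_1,\dots,v_n\}$ contributes a factor $\momn^{O(1)}$ (the $\sum_i (n/i)^{2/(\scale-1)}/n \le \momn^2$ bound, or its $\delta\ge 2$ refinement), while summing the anchor $v$ over all $n$ vertices yields the single factor $n$. The edges of $H$ itself, left over after this, contribute exactly $\prod_{v_iv_j\in E(H)} (n/i)^{1/(\scale-1)}(n/j)^{1/(\scale-1)}/n$; repackaging the $\momn^{O(h^2)}$ factor with it reproduces $p_\scale(E(H),n)$, and summing over $H \in \cal H_n$ gives $p_\scale(\cal H_n,n)$. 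Collecting: $h^{O(r)}$ from choices of which constant-size $H$-graph is the witness and from bookkeeping over path shapes, $\momn^{O(r^2+h^2)}$ from the multiplicative factor plus the internal-vertex sums, one factor of $n$ from the anchor, and $p_\scale(\cal H_n,n)$ from the $H$-edges — exactly the claimed $h^{O(r)}\momn^{O(r^2+h^2)} n \, p_\scale(\cal H_n,n)$.

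**The main obstacle.** The delicate point is the combinatorial bookkeeping of the witnessing configurations: I must make sure that every vertex and every edge counted in $\|\Gn[N^{\Gn}_r(v)]\|$ is charged to at least one edge-set of bounded size that is genuinely present in $\Gn$, that the total number of distinct ``shapes'' of such configurations is bounded by $h^{O(r)}$ (not something growing in $n$), and crucially that after fixing a shape every free vertex to be summed over has degree $\ge 2$ in the configuration, so that \Cref{lem:sumbound}'s $\momn^{O(1)}$ bound applies rather than the cruder $\momn\sqrt n$ bound which would cost an extra $\sqrt n$. Handling the vertices of $H$ that already sit on the witnessing paths, and avoiding double-counting when a configuration edge coincides with an edge of $H$, is where care is needed; but since $H$ has size $\le h$ this only inflates constants and the $\momn$-exponent by $O(h)$ or $O(h^2)$, which the statement already allows. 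I do not expect any genuinely new idea beyond the union-bound-plus-\Cref{lem:sumbound} machinery already used in \Cref{lem:boundLL}.
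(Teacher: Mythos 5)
Your proposal follows essentially the same route as the paper's proof: remove the conditioning, union-bound over $H \in \cal H_n$, witness each vertex and edge of each $r$-neighborhood by a short path from $v$, apply \Cref{def:wellBehavedPowerLaw} to the combined edge set $E(H) \cup E(Q)$, and then sum the placements of the free path vertices using \Cref{lem:sumbound}. That is what the paper does, taking $\cal Q$ to be all paths of length at most $r+1$ and bounding $\sum_v \|\Gn[N^{\Gn}_r(v)]\|$ by $2\sum_{Q\in\cal Q}\mathbf 1[E(Q)\subseteq E(\Gn)]$.

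However, the bookkeeping claim you single out as crucial --- that ``every free vertex to be summed over has degree $\ge 2$ in the configuration, so that the $\momn^{O(1)}$ bound applies rather than the cruder $\momn\sqrt n$ bound'' --- is not correct, and taken literally it would make your accounting come out wrong. The two endpoints of a witnessing path (the center $v$ of the neighborhood, and the last vertex on the path) have degree exactly one in $E(Q)\setminus E(H)$ whenever they do not lie in $V(H)$, and for each of them the relevant sum from \Cref{lem:sumbound} is $\sum_{x\le n} (n/x)^{1/(\scale-1)}/n^{1/2} \le \momn\sqrt n$, not $\momn^{O(1)}$. The factor $n$ in the target bound is precisely the product of these two $\momn\sqrt n$ contributions; it is not obtained by ``summing the anchor $v$ over all $n$ vertices'' as an unweighted index, because $v$ itself sits on the path with weight $(n/i_v)^{1/(\scale-1)}/\sqrt n$. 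The paper's proof handles exactly this point: after splitting each path edge into ``in $E(H)$'' or ``not in $E(H)$'' via a bit string, it classifies each path position by its degree in the residual graph and applies five explicit bounds, of which the $X(2)$-with-exponent-$1$ case gives $\momn\sqrt n$ and is used for the two endpoints. Your plan is salvageable once you replace the ``all free vertices have degree $\ge 2$'' claim by ``at most two free vertices have degree $1$, each contributing $\momn\sqrt n$, and the rest have degree $\ge 2$''; with that correction the calculation gives $h^{O(r)}\momn^{O(r^2+h^2)} n\, p_\scale(\cal H_n,n)$ as claimed.
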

\begin{proof}
    We fix an $n$.
    Let $\cal Q$ be the set of all paths of length at most $r+1$
    over $V(\Gn)$.
    Then by linearity of expectation
    $$
        \E\bigl[\sum_{v \in V(\Gn)} \|\Gn[N^{\Gn}_r(v)]\| \bigr] \le 
        2 \sum_{Q \in \cal Q} \P[E(Q) \subseteq E(\Gn)]. 
    $$
    We use this observation and the union bound to compute
    \begin{align}\label{eq:expsize}
        & \E\bigl[\sum_{v \in V(\Gn)} \|\Gn[N^{\Gn}_r(v)]\| \bigm| \cal H_n \embeds \Gn \bigr]  \P[\cal H_n \embeds \Gn] \nonumber\\
        &\le \sum_{H \in \cal H_n} 
        \E\bigl[\sum_{v \in V(\Gn)} \|\Gn[N^{\Gn}_r(v)]\| \bigm| E(H) \subseteq E(\Gn) \bigr]  \P[E(H) \subseteq E(\Gn)] \nonumber\\
        &\le 2 \sum_{H \in \cal H_n} \sum_{Q \in \cal Q}
        \P[E(Q) \subseteq E(\Gn) \mid E(H) \subseteq E(\Gn)]  \P[E(H) \subseteq E(\Gn)] \nonumber\\
        &\le 2 \sum_{H \in \cal H_n} \sum_{Q \in \cal Q}
        \P[E(Q) \subseteq E(\Gn), E(H) \subseteq E(\Gn)] \nonumber\\
        &\le 2 \sum_{H \in \cal H_n} \sum_{Q \in \cal Q}
        p_\scale(E(H) \cup (E(Q) \setminus E(H)),n) \nonumber\\
        &\le \momn^{O(r^2+h^2)} \sum_{H \in \cal H_n} p_\scale(E(H),n) \sum_{Q \in \cal Q}
        p_\scale(E(Q)\setminus E(H),n).
    \end{align}
    We fix a graph $H \in \cal H_n$.
    We want to find a good bound for $p_\scale(E(Q)\setminus E(H),n)$ for every $Q \in \cal Q$.
    Let $Q \in \cal Q$ be a path.
    We assume the vertices $V(Q) = \{w_1,\dots,w_q \}$ with $q \le r+2$ to be ordered such that 
    edges are only between consecutive vertices.
    Let $s = (s_1,\dots,s_{q-1}) \in \{0,1\}^{q-1}$ be the unique bit-string
    with
    $$
    \{ (w_i,w_{i+1}) \mid s_i=1, 1 \le i < q \} = E(Q) \setminus E(H).
    $$
    This means $s$ describes which edges of $Q$ are not present in $H$.
    Let $Q' = (V(Q), E(Q) \setminus E(H))$.
    The degree sequence of $Q'$ is 
    $\delta^s_1,\dots,\delta^s_{q}$ with $\delta_i^s = s_{i-1} + s_{i}$ (we assume $s_0=s_q=0$).
    We define 
    $$
    W(\delta)=
    \begin{cases}
        V(\Gn) &\quad \delta = 2 \\
        V(H) &\quad   \delta < 2
    \end{cases}
    \quad \text{and} \quad
    X(\delta)=
    \begin{cases}
        \{1,\dots,n\} &\quad \delta = 2 \\
        \{i \mid v_i \in V(H)\} &\quad \delta < 2.
    \end{cases}
    $$
    If $\delta_1^s = 0$ then $w_1 \in V(H) = W(\delta_1^s$+1). 
    If $\delta_q^s = 0$ then $w_q \in V(H) = W(\delta_q^s$+1). 
    If $\delta_i^s \in \{0,1\}$ then $w_i \in V(H) = W(\delta_i^s)$ for $2 \le i \le q-1$.
    Therefore
    \begin{align}\label{eq:fufufu}
        &   \sum_{Q \in \cal Q} p_\scale(E(Q) \setminus E(H)) \nonumber\\
        \le&
        \sum_{q=1}^{r+2}
            \sum_{s \in \{0,1\}^{q-1}}
            \sum_{w_1 \in W(\delta^s_1+1)} 
            \sum_{w_2 \in W(\delta^s_2)} 
            \dots
            \sum_{w_{q-1} \in W(\delta^s_{q-1})} 
            \sum_{w_q \in W(\delta^s_q+1)} \nonumber\\
        &   \quad\quad p_\scale(\{ (w_{i},w_{i+1}) \mid s_i=1, 1 \le i < q \}, n) \nonumber\\
        =&
        \sum_{q=1}^{r+2} 
            \sum_{s \in \{0,1\}^{q-1}}
            \sum_{x_1 \in X(\delta^s_1+1)} 
            \sum_{x_2 \in X(\delta^s_2)} 
            \dots
            \sum_{x_{q-1} \in X(\delta^s_{q-1})} 
            \sum_{x_q \in X(\delta^s_q+1)} \nonumber\\
        &   \quad\quad \momn^{O(q^2)} \prod_{i=1}^q \frac{(n/x_i)^{\delta^s_i/(\scale-1)}}{n^{\delta^s_i/2}} \nonumber\\
        =&
            \momn^{O(r^2)}
            \sum_{q=1}^{r+2}
            \sum_{s \in \{0,1\}^{q-1}}
            \sum_{x_1 \in X(\delta^s_1+1)} 
                \frac{(n/x_{1})^{\delta^s_{1}/(\scale-1)}}{n^{\delta^s_{1}/2}}
            \sum_{x_2 \in X(\delta^s_2)} 
                \frac{(n/x_{2})^{\delta^s_{2}/(\scale-1)}}{n^{\delta^s_{2}/2}}
            \dots \nonumber\\
        &   \quad\quad \sum_{x_{q-1} \in X(\delta^s_{q-1})} 
                \frac{(n/x_{q-1})^{\delta^s_{q-1}/(\scale-1)}}{n^{\delta^s_{q-1}/2}}
            \sum_{x_q \in X(\delta^s_q+1)}
                \frac{(n/x_{q})^{\delta^s_{q}/(\scale-1)}}{n^{\delta^s_{q}/2}}.
    \end{align}
    The bound of (\ref{eq:fufufu}) depends on the degree sequence $\delta^s_1,\dots,\delta^s_q$.
    Remember that $\delta^s_1,\delta^s_q \in \{0,1\}$
    and $\delta^s_i \in \{0,1,2\}$ for $1 < i < q$.
    The following five bounds follow from Lemma~\ref{lem:sumbound}.
    \begin{equation*}
        \sum_{x \in X(0)} 
        \frac{(n/x)^{0/(\scale-1)}}{n^{0/2}}
        \le h
        \le \momn^2 h
    \end{equation*}
    \begin{equation*}
        \sum_{x \in X(1)} 
        \frac{(n/x)^{1/(\scale-1)}}{n^{1/2}}
        \le \momn h
        \le \momn^2 h
    \end{equation*}
    \begin{equation*}
        \sum_{x \in X(2)} 
        \frac{(n/x)^{2/(\scale-1)}}{n^{2/2}}
        \le \momn^2
        \le \momn^2 h
    \end{equation*}
    \begin{equation*}
        \sum_{x \in X(1)} 
        \frac{(n/x)^{0/(\scale-1)}}{n^{0/2}}
        \le h
        \le \momn^2 h
    \end{equation*}
    \begin{equation*}
        \sum_{x \in X(2)} 
        \frac{(n/x)^{1/(\scale-1)}}{n^{1/2}}
        \le \momn\sqrt{n}
        \le \momn^2 h^2 \sqrt{n}
    \end{equation*}
    These five bounds can be used to bound the inner $q$ sums
    of (\ref{eq:fufufu}).
    This yields
    \begin{equation}\label{eq:blabla}
        \sum_{Q \in \cal Q} p_\scale(E(Q) \setminus E(H))
        \le
        \momn^{O(r^2)}
        \sum_{q=1}^{r+2} 
        \sum_{s \in \{0,1\}^{q-1}}
        \momn^{2q}
        h^q
        \sqrt{n}
        \sqrt{n}
        \le h^{O(r)} \momn^{O(r^2)} n.
    \end{equation}

    At last, we combine
    (\ref{eq:expsize}) and
    (\ref{eq:blabla}) and get
    \begin{align*}\label{eq:expsize}
        & \E\bigl[\sum_{v \in V(\Gn)} \|\Gn[N^{\Gn}_r(v)]\| \bigm| H_n \embeds \Gn \bigr]  \P[\cal H \embeds \Gn] \\ 
        &\le 
        \momn^{O(r^2+h^2)}
        \sum_{H \in \cal H_n} p_\scale(E(H),n) \sum_{Q \in \cal Q}
        p_\scale(E(Q)\setminus E(H),n) \\
        &\le p_\scale(\cal H_n,n) h^{O(r)} \momn^{O(r^2+h^2)} n.
    \end{align*}

\end{proof}

\begin{lemma}\label{lem:condNhood}
    Let $\Gnn$ be an $\scale$-\dominated random graph model.
    Let $r,\y,n \in \N^+$ with $\y \ge 5$.
    Let $A_\t$ be the event that $\t \in \N^+$ is the minimal value such that
    $\Gn$ is \partitionable.
    Then 
    $$
        \E\bigl[\sum_{v \in V(\Gn)} \|\Gn[N^{\Gn}_r(v)]\| \bigm| A_\t\bigr] \P[A_\t] \le
        (r\y)^{O(r)} \momn^{O(\y^6r^2)}\t^{-\y^2/10} n.
    $$
\end{lemma}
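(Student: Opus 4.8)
The plan is to glue together three facts already established: the forbidden-edge-set characterization of non-\partitionable graphs (Lemma~\ref{lem:ifnotpartthenembed}), the bound on the expected total $r$-neighborhood size conditioned on embedding a prescribed family of small graphs (Lemma~\ref{lem:condNhood1}), and the estimate $p_\scale(\HH,n)\le\momn^{O(\y^6r^2)}\t^{-\y^2/10}$ (Lemma~\ref{lem:embedprob}).

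First I would note that being \partitionable is monotone in $\t$: if $(A,B,C)$ is a \partition for parameter $\t_0$, it is also one for every $\t_1\ge\t_0$, since among the four properties of Definition~\ref{def:partition} only Property~\ref{prop:ABCSize} mentions $\t$, and its requirements $|A|\le\t$ and $|B|\le\t^\y$ only weaken as $\t$ grows. Hence the minimal $\t\in\N^+$ for which $\Gn$ is \partitionable is well-defined (and at most $n$, as $A=V(\Gn)$, $B=C=\emptyset$ is a \partition for $\t=n$). Consequently, on the event $A_\t$ with $\t\ge2$, $\Gn$ fails to be \partitionable for parameter $\t-1$, so Lemma~\ref{lem:ifnotpartthenembed} applied with $\t-1$ gives $\HHminus\embeds\Gn$, that is, $A_\t\subseteq\{\HHminus\embeds\Gn\}$. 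Since the random variable $\sum_{v\in V(\Gn)}\|\Gn[N^{\Gn}_r(v)]\|$ is nonnegative and $A_\t\subseteq\{\HHminus\embeds\Gn\}$, the left-hand side is at most $\E[\sum_{v\in V(\Gn)}\|\Gn[N^{\Gn}_r(v)]\|\mid\HHminus\embeds\Gn]\,\P[\HHminus\embeds\Gn]$. For $\t=1$ I would instead use the trivial inclusion $A_1\subseteq\{\cal H\embeds\Gn\}$ with $\cal H$ consisting of a single edgeless vertex (so $\P[\cal H\embeds\Gn]=1$ and $p_\scale(\cal H,n)=1$), which needs no extra work.

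Next I would apply Lemma~\ref{lem:condNhood1} with $\cal H_n=\HHminus$. By Definition~\ref{def:HH}, every graph in $\HHminus$ has at most $200r\y^3$ vertices and, in the first family, \excess $\y^2$, so each has size at most $h=O(r\y^3)$. Lemma~\ref{lem:condNhood1} then bounds the above quantity by $h^{O(r)}\momn^{O(r^2+h^2)}\,n\,p_\scale(\HHminus,n)$, and plugging in $h=O(r\y^3)$ turns the prefactor into $(r\y)^{O(r)}\momn^{O(\y^6r^2)}n$. Finally, Lemma~\ref{lem:embedprob} applied with $\t-1$ in place of $\t$ gives $p_\scale(\HHminus,n)\le\momn^{O(\y^6r^2)}(\t-1)^{-\y^2/10}$; using $(\t-1)^{-\y^2/10}\le2^{\y^2/10}\t^{-\y^2/10}$ for $\t\ge2$ and absorbing the constant $2^{\y^2/10}$ into $\momn^{O(\y^6r^2)}$, everything combines to the claimed bound $(r\y)^{O(r)}\momn^{O(\y^6r^2)}\t^{-\y^2/10}n$. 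For $\t=1$ the same computation with the trivial $\cal H$ yields only $\momn^{O(r^2)}n$, which is subsumed since $\t^{-\y^2/10}=1$ there.

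The argument is essentially mechanical, with all the substance residing in the three cited lemmas. The points requiring care are the bookkeeping of the $O(\cdot)$ exponents — confirming that the size bound $h=O(r\y^3)$ for the forbidden graphs of $\HHminus$ propagates to exponent $O(r)$ on $(r\y)$ and $O(\y^6r^2)$ on $\momn$ exactly as stated — together with the boundary case $\t=1$ and the harmless $(\t-1)$-versus-$\t$ slack.
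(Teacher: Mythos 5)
Your proof is correct and follows essentially the same route as the paper's: bound $\E[X\mid A_\t]\P[A_\t]$ by $\E[X\mid \HHminus\embeds\Gn]\P[\HHminus\embeds\Gn]$ via the inclusion $A_\t\subseteq\{\HHminus\embeds\Gn\}$ (the paper phrases this as $\E[X\mid A]\P[A]\le\E[X\mid B]\P[B]$ for $A\subseteq B$), then apply Lemma~\ref{lem:condNhood1} with $h=O(r\y^3)$ and Lemma~\ref{lem:embedprob} with $\t-1$. The only differences are bookkeeping: you split at $\t\ge2$ vs.\ $\t=1$ while the paper splits at $\t\ge3$ vs.\ $\t\le2$, and you handle the $(\t-1)$-to-$\t$ slack via $(\t-1)^{-1}\le2\t^{-1}$ and absorb the constant $2^{\y^2/10}$ into $\momn^{O(\y^6r^2)}$, whereas the paper uses $(\t-1)^{-1}\le\t^{-1/2}$ for $\t\ge3$; your version is slightly cleaner and, incidentally, sidesteps a small exponent inconsistency in the paper's final display (which writes $(\t-1)^{-\y^2/5}$ where Lemma~\ref{lem:embedprob} supplies $(\t-1)^{-\y^2/10}$).
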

\begin{proof}
    We start with a general observation about conditional expected values.
    Let $X$ be a non-negative random variable and $A \subseteq B$ be events.
    Then 
    \begin{equation}\label{eq:condExp}
        \E[X \mid A] \P[A] \le \E[X \mid B] \P[B].
    \end{equation}

    Assume $\t \ge 3$.
    Let $G$ be a graph with $V(G) = \{v_1,\dots,v_n\}$.
    If $\t \in \N^+$ is the minimal value such that $G$ is \partitionable
    then $G$ is not $(\t-1)$-$r$-$\y$-partitionable.
    Then by Lemma~\ref{lem:ifnotpartthenembed}, $\HHminus \embeds G$.
    Using (\ref{eq:condExp}), we see 
    \begin{multline*}
        \E\bigl[\sum_{v \in V(\Gn)} \|\Gn[N^{\Gn}_r(v)]\| \bigm| A_\t\bigr] \P[A_\t] \le \\
        \E\bigl[\sum_{v \in V(\Gn)} \|\Gn[N^{\Gn}_r(v)]\| \bigm| \HHminus \embeds \Gn\bigr] 
        \P[\HHminus \embeds \Gn].
    \end{multline*}
    Every subgraph in $\HHminus$ has by Definition~\ref{def:HH} size at most $200r\y^3$.
    Also for $\t \ge 3$ we have $(\t-1)^{-1} \le \t^{-1/2}$.
    Lemma~\ref{lem:condNhood1} and~\ref{lem:embedprob} imply
    \begin{align*}
        & \E\bigl[\sum_{v \in V(\Gn)} \|\Gn[N^{\Gn}_r(v)]\| \bigm| A_\t\bigr] \P[A_\t] \\
        \le& \E\bigl[\sum_{v \in V(\Gn)} \|\Gn[N^{\Gn}_r(v)]\| \bigm| \HHminus \embeds \Gn\bigr] \P[\HHminus \embeds \Gn] \\
        \stackrel{\ref{lem:condNhood1}}{\le}
        &(200r\y^3)^{O(r)} \momn^{O(\y^6r^2)} n p_\scale(\HHminus,n) \\
        \stackrel{\ref{lem:embedprob}}{\le}
        & (r\y)^{O(r)} \momn^{O(\y^6r^2)} n \momn^{O(\y^6r^2)}(\t-1)^{-\y^2/5} \\
        \le& (r\y)^{O(r)} \momn^{O(\y^6r^2)}\t^{-\y^2/10} n.
    \end{align*}

    Assume $\t \le 2$.
    By (\ref{eq:condExp}) and Lemma~\ref{lem:condNhood1} with 
    $\cal H_n = \{ \emptyset \}$
    \begin{multline*}
        \E\bigl[\sum_{v \in V(\Gn)} \|\Gn[N^{\Gn}_r(v)]\| \bigm| A_\t\bigr] \P[A_\t]
        \le \E\bigl[\sum_{v \in V(\Gn)} \|\Gn[N^{\Gn}_r(v)]\| \bigr] \\
        \le \momn^{O(r^2)} n
        \le (r\y)^{O(r)} \momn^{O(\y^6r^2)}\t^{-\y^2/10} n.
    \end{multline*}
\end{proof}


\section{Protrusion Decompositions of Neighborhoods}\label{sec:protrusiondecomp}

\newcommand{\tG}{G^*}
\newcommand{\tCA}{C_A^*}
\newcommand{\tCB}{C_B^*}
\newcommand{\tCY}{C_Y^*}
\newcommand{\tCZ}{C_Z^*}
\newcommand{\tE}{E^*}

\newcommand{\hG}{G^r}
\newcommand{\hCA}{C_A^r}
\newcommand{\hCB}{C_B^r}
\newcommand{\hCY}{C_Y^r}
\newcommand{\hCZ}{C_Z^r}
\newcommand{\hA}{A^r}
\newcommand{\hB}{B^r}
\newcommand{\hC}{C^r}
\newcommand{\hE}{E^r}


\newcommand{\pathlengthToCb}{17\y r}



\newcommand{\fummlerRadius}{20 \y r}

\newcommand{\fummlerSize}{130r\y^3}

\newcommand{\numFummlers}{390r\y^5}


\newcommand{\BEdgeCompSize}[1]{1600\y^7r(#1+1)^2}

\newcommand{\verticesInBadComponents}{O(\y^{17}r^3\t^{4\y})}

In this section, we show that local neighborhoods of \dominated graph classes
are likely to have the following nice structure:
They consist of a (small) core graph to which so called \emph{protrusions} are attached.
Protrusions are (possibly large) subgraphs with small treewidth and boundary.
The \emph{boundary} of a subgraph is the size of its neighborhood in the remaining graph.
Protrusions were introduced by Bodlaender et al.\ for very general kernelization
results in graph classes with bounded genus~\cite{bodlaender2016meta}.

Earlier, (Theorem~\ref{thm:partitionProbability})
we showed that $\scale$-\dominated random graph models
are (for certain values of $\scale$, $\t$, $r$, $\y$) likely to be \partitionable.
It is therefore sufficient to show that $r$-neighborhoods of \partitionable
graphs have such a nice protrusion structure.

However, in general it is not easy to find protrusions in a graph~\cite{kim2016linear}.
As we later need to be able to find them,
we define special protrusion decompositions, called \nhoodpartitions
in which (most of) the protrusions can be efficiently identified.
The main and only result of this section is the following theorem.

\begin{customthm}{\ref{thm:partitionIsNhoodpartition}}
    Let $\t,r,\y \in \N^+$ and let $G$ be an \partitionable graph.
    Let $\hG$ be an $r$-neighborhood in $G$.
    Then $\hG$ is \OOnhoodpartitionable.
\end{customthm}

It remains to define what a \nhoodpartition of a graph $\hG$ with
radius at most $r$ is.
The definition has to strike the right balance:
It needs to be permissive enough such that
neighborhoods of \dominated graph classes are likely to have
this structure and it needs to be restrictive enough to admit
efficient algorithms.
Informally speaking, a \nhoodpartition of a graph $\hG$ 
is a partition $(X,Y,Z)$ of the vertices of $\hG$
such that $X$ has small size and the connected components of $\hG[Y \cup Z]$ 
are protrusions.
In order to be able to efficiently identify the protrusions, we further
require that
the components of $\hG[Y]$ have bounded size and
the components of $\hG[Z]$ are trees.
This is formalized in the following definition.

\begin{definition}[\nhoodpartition]\label{def:nhoodpartition}
    Let $\t,r,\y \in \N^+$. Let $\hG$ be a graph with radius at most $r$.
    A tuple $(X, Y, Z)$ is called an \nhoodpartition of $\hG$ if
    \begin{enumerate}
        \item\label{prop:nhooditemPartition} the sets $X,Y,Z$ are pairwise disjoint and their union is $V(\hG)$.
        \item\label{prop:nhooditemSizeX}
            $|X| \le \t^\y$,
        \item\label{prop:nhooditemSizeY}
            every connected component of $\hG[Y]$ has size at most $r\y^7$
            and at most $\y$ neighbors in~$X$,
        \item\label{prop:nhooditemZTree}
            every connected component of $\hG[Z]$ is a tree with at most one edge to $\hG[X \cup Y]$.
        \item\label{prop:nhooditemYZBoundaries}
            For a subgraph $H$ of $\hG[Y \cup Z]$ we say $N^{\hG}(V(H)) \cap X$ is the boundary of $H$.
            The connected components of $\hG[Y]$ may have at most $\t^\y$
            distinct boundaries, i.e.,
            $|\{ N^{\hG}(V(H)) \cap X \mid H \text{ connected component of } \hG[Y \cup Z] \}| \le \t^\y$,
    \end{enumerate}
    A graph for which an \nhoodpartition exists is called \nhoodpartitionable.
\end{definition}

Property~\ref{prop:nhooditemSizeY} and~\ref{prop:nhooditemZTree} 
enforce that the components of $\hG[Y \cup Z]$ are protrusions.
Later, we will transform \nhoodpartitions into equivalent graphs
of bounded size by replacing the protrusions with small graphs.
Thus, Property~\ref{prop:nhooditemSizeX}
and~\ref{prop:nhooditemYZBoundaries} are there to ensure the
resulting kernelized graph will 
have size roughly $\t^\y$
(without Property~\ref{prop:nhooditemYZBoundaries} we could only
guarantee a size of roughly $\t^{\y^2}$).


{\it
    To simplify our proofs, we fix some notation which
    will be valid for this whole section.
    Let a graph $G$ and $\t,r,\y \in \N^+$ be fixed.
    We further assume $G$ to be \partitionable
    and we fix a \partition $(A,B,C)$ of $G$.
    Let further $\hG$ be an $r$-neighborhood in $G$ 
    and let
    $\hA = A \cap V(\hG)$,
    $\hB = B \cap V(\hG)$,
    $\hC = C \cap V(\hG)$.
}

The \OOnhoodpartition $(X,Y,Z)$ of $\hG$
will be created by building $X$ from $\hA \cup \hB$ and
some vertices from $\hC$. The remaining vertices from $\hC$ will be split into the
sets $Y$ and $Z$.
The remainder of this section will describe how this procedure happens in detail.

\subsection{Neighborhoods of \Partitionable Graphs}


We will start with the straight-forward result that
Properties~\ref{prop:ABCExcess} and~\ref{prop:ABCEdgesToA} of
a \partition (Definition~\ref{def:partition}) can be transferred to
neighborhoods.
\begin{lemma}\label{lem:partitionDefNhood}
    Every $\PartitionRadiusExtraEdges$-neighborhood in $G[\hB \cup
    \hC]$ has an \excess of at most $\y^2$, and every
    $\PartitionRadiusToA$-neighborhood in $G[\hC]$ has at most $\y$
    edges to~$\hA$.
\end{lemma}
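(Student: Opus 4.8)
The statement is essentially a monotonicity observation: a neighborhood inside an induced subgraph $G[\hB \cup \hC]$ (respectively $G[\hC]$) is also a neighborhood inside the larger induced subgraph $G[B \cup C]$ (respectively $G[C]$), and the defining properties of the \partition (Properties~\ref{prop:ABCExcess} and~\ref{prop:ABCEdgesToA}) then apply directly. First I would unpack what $\hB \cup \hC$ and $\hC$ are: by the fixed notation of this section, $\hG$ is an $r$-neighborhood in $G$, and $\hB = B \cap V(\hG)$, $\hC = C \cap V(\hG)$, so $\hB \cup \hC = (B \cup C) \cap V(\hG)$ and $\hC = C \cap V(\hG)$. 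Hence $G[\hB \cup \hC]$ is an induced subgraph of $G[B \cup C]$, and $G[\hC]$ is an induced subgraph of $G[C]$.

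The key point is that a $d$-neighborhood (a subgraph of radius at most $d$) of an induced subgraph $G[S']$, where $S' \subseteq S$, is also a $d$-neighborhood of $G[S]$ — indeed, if $H$ is an induced subgraph of $G[S']$ with radius at most $d$, then $H$ is an induced subgraph of $G[S]$ as well (induced subgraphs of induced subgraphs are induced subgraphs), and its radius is unchanged since radius is an internal property of $H$. So I would argue: let $H$ be any $\PartitionRadiusExtraEdges$-neighborhood in $G[\hB \cup \hC]$. Then $H$ is an induced subgraph of $G[\hB \cup \hC] \subseteq G[B \cup C]$ with radius at most $\PartitionRadiusExtraEdges$, hence $H$ is a $\PartitionRadiusExtraEdges$-neighborhood in $G[B \cup C]$. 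By Property~\ref{prop:ABCExcess} of Definition~\ref{def:partition}, $H$ has \excess at most $\y^2$. The argument for Property~\ref{prop:ABCEdgesToA} is analogous: let $H$ be any $\PartitionRadiusToA$-neighborhood in $G[\hC]$; it is a $\PartitionRadiusToA$-neighborhood in $G[C]$, so by Property~\ref{prop:ABCEdgesToA} there are at most $\y$ edges of $G$ incident to both $V(H)$ and $A$; since $V(H) \subseteq \hC \subseteq C$ and these edges are edges of $G$, the same bound of $\y$ applies. (One subtlety to state carefully: the edges counted in Property~\ref{prop:ABCEdgesToA} are edges of $G$ between the neighborhood and $A$, not edges within any induced subgraph, so no loss occurs when passing between $G[\hC]$ and $G[C]$; the relevant edges all live in $G$ itself.)

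There is no real obstacle here — the lemma is a warm-up bookkeeping step — but the one place to be slightly careful is to make explicit that "$r$-neighborhood in $X$" means "induced subgraph of $X$ of radius at most $r$," so that the inclusion $G[\hB \cup \hC] \subseteq G[B \cup C]$ as induced subgraphs immediately yields the inclusion of the respective families of neighborhoods. I would phrase the whole proof in two or three sentences, citing Definition~\ref{def:partition} and the sentence fixing $\hA, \hB, \hC$, and not belabor it further.
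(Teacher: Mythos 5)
Your proposal is correct and reaches the lemma by the same monotonicity-under-containment observation the paper uses, but your route is slightly more direct. The paper argues that a $\PartitionRadiusExtraEdges$-neighborhood in $G[\hB \cup \hC]$ is a \emph{connected subgraph} of some $\PartitionRadiusExtraEdges$-neighborhood in $G[B \cup C]$, and then invokes the (true but not entirely immediate) fact that the \excess of a connected graph bounds the \excess of every connected subgraph. You instead observe that, by the paper's own definition of an $r$-neighborhood as an induced subgraph of radius at most $r$, an $r$-neighborhood of $G[\hB\cup\hC]$ is \emph{itself} an $r$-neighborhood of $G[B\cup C]$ --- induced subgraph of an induced subgraph, radius intrinsic --- so Property~\ref{prop:ABCExcess} of Definition~\ref{def:partition} applies verbatim, with no detour through excess monotonicity. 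Your explicit remark that the edges counted in Property~\ref{prop:ABCEdgesToA} are edges of $G$ itself (so nothing is lost in passing from $G[\hC]$ to $G[C]$) matches what the paper's proof handles via $\hA \subseteq A$. Both arguments are valid; yours is a little cleaner given the formal definitions, since it shows the intermediate ``connected subgraph'' step is unnecessary.
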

\begin{proof}
    Since $\hC \subseteq C$, an $r$-neighborhood in $G[\hC]$ is a
    connected subgraph of an $r$-neighbor\-hood in $G[C]$. Since $G$ is
    \partitionable, a $\PartitionRadiusToA$-neighborhood in $G[C]$ has at most $\y$
    edges to $A$ and $\hA \subseteq A$. Therefore, a $\PartitionRadiusToA$-neighborhood
    of $G[\hC]$ has at most $\y$ edges to~$\hA$.

    Similarly, a $\PartitionRadiusExtraEdges$-neighborhood in $G[\hB \cup \hC]$ is a connected
    subgraph of a $\PartitionRadiusExtraEdges$-neighbo\-rhood in $G[B \cup C]$. If a connected
    graph has an \excess of at most $\y^2$, then so does every
    connected subgraph. Since $G$ is \partitionable, an
    $\PartitionRadiusExtraEdges$-neighborhood in $G[B \cup C]$ has an \excess of at most
    $\y^2$, which bounds the excess of every $\PartitionRadiusExtraEdges$-neighborhood in
    $G[\hB \cup \hC]$.
\end{proof}

\subsection{\Fummlers}

The vertices from $\hA$ and $\hB$ will all be put into the set $X$
of a \OOnhoodpartition.
The situation for the $\hC$ vertices is more complicated.
In this subsection we define so called \emph{\fummlers},
which we use in the next subsection to distribute the vertices $\hC$ 
to the sets $X$, $Y$, and $Z$ of a \nhoodpartition.

\begin{definition}[\Fummler]
    \label{def:fummler}
    Let $W \subseteq \hB \cup \hC$.
    We say $(u_1,u_2,v)$ is a \emph{$W$-\fummler} if $u_1, u_2 \in W$ and
    $v$ lies on a walk $p$ with the following properties:
    Every inner vertex of $p$ is contained in $\hC$ and has at least two neighbors in $p$;
    $u_1$ and $u_2$ are contained only as endpoints of $p$;
    and $p$ is contained in a $\fummlerRadius$-neighborhood in $G[\hB \cup \hC]$.
    We further say $V(p)$ is a \emph{walk set} of $(u_1,u_2,v)$.
\end{definition}

Ties are triples of vertices that are connected by a walk with certain properties.
In the following two lemmas, we bound the size of their walk sets, as well as
the number of \fummlers.  We need this later to prove the size constraints
of a \nhoodpartition.

\begin{lemma}\label{lem:fummlerSize}
    A walk set of a \fummler has at most size $\fummlerSize$.
\end{lemma}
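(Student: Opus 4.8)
The plan is to bound the size of a walk set $V(p)$ of a \fummler $(u_1,u_2,v)$ using the excess bound for neighborhoods of \partitionable graphs. First I would recall that by \Cref{def:fummler}, the walk $p$ realizing $(u_1,u_2,v)$ lives inside a $\fummlerRadius$-neighborhood $N$ in $G[\hB \cup \hC]$, and that $\fummlerRadius = 20\y r \le \PartitionRadiusExtraEdges = 40\y r$, so \Cref{lem:partitionDefNhood} applies: every $\PartitionRadiusExtraEdges$-neighborhood in $G[\hB \cup \hC]$, and hence $N$ itself, has \excess at most $\y^2$. In particular any connected subgraph of $N$ has at most $\y^2$ more edges than vertices.

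Next I would argue structurally about the subgraph $H := G[\hB\cup\hC]$ induced on $V(p)$. Since every inner vertex of $p$ has at least two neighbors on $p$ and $p$ is a walk from $u_1$ to $u_2$, every vertex of $V(p)$ lies on a cycle in $H$ or on a $u_1$--$u_2$ path in $H$; more carefully, $H$ is connected and has minimum degree at least $2$ except possibly at $u_1,u_2$ (which have degree at least $1$). I would then iteratively delete degree-one vertices from $H$ to get a graph $H'$ with minimum degree $\ge 2$; since $u_1,u_2$ lie on $p$ with all inner vertices of degree $\ge 2$, the part of $p$ between the "ends" survives, and one checks $V(p)$ is covered by short paths attached to the $2$-core plus at most two pendant paths to $u_1,u_2$. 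The key point is that a connected graph with minimum degree $\ge 2$ and excess at most $\y^2$ has at most $O(\y^2)$ vertices of degree $\ge 3$ and at most $O(\y^2)$ "ears"/maximal degree-$2$ paths in its ear decomposition; each such path has length bounded by the diameter of $N$, which is at most $2\cdot\fummlerRadius = 40\y r$. Hence $|V(H')| = O(\y^2)\cdot O(\y r) = O(\y^3 r)$, and adding back the at-most-two pendant paths to $u_1,u_2$ (each of length $\le 40\y r$) and all the deleted degree-one vertices that lie on $p$ contributes only another $O(\y r)$, giving $|V(p)| \le \fummlerSize = 130 r\y^3$.

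The main obstacle will be the careful bookkeeping that turns "excess $\le \y^2$" into "$O(\y^2)$ maximal degree-$2$ paths each of bounded length." Concretely I expect to invoke the standard fact that a connected multigraph with $m$ edges and $n$ vertices has at most $2(m-n)$ vertices of degree $\ge 3$ when minimum degree is $\ge 2$ (so at most $2\y^2$ branch vertices here), that the number of maximal paths through degree-$2$ vertices is at most $3(m-n)+O(1) = O(\y^2)$, and that each such path, being contained in the $\fummlerRadius$-neighborhood $N$, has at most $2\cdot\fummlerRadius+1 = 40\y r + 1$ edges. Multiplying $O(\y^2)$ paths by $O(\y r)$ length gives the $O(\y^3 r)$ bound, and I would chase the constants to confirm they fit under $130 r \y^3$; the slack in the constant $130$ is generous enough to absorb the two endpoint pendant paths and any off-by-one terms, so the choice $\fummlerRadius = 20\y r$ together with $\fummlerSize = 130 r\y^3$ works out cleanly.
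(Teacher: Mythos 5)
Your overall strategy — use the excess bound $\le\y^2$ to control how much branching there is, and the $\fummlerRadius$ radius of the ambient neighborhood $N$ to control how long straight stretches can be — is the same as the paper's. But one step fails: the claim that every maximal degree-2 path in $H'$ (the 2-core of $H=G[V(p)]$) ``has length bounded by the diameter of $N$.'' That is not true. The inner vertices of such an ear have degree exactly~$2$ in $H'$, but they may have many further neighbors in $N$ outside $V(p)$, and those outside vertices can provide shortcuts that keep $N$'s diameter small while the ear stays long. Concretely, let $V(p)$ be a cycle of length $\Theta(\y^2\fummlerRadius)$ and let $N$ contain one extra hub vertex $c\notin V(p)$ adjacent to every $(2\fummlerRadius)$-th cycle vertex. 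Then $N$ has radius $O(\fummlerRadius)$ and excess $O(\y^2)$, the 2-core $H'$ is just the cycle, and it consists of a single ear of length $\Theta(\y^2\fummlerRadius)\gg 2\fummlerRadius$. Your ``$O(\y^2)$ ears each of length $O(\y r)$'' accounting therefore breaks: the per-ear length bound is wrong, and multiplying it by the ear count is not a valid step.

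What is true, and what the paper's proof uses, is a global budget rather than a per-ear one. Fix a BFS spanning tree $T$ of the $\fummlerRadius$-neighborhood $N$; since $N$ has excess $\le\y^2$, at most $\y^2+1$ edges of $N$ (hence of $p$) are non-tree, and any path inside $T$ has at most $2\fummlerRadius+1$ vertices because $T$ has depth $\le\fummlerRadius$. Removing the non-tree edges from the walk decomposes $V(p)$ into at most $\y^2+2$ pieces each contained in a short $T$-path, giving $(2\fummlerRadius+1)(\y^2+2)\le\fummlerSize$. The point is that the $\y^2+1$ non-tree edges are shared across all ears; a single ear can consume almost the whole budget (as in the example above), so the total length of all ears — not the length of each one — is $O(\y^2\fummlerRadius)$. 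If you replace your diameter claim by this BFS-tree bookkeeping (a global bound of $\le\y^2+1$ non-tree edges across the walk, each tree segment of length $\le 2\fummlerRadius$) the rest of your ear-decomposition framing does work, and you recover the paper's bound.
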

\begin{proof}
    Let the walk set of a \fummler be the vertices on a walk $p$. By definition,
    $p$ is contained in a $\fummlerRadius$-neighborhood in $G[\hB \cup
    \hC]$. Let $T$ be a breadth-first-search spanning tree of such a
    neighborhood. According to Lemma~\ref{lem:partitionDefNhood},
    every $\PartitionRadiusExtraEdges$-neighborhood in $G[\hB \cup
    \hC]$ has an \excess of at most $\y^2$. A tree has an \excess of
    $-1$. Therefore, there are at most $\y^2+1$ edges in $p$ which are
    not contained in $T$. Also, every path in $T$ contains at most
    $2\cdot\fummlerRadius+1$ vertices. Thus, $p$ contains at most
    $(2\cdot\fummlerRadius+1)(\y^2+2) \le \fummlerSize$ vertices.
\end{proof}
Next we take the first step of counting the vertices of $\hC$, by
showing that for $W \subseteq \hB \cup \hC$, 
the number of $W$-\fummlers in $\hG$ is quadratic in $|W|$.
Note that this does not directly lead to
a bound for $|\hC|$ since it might be that $|W|>|\hC|$.
\begin{lemma}\label{lem:fewshortpaths}
    Let $W \subseteq \hB \cup \hC$.
    There are at most $\numFummlers|W|^2$ $W$-\fummlers in $\hG$.
\end{lemma}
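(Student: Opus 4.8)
The plan is to bound the number of $W$-\fummlers $(u_1,u_2,v)$ by choosing the three components of the triple one at a time and showing each choice contributes only a bounded factor. There are at most $|W|^2$ ways to pick the ordered pair $(u_1,u_2) \in W \times W$, so it suffices to show that for any \emph{fixed} pair $u_1,u_2 \in W$, there are at most $O(r\y^5)$ vertices $v$ for which $(u_1,u_2,v)$ is a $W$-\fummler; then the total is at most $\numFummlers|W|^2$ after tracking the constants. So fix $u_1,u_2$. By \Cref{lem:fummlerSize}, any walk set of such a \fummler has size at most $\fummlerSize = 130r\y^3$, hence any individual $v$ certified by a single walk $p$ lies among at most $130r\y^3$ vertices. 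The point is therefore to bound the number of \emph{distinct} walk sets that can arise, or rather the number of vertices lying on \emph{some} admissible walk from $u_1$ to $u_2$.

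First I would fix the $\fummlerRadius$-neighborhood $N$ in $G[\hB \cup \hC]$ that contains the walk $p$; by \Cref{lem:partitionDefNhood} every $\PartitionRadiusExtraEdges$-neighborhood of $G[\hB\cup\hC]$ has \excess at most $\y^2$, and since $\fummlerRadius = 20\y r \le \PartitionRadiusExtraEdges = 40\y r$, the neighborhood $N$ itself has \excess at most $\y^2$. Take a BFS spanning tree $T$ of $N$ rooted appropriately; then $N$ has at most $\y^2+1$ non-tree edges. The key structural observation is that an admissible walk $p$ from $u_1$ to $u_2$ in $N$ — whose inner vertices each have at least two neighbors \emph{on $p$}, so $p$ traverses no ``pendant detour'' — is essentially governed by $T$ plus the few extra edges: after contracting degree-two stretches, the topological structure of $p$ is a walk in a multigraph with at most $O(\y^2)$ branch vertices and $O(\y^2)$ edges, each edge being a tree-path of $T$ or one of the $\le \y^2+1$ extra edges. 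The union of all vertices lying on \emph{any} such walk is then contained in the union of the tree-paths in $T$ connecting the $O(\y^2)$ relevant branch/attachment points (the endpoints of the extra edges, plus $u_1,u_2$), together with those extra edges themselves. Since each such tree-path has length at most $2\cdot\fummlerRadius+1 = O(\y r)$ and there are $O(\y^2)$ of them (one can take the pairwise tree-paths among the $O(\y^2)$ marked vertices, or just the ones induced by the extra edges), the total number of candidate vertices $v$ is $O(\y r)\cdot O(\y^2) = O(\y^3 r)$. Pushing the constants a bit more carefully — the bound in \Cref{lem:fummlerSize} is $(2\fummlerRadius+1)(\y^2+2)$ — gives at most $\numFummlers = 390 r\y^5$ per pair $(u_1,u_2)$, hence at most $\numFummlers|W|^2$ \fummlers in total.

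The main obstacle I expect is making the informal ``$p$ is governed by $T$ plus the extra edges'' argument rigorous: one must argue that the condition ``every inner vertex of $p$ has at least two neighbors \emph{in $p$}'' (not merely that $p$ has no repeated vertices — walks may repeat) forces every vertex of $p$ to lie on the union of a bounded collection of tree-paths of $T$ pinned down by $u_1$, $u_2$, and the endpoints of the $\le \y^2+1$ non-tree edges. The subtlety is that $p$ is a walk, so it may revisit vertices and ``bounce'' along tree edges; the ``two neighbors in $p$'' condition is exactly what rules out the degenerate case of $p$ wandering into a tree branch and coming straight back, which would otherwise let $v$ range over all of $N$. I would handle this by a short combinatorial argument: orient $p$, observe that any maximal subwalk of $p$ using only tree edges and visiting each internal vertex with two $p$-neighbors must be a tree \emph{path} (no backtracking), and that between consecutive uses of non-tree edges $p$ follows such tree paths; since there are at most $\y^2+1$ non-tree edges, $p$ decomposes into at most $O(\y^2)$ tree paths in $T$, each of length at most $2\fummlerRadius+1$, so $V(p)$ — and hence the set of admissible $v$ over all choices of $p$ and $N$ — is contained in a union of $O(\y^2)$ tree paths determined by $O(\y^2)$ endpoints, of total size $O(\y^3 r)$. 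Everything else is bookkeeping of constants.
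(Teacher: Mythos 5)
Your approach is genuinely different from the paper's. The paper fixes $(u_1,u_2)$, assumes for contradiction there are more than $\fummlerSize(\y^2+2)$ fummlers with those endpoints, and uses a pigeonhole argument to select $\y^2+3$ of them whose walk sets each contribute a new vertex; it then shows by induction on the number of walk sets unioned that the resulting connected subgraph has \excess at least $\y^2+1$, contradicting \Cref{lem:partitionDefNhood}. You instead try to \emph{directly} bound the union of all walk sets by showing it sits inside a small, tree-like structure. Both strategies are viable; the paper's avoids having to reason carefully about branching structure and repeated-vertex walks, while yours gives slightly more geometric information about where the fummler vertices can live.

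There is one genuine gap in your argument as written. You say ``I would fix the $\fummlerRadius$-neighborhood $N$ in $G[\hB\cup\hC]$ that contains the walk $p$'' and then build the BFS tree $T$ of $N$. But $N$ (and hence $T$) depends on the individual walk $p$, while your conclusion quantifies over \emph{all} walks $p$: you want to show that the vertices of all admissible walks with endpoints $(u_1,u_2)$ lie in a single, fixed union of tree-paths. A tree $T$ chosen per walk cannot support that union. The fix is to choose $N$ uniformly: every $\fummlerRadius$-neighborhood in $G[\hB\cup\hC]$ containing $u_1$ lies inside the $2\fummlerRadius$-neighborhood of $u_1$ in $G[\hB\cup\hC]$, and $2\fummlerRadius = \PartitionRadiusExtraEdges$, so \Cref{lem:partitionDefNhood} bounds that neighborhood's \excess by $\y^2$. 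Take $T$ to be a BFS tree of \emph{that} neighborhood rooted at $u_1$; then every admissible walk lives in $T$ plus at most $\y^2+1$ fixed non-tree edges, and the rest of your argument has a chance. (Incidentally, the paper's proof uses exactly this neighborhood in its last sentence, so this is the natural choice.)

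Two smaller points. First, your count ``there are $O(\y^2)$ of them (one can take the pairwise tree-paths among the $O(\y^2)$ marked vertices)'' is off: there are $\Theta(\y^4)$ pairs of marked vertices. What rescues the bound is that their union is the Steiner tree of the $O(\y^2)$ marked vertices inside $T$, which (with $T$ rooted at $u_1 \in S$, depth $\le 2\fummlerRadius$) is contained in the union of root-to-terminal paths and therefore has at most $O(\y^2)\cdot O(\y r) = O(\y^3 r)$ vertices — so the final constant works out and is in fact below $\numFummlers$, but the path count you wrote is not the right intermediate quantity. Second, the ``no backtracking / maximal subwalk'' discussion is the weakest step. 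The clean way to get $V(p)$ into the Steiner tree of $S = \{u_1,u_2\}\cup\{\text{endpoints of non-tree edges of }N\}$ is: form $P=(V(p),E(p))$; every vertex of $P$ other than $u_1,u_2$ has degree $\ge 2$; the tree-edge subgraph $P_T=(V(p),E(p)\cap E(T))$ is a forest whose leaves (and isolated vertices) must therefore be incident to a non-tree edge or be one of $u_1,u_2$, hence lie in $S$; a subforest of $T$ with all leaves in $S$ is contained in the Steiner tree of $S$. That replaces the informal walk-decomposition argument and is the step your writeup gestures at but does not pin down.
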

\begin{proof}
    We fix $u_1,u_2 \in W$. Let $X_{u_1,u_2}=\{(u_1,u_2,v) \mid v \in
    \hB \cup \hC, (u_1,u_2,v) \text{ is a $W$-\fummler}\}$ be the set
    of all $W$-\fummlers for fixed endpoints $u_1,u_2$. There are
    exactly $|W|^2$ ways to choose $u_1,u_2$, thus, it is sufficient
    to show that $|X_{u_1,u_2}| \le \fummlerSize (\y^2+2)\le
    \numFummlers$.

    Assume for contradiction $|X_{u_1,u_2}|> \fummlerSize (\y^2+2)$.
    For every $x \in X_{u_1,u_2}$ let $V(x)$ be a walk set of $x$.
    The size of a walk set of a \fummler is at most $\fummlerSize$
    (Lemma~\ref{lem:fummlerSize}).
    Let $l=\y^2+3$.
    By a pigeonhole argument, one can choose
    $l$-many $W$-\fummlers $x_1,\dots,x_{l} \in X_{u_1,u_2}$ such that
    $V(x_i) \setminus (V(x_1) \cup \dots \cup V(x_{i-1})) \neq
    \emptyset$ for $1 \le i \le l$. We define for $1 \le i \le l$ a
    graph $G_i = G[V(x_1) \cup \dots \cup V(x_i)]$. We show by
    induction that $G_l$ has an \excess of at least $l-2$.

    By Definition~\ref{def:fummler}, the graphs $G[V(x_i)]$ are connected and
    all vertices except for $u_1,u_2$ have degree at least two in $G[V(x_i)]$.
    That means $G_1=G[V(x_1)]$ has an \excess of at least $-1$.
    It also means that every vertex in the non-empty set
    $V(G_i) \setminus V(G_{i-1})$ has degree at least two in $G_i$.
    Since $G_i$ is connected, there is at least one edge between
    $V(G_i) \setminus V(G_{i-1})$ and $V(G_{i-1})$ in $G_i$.
    If every vertex in
    $V(G_i) \setminus V(G_{i-1})$ has degree exactly two in $G_i$
    there are at least two edges between
    $V(G_i) \setminus V(G_{i-1})$ and $V(G_{i-1})$ in $G_i$.
    Thus, in the step from $G_{i-1}$ to $G_i$
    the number of added edges is at least one greater than the number of added
    vertices. The \excess increases by one.

    The walk set of each $W$-\fummler in $X_{u_1,u_2}$
    contains $u_1$ and is contained in a $\fummlerRadius$-neighborhood in $G[\hB \cup \hC]$.
    This means $G_{l}$ is contained in the
    $(2\cdot\fummlerRadius)$-neighborhood in $u_1$ in $G[\hB \cup \hC]$.
    The graph $G_l$ has an \excess of at least $l-2 = \y^2+1$ and
    according to Lemma~\ref{lem:partitionDefNhood}, this is a contradiction.
\end{proof}

\subsection{Partitioning $\boldsymbol \hC$ into $\boldsymbol \hCA$,
$\boldsymbol \hCB$, $\boldsymbol \hCY$, and $\boldsymbol \hCZ$}

We use the notion of \fummlers (Definition~\ref{def:fummler}) to partition
the set $\hC$. We distinguish vertices connected to $\hA$, vertices connected to
$\hB$ (but not to $\hA$), those which are connected to neither but lie
on a \fummler, and the rest.
We set
\begin{itemize}
    \item $\hCA = N(\hA) \cap \hC$,
    \item $\hCB = (N(\hB) \setminus N(\hA)) \cap \hC$,
    \item $\hCY =  \{v \mid v\in \hC \setminus (\hCA \cup \hCB)$
            and there exist $u_1,u_2 \in \hCA \cup \hCB$ such that
            $(u_1,u_2,v)$ is a $(\hCA \cup \hCB)$-\fummler$\}$,
    \item $\hCZ = \hC \setminus (\hCA \cup \hCB \cup \hCY)$.
\end{itemize}
%
We will show that the four previously defined sets have desirable structural
properties.
By Definition~\ref{def:partition}, we know that $|\hCA| \le a$ and $|\hCB| \le a^\y$.
We use the previously defined \fummlers
to show that $G[\hCZ]$ is a forest and
to bound the size of components of $G[\hCA \cup \hCB \cup \hCY]$.
We then use these properties to construct a \nhoodpartition.
We start with two auxiliary lemmas.
\begin{lemma}\label{lem:smalldistanceBC}
    In $G[\hC]$, every vertex has distance at most $2r$ to a vertex in $\hCA \cup \hCB$.
\end{lemma}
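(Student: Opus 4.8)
The plan is to exploit that $\hG$, being an $r$-neighborhood, has radius at most $r$, hence is connected and has diameter at most $2r$: any two of its vertices are joined by a path of length at most $2r$ all of whose vertices lie in $V(\hG) = \hA \cup \hB \cup \hC$. The key observation is that if one starts at a vertex of $\hC$ and walks along such a path towards \emph{any} vertex of $\hA \cup \hB$, the walk stays inside $\hC$ until it first meets $\hA \cup \hB$, and the last $\hC$-vertex it visits is by construction adjacent to $\hA \cup \hB$, hence lies in $\hCA \cup \hCB$.

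Concretely, I would argue as follows. Fix $v \in \hC$ and pick any $w \in \hA \cup \hB$. Let $c$ be a centre of $\hG$; then $\mathrm{dist}_{\hG}(v,w) \le \mathrm{dist}_{\hG}(v,c) + \mathrm{dist}_{\hG}(c,w) \le 2r$, so there is a path $P$ in $\hG$ from $v$ to $w$ of length at most $2r$. Since $v \in \hC$ and $w \notin \hC$, the path $P$ has a last vertex $v' \in \hC$; let $u$ be its successor on $P$, so that $v'u \in E(G)$ and $u \in \hA \cup \hB$. If $u \in \hA$ then $v' \in N(\hA) \cap \hC = \hCA$. If $u \in \hB$, then $v' \in N(\hB) \cap \hC$, and so $v' \in \hCA$ when $v'$ additionally has a neighbour in $\hA$, and $v' \in (N(\hB)\setminus N(\hA)) \cap \hC = \hCB$ otherwise. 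In every case $v' \in \hCA \cup \hCB$. Finally, the prefix of $P$ from $v$ to $v'$ uses only vertices of $\hC$, and since $G[\hC]$ is an induced subgraph of $G$ all of its edges belong to $G[\hC]$; as this prefix has length at most $2r$, we conclude $\mathrm{dist}_{G[\hC]}(v,v') \le 2r$, as required.

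There is essentially no hard step here; the only genuine subtlety is the choice to measure distances through the diameter bound $2r$ rather than trying to route through a fixed vertex, and the degenerate case $\hA \cup \hB = \emptyset$ (i.e.\ $\hG \subseteq C$), in which $\hCA \cup \hCB = \emptyset$ and the statement is to be read under the standing assumption $\hA \cup \hB \neq \emptyset$ — equivalently, since $\hG$ is connected and $\hC \neq \emptyset$, under the assumption $\hCA \cup \hCB \neq \emptyset$. (When $\hG \subseteq C$ one does not need the lemma anyway: by Property~\ref{prop:ABCExcess} the whole graph $\hG$ then has \excess at most $\y^2$ and is handled directly.)
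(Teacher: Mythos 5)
Your overall strategy is the same as the paper's: both exploit that $\hG$ has diameter at most $2r$ together with the observation that a vertex of $\hC \setminus (\hCA \cup \hCB)$ has all its $\hG$-neighbours in $\hC$, so a walk starting in $\hC$ cannot leave $\hC$ without first passing through $\hCA \cup \hCB$. The paper phrases this by taking a shortest path from $v$ to the \emph{nearest} vertex of $\hCA \cup \hCB$ and noting that it cannot leave $\hC$; you take a path of length at most $2r$ to an arbitrary $w \in \hA \cup \hB$ and look for the exit point of $\hC$. Either route works, and your explicit remark about the degenerate case $\hA \cup \hB = \emptyset$ (implicitly assumed away in the paper) is a nice addition.

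There is, however, a concrete slip in your execution. You define $v'$ to be ``the last vertex of $P$ in $\hC$'', i.e., the $\hC$-vertex closest to $w$ along $P$. But $P$ may leave $\hC$ and later re-enter it, and then the prefix of $P$ from $v$ to this $v'$ passes through $\hA \cup \hB$ and is \emph{not} a path in $G[\hC]$. For a concrete instance, take $\hC = \{c_1,c_2\}$, $\hA = \{a_1,a_2,w\}$, $\hB = \emptyset$, with edges $c_1a_1$, $a_1c_2$, $c_2a_2$, $a_2w$; this graph has radius $2$, the (unique) shortest $c_1$-$w$ path is $c_1\,a_1\,c_2\,a_2\,w$, its last $\hC$-vertex is $c_2$, and the prefix $c_1\,a_1\,c_2$ leaves $\hC$ — indeed $c_1$ and $c_2$ are in different components of $G[\hC]$, so no distance bound in $G[\hC]$ between them is available. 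Your informal sketch (``the walk stays inside $\hC$ until it first meets $\hA \cup \hB$'') already contains the right idea; the concrete argument should therefore take $v'$ to be the last vertex of the maximal \emph{initial} segment of $P$ contained in $\hC$ — equivalently, the first vertex on $P$ whose successor lies outside $\hC$. With that choice the prefix from $v$ to $v'$ does lie in $\hC$ and has length at most $2r$, the successor $u$ of $v'$ lies in $\hA \cup \hB$, hence $v' \in \hCA \cup \hCB$ (in the example above this correctly yields $v' = c_1$), and the rest of your argument goes through.
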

\begin{proof}
    We fix a vertex $v \in \hC$.
    The graph $\hG$ has radius at most $r$,
    thus, $v$ has in $\hG$ distance at most $2r$ to $\hCA \cup \hCB$.
    Vertices in $\hC \setminus (\hCA \cup \hCB)$ are in $\hG$ only adjacent to other vertices from $\hC$.
    This means, the distance from $v$ to the nearest vertex in $\hCA \cup \hCB$ is
    in $G[\hC]$ the same as in $\hG$.
\end{proof}

\begin{lemma}\label{lem:smallradius}
    A connected component of $G[\hC]$ with at most $l \in \N$ vertices from $\hCA \cup \hCB$
    is contained in a $5rl$-neighborhood in $G[\hC]$.
\end{lemma}
\begin{proof}
    Let $\tG$ be a connected component of $G[\hC]$.
    By Lemma~\ref{lem:smalldistanceBC}, every vertex from $(\hCY \cup \hCZ) \cap V(\tG)$ has
    distance at most $2r$ from $(\hCA \cup \hCB) \cap V(\tG)$ in $G[\hC]$. Therefore, $\tG$
    is contained in a $(4r+1)l$-neighborhood in $G[\hC]$. We have $(4r+1)l
    \le 5rl$.
\end{proof}

\subsection{Components of $\boldsymbol G \boldsymbol [ \boldsymbol \hCZ \boldsymbol ]$ are Trees}
In this section we show the somewhat surprising property that if you
take away all vertices that are connected to $A\cup B$ and those that
lie on a \fummler, you are left with a forest.

\begin{lemma}\label{lem:compsAreTrees}
    Each connected component of $G[\hCZ]$ is a tree
    and has at most one outgoing edge in $\hG$.
\end{lemma}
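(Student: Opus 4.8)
The plan is to show both claims by contradiction, in each case constructing a structure that violates a property of the fixed \partition $(A,B,C)$ of $G$ — specifically the \excess bound (Property~\ref{prop:ABCExcess}) transferred to neighborhoods via Lemma~\ref{lem:partitionDefNhood}, or the definition of the sets $\hCY,\hCZ$. First I would handle the ``tree'' part. Suppose some connected component $K$ of $G[\hCZ]$ contains a cycle. The idea is that any two vertices $u_1,u_2$ of $K$ are joined by a walk inside $G[\hCZ] \subseteq G[\hC]$; I want to argue that a short such walk exists, landing inside a $\fummlerRadius$-neighborhood in $G[\hB\cup\hC]$, so that the presence of a cycle near $u_1$ would make $u_1$ lie on a closed walk whose inner vertices all have degree $\ge 2$ and are in $\hC$ — but $u_1 \in \hCZ$ was defined to be \emph{not} on any $(\hCA\cup\hCB)$-\fummler. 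To make this precise I would first invoke Lemma~\ref{lem:smallradius}: a component of $G[\hC]$ with $l$ vertices from $\hCA\cup\hCB$ sits in a $5rl$-neighborhood in $G[\hC]$. I still need to control $l$ for the component in question; since a component of $G[\hCZ]$ need not contain any vertex of $\hCA\cup\hCB$, I would instead consider the component $\tG$ of $G[\hC]$ containing $K$ and locate the nearest vertex of $\hCA\cup\hCB$, using Lemma~\ref{lem:smalldistanceBC} to bound its distance by $2r$. Then a cycle in $K$, together with the short path from a cycle vertex to a vertex $u\in\hCA\cup\hCB$ and the short connection among the relevant vertices, yields a walk $p$ through some cycle vertex $v$ with endpoints in $\hCA\cup\hCB$, contained in a $\fummlerRadius$-neighborhood in $G[\hB\cup\hC]$, whose inner vertices are in $\hC$ with degree $\ge 2$ in $p$. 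That exhibits $v$ as lying on a $(\hCA\cup\hCB)$-\fummler, so $v\in\hCY$, contradicting $v\in\hCZ$.

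For the second claim — at most one outgoing edge in $\hG$ — suppose a component $K$ of $G[\hCZ]$ has two edges $e_1=a_1b_1$ and $e_2=a_2b_2$ leaving $K$ in $\hG$, with $b_1,b_2\in V(K)$ and $a_1,a_2\notin\hCZ$. Since $K\subseteq\hC$ and vertices of $\hC\setminus(\hCA\cup\hCB)$ are only adjacent (in $\hG$) to vertices of $\hC$, the endpoints $a_1,a_2$ lie in $\hC$, hence in $\hCA\cup\hCB\cup\hCY$. I would then route a walk through $K$ from $a_1$ to $a_2$: take the path inside the tree $K$ from $b_1$ to $b_2$ (we may use the first part of the lemma, $K$ is a tree), prepend $e_1$ and append $e_2$. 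All inner vertices of this walk lie in $K\subseteq\hCZ\subseteq\hC$ and, being interior on a path, have degree $\ge 2$ on the walk. Using the radius-$r$ bound on $\hG$ and Lemma~\ref{lem:smalldistanceBC}/\ref{lem:smallradius} to see that $K$ (enlarged by these two edges) sits inside a $\fummlerRadius$-neighborhood in $G[\hB\cup\hC]$, and picking an interior vertex $v$ of the walk, we get that $v$ lies on a $(\hCA\cup\hCB)$-\fummler provided $a_1,a_2\in\hCA\cup\hCB$; if one of $a_1,a_2$ is merely in $\hCY$ I would extend the walk by the short \fummler-walk witnessing its membership in $\hCY$, reaching genuine $\hCA\cup\hCB$ endpoints while staying within a slightly larger neighborhood still bounded by $\fummlerRadius$. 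Either way $v\in\hCY$, contradicting $v\in\hCZ$.

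The main obstacle I anticipate is the bookkeeping of radii: I must ensure that every walk I build — the within-component tree paths, the length-$\le 2r$ detours to $\hCA\cup\hCB$ from Lemma~\ref{lem:smalldistanceBC}, and possibly an appended \fummler-walk (which itself lives in a $\fummlerRadius$-neighborhood) — all fit together inside a single $\fummlerRadius = 20\y r$-neighborhood in $G[\hB\cup\hC]$, so that Definition~\ref{def:fummler} genuinely applies and the contradiction with $v\in\hCZ$ is valid. The constants are comfortable ($\fummlerRadius$ is linear in $r$ and the detours are $O(r)$ with small multiplicative constants, and any one component involved has only $O(\y)$ or so vertices of $\hCA\cup\hCB$ by the size bounds $|\hCA|\le a$, $|\hCB|\le a^\y$ and Lemma~\ref{lem:smallradius}), but the argument needs to be set up so that the relevant component of $G[\hC]$ contains boundedly many $\hCA\cup\hCB$-vertices; if not, I would restrict attention to a suitable subwalk near a single such vertex, exactly as in the proof of Lemma~\ref{lem:fewshortpaths}, where a component is cut down to a $(2\cdot\fummlerRadius)$-neighborhood around one endpoint. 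With that care, both contradictions go through and the lemma follows.
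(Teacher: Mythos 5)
Your overall strategy matches the paper's: assume a cycle or a second outgoing edge exists, build a walk, and show that some vertex $v$ of $\hCZ$ sits on a $(\hCA\cup\hCB)$-\fummler, forcing $v\in\hCY$ and giving a contradiction. The auxiliary tools you reach for (Lemma~\ref{lem:smalldistanceBC} for the $2r$-detours, extending endpoints that land in $\hCY$ by their own witnessing \fummler-walk) are the right ones, and the separate treatment of the two claims versus the paper's single unified walk $p$ is a cosmetic difference.

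However, the part you flag as "the main obstacle" is a genuine gap, and your two proposed ways around it do not work. Both the cycle in $K$ and the tree path from $b_1$ to $b_2$ can be arbitrarily long, so the raw walk you build need not be contained in any $\fummlerRadius$-neighborhood of $G[\hB\cup\hC]$, and Definition~\ref{def:fummler} then simply does not apply. Your first fix, appealing to Lemma~\ref{lem:smallradius}, requires the component $\tG$ of $G[\hC]$ to contain only $O(\y)$ vertices of $\hCA\cup\hCB$; this is false in general ($|\hCB|$ is only bounded by $\t^\y$, which can be large, and a single component of $G[\hC]$ may contain many of them). Your second fix, "restrict attention to a suitable subwalk near $v$ exactly as in the proof of Lemma~\ref{lem:fewshortpaths}", does not supply the needed mechanism either: that lemma merely observes that a walk that is already a \fummler-walk lies in a $2\fummlerRadius$-neighborhood; it says nothing about how to cut an over-long walk and close it up again into a valid \fummler-walk. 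The missing ingredient is a genuine truncate-and-reclose construction: walk out from $v$ in both directions for at most $2r+1$ steps in $G[\hC]$ (or until an $\hCA\cup\hCB$ endpoint is reached), and for each cut-off end use Lemma~\ref{lem:smalldistanceBC} to pick a vertex $w^*\in\hCA\cup\hCB$ at distance $\le 2r$ and splice in the shortest $G[\hC]$-path to it at the point where that path first meets the truncated walk — the splice is what keeps all inner vertices with $\ge 2$ distinct neighbours and keeps the total inside a bounded neighbourhood. This is precisely the $p''$/$p^*$ step in the paper's proof; without spelling it out, your contradiction is not yet forced.
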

\begin{proof}
    We consider a connected component $H$ of $G[\hCZ]$. Assume for
    contradiction that either $H$ is not a tree, or has more than one
    outgoing edge in $\hG$. Then there has to exist a walk $p$ in
    $\hG$ whose inner vertices are in $V(H)$, whose endpoints are in
    $V(\hC) \setminus \hCZ$, and every inner vertex of $p$ has at
    least two different neighbors in $p$. We pick an arbitrary inner
    vertex $v \in V(H)$ from $p$.
    In this proof, we will successively construct walks $p'$, $p''$
    and $p^*$ with endpoints $(w'_1,w'_2)$, $(w''_1,w''_2)$ and
    $(w^*_1,w^*_2)$ which contain $v$. The final walk $p^*$ will be
    such that $(w^*_1,w^*_2,v)$ is a $(\hCA \cup \hCB)$-\fummler. This
    means by definition that $v \in \hCY$ and therefore $v \not\in
    \hCZ$ (a contradiction).

    \textbf{Constructing \boldmath $p'$:} Let $w_1,w_2$ be the endpoints of $p$.
    Since $\hCA \cup \hCB$ separates $\hCZ$ from $\hA \cup \hB$ in
    $\hG$, we know that $w_1,w_2 \in \hCY \cup \hCA \cup \hCB$. If all
    $w_i \in \hCA \cup \hCB$, we set $p'=p$. If any $w_i \in \hCY$
    then, by definition, $w_i$ lies on a $(\hCA \cup
    \hCB)$-\fummler walk $p_i$. By definition the walk $p_i$ contains
    no vertex from $V(H)$, since this would imply that said vertex is
    in $\hCY$. We modify $p$ into $p'$ as follows: At every endpoint
    $w_i \in \hCY$ we extend $p$ by traversing $p_i$ in an arbitrary
    direction until we reach an endpoint $w_i'\in\hCA \cup \hCB$ and
    then iteratively removing vertices with degree one that might have
    been introduced.
    Now $p'$ is a walk from $w_1'$ to $w_2'$, that goes over $v$ and where every
    inner vertex of $p'$ has at least two different neighbors in $p'$.
    Furthermore, $w'_1,w'_2 \in \hCA \cup \hCB$. However, $p'$ is
    still not necessarily a \fummler-walk, since it is not guaranteed
    to be contained in a $\fummlerRadius$-neighborhood in $G[\hB \cup
    \hC]$.

    \textbf{Constructing \boldmath $p''$:} We construct a sub-walk $p''$ of $p'$
    by starting at $v$ and traversing $p'$ in both directions until we
    either reach an endpoint in $\hCA \cup \hCB$ or a vertex with
    distance exactly $2r+1$ in $G[\hC]$ to $v$. The walk $p''$
    contains $v$ and every vertex on $p''$ has distance at most $2r+1$
    in $G[\hC]$ from $v$. The endpoints $w''_1,w''_2$ of $p''$ are
    either in $\hCA\cup\hCB$ or have distance exactly $2r+1$ in
    $G[\hC]$ from $v$. Every inner vertex of $p''$ has at least two
    different neighbors in $p''$.

    \textbf{Constructing \boldmath $p^*$:} At last, we extend $p''$ into $p^*$
    as follows: If $w''_i \in \hCA \cup \hCB$, we set $w^*_i=w''_i$.
    Otherwise, by Lemma~\ref{lem:smalldistanceBC}, there exists a
    vertex $w^*_i \in \hCA \cup \hCB$ with distance at most $2r$ in
    $G[\hC]$ from $w''_i$. Let $q_i$ be the shortest path from $w^*_i$
    to $w''_i$ in $G[\hC]$. The vertex $v$ has in $G[\hC]$ distance
    exactly $2r+1$ from $w''_i$, thus, $v$ is not contained in $q_i$.
    We traverse $p''$ from $v$ in both directions. While traversing in
    direction of $w''_i$, as soon as we reach a vertex from $q_i$ we
    continue traversing $q_i$ until we reach $w^*_i$.
    The walk $p^*$ contains $v$,
    and every inner vertex has at least two neighbors on $p'$. Also,
    every vertex has distance at most $4r+1$ in $G[\hC]$ from $v$. The
    endpoints $w^*_1,w^*_2$ are contained in $\hCA \cup
    \hCB$. This means that $(w^*_1,w^*_2,v)$ is a $(\hCA \cup
    \hB)$-\fummler.
\end{proof}

\subsection{Connected Components of \boldmath $G[\hC]$}
In this subsection we will speak only about connected components of
$G[\hC]$. While their number is unbounded we show that
inside a component the number of vertices that are not from $\hCZ$ will
be bounded. For every component, we first show that if it has few
vertices from $\hCB$, it has few vertices from $\hCA$
(Lemma~\ref{lem:CAbound}) and that a component with few edges to
$\hB$ has few vertices from $\hCA \cup \hCB \cup \hCY$
(Lemma~\ref{lem:smalliffewtoB}).

\begin{lemma}\label{lem:CAbound}
    A connected component of $G[\hC]$ with $l \in \N$ vertices from $\hCB$
    contains at most $(l+1)\y$ vertices from $\hCA$.
\end{lemma}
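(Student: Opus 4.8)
We need to show that a connected component $\tG$ of $G[\hC]$ with $l$ vertices from $\hCB$ contains at most $(l+1)\y$ vertices from $\hCA$. Recall $\hCA = N(\hA)\cap\hC$, so every vertex of $\hCA$ is the $\hC$-endpoint of an edge going to $\hA$. The plan is to bound these edges using Property~\ref{prop:ABCEdgesToA} of an \partition (via Lemma~\ref{lem:partitionDefNhood}), which says that every $\PartitionRadiusToA$-neighborhood in $G[\hC]$ has at most $\y$ edges to $\hA$. Since distinct vertices of $\hCA$ are incident to distinct such edges, it suffices to cover $\tG$ (or at least the part containing $\hCA$-vertices) by at most $l+1$ many $\PartitionRadiusToA$-neighborhoods in $G[\hC]$.

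First I would observe that, by Lemma~\ref{lem:smalldistanceBC}, every vertex of $\tG$ has distance at most $2r$ in $G[\hC]$ to a vertex of $\hCA\cup\hCB$. In particular every vertex of $\hCA\cap V(\tG)$ is within distance $2r\le\PartitionRadiusToA$ in $G[\hC]$ of some vertex in $\hCA\cup\hCB$. If I had centers only in $\hCB$, I would be done immediately; the subtlety is that a vertex of $\hCA$ may only be close to another vertex of $\hCA$, not to $\hCB$. To handle this I would argue about the edges to $\hA$ directly: build an auxiliary graph (or use a greedy/BFS covering argument) on the vertices of $\hCA\cap V(\tG)$ where two are "linked" if they lie within distance $2r$ in $G[\hC]$. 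Because each vertex of $\hCA$ is at distance $\le 2r$ from $\hCA\cup\hCB$, each connected cluster of $\hCA$-vertices under this linking relation that contains no $\hCB$-vertex in its $2r$-ball is actually "anchored" — more carefully, I would take a BFS/greedy packing: repeatedly pick an uncovered vertex $w\in\hCA\cap V(\tG)$, take its $\PartitionRadiusToA$-ball $N^{G[\hC]}_{\PartitionRadiusToA}(w)$, which contains at most $\y$ edges to $\hA$ and hence at most $\y$ vertices of $\hCA$, and remove those from consideration. This shows $|\hCA\cap V(\tG)|\le\y\cdot(\text{number of balls used})$.

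The main obstacle is bounding the number of balls used by $l+1$. Here I would use connectivity of $\tG$ together with the $2r$-closeness: consider the set $S=\hCB\cap V(\tG)$, with $|S|=l$. Contract or collapse $S$ and argue that the "$\hCA$-region" of $\tG$ breaks into at most $l+1$ pieces each of small radius. Concretely: in $\tG$, the vertices of $\hCY\cup\hCZ$ hang off $\hCA\cup\hCB$ within distance $2r$, so $\tG$ is covered by the $2r$-balls around vertices of $\hCA\cup\hCB$; consider the components of $\tG$ after deleting $S$ — there are at most... actually I would instead do a spanning-tree argument: take a BFS spanning tree $T$ of $\tG$, mark the $l$ vertices of $\hCB$; these $l$ marked vertices split $T$ into at most $l+1$ subtrees (edge-components) not containing a marked vertex in their interior, but each such subtree, having no $\hCB$-vertex, must by Lemma~\ref{lem:smalldistanceBC} have all its vertices within $2r$ of a single $\hCA\cup\hCB$ vertex... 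I expect the clean statement to be: $\tG$ with $l$ vertices of $\hCB$ is contained in the union of $l+1$ neighborhoods of radius $O(r)$ in $G[\hC]$ — analogous to (but a refinement of) Lemma~\ref{lem:smallradius} with $l$ replaced carefully — and in fact each such radius-$\PartitionRadiusToA$ piece contributes $\le\y$ edges to $\hA$, hence $\le\y$ vertices of $\hCA$, giving $(l+1)\y$. So the real work, and the main obstacle, is the combinatorial claim that the $\hCB$-free remainder of the component decomposes into $l+1$ bounded-radius chunks; once that is in place the edge-count from Property~\ref{prop:ABCEdgesToA} finishes it.
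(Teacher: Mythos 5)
Your proposal identifies the right ingredients---the edge bound from Lemma~\ref{lem:partitionDefNhood}, the distance bound from Lemma~\ref{lem:smalldistanceBC}, and Lemma~\ref{lem:smallradius}---but it stops exactly at the point you yourself call ``the main obstacle,'' and that point is the whole content of the lemma. The covering claim you want (that $\tG$, or its $\hCA$-bearing part, is a union of at most $l+1$ neighborhoods of radius $O(\y r)$ in $G[\hC]$) cannot be established a priori: Lemma~\ref{lem:smalldistanceBC} only places each vertex within $2r$ of \emph{some} $\hCA\cup\hCB$-anchor, not a common one, and before the lemma is proved the number of $\hCA$-anchors in $\tG$ is unbounded, so neither the greedy ball-packing nor the spanning-tree cut controls the number of pieces by $l+1$. (Note also that deleting $l$ marked vertices from a tree does not yield at most $l+1$ components in general---a single high-degree marked vertex already breaks that count.) Trying to establish the covering first and then use it to bound $|\hCA\cap V(\tG)|$ is essentially circular.

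The paper's proof instead argues by contradiction and extracts a \emph{small, bounded} witness rather than decomposing the whole component. Suppose $|\tCA|>(|\tCB|+1)\y$. For each $v\in\tCB$, let $A(v)$ be the set of $\tCA$-vertices reachable from $v$ in $G[\tCA\cup\tCY\cup\tCZ\cup\{v\}]$, i.e.\ the $\hCB$-free region around $v$. Every $\tCA$-vertex lies in some $A(v)$, so by pigeonhole some $|A(v)|>\y$ (if $\tCB=\emptyset$ one takes $H=\tG$ instead). Let $H$ be the corresponding connected region---it contains at most one $\tCB$-vertex---take a spanning tree of $H$, and \emph{prune leaves} until exactly $\y+1$ vertices of $\tCA$ remain. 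The surviving connected subgraph $H'$ has at most $\y+2$ vertices from $\hCA\cup\hCB$, so by the Lemma~\ref{lem:smallradius}-style argument it sits inside a $5r(\y+2)\le\PartitionRadiusToA$-neighborhood of $G[\hC]$; yet it carries at least $\y+1$ edges to $\hA$, contradicting Lemma~\ref{lem:partitionDefNhood}. This pruning step---shrinking to a subgraph whose anchor count, and hence radius, is bounded \emph{by construction} rather than by assumption---is precisely what is missing from your sketch and what makes the argument non-circular.
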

\begin{proof}
    Let $\tG$ be a connected component of $G[\hC]$ and let
    $\tCA = \hCA \cap V(\tG)$, 
    $\tCB = \hCB \cap V(\tG)$,  
    $\tCY = \hCY \cap V(\tG)$,
    $\tCZ = \hCZ \cap V(\tG)$.
    We show that $|\tCA| \le (|\tCB|+1)\y$ in two steps:
    At first we show that if
    $|\tCA| > (|\tCB|+1)\y$ then there exists a connected subgraph
    $H$ of $\tG$ which
    contains at least $\y+1$ vertices from $\tCA$ and at most one vertex from $\tCB$.
    Second, we show that such a subgraph $H$ cannot exist.

    Assume that $|\tCA| > (|\tCB|+1)\y$.
    If $\tCB = \emptyset$ we set $H = \tG$.
    Then $H$ contains at least $\y+1$ vertices from $\tCA$ and no vertex from $\tCB$.
    If $\tCB \neq \emptyset$ we proceed as follows.
    For every $v \in \tCB$
    we define $A(v)$ to be the set of all vertices from $\tCA$
    that are reachable from $v$
    in $G[\tCA \cup \tCY \cup \tCZ \cup \{v\}]$.
    Since $\tG$ is connected and
    $\tCB \neq \emptyset$, for all $w\in\tCA$ exists $v\in\tCB$ with $w\in A(v)$.
    This means
    $|\tCA| \le \sum_{v \in \tCB} |A(v)|$.
    Since $|\tCA| > (|\tCB|+1)\y$, there exists
    $v \in \tCB$ with $|A(v)| > \y$.
    Let $H$ be the connected component of $v$
    in $G[\tCA \cup \tCY \cup \tCZ \cup \{v\}]$.
    The graph $H$ is connected and contains exactly one vertex from $\tCB$.
    Since $|A(v)| > \y$, it also contains at least $\y+1$ vertices from $\tCA$.

    We now show that such a graph $H$ cannot exist.
    Let $H'$ be a connected subgraph of $H$ which contains exactly
    $\y+1$ vertices from $\tCA$ and at most one vertex from $\tCB$ (we can construct
    $H'$ by taking a spanning tree of $H$ and iteratively removing leaves until we have exactly
    $\y+1$ vertices from $\tCA$).
    The graph $H'$ contains $\y+1$ vertices from $\hCA$ and at most one vertex from $\hCB$.
    According to Lemma~\ref{lem:smallradius},
    $H'$ is contained in a $5r(\y+2)$-neighborhood in $G[\hC]$. Furthermore $H'$ has by
    construction at least $\y+1$ edges to $\hA$.
    Since $G$ is an \partition every $\PartitionRadiusToA$-neighborhood in
    $G[\hC]$ has, by Lemma~\ref{lem:partitionDefNhood}, at most $\y$ edges to $\hA$.
    This is a contradiction, so $H$ cannot exist.
\end{proof}
\begin{lemma}\label{lem:smalliffewtoB}
    A connected component of $G[\hC]$ with $l \in \N$ edges to $\hB$ contains at most
    $\BEdgeCompSize{l}$ vertices from $\hCA \cup \hCB \cup \hCY$.
\end{lemma}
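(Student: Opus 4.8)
The plan is to fix a connected component $\tG$ of $G[\hC]$ having $l$ edges to $\hB$ and to write $\tCA = \hCA \cap V(\tG)$, $\tCB = \hCB \cap V(\tG)$, $\tCY = \hCY \cap V(\tG)$; the claim is $|\tCA| + |\tCB| + |\tCY| \le \BEdgeCompSize{l}$. First I would bound $|\tCB| \le l$: by definition of $\hCB$ every vertex of $\tCB$ is adjacent in $\hG$ to a vertex of $\hB$, distinct vertices of $\tCB$ are incident to distinct such edges, and all of them are among the $l$ edges from $V(\tG)$ to $\hB$. Since $\tG$ thus contains at most $l$ vertices from $\hCB$, Lemma~\ref{lem:CAbound} gives $|\tCA| \le (|\tCB|+1)\y \le (l+1)\y$, and hence $|\tCA \cup \tCB| \le (l+1)\y + l \le 2\y(l+1)$.

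The heart of the argument is bounding $|\tCY|$, and the key point is a locality fact: any \fummler witnessing membership in $\tCY$ lies entirely inside $\tG$. Indeed, let $v \in \tCY$; by definition of $\hCY$ there are $u_1,u_2 \in \hCA \cup \hCB$ and a walk $p$ through $v$ with endpoints $u_1,u_2$, all inner vertices in $\hC$ with at least two neighbors on $p$, and with $p$ contained in a $\fummlerRadius$-neighborhood of $G[\hB \cup \hC]$. Every vertex of $p$ (the inner ones, and the two endpoints $u_1,u_2 \in \hCA \cup \hCB \subseteq \hC$) therefore lies in $\hC$, so $p$ is a walk in $G[\hC]$; being connected and containing $v \in V(\tG)$, it satisfies $V(p) \subseteq V(\tG)$. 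In particular $u_1,u_2 \in (\hCA \cup \hCB) \cap V(\tG) = \tCA \cup \tCB$, and since the remaining conditions in the definition of a \fummler do not refer to the ground set, $(u_1,u_2,v)$ is a $(\tCA \cup \tCB)$-\fummler.

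Consequently every vertex of $\tCY$ occurs as the third coordinate of some $(\tCA \cup \tCB)$-\fummler, so $|\tCY|$ is at most the number of $(\tCA \cup \tCB)$-\fummlers. Applying Lemma~\ref{lem:fewshortpaths} with $W = \tCA \cup \tCB \subseteq \hB \cup \hC$ bounds this by $\numFummlers |\tCA \cup \tCB|^2 \le \numFummlers (2\y(l+1))^2 = 1560 r\y^7 (l+1)^2$. Combining with $|\tCA \cup \tCB| \le 2\y(l+1) \le 2r\y^7(l+1)^2$ gives $|\tCA \cup \tCB \cup \tCY| \le 1562 r\y^7(l+1)^2 \le \BEdgeCompSize{l}$, as claimed. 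The only genuinely delicate step is the locality observation of the second paragraph; the rest is a routine combination of Lemmas~\ref{lem:CAbound} and~\ref{lem:fewshortpaths} with a little constant chasing.
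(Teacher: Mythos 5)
Your proof is correct and follows essentially the same route as the paper's: bound $|\tCB|$ by $l$, invoke Lemma~\ref{lem:CAbound} for $|\tCA|$, observe that any witnessing $(\hCA\cup\hCB)$-\fummler for $v\in\tCY$ is in fact a $(\tCA\cup\tCB)$-\fummler because its walk stays inside the connected component $\tG$, and then apply Lemma~\ref{lem:fewshortpaths}. The paper states the locality step in one sentence without justification, whereas you spell it out (endpoints lie in $\hCA\cup\hCB\subseteq\hC$, so the whole walk is a connected subgraph of $G[\hC]$ containing $v$, hence inside $\tG$); that is exactly the intended reasoning, and your slightly looser constant chasing still lands within $\BEdgeCompSize{l}$.
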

\begin{proof}
    Let $\tG$ be a connected component of $G[\hC]$ and let
    $\tCA = \hCA \cap V(\tG)$, 
    $\tCB = \hCB \cap V(\tG)$,  
    $\tCY = \hCY \cap V(\tG)$,
    $\tCZ = \hCZ \cap V(\tG)$.
    Since $\tG$ has $l$ edges to $\hB$ we have $|\tCB| \le l$.
    According to Lemma~\ref{lem:CAbound}, $|\tCA| \le (l+1)\y$.
    Let $v \in \tCY$.
    By definition, there exists a $(\hCA \cup \hCB)$-\fummler $x=(u_1,u_2,v)$.
    Since $\tG$ is a connected component,
    $x$ is also a $(\tCA \cup \tCB)$-\fummler.
    By Lemma~\ref{lem:fewshortpaths}, $\hG$ contains at most
    $\numFummlers(\tCA \cup \tCB)^2 \le \numFummlers(l+(l+1)\y)^2$ many
    $(\tCA \cup \tCB)$-\fummlers, which also bounds the number
    of vertices in $\tCY$.
    We add up the bounds for the number of vertices from $\hCB$, $\hCA$, and $\hCY$
    and get $l + (l+1)\y + \numFummlers(l+(l+1)\y)^2 \le \BEdgeCompSize{l}$.
\end{proof}

For components that only have one edge to $\hB$ we can directly say
how many edges to $\hA$ it has.
\begin{lemma}\label{lem:edgesToA}
    A connected component of $G[\hC]$ with at most one edge to $\hB$ has at most $\y$ edges to $\hA$.
\end{lemma}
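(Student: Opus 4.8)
The plan is to show that such a component is small enough --- more precisely, that it is contained in a $\PartitionRadiusToA$-neighborhood of $G[\hC]$ --- and then read off the bound directly from Lemma~\ref{lem:partitionDefNhood}. Write $\tG$ for the component in question, and set $\tCA = \hCA \cap V(\tG)$ and $\tCB = \hCB \cap V(\tG)$. The first step is the observation that having at most one edge to $\hB$ forces $|\tCB| \le 1$: every vertex of $\hCB$ has a neighbor in $\hB$, so two distinct vertices of $\tG$ lying in $\hCB$ would contribute two distinct edges from $\tG$ to $\hB$.

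Next I would chain the two structural lemmas already proved in this subsection. Lemma~\ref{lem:CAbound} upgrades $|\tCB| \le 1$ to $|\tCA| \le (|\tCB| + 1)\y \le 2\y$, so $\tG$ contains at most $|\tCA| + |\tCB| \le 2\y + 1$ vertices of $\hCA \cup \hCB$. Then Lemma~\ref{lem:smallradius} places $\tG$ inside some $5r(2\y+1)$-neighborhood $N$ of $G[\hC]$. Since $\y \ge 1$ we have $5r(2\y+1) = 10r\y + 5r \le 20r\y = \PartitionRadiusToA$, so $N$ is in particular a $\PartitionRadiusToA$-neighborhood of $G[\hC]$, and every edge from $\tG$ to $\hA$ is an edge from $N$ to $\hA$ because $V(\tG) \subseteq V(N)$. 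Lemma~\ref{lem:partitionDefNhood} bounds the number of edges from $N$ to $\hA$ by $\y$, which finishes the argument.

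I do not expect a genuine obstacle here; this is essentially a one-line composition of Lemmas~\ref{lem:CAbound}, \ref{lem:smallradius}, and~\ref{lem:partitionDefNhood}. The two spots that deserve a word of care are (i) the passage from the hypothesis ``at most one edge to $\hB$'' to the vertex bound $|\tCB| \le 1$ --- note that an edge of $\tG$ to $\hB$ whose $\hC$-endpoint also has a neighbor in $\hA$ contributes a vertex to $\hCA$ rather than $\hCB$, which only helps --- and (ii) the radius arithmetic confirming $5r(2\y+1) \le \PartitionRadiusToA$, which uses only $\y \ge 1$.
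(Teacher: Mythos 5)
Your proof is correct and follows the same route as the paper's: bound $|\tCB|$ by $1$, apply Lemma~\ref{lem:CAbound} to get $|\tCA| \le 2\y$, apply Lemma~\ref{lem:smallradius} to place the component inside a $5r(2\y+1)$-neighborhood, and then invoke Lemma~\ref{lem:partitionDefNhood}. The only difference is that you explicitly check the radius arithmetic $5r(2\y+1) \le \PartitionRadiusToA$, which the paper leaves implicit.
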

\begin{proof}
    Let $\tG$ be a connected component of $G[\hC]$ with at most one edge to $\hB$
    and therefore at most one vertex from $\hCB$.
    According to Lemma~\ref{lem:CAbound}, it contains
    at most $2\y$ vertices from $\hCA$.
    By Lemma~\ref{lem:smallradius}, $\tG$
    is contained in a $5r(2\y+1)$-neighborhood in $G[\hC]$.
    By Lemma~\ref{lem:partitionDefNhood}, every
    $\PartitionRadiusToA$-neighborhood can only have at most $\y$ edges to~$\hA$.
\end{proof}

\subsection{Connected Components of \boldmath $G[\hC]$ With More Than One Edge to~$\boldsymbol \hB$}
In this subsection we want to look at components that have more than one
edge to $\hB$. We start with a helping lemma, that states that for
every vertex there is a close vertex from $\hCB$ (or none at all).

\begin{lemma}\label{lem:newvertexfromb}
    Let $v \in \hC$.
    If a vertex $u \in \hCB$ with $u\neq v$ is reachable from $v$ in $G[\hC]$
    then there also is a vertex $w \in \hCB$ with $w\neq v$ that has in $G[\hC]$ distance at most $\pathlengthToCb$ from $v$.
\end{lemma}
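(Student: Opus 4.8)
The plan is to argue by contradiction. Suppose that no vertex of $\hCB \setminus \{v\}$ lies within distance $\pathlengthToCb$ of $v$ in $G[\hC]$. Since by hypothesis some $u \in \hCB$ with $u \ne v$ is reachable from $v$ in $G[\hC]$, this assumption forces $\text{dist}_{G[\hC]}(v,u) > \pathlengthToCb$. Fix a shortest path $P = x_0 x_1 \cdots x_m$ from $v = x_0$ to $u = x_m$ in $G[\hC]$, so $m > \pathlengthToCb$; every $x_i$ lies in $\hC$, and $\text{dist}_{G[\hC]}(x_i,x_j) = |i-j|$ for all $i,j$.

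First I would anchor the early part of $P$ to $\hCA$. For each index $i$ with $4r < i \le 15\y r$, Lemma~\ref{lem:smalldistanceBC} applied to $x_i$ yields a vertex $a_i \in \hCA \cup \hCB$ with $\text{dist}_{G[\hC]}(x_i, a_i) \le 2r$. Then $\text{dist}_{G[\hC]}(v, a_i) \le i + 2r \le 15\y r + 2r \le \pathlengthToCb$ (using $\y \ge 1$), and also $\text{dist}_{G[\hC]}(v, a_i) \ge i - 2r > 2r > 0$, so $a_i \ne v$. Since $a_i$ lies within distance $\pathlengthToCb$ of $v$ and $a_i \ne v$, the standing assumption rules out $a_i \in \hCB$, hence $a_i \in \hCA$. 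This step also disposes of the corner case $v \in \hCB$: only indices $i > 4r$ are used, so $a_i$ is too far from $v$ to coincide with it.

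Next I would count vertices of $\hCA$ in a bounded-radius ball. Let $N = N^{G[\hC]}_{\pathlengthToCb}(v)$, regarded as an induced subgraph of $G[\hC]$. Every shortest path in $G[\hC]$ from $v$ to a vertex at distance at most $\pathlengthToCb$ stays inside $N$, so $N$ has radius at most $\pathlengthToCb \le \PartitionRadiusToA$ and is therefore a $\PartitionRadiusToA$-neighborhood in $G[\hC]$. By Lemma~\ref{lem:partitionDefNhood}, $N$ has at most $\y$ edges to $\hA$; since every vertex of $\hCA \cap V(N)$ has a neighbor in $\hA$ and these edges are pairwise distinct (their endpoint in $\hC$ is uniquely determined), $N$ contains at most $\y$ vertices of $\hCA$. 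On the other hand, choosing indices $i_0 < i_1 < \cdots$ inside the range $(4r, 15\y r]$ with consecutive gaps exceeding $4r$, the vertices $a_{i_0}, a_{i_1}, \dots$ all lie in $\hCA \cap V(N)$ and are pairwise distinct, because $\text{dist}_{G[\hC]}(a_{i_k}, a_{i_l}) \ge \text{dist}_{G[\hC]}(x_{i_k}, x_{i_l}) - 4r = |i_k - i_l| - 4r > 0$. Since $r,\y \ge 1$, the range $(4r, 15\y r]$ is long enough to furnish strictly more than $\y$ such indices, contradicting the bound of $\y$. Hence a vertex $w \in \hCB \setminus \{v\}$ with $\text{dist}_{G[\hC]}(v,w) \le \pathlengthToCb$ must exist.

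The argument uses no idea beyond Lemmas~\ref{lem:smalldistanceBC} and~\ref{lem:partitionDefNhood}; the only real work is the constant-chasing — keeping the extracted path segment long enough, ensuring the anchor vertices $a_i$ genuinely land in $\hCA$ rather than in $\hCB$ or equal $v$, and checking that strictly more than $\y$ of them are distinct while the whole configuration still fits inside a $\PartitionRadiusToA$-neighborhood of $G[\hC]$.
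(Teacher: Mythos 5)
Your proof is correct and follows essentially the same strategy as the paper's: fix the shortest path to $u$, pick well-spaced vertices along it, apply Lemma~\ref{lem:smalldistanceBC} to each to find nearby vertices of $\hCA \cup \hCB$, verify they are distinct and inside a $\PartitionRadiusToA$-neighborhood of $v$, and then either one of them lands in $\hCB$ (done) or more than $\y$ land in $\hCA$, contradicting the edge bound to $\hA$ from Lemma~\ref{lem:partitionDefNhood}. The differences are purely cosmetic: the paper sets anchors at positions $5ri$ for $i=1,\dots,\y+2$ and reasons by cases rather than full contradiction, while you run a contradiction from the outset and choose indices in $(4r,15\y r]$ with gaps over $4r$; both yield the needed count of $>\y$ distinct $\hCA$-vertices within distance $\pathlengthToCb$ and both arithmetic checks go through for $r,\y \ge 1$.
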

\begin{proof}
    We can assume that the shortest path from $u \in \hCB$ to $v$ in $G[\hC]$
    has length at least $\pathlengthToCb$ (otherwise let $w = u$).
    We pick vertices $x_1,\dots, x_{\y+2}$ along this path,
    such that $x_i$ has distance $5ri$ from $v$ in $G[\hC]$.
    Therefore, $x_i$ has distance at least $5r$ from $x_j$ in $G[\hC]$ for $i\neq j$.
    For every $x_i$ there exists a vertex $s_i\in \hCA \cup \hCB$ with distance
    at most $2r$ in $G[\hC]$ from $x_i$ (Lemma~\ref{lem:smalldistanceBC}).
    Since the vertices $x_i$ are spaced sufficiently far apart,
    we have $s_i \neq s_j$, and $v \neq s_i$ for $i \neq j$.
    Each vertex $s_i$ has in $G[\hC]$ distance at most $5ri+2r\le 5r(\y+2)+2r \le \pathlengthToCb$ from $v$.
    If $s_i \in \hCB$  for some $i$ we set $w=s_i$ and
    there is a path in $G[\hC]$ from $v$ to
    $w$ of length at most $\pathlengthToCb$.
    Assume now $s_i \in \hCA$ for all $i\le \y+2$.
    The vertices $s_i$ are contained in the $\pathlengthToCb$-neighborhood
    of $v$ in $G[\hC]$
    and each vertex $s_i$ has one edge to $\hA$.
    In total, there are at least $\y+2$ edges to $\hA$.
    According to Lemma~\ref{lem:partitionDefNhood},
    every $\PartitionRadiusToA$-neighborhood in $G[\hC]$ has at
    most $\y$ edges to $A$. This is a contradiction.
\end{proof}

As stated earlier \fummlers are our tool of choice that we use to
count vertices. We will establish this in the following lemma that
shows that for every edge a component has to $\hB$ one introduces more
\fummlers. This will in turn bound the number of vertices in components
with more than one edge to $\hB$.
\begin{lemma}\label{lem:manyTriplesBadComp}
    Let $\tG$ be a connected component of $G[\hC]$ with $l \ge 2$ edges to $\hB$.
    There are at least $l$ many $\hB$-\fummlers of the form $(u_1,u_2,v)$
    with $v \in V(\tG)$.
\end{lemma}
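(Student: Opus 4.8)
The plan is to assign to each of the $l$ edges between $V(\tG)$ and $\hB$ one $\hB$-\fummler, and to arrange that all $l$ of them are pairwise distinct. List the edges as $f_1,\dots,f_l$ with $f_i=v_ib_i$, where $v_i\in V(\tG)\subseteq\hC$ and $b_i\in\hB$. For each $f_i$ I will produce a \fummler $\tau_i=(b_i,c_i,v_i)$ with some $c_i\in\hB$, i.e.\ one whose first component is $b_i$ and whose third component is $v_i$. Since $\hB$ and $\hC$ are disjoint, from $\tau_i$ one recovers the pair $\{b_i,v_i\}=f_i$; because $f_1,\dots,f_l$ are distinct, so are $\tau_1,\dots,\tau_l$, and each has its third component in $V(\tG)$, which is what the lemma asks for.

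To build $\tau_i$, distinguish whether $v_i$ has several neighbors in $\hB$ or only one. If $v_i$ has neighbors $b^1,\dots,b^d\in\hB$ with $d\ge2$ (fixed in some cyclic order), then to the edge $v_ib^k$ I assign the walk $b^k,v_i,b^{k+1}$ (superscripts read mod $d$): its only inner vertex $v_i\in\hC$ has the two distinct neighbors $b^k,b^{k+1}$ on the walk, the endpoints are in $\hB$, and the three-vertex walk trivially lies in a $\fummlerRadius$-neighborhood in $G[\hB\cup\hC]$, so $(b^k,b^{k+1},v_i)$ is an $\hB$-\fummler; this disposes of all $d$ edges at $v_i$ at once. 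If instead $b_i$ is the unique neighbor of $v_i$ in $\hB$, then, as $l\ge2$ and $v_i$ carries only $f_i$, some other edge to $\hB$ starts at a vertex $v_i'\neq v_i$ of $V(\tG)$; I claim there is a vertex $v_i''\in V(\tG)$ with $v_i''\neq v_i$, incident to an edge $v_i''c_i$ with $c_i\in\hB$, and at distance at most $\pathlengthToCb$ from $v_i$ in $G[\hC]$. Granting the claim, let $P$ be a shortest $v_i$--$v_i''$ path in $G[\hC]$; it stays inside the component $\tG$ and has at most $\pathlengthToCb$ edges, and then $b_i,v_i,P,v_i'',c_i$ is a walk whose inner vertices ($v_i$, the interior of $P$, and $v_i''$) all lie in $\hC$ and each have two neighbors on the walk, whose endpoints $b_i,c_i\in\hB$ occur only at the ends (possibly $b_i=c_i$, which Definition~\ref{def:fummler} permits), and which is contained in the ball of radius $\pathlengthToCb+1\le\fummlerRadius$ around $v_i$ in $G[\hB\cup\hC]$; hence $\tau_i=(b_i,c_i,v_i)$ is an $\hB$-\fummler.

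Proving the claim is the heart of the argument and the step I expect to be the main obstacle; it rests on the clustering of $\hCB$-vertices from Lemma~\ref{lem:newvertexfromb}. First suppose $\tG$ contains some vertex of $\hCB$. If $v_i\in\hCA$, or if $v_i\in\hCB$ and $\tG$ has a second $\hCB$-vertex, then a vertex of $\hCB$ other than $v_i$ is reachable from $v_i$ in $G[\hC]$, so Lemma~\ref{lem:newvertexfromb} gives $v_i''\in\hCB$ with $v_i''\neq v_i$ at distance at most $\pathlengthToCb$ from $v_i$, and $v_i''\in N(\hB)$ supplies the edge $v_i''c_i$. If $v_i\in\hCB$ is the only $\hCB$-vertex of $\tG$, then the vertex $v_i'$ above lies in $\hCA$ (it is in $N(\hB)\cap\hC\subseteq\hCA\cup\hCB$ but not in $\hCB$), and applying Lemma~\ref{lem:newvertexfromb} to $v_i'$, which reaches the $\hCB$-vertex $v_i\neq v_i'$, produces a $\hCB$-vertex within distance $\pathlengthToCb$ of $v_i'$; this must be $v_i$, so $v_i'':=v_i'$ works. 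Now suppose $\tG$ contains no vertex of $\hCB$; then every edge of $\tG$ to $\hB$ runs through a vertex of $\hCA$ (as $N(\hB)\cap\hC\subseteq\hCA\cup\hCB$), by Lemma~\ref{lem:CAbound} $\tG$ has at most $\y$ vertices from $\hCA\cup\hCB$, and Lemma~\ref{lem:smallradius} then puts $\tG$ inside a $5r\y$-neighborhood in $G[\hC]$, so any two of its vertices are at distance at most $10r\y\le\pathlengthToCb$ in $G[\hC]$ and $v_i'':=v_i'$ works. The remaining checks are routine: that each constructed walk really sits in a $\fummlerRadius$-neighborhood (the induced subgraph on a ball of radius $\fummlerRadius$ in $G[\hB\cup\hC]$ again has radius at most $\fummlerRadius$), and that the assignment $f_i\mapsto\tau_i$ is injective because $\{b_i,v_i\}$ is recoverable from $\tau_i$.
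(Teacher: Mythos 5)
Your proof is correct and follows the same high-level strategy as the paper's: assign to each of the $l$ edges a distinct $\hB$-\fummler, using the first and third tuple components to ensure injectivity, and find the second $\hB$-endpoint by locating a nearby $N(\hB)$-vertex via Lemma~\ref{lem:newvertexfromb}. The paper's own proof takes a fixed edge $u_1v$, picks any other edge $u_2'w$, and when $w\neq v$ invokes Lemma~\ref{lem:newvertexfromb} to produce a close $\hCB$-vertex $w'$. However, that lemma requires that some $\hCB$-vertex $u\neq v$ be reachable from $v$ in $G[\hC]$, and the paper only establishes that $w\in N(\hB)\cap\hC\subseteq\hCA\cup\hCB$; if $w$ happens to lie in $\hCA$ (i.e.\ is also adjacent to $\hA$) and $\tG$ contains no $\hCB$-vertex other than possibly $v$, the lemma's hypothesis is not verified as written. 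Your proof fills this in cleanly: when $\tG$ contains no $\hCB$-vertex you use Lemma~\ref{lem:CAbound} (so $\tG$ has at most $\y$ vertices from $\hCA\cup\hCB$) and Lemma~\ref{lem:smallradius} (so $\tG$ has radius at most $5r\y$, hence diameter $\le 10r\y\le\pathlengthToCb$), making the distance bound trivial; and when $v_i$ is the unique $\hCB$-vertex you apply Lemma~\ref{lem:newvertexfromb} from the other endpoint $v_i'$ instead. The remaining checks (the walk lies in a $(\pathlengthToCb+1)\le\fummlerRadius$-neighborhood, inner vertices have two neighbors, the map $f_i\mapsto\tau_i$ is injective because $\hB$ and $\hC$ are disjoint) are all sound. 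In short, the approaches coincide, but your case analysis is more rigorous in the subcase that the paper's proof glosses over.
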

\begin{proof}
    Let $u_1v$ be an edge between $\hB$ and $\tG$ with $u_1 \in \hB$ and $v
    \in V(\tG)$. Since $l\ge 2$, there has to be another edge $u_2'w$ between
    $\hB$ and $\tG$ with $u_2' \in \hB$ and $w \in V(\tG)$. If $w=v$, it follows
    $u_2' \neq u_1$ and $(u_1,u_2',v)$ is a $\hB$-\fummler.
    Otherwise, $w\neq v$ and
    since $\tG$ is a connected component, $w$ is reachable from $v$ in $\tG$.
    According to Lemma~\ref{lem:newvertexfromb},
    there also is a vertex $w' \in \hCB \cap V(\tG)$ with $w'\neq v$ which has in $\tG$
    distance at most $\pathlengthToCb$ from $v$.
    Since $w' \in \hCB$, $w'$ also has a neighbor $u_2 \in \hB$.
    There is a path from $u_1$ to $u_2$ which contains $v$,
    whose inner vertices are contained in $\tG$,
    and which has length at most $\pathlengthToCb+2$.
    This means $(u_1,u_2,v)$ is a $\hB$-\fummler.

    For each of the $l$ edges between $\tG$ and $\hB$ we can use the
    technique above to construct a $\hB$-\fummler. The first and third
    entry of the tuple correspond to an edge between $\tG$ and $\hB$
    and thus no two edges create the same \fummler.
\end{proof}

With the next lemma we show that a connected component in $G[\hC]$ with many edges to
$\hB$ has many paths with certain properties and then show that only
$\t^{O(\y)}$ many vertices from $\hCA \cup \hCB \cup \hCY$ are in a
connected component of $G[\hC]$ with more than one edge to $\hB$.

\begin{lemma}\label{lem:fewbadcomponents}
    The number of vertices in $\hCA \cup \hCB \cup \hCY$ which are in a
    connected component of $G[\hC]$ with more than one edge to $\hB$ is at most
    $\verticesInBadComponents$.
\end{lemma}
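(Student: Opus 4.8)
The plan is to treat this as a corollary of the three structural lemmas already established in this subsection, using \fummlers as a charging device. Write $\tG_1,\dots,\tG_m$ for the connected components of $G[\hC]$ that have more than one edge to $\hB$, and let $l_i \ge 2$ be the number of edges from $\tG_i$ to $\hB$. By Lemma~\ref{lem:smalliffewtoB}, each $\tG_i$ contains at most $\BEdgeCompSize{l_i}$ vertices from $\hCA \cup \hCB \cup \hCY$, so the quantity we must bound is at most $\sum_{i=1}^m \BEdgeCompSize{l_i} = \sum_{i=1}^m 1600\y^7 r (l_i+1)^2$. It therefore suffices to control the total edge-budget $\sum_{i=1}^m l_i$.

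Next I would bound $\sum_i l_i$ from above via \fummlers. By Lemma~\ref{lem:manyTriplesBadComp}, each component $\tG_i$ (having $l_i \ge 2$ edges to $\hB$) yields at least $l_i$ distinct $\hB$-\fummlers whose third coordinate lies in $V(\tG_i)$. Since the components of $G[\hC]$ are pairwise vertex-disjoint and the third coordinate of such a \fummler lies in $\hC$ and hence determines its component, the families of \fummlers obtained from distinct $\tG_i$ are pairwise disjoint; thus $\hG$ contains at least $\sum_i l_i$ many $\hB$-\fummlers in total. On the other hand, Lemma~\ref{lem:fewshortpaths} applied with $W = \hB$ bounds the number of $\hB$-\fummlers in $\hG$ by $\numFummlers\,|\hB|^2$, and Property~\ref{prop:ABCSize} of an \partition gives $|\hB| \le \t^\y$. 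Combining, $\sum_i l_i \le \numFummlers\,\t^{2\y} = 390r\y^5\t^{2\y}$.

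Finally I would finish with a routine estimate. Since $l_i \ge 1$ we have $(l_i+1)^2 \le 4l_i^2$, and $\sum_i l_i^2 \le (\sum_i l_i)^2$, so
\[
\sum_{i=1}^m \BEdgeCompSize{l_i} \;\le\; 6400\y^7 r \Bigl(\sum_{i=1}^m l_i\Bigr)^2 \;\le\; 6400\y^7 r \bigl(390r\y^5\t^{2\y}\bigr)^2 \;=\; O(\y^{17}r^3\t^{4\y}),
\]
which is exactly $\verticesInBadComponents$. There is no real obstacle here beyond assembling Lemmas~\ref{lem:smalliffewtoB}, \ref{lem:manyTriplesBadComp}, and \ref{lem:fewshortpaths}; the one point that needs a sentence of care is the disjointness claim used to turn the per-component lower bounds on the number of \fummlers into a single global lower bound, and one should double-check that the crude bounds $\sum l_i^2 \le (\sum l_i)^2$ together with $|\hB| \le \t^\y$ really produce the stated exponent $\t^{4\y}$ and the polynomial factor $\y^{17}r^3$.
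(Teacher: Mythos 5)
Your proof is correct and matches the paper's argument essentially step for step: bound per-component vertex counts via Lemma~\ref{lem:smalliffewtoB}, lower-bound the number of $\hB$-\fummlers per component via Lemma~\ref{lem:manyTriplesBadComp}, upper-bound the total number of $\hB$-\fummlers via Lemma~\ref{lem:fewshortpaths} with $|\hB| \le \t^\y$, and combine. The only (welcome) addition is that you spell out the disjointness of the per-component \fummler families, which the paper leaves implicit.
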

\begin{proof}
    Let $H_1,\dots,H_m$ be the connected components of $G[\hC]$ with more than one edge to $\hB$.
    Let $k_i$ be the number of vertices in $H_i$ which are from $\hCA \cup \hCB \cup \hCY$.
    Let $l_i$ be the number of edges to $\hB$ in $H_i$.
    Let $k=\sum_{i=1}^m k_i$ and $l=\sum_{i=1}^m l_i$.
    At first, we show that $k \le 6400\y^7rl^2$.
    Then, we show that $l \le \numFummlers \t^{2\y}$.
    Together, this yields $k = \verticesInBadComponents$.

    According to Lemma~\ref{lem:smalliffewtoB}, each connected component $H_i$ contains at most
    $\BEdgeCompSize{l_i}$ vertices from $\hCA \cup \hCB \cup \hCY$.
    We bound
    $k \le \sum_{i=1}^m \BEdgeCompSize{l_i} \le 6400\y^7r (\sum_{i=1}^m l_i)^2
    = 6400\y^7rl^2$.
    By Lemma~\ref{lem:manyTriplesBadComp},
    for $1 \le i \le m$
    there are at least $l_i$ many $\hB$-\fummlers $(u_1,u_2,v)$ with $v \in V(H_i)$, so
    in total, there are at least $l$ many $\hB$-\fummlers.
    With Lemma~\ref{lem:fewshortpaths} and $|\hB| \le a^\y$, we bound
    $l \le \numFummlers|\hB|^2 \le \numFummlers \t^{2\y}$.
\end{proof}

\subsection{Protrusion Decomposition}
Having analyzed the structure of $G[\hC]$ we can finally show that for
every \partitionable graph $G$, every $r$-neighborhood $\hG$ is
\OOnhoodpartitionable.
\begin{theorem}
    \label{thm:partitionIsNhoodpartition}
    Let $\t,r,\y \in \N^+$ and let $G$ be an \partitionable graph.
    Let $\hG$ be an $r$-neighborhood in $G$.
    Then $\hG$ is \OOnhoodpartitionable.
\end{theorem}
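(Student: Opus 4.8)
The plan is to construct the required partition $(X,Y,Z)$ of $V(\hG)$ explicitly from the fixed \partition $(A,B,C)$ of $G$ and the sets $\hA,\hB,\hCA,\hCB,\hCY,\hCZ$ defined above, and then verify the five conditions of Definition~\ref{def:nhoodpartition} for parameters $\t' = O(\y^{17}r^3\t)$ and $\y' = 4\y$. Call a connected component $\tG$ of $G[\hC]$ \emph{bad} if it has at least two edges to $\hB$ and \emph{good} otherwise, and set
\begin{align*}
X &= \hA \cup \hB \cup \bigcup_{\tG \text{ bad}} \bigl( V(\tG) \cap (\hCA \cup \hCB \cup \hCY) \bigr), \\
Y &= \bigcup_{\tG \text{ good}} \bigl( V(\tG) \cap (\hCA \cup \hCB \cup \hCY) \bigr), \\
Z &= \hCZ,
\end{align*}
where the unions range over the connected components $\tG$ of $G[\hC]$. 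Since $V(\hG) = \hA \cup \hB \cup \hC$, since $\hCA,\hCB,\hCY,\hCZ$ partition $\hC$, and since each vertex of $\hCA \cup \hCB \cup \hCY$ lies in exactly one (good or bad) component of $G[\hC]$, the tuple $(X,Y,Z)$ is a partition of $V(\hG)$, which gives Property~\ref{prop:nhooditemPartition}.

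The size conditions for the remaining easy properties follow directly from the structural lemmas already proved. For Property~\ref{prop:nhooditemSizeX}, $|\hA| \le |A| \le \t$ and $|\hB| \le |B| \le \t^\y$ by Definition~\ref{def:partition}, while the number of vertices of $\hCA \cup \hCB \cup \hCY$ in bad components is $\verticesInBadComponents$ by Lemma~\ref{lem:fewbadcomponents}; hence $|X| = O(\y^{17}r^3\t^{4\y}) \le {\t'}^{\y'}$ for a sufficiently large hidden constant, since ${\t'}^{\y'} \ge (\y^{17}r^3\t)^{4\y}$. Property~\ref{prop:nhooditemZTree} is immediate: $\hG$ is an induced subgraph of $G$ and $\hCZ \subseteq V(\hG)$, so $\hG[Z] = G[\hCZ]$, and by Lemma~\ref{lem:compsAreTrees} every component of $G[\hCZ]$ is a tree with at most one edge leaving it in $\hG$, i.e.\ at most one edge to $\hG[X \cup Y]$. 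For Property~\ref{prop:nhooditemSizeY}, $\hG[Y]$ is a subgraph of $\hG[\hC] = G[\hC]$, so every component $H$ of $\hG[Y]$ lies inside a single good component $\tG$; Lemma~\ref{lem:smalliffewtoB} with $l = 1$ gives $|V(H)| \le \BEdgeCompSize{1} = 6400r\y^7 \le r{\y'}^7$, and since $\tG$ is a good component of $G[\hC]$ (so $V(\tG) \cap X = \emptyset$), the neighbours of $H$ in $X$ all lie in $\hA \cup \hB$ and number at most $\y + 1$: at most one endpoint of the (at most one) edge from $\tG$ to $\hB$, plus at most $\y$ endpoints of edges from $\tG$ to $\hA$ by Lemma~\ref{lem:edgesToA}. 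As $\y + 1 \le \y'$, this settles Property~\ref{prop:nhooditemSizeY}.

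The main obstacle is Property~\ref{prop:nhooditemYZBoundaries}: a crude count over all size-$(\y+1)$ subsets of $X$ would only yield roughly $\t^{\y^2}$, far too weak. The trick is to classify each connected component $H$ of $\hG[Y \cup Z]$ according to the component $\tG$ of $G[\hC]$ containing it (note $\hG[Y \cup Z]$ is a subgraph of $G[\hC]$, so every such $H$ lies in a single $\tG$). If $\tG$ is good, then $V(\tG) \subseteq Y \cup Z$, so $H = \tG$, and its boundary $N^{\hG}(V(\tG)) \cap X$ lies in $\hA \cup \hB$ and consists of at most $\y$ vertices of $\hA$ (Lemma~\ref{lem:edgesToA}) together with at most one vertex of $\hB$; since $|\hA| \le \t$ and $|\hB| \le \t^\y$, there are at most $(\t+1)^\y(\t^\y + 1)$ such boundaries. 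If $\tG$ is bad, then all its vertices of $\hCA \cup \hCB \cup \hCY$ belong to $X$, so each component of $\hG[Y \cup Z]$ inside $\tG$ is in fact a component of $G[\hCZ]$, whose boundary has size at most one by Lemma~\ref{lem:compsAreTrees}; there are at most $|X| + 1$ such boundaries. Summing, the number of distinct boundaries is $(\t+1)^\y(\t^\y+1) + |X| + 1 = O(\y^{17}r^3\t^{4\y}) \le {\t'}^{\y'}$ for a suitable hidden constant, giving Property~\ref{prop:nhooditemYZBoundaries}. Having checked all five conditions, $(X,Y,Z)$ is a local-protrusion-partition with parameters $\t' = O(\y^{17}r^3\t)$, $r$, and $\y' = 4\y = O(\y)$, so $\hG$ is \OOnhoodpartitionable. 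Essentially all the real content lives in the already-proved Lemmas~\ref{lem:fewbadcomponents}, \ref{lem:compsAreTrees}, \ref{lem:smalliffewtoB}, and \ref{lem:edgesToA}; the remaining work is bookkeeping, with the good/bad split used for Property~\ref{prop:nhooditemYZBoundaries} being the one genuinely essential idea, without which the eventual kernel would blow up to size $\t^{\Theta(\y^2)}$.
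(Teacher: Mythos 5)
Your proof is correct and follows essentially the same route as the paper's: you construct the identical partition (placing $\hA$, $\hB$, and the $\hCA \cup \hCB \cup \hCY$ vertices of bad components into $X$; the good components' $\hCA \cup \hCB \cup \hCY$ vertices into $Y$; and $\hCZ$ into $Z$) and verify the five conditions by invoking the same Lemmas~\ref{lem:fewbadcomponents}, \ref{lem:smalliffewtoB}, \ref{lem:edgesToA}, and \ref{lem:compsAreTrees}. The only visible difference is that for Property~\ref{prop:nhooditemYZBoundaries} you carefully split into good and bad cases and also count the boundaries contributed by tree components inside bad components, whereas the paper's proof only explicitly counts the boundaries of $\hG[Y]$-components; your version matches the displayed formula in Definition~\ref{def:nhoodpartition} literally and is slightly more thorough, but this does not change the nature of the argument.
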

\begin{proof}
    Let $\hA$, $\hB$, $\hC$, $\hCA$, $\hCB$, $\hCY$, $\hCZ$ be as defined earlier.
    We need to define sets $(X,Y,Z)$ and show all the
    properties of Definition~\ref{def:nhoodpartition}.
    We define $X$ to be the union of $\hA$, $\hB$ and all vertices from $\hCA
    \cup \hCB \cup \hCY$ which are in a connected component of $G[\hC]$ with more than
    one edge to $\hB$.
    Since $(A,B,C)$ is an \partition, we know that
    $|\hA| \le \t$ and $|\hB| \le \t^\y$.
    Lemma~\ref{lem:fewbadcomponents} bounds the number of vertices from
    $\hCA \cup \hCB \cup \hCY$ which are in a connected component of $G[\hC]$ with more than
    one edge to $\hB$ by at most
    $\verticesInBadComponents$.
    This implies $|X| = O(\y^{17}r^3 \t)^{O(\y)}$
    (Property~\ref{prop:nhooditemSizeX})

    We define $Y$ to be the vertices from $\hCA \cup \hCB \cup \hCY$
    which are in a connected component of $G[\hC]$ with at most one edge to
    $\hB$. Each connected component of $G[Y]$ is contained in a connected component of $G[\hC]$
    with at most one edge to $\hB$. Thus, by Lemma~\ref{lem:smalliffewtoB},
    connected components of $G[Y]$ have size at most
    $\BEdgeCompSize{1} = O(\y^7r)$.
    Every connected component of $G[Y]$ has at most one edge to $\hB$
    and by Lemma~\ref{lem:edgesToA} at most $\y$ edges to $\hA$.
    By construction, every edge from it to $X$ goes either to $\hA$ or $\hB$.
    This means it has at most $\y+1$ neighbors in $X$
    (Property~\ref{prop:nhooditemSizeY}). Since $|\hA| \le \t$ and $|\hB| \le \t^\y$ there are at most
    $\t^{O(\y)}$ (choose at most $\y$ of $\t$ and at most one of $\t^\y$ vertices)
    possible sets of boundaries in $X$.
    This satisfies Property~\ref{prop:nhooditemYZBoundaries}.

    We define $Z = \hCZ$.
    According to Lemma~\ref{lem:compsAreTrees},
    every connected component of $G[Z]$ is a tree
    and has at most one edge to $X \cup Y$ (Property~\ref{prop:nhooditemZTree}).
    Finally, the sets $X,Y,Z$ are pairwise disjoint
    and their union is $V(\hG)$ (Property~\ref{prop:nhooditemPartition}).
\end{proof}

%

\section{Compressing Neighborhoods} \label{sec:kernel}

Earlier, 
(Theorem~\ref{thm:partitionProbability}, Theorem \ref{thm:partitionIsNhoodpartition})
we showed that neighborhoods of $\scale$-\dominated random graph models
are (for certain values of $\scale$, $\t$, $r$, $\y$) likely to be \nhoodpartitionable (Definition~\ref{def:nhoodpartition}).
This means these neighborhoods have the following nice structure:
They consist of a (small) core graph to which protrusions are attached.
Remember that protrusions are (possibly large) subgraphs with small treewidth and boundary
and that the \emph{boundary} of a subgraph is its neighborhood in the remaining graph.

In this section, we replace these protrusions
by subgraphs with bounded size that retain the same boundary.
This yields a small graph which is $q$-equivalent
to the original graph.
The same technique has been used for obtaining small kernels in larger graph
classes, \eg, in graphs that exclude a fixed minor~\cite{FLST2010}.
The main result of this section is the following theorem.

\begin{customthm}{\ref{thm:kernel}}
    There exists an algorithm that takes
    $q,r,\y \in \N^+$ and a connected labeled graph $G$ with radius at most $r$
    and at most $q$ labels as input,
    runs in time at most $f(q,r,\y)\|G\|$ for some function $f(q,r,\y)$,
    and computes a labeled graph $G^* \equiv_q G$.
    If $G$ is \nhoodpartitionable for some $\t \in \N^+$
    then $|G^*| \le f(q,r,\y)\t^{\y}$.
\end{customthm}

This kernelization procedure and its run time bound is independent in $\t$
but the size of the output kernel is not:
If $\t$ is small, then the output is small.
The result is obtained by replacing protrusions with the help of the
Feferman--Vaught theorem~\cite{FV59}.
However, in order to replace the protrusions, one first has to identify them.
The main complication in this section lies in partitioning a graph such
that the relevant protrusions can be easily identified.
It is crucial that we obtain the size bound $|G^*| \le f(q,r,\y)\t^{\y}$ 
in Theorem~\ref{thm:kernel}.
Weaker bounds are easier to obtain but would
not be sufficient for our purposes.

\subsection{Protrusion replacement and the Feferman--Vaught Theorem}

\newcommand{\join}{\textnormal{join}}
\newcommand{\Th}{\textit{Th}_{\textit{FOL}}}
\newcommand{\tp}{\textnormal{tp}^\textnormal{FO}_q}

In this subsection we obtain a suitable protrusion replacement procedure
(Lemma~\ref{lem:protrusionreplace}).
We use a variant of the Feferman--Vaught theorem~\cite{FV59} to replace a
protrusion by a $q$-equivalent boundaried graph of minimal size.  This
size depends only on $q$ and the size of the boundary.
The original Feferman--Vaught theorem states that the validity of FO-formulas
on the disjoint union or Cartesian product of two graphs is uniquely
determined by the value of FO-formulas on the individual graphs.
Makowsky adjusted the theorem for algorithmic
use~\cite{makowsky2004algorithmic} in the
context of MSO model-checking.  The following proposition contains
the Feferman--Vaught theorem in a very accessible form.
There is also a nice and short proof in~\cite{grohe2008logic}.
The notation is borrowed from~\cite{grohe2008logic}, too.
At first, we need to define so called $q$-types.

\begin{definition}[\cite{grohe2008logic}]
    Let $G$ be a labeled graph and $\bar v = (v_1,\dots,v_k) \in V(G)^k$,
    for some nonnegative integer $k$.
    The \emph{first-order $q$-type of $\bar v$ in $G$} is
    the set $\tp(G,\bar v)$ of all first-order formulas $\psi(x_1,\dots x_k)$ of rank at most $q$
    such that $G \models \psi(v_1,\dots,v_k)$.
\end{definition}
A $q$-type could be an infinite set, but one can reduce them to a
finite set by syntactically normalizing formulas, so that there are
only finitely many normalized formulas of fixed quantifier rank and
with a fixed set of free variables. These finitely many formulas can
be enumerated.
For a tuple $\bar u = (u_1,\dots,u_k)$, we write $\{\bar u\}$
for the set $\{u_1,\dots,u_k\}$.
The following is a variant of the Feferman--Vaught theorem~\cite{FV59}.
\begin{proposition}[{\cite[Lemma~2.3]{grohe2008logic}}]\label{thm:fv}
    Let $G,H$ be labeled graphs and $\bar u \in V(G)^k$, such that
    $V(G) \cap V(H) =\{\bar u\}$. Then for all $q \ge 0$, $\tp(G \cup H, \bar u)$
    is determined by $\tp(G, \bar u)$ and $\tp(H, \bar u)$.
\end{proposition}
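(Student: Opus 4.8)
The plan is to recast the statement as a claim about \EhrenfeuchtFraisse games and let Duplicator play the product of two sub-games. Recall the standard relativized \EhrenfeuchtFraisse theorem: for labeled graphs $G_0,G_0'$ and tuples $\bar a\in V(G_0)^k$, $\bar a'\in V(G_0')^k$, one has $\tp(G_0,\bar a)=\tp(G_0',\bar a')$ if and only if Duplicator has a winning strategy in the $q$-round \EhrenfeuchtFraisse game on $(G_0,\bar a)$ versus $(G_0',\bar a')$. Since ``determined by'' means functional dependence, it suffices to prove the following: whenever $G_1,H_1$ share exactly the tuple $\bar u_1$ (that is, $V(G_1)\cap V(H_1)=\{\bar u_1\}$), $G_2,H_2$ share exactly $\bar u_2$, and $\tp(G_1,\bar u_1)=\tp(G_2,\bar u_2)$ as well as $\tp(H_1,\bar u_1)=\tp(H_2,\bar u_2)$, then $\tp(G_1\cup H_1,\bar u_1)=\tp(G_2\cup H_2,\bar u_2)$. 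By the theorem we may fix a winning Duplicator strategy $\sigma_G$ for the $q$-round game on $(G_1,\bar u_1)$ versus $(G_2,\bar u_2)$ and a winning strategy $\sigma_H$ for the game on $(H_1,\bar u_1)$ versus $(H_2,\bar u_2)$, and the goal is to build a winning strategy for the game on $(G_1\cup H_1,\bar u_1)$ versus $(G_2\cup H_2,\bar u_2)$.

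Duplicator plays $\sigma_G$ and $\sigma_H$ in parallel, maintaining the invariant that after $i$ rounds every pebbled pair $(a_j,a_j')$ lies on the \emph{$G$-side} ($a_j\in V(G_1)$, $a_j'\in V(G_2)$) or on the \emph{$H$-side} ($a_j\in V(H_1)$, $a_j'\in V(H_2)$), with the two sides overlapping precisely in the pebbles $\bar u_1\leftrightarrow\bar u_2$, and that the $G$-side pebbles together with $\bar u_1\mapsto\bar u_2$ form a position reached by following $\sigma_G$, likewise for $\sigma_H$. When Spoiler picks a vertex $v$ in $G_1\cup H_1$ (the case of a pick in $G_2\cup H_2$ is handled the same way, since a winning strategy answers Spoiler on either structure): if $v\in\{\bar u_1\}$ Duplicator answers with the matching entry of $\bar u_2$; if $v\in V(G_1)\setminus\{\bar u_1\}$ Duplicator feeds $v$ to $\sigma_G$ and copies its answer $v'\in V(G_2)$, which avoids $\{\bar u_2\}$ because $\sigma_G$ keeps a partial isomorphism extending $\bar u_1\mapsto\bar u_2$ and hence injective, so $v'$ does not interfere with the $H$-side; the case $v\in V(H_1)\setminus\{\bar u_1\}$ is symmetric with $\sigma_H$. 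One may assume Spoiler never repeats a pebbled vertex, so responses are well defined and the combined pebble map stays a bijection on its domain, and the invariant is preserved.

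It remains to check the endgame. After $q$ rounds the pebble map restricted to the $G$-side is a partial isomorphism $G_1\to G_2$ (because $\sigma_G$ wins) and restricted to the $H$-side is a partial isomorphism $H_1\to H_2$; we must see that the whole map is a partial isomorphism $G_1\cup H_1\to G_2\cup H_2$. For labels, a pebbled vertex $a_j\in V(G_1)$ satisfies $P_l$ in $G_1\cup H_1$ iff it does in $G_1$, or (only when $a_j$ is a shared vertex) in $H_1$; each of these ``iff''s transports to $G_2\cup H_2$ through the $G$- and $H$-partial isomorphisms, and symmetrically for $a_j\in V(H_1)$, so all $P_l$ are preserved. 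For edges, the crucial point is $V(G)\cap V(H)=\{\bar u\}$: every edge of $G_1\cup H_1$ belongs to $E(G_1)$ or $E(H_1)$, hence both its endpoints lie in $V(G_1)$ or both in $V(H_1)$, so in particular no edge joins a private vertex of $G_1$ to a private vertex of $H_1$; thus adjacency (and non-adjacency) between any two pebbled vertices is decided entirely inside $G_1$ or entirely inside $H_1$ and is therefore preserved, and the same argument applies on the other side. Equalities are preserved since all maps involved are injective on their domains. Hence Duplicator wins, which gives $\tp(G_1\cup H_1,\bar u_1)=\tp(G_2\cup H_2,\bar u_2)$, proving the proposition.

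The only delicate points are the bookkeeping around the shared tuple $\bar u$: one must verify that a move returned by $\sigma_G$ never collides with an $H$-side pebble (settled by injectivity of the maintained partial isomorphism) and that labels and adjacency across the ``seam'' between $G$ and $H$ are correctly accounted for, which is exactly where the hypothesis $V(G)\cap V(H)=\{\bar u\}$ is needed so that $G\cup H$ carries no edges beyond those of $G$ and of $H$. Everything else is the routine product-strategy argument for \EhrenfeuchtFraisse games.
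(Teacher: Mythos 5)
The paper does not actually give a proof of this proposition; it cites it directly as Lemma 2.3 of Grohe's survey on logic and graphs. Your \EhrenfeuchtFraisse-game product-strategy argument is a correct and standard way to prove this Feferman--Vaught-style lemma, and it is essentially the same approach used in the cited source, so there is no gap to flag. The two delicate points you identify---injectivity across the shared tuple $\bar u$ (so that a $\sigma_G$ response never collides with an $H$-side pebble) and the fact that $V(G)\cap V(H)=\{\bar u\}$ forces every edge of $G\cup H$ to live entirely in $G$ or entirely in $H$---are exactly the places where care is required, and you handle both correctly; the label bookkeeping on shared vertices is also sound because atomic formulas $P_l(x_i)$ already force agreement of labels on $\bar u_1$ versus $\bar u_2$ once the $q$-types coincide.
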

We use this proposition in the following two lemmas to introduce a
$q$-type preserving protrusion replacement procedure.
%
\begin{lemma}\label{lem:fotype}
    Let $H$ be a connected labeled graph with treewidth at most $t$, 
    at most $q$ labels, and $\bar u \in V(H)^k$ for some $k$. 
    One can find in time $h(q,t,k)|H|$
    a connected labeled graph $H'$ with
    $\{\bar u\} \subseteq V(H') \subseteq V(H)$,
    such that $|H'| \le h(q,t,k)$ and $\tp(H,\bar u) = \tp(H',\bar u)$,
    for some function $h(q,t,k)$.
\end{lemma}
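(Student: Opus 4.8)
The plan is to separate two essentially independent ingredients. \textbf{(a)} One can compute the type $\tp(H,\bar u)$ in time $h_1(q,t,k)|H|$. \textbf{(b)} There is a function $B(q,t,k)$ such that every first-order $q$-type of a $k$-tuple that is realized by \emph{some} connected labeled graph of treewidth at most $t$ with at most $q$ labels is already realized by such a graph on at most $B(q,t,k)$ vertices. Granting (a) and (b), the algorithm is short: compute $\tau:=\tp(H,\bar u)$; if $|V(H)|\le B(q,t,k)$, output $H':=H$; otherwise search by brute force through the constantly many connected labeled graphs with at most $q$ labels and at most $B(q,t,k)$ vertices for one, $G_\tau$, equipped with a tuple $\bar v$ such that $\tp(G_\tau,\bar v)=\tau$. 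Such a pair exists by (b), because $\tau$ is witnessed by $H$ itself. Finally, transplant: pick a bijection $\varphi$ from $V(G_\tau)$ to a set $W\subseteq V(H)$ with $\{\bar u\}\subseteq W$ and $\varphi(v_i)=u_i$ for all $i$, and let $H'$ be the labeled graph on $W$ obtained by transporting the edges and labels of $G_\tau$ along $\varphi$. The map $\varphi$ is consistent because ``$x_i=x_j$'' has quantifier rank $0\le q$ and hence lies in $\tau$ exactly when $u_i=u_j$, so $\bar v$ and $\bar u$ have the same identification pattern. Then $H'$ is connected, $\{\bar u\}\subseteq V(H')\subseteq V(H)$, $|H'|=|V(G_\tau)|\le B(q,t,k)$, and $(H',\bar u)\cong(G_\tau,\bar v)$ gives $\tp(H',\bar u)=\tau=\tp(H,\bar u)$.

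For (a) I would first compute a width-$t$ tree decomposition of $H$ in linear time. Up to logical equivalence there are only finitely many first-order formulas of rank at most $q$ with free variables among $x_1,\dots,x_k$; for each such normalized $\psi$, deciding $H\models\psi(\bar u)$ amounts to first-order model checking of a fixed sentence on a bounded-treewidth graph — add $k$ fresh unary labels marking $u_1,\dots,u_k$ — which is linear time by Courcelle's theorem~\cite{Cou90}. Assembling the answers gives $\tau$ in time $h_1(q,t,k)|H|$. The brute-force search over small graphs and the transplanting step add only a constant and an $O(|H|)$ term, so the whole procedure runs in time $h(q,t,k)|H|$ for a suitable $h$ (bounding $h$ also by $B(q,t,k)$ to cover the size claim).

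Ingredient (b) is the substantive part, and I would prove it by a minimality/pumping argument. Let $G$, with witness $\bar v$, be a connected labeled graph of treewidth at most $t$ with at most $q$ labels realizing $\tau$ and minimizing $|V(G)|$; the task is to bound $|V(G)|$. Fix a reduced nice rooted tree decomposition of $G$ of width at most $t':=t+k$ whose bags all contain $\{\bar v\}$. To each node $x$ associate an \emph{enriched boundary type}: write $B_x$ as a tuple of the fixed length $t'+1$ (pad by repetition, place $\bar v$ in fixed coordinates), and record the pair consisting of the $q$-type of the induced subgraph $G_x$ below $x$ together with that boundary tuple, and the partition of the $t'+1$ boundary coordinates according to which boundary vertices lie in a common connected component of $G_x$. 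There are only finitely many such enriched types, say $N^+=N^+(q,t,k)$. If $|V(G)|$ is large, then there is a long chain of nested induced subgraphs $G=G_{x_0}\supsetneq G_{x_1}\supsetneq\cdots$ along a root-to-leaf branch, so by pigeonhole two of them, $G_x\supsetneq G_{x'}$, carry the same enriched boundary type. I would then replace $G_x$ inside $G$ by a fresh isomorphic copy of $G_{x'}$, identifying the copy of $B_{x'}$'s boundary tuple with $B_x$'s coordinate by coordinate (legal because equal enriched types force the same identification pattern). The resulting graph $G'$ has strictly fewer vertices; it still has treewidth at most $t$ (replace the subtree at $x$ by a copy of the subtree at $x'$); it still has at most $q$ labels; it is still connected, because connectedness of the glued graph depends on $G_x$ only through the connectivity partition recorded in the enriched type; and applying the Feferman--Vaught theorem (Proposition~\ref{thm:fv}) to the split of $G'$ along $B_x$ into the copy of $G_{x'}$ and the complementary part, together with equality of the $q$-type components, gives $\tp(G',\bar v)=\tp(G,\bar v)=\tau$. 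So $G'$ is a smaller realizer of $\tau$ of the required kind, contradicting minimality; hence $|V(G)|\le B(q,t,k)$.

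The main obstacle, as usual for protrusion-type arguments, is the bookkeeping inside (b): choosing a canonical encoding of boundary tuples so that enriched types of different tree nodes can be compared and spliced, correctly handling repeated boundary vertices and identification patterns, ensuring the chosen nested pair actually has strictly decreasing vertex sets (this is where reduced nice tree decompositions help), and verifying that a single splice simultaneously preserves connectedness, treewidth at most $t$, the bound of $q$ labels, and the $q$-type while strictly shrinking $G$. Everything in (a) and the transplanting step are routine once (b) is established.
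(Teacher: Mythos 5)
Your proposal is correct in outline, but for the crucial size bound you take a considerably longer route than the paper. The lemma does \emph{not} ask the small realizer $H'$ to have bounded treewidth, only $|H'|\le h(q,t,k)$ and $\tp(H,\bar u)=\tp(H',\bar u)$. The paper exploits this: after computing $\tp(H,\bar u)$ via Courcelle (same as your step (a)), it simply enumerates \emph{all} connected labeled graphs with vertex set containing $\{\bar u\}$ and up to $q$ labels in increasing order of size, computes each one's $q$-type by brute force, and stops at the first match. Because there are only finitely many $q$-types of $k$-tuples, each realized type has a smallest connected realizer, and the maximum of these smallest-realizer sizes over all realized types is a finite quantity $M(q,k)$; the enumeration therefore stops by size $M(q,k)$, giving the bound with no tree decomposition, no pumping, and no Feferman--Vaught in this lemma (Feferman--Vaught enters only in the subsequent Lemma~\ref{lem:protrusionreplace}). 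Your ingredient (b) is the kind of pumping argument used to prove small-kernel protrusion-replacement theorems, and it is essentially sound, but it is overkill here and it carries one imprecision you should fix: after forcing $\bar v$ into every bag the tree decomposition has width $t'=t+k$, and your splice yields a graph of treewidth at most $t'$, not necessarily $t$, so minimality over treewidth-$\le t$ realizers does not directly give a contradiction; you should take $G$ minimal among treewidth-$\le(t+k)$ realizers, which is still fine since the bound only needs to be a function of $(q,t,k)$. The trade-off: your approach yields a small realizer of bounded treewidth and is somewhat more explicit, which can matter elsewhere, whereas the paper's max-of-minima argument is shorter but non-constructive about the value of $h$.
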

\begin{proof}
    The $q$-type $\tp(H,\bar u)$ can be represented by a set of normalized
    FO-formulas with quantifier rank at most $q$ and $k$ free variables.
    The number and length of these representing formulas
    can be bounded by a function of $q$ and $k$.
    Courcelle's theorem states that for a graph $H$ (with treewidth at most $t$)
    and a formula $\psi$
    (with quantifier rank at most $q$ and $k$ free variables)
    one can decide whether $H \models \psi(\bar u)$
    in time $g(q,t,k)|H|$,
    for some function $g(q,t,k)$.
    This lets us efficiently compute the $q$-type
    $\tp(H,\bar u)$ by checking all representing formulas.

    We now have to find a small graph $H'$ with the same $q$-type as $H$.
    We enumerate all connected graphs whose vertex set is a superset of $\{\bar u\}$ and
    which are labeled using the same labels as $H$
    in ascending order by vertex count.
    For each graph, we compute the $q$-type.
    We finish as soon as we find a graph $H'$ with $\tp(H,\bar u) = \tp(H', \bar u)$.
    Such a graph $H'$ exists.
    Each $q$-type of a $k$-tuple is represented by a subset
    of normalized formulas of rank at most $q$
    and at most $k$ free variables.
    This bounds the number of different $q$-types of $k$-tuples
    by a function of $q$ and $k$.
    Thus, the size of $H'$ can also be bounded by a function of $q$ and $k$.
    Since $|H'| \le |H|$, we can rename the vertices of $H'$
    such that $V(H') \subseteq V(H)$.
\end{proof}
\begin{lemma}\label{lem:protrusionreplace}
    Let $G,H$ be labeled graphs and $\bar u \in V(G)^k$ for some $k$, such that $V(G) \cap
    V(H) =\{\bar u\}$. Let $H$ be connected with treewidth at most $t$ and at most $q$ labels.
    One can find in time $h(q,t,k)|H| $ a connected labeled
    graph $H'$ such that $|H'| \le h(q,t,k)$, $V(G) \cap V(H') = \{\bar
    u\}$, and $G \cup H \equiv_q G \cup H'$, for some function
    $h(q,t,k)$.
\end{lemma}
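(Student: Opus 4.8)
The plan is to reduce directly to Lemma~\ref{lem:fotype} and then invoke the Feferman--Vaught theorem in the form of Proposition~\ref{thm:fv}. First I would apply Lemma~\ref{lem:fotype} to the connected labeled graph $H$, which has treewidth at most $t$, at most $q$ labels, and the distinguished tuple $\bar u \in V(H)^k$. This yields, in time $h(q,t,k)|H|$, a connected labeled graph $H'$ with $\{\bar u\} \subseteq V(H') \subseteq V(H)$, with $|H'| \le h(q,t,k)$, and with $\tp(H,\bar u) = \tp(H',\bar u)$. Since this single call is the only computation performed, the claimed running time follows after possibly enlarging $h$ by a constant factor.

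Next I would dispatch the two easy structural claims. Connectedness of $H'$ is part of the output of Lemma~\ref{lem:fotype}. For the intersection condition, observe that $V(G)\cap V(H') \subseteq V(G)\cap V(H) = \{\bar u\}$ because $V(H')\subseteq V(H)$, while conversely $\{\bar u\}\subseteq V(G)$ and $\{\bar u\}\subseteq V(H')$, so $\{\bar u\}\subseteq V(G)\cap V(H')$; hence $V(G)\cap V(H') = \{\bar u\}$. In particular the union $G\cup H'$ is formed by identifying exactly the vertices $\bar u$, just as $G\cup H$ is, so Proposition~\ref{thm:fv} applies to both unions with the same gluing.

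Finally I would establish $G\cup H \equiv_q G\cup H'$. By Proposition~\ref{thm:fv}, $\tp(G\cup H,\bar u)$ is determined by $\tp(G,\bar u)$ and $\tp(H,\bar u)$, and by the same proposition $\tp(G\cup H',\bar u)$ is determined by $\tp(G,\bar u)$ and $\tp(H',\bar u)$ via the \emph{same} determination. Since $\tp(H,\bar u) = \tp(H',\bar u)$, we conclude $\tp(G\cup H,\bar u) = \tp(G\cup H',\bar u)$. Now any first-order sentence $\varphi$ of quantifier rank at most $q$ may be viewed as a formula $\psi(x_1,\dots,x_k)$ that happens not to use its free variables, and then $\varphi \in \tp(G\cup H,\bar u)$ iff $G\cup H\models\varphi$, and likewise for $G\cup H'$. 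Equality of the two $q$-types therefore gives $G\cup H\models\varphi \iff G\cup H'\models\varphi$ for every such $\varphi$, i.e. $G\cup H \equiv_q G\cup H'$.

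There is no real obstacle in this argument; it is essentially a packaging of Lemma~\ref{lem:fotype} with Proposition~\ref{thm:fv}. The only points needing a line of care are the bookkeeping that the gluing vertex set is literally $\{\bar u\}$ for both unions (so that Proposition~\ref{thm:fv} applies verbatim), and the observation that sentences are the special case of formulas occurring in a $q$-type, so that agreement of the $q$-types of $\bar u$ in the two unions upgrades to $q$-equivalence of the unions themselves.
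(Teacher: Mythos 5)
Your proposal is correct and takes essentially the same route as the paper: apply Lemma~\ref{lem:fotype} to obtain the small $q$-type-preserving replacement $H'$, then invoke Proposition~\ref{thm:fv} to transfer type-equality of $(H,\bar u)$ and $(H',\bar u)$ to type-equality of $(G\cup H,\bar u)$ and $(G\cup H',\bar u)$, hence $q$-equivalence. The paper's proof is a bit more terse, leaving implicit the verification that $V(G)\cap V(H')=\{\bar u\}$ and that equality of $q$-types of $\bar u$ in the two unions entails $q$-equivalence of the unions as structures; you spell these out, which is fine.
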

\begin{proof}
    We use Lemma~\ref{lem:fotype} to construct a connected labeled graph $H'$
    such that $|H'| \le h(q,t,k)$, $\tp(H,\bar u) = \tp(H',\bar u)$,
    and $V(G) \cap V(H') = \{\bar u\}$.
    According to Proposition~\ref{thm:fv},
    $\tp(G \cup H, \bar u)$ is determined
    by $\tp(G, \bar u)$ and $\tp(H, \bar u)$.
    Therefore,
    $\tp(G \cup H, \bar u) = \tp(G \cup H', \bar u)$, which implies
    $G \cup H \equiv_q G \cup H'$.
\end{proof}

\subsection{Reduction Rules}

Let $G$ be an \nhoodpartitionable graph.
We want to construct a graph which is $q$-equivalent to $G$
and small if $\t$ is small.
We know there exists an \nhoodpartition $(X,Y,Z)$ of $G$,
but it is non-trivial to compute it.
In Lemma~\ref{lem:findZ} and \ref{lem:findheavybounaries},
we identify $Z$ and parts of $Y$.
In Lemma~\ref{lem:lastkernelstep} and
Theorem~\ref{thm:kernel}, we replace these parts
using the protrusion-replace
technique from Lemma~\ref{lem:protrusionreplace}.
\begin{lemma}\label{lem:findZ}
    There exists an algorithm that takes
    $r,\y \in \N^+$ and a graph $G$ with radius at most $r$
    as input,
    runs in time $O(\|G\|)$,
    and computes a set $Z \subseteq V(G)$ with the following property:
    If $G$ is \nhoodpartitionable for some $\t \in \N^+$
    then there exists an \nhoodpartition $(X,Y,Z)$ of~$G$.
\end{lemma}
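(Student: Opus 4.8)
The plan is to have the algorithm output the set $Z$ consisting of all vertices deleted when one iteratively removes vertices of degree at most one from $G$ until none remain (equivalently, the vertices of $G$ not in its $2$-core). This is computable in time $O(\|G\|)$ by the usual bookkeeping: maintain a worklist of the current degree-$\le 1$ vertices, and whenever a vertex is deleted, decrement the degree counters of its neighbours and push any newly low-degree neighbour onto the worklist; the inputs $r,\y$ are not used. Since $G$ has radius at most $r$ it is connected, which I use freely.

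Correctness rests on two structural facts about this $Z$. Fact (i): $Z$ contains the $Z'$-part of \emph{every} \nhoodpartition. Fix any \nhoodpartition $(X',Y',Z')$ of $G$ (for any $\t$) and let $S$ be a connected component of $G[Z']$. By Property~\ref{prop:nhooditemZTree}, $S$ is a tree with at most one edge of $G$ leaving $S$; a routine induction on $|S|$ shows such a pendant tree is consumed entirely by the peeling (at each stage at most one remaining vertex of $S$ has a still-present neighbour outside $S$, so $S$ still contains a vertex of current degree $\le 1$), hence $S\subseteq Z$ and $Z'\subseteq Z$. Fact (ii): $G[Z]$ is a forest each of whose components is a pendant tree. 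Orienting every edge of $G[Z]$ from its earlier-deleted to its later-deleted endpoint, each vertex has out-degree at most one (a vertex of current degree $\le 1$ at its deletion has at most one later-deleted neighbour) and the orientation is acyclic, so $G[Z]$ has no cycle. Moreover each component $T$ of $G[Z]$ has at most one edge to $V(G)\setminus Z$: if $t\in T$ is incident to such an edge then, since the far endpoint is never deleted, $t$ has exactly one present neighbour at its deletion time, so it has exactly one edge to $V(G)\setminus Z$ and is deleted strictly after all its $T$-neighbours; if two distinct such $t,t'$ existed, tracing the tree-path $t=p_0,\dots,p_m=t'$ in $T$ shows inductively that $p_{i+1}$ is deleted before $p_i$ for every $i$, forcing both ``$t$ before $t'$'' and ``$t'$ before $t$'', a contradiction.

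Given (i) and (ii), for any \nhoodpartition $(X',Y',Z')$ of $G$ I set $X:=X'\setminus Z$, $Y:=Y'\setminus Z$ and keep $Z$. By (i) this is a partition of $V(G)$ (Property~\ref{prop:nhooditemPartition}). Property~\ref{prop:nhooditemSizeX} holds after possibly enlarging $\t$, which is all the statement asks for since it only requires \emph{some} \nhoodpartition with this $Z$. Every component of $G[Y]$ lies inside a component of $G[Y']$, hence has size at most $r\y^7$, and has at most $\y$ neighbours in $X$ because $X\subseteq X'$ (Property~\ref{prop:nhooditemSizeY}). Property~\ref{prop:nhooditemZTree} is exactly fact (ii). Finally, the components of $G[Y\cup Z]$ have only finitely many distinct boundaries in $X$ (at most $2^{|X|}$), so Property~\ref{prop:nhooditemYZBoundaries} holds once $\t$ is taken large enough.

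I expect fact (ii) — that no component of $G[Z]$ meets the $2$-core in more than one edge — to be the main obstacle, since it is precisely what legitimizes the crude choice ``$Z=$ everything the peeling removes'' as a valid third block; fact (i) and the verification of the remaining properties are then routine.
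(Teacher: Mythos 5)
Your construction of $Z$ (peel degree-$\le 1$ vertices down to the $2$-core) and overall strategy (fix an arbitrary \nhoodpartition $(X',Y',Z')$, show $Z'\subseteq Z$, set $X:=X'\setminus Z$, $Y:=Y'\setminus Z$) match the paper's proof, and your facts~(i) and~(ii) are argued correctly — in fact more carefully than the paper does. The gap is in how you discharge Properties~\ref{prop:nhooditemSizeX} and~\ref{prop:nhooditemYZBoundaries}, where you appeal to ``enlarging $\t$''. That reading is not available: the conclusion must be a \nhoodpartition for the \emph{same} $\t$ as in the hypothesis. This is essential downstream — Lemma~\ref{lem:findheavybounaries}, Lemma~\ref{lem:lastkernelstep} and Theorem~\ref{thm:kernel} convert the \emph{minimal} $\t$ for which $G$ is \nhoodpartitionable into the kernel-size bound $f(q,r,\y)\t^\y$, and Lemma~\ref{lem:asdfasdf} and Theorem~\ref{thm:MCpowerlaw1} need exactly that quantitative coupling to make the expected running time converge. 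If one could enlarge $\t$ at will, the lemma would be vacuous.

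Under the correct reading, Property~\ref{prop:nhooditemSizeX} is still fine without extra work: $X\subseteq X'$ gives $|X|\le|X'|\le\t^\y$; you in fact used the inclusion $X\subseteq X'$ for Property~\ref{prop:nhooditemSizeY} but did not reuse it here. Property~\ref{prop:nhooditemYZBoundaries}, however, is a genuine hole: your $2^{|X|}$ bound — which can be as large as $2^{\t^\y}$ — is nowhere near $\t^\y$. What is needed, and this is the real crux rather than fact~(ii) as you predicted, is that passing from $(X',Y',Z')$ to $(X,Y,Z)$ does not \emph{increase} the number of distinct boundaries. Since (by your fact~(ii)) the components of $G[Z]$ attach to $V(G)\setminus Z$ by a single edge each, removing $Z$-vertices from a connected component $H'$ of $G[Y']$ leaves a connected (possibly empty) graph $H'\setminus Z$; hence the components of $G[Y]$ inject into those of $G[Y']$ via $H\mapsto H'$, and the new boundary $N^G(V(H))\cap X$ equals $\bigl(N^G(V(H'))\cap X'\bigr)\cap X$, a function of the old boundary of $H'$. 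The count of distinct boundaries therefore does not grow and stays at most $\t^\y$ for the original $\t$. This is exactly the paper's argument for this step, and your proof needs it in place of the $2^{|X|}$ bound.
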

\begin{proof}
    We construct $Z$ iteratively. At first $Z$ is empty, then in each step we
    add a vertex $v\in V(G)$ to $Z$ if it is a degree-one vertex of $G[V(G)
    \setminus Z]$. We repeat until there are no degree-one vertices in $G[V(G)
    \setminus Z]$. This can be done in $O(\|G\|)$ steps.
    The set $Z$ satisfies Property~\ref{prop:nhooditemZTree} of
    Definition~\ref{def:nhoodpartition}.
    Assume that $G$ is \nhoodpartitionable.
    We need to show that $X,Y$ exist such that $(X,Y,Z)$ is an \nhoodpartition
    of $G$.
    We consider an arbitrary
    \nhoodpartition $(X',Y',Z')$ of $G$.
    Every connected component of $G[Z']$ is a tree
    with at most one edge to $V(G) \setminus Z'$. 
    The set $Z$ was constructed such that $Z' \subseteq Z$.
    We set $X = X' \setminus Z$ and $Y = Y' \setminus Z$. The sets
    $X$, $Y$, $Z$ are disjoint and their union is $V(G)$
    (Property~\ref{prop:nhooditemPartition}).
    Since $X \subseteq X'$ and $Y \subseteq Y'$, the tuple $(X,Y,Z)$ satisfies
    Properties~\ref{prop:nhooditemSizeX},~and~\ref{prop:nhooditemSizeY}.

    Let $H$ be a connected component of $G[Y]$.
    By definition of $Z$, there exists a unique component $H'$ of $G[Y']$
    such that $H = H' \setminus Z$.
    Furthermore, $X \subseteq X'$.
    This means the number of distinct boundaries of $G[Y]$ in $X$
    is not larger than 
    the number of distinct boundaries of $G[Y']$ in $X'$.
    Since $(X',Y',Z')$
    satisfies Property~\ref{prop:nhooditemYZBoundaries},
    $(X,Y,Z)$ satisfies Property~\ref{prop:nhooditemYZBoundaries} as well.
\end{proof}

\begin{definition}[Heavy boundary]
    Let $(X,Y,Z)$ be an \nhoodpartition of a graph $G$.
    We call a set $S \subseteq X$ a \emph{heavy boundary}
    if there exist more than $r\y^7+\y$ connected components in $G[Y]$
    whose boundary in $X$ is exactly $S$.
\end{definition}

\begin{lemma}\label{lem:findheavybounaries}
    There exists an algorithm that takes
    $r,\y \in \N^+$ and a graph $G$ with radius at most $r$
    as input,
    runs in time $O(\|G\|)$,
    and computes sets $Z,P\subseteq V(G), \cal S \subseteq 2^{V(G)}$
    with the following properties:
    If $G$ is \nhoodpartitionable for some $\t \in \N^+$
    then there exists an \nhoodpartition $(X,Y,Z)$ of $G$.
    The connected components of $G[Y]$ with a heavy boundary are connected components of $G[P]$.
    The set $\cal S$ contains subsets of $X$ of size at most $\y$
    and $|\cal S| \le \min(2\t^\y,|G|)$.
    Every heavy boundary of $(X,Y,Z)$ is contained in~$\cal S$.
\end{lemma}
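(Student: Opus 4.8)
The goal is to compute, in linear time, the tree-like part $Z$ (already handled by Lemma~\ref{lem:findZ}), together with a set $P$ collecting the components of $G[Y]$ that sit over a heavy boundary, and a small family $\cal S$ of candidate heavy boundaries. The key observation is that although we cannot compute an \nhoodpartition $(X,Y,Z)$, we \emph{can} compute $Z$ by Lemma~\ref{lem:findZ}, and then work in $G[V(G)\setminus Z]$. In that graph, the components of $G[Y]$ become visible as certain small connected pieces: by Property~\ref{prop:nhooditemSizeY} of Definition~\ref{def:nhoodpartition} every component of $G[Y]$ has at most $r\y^7$ vertices and at most $\y$ neighbours in $X$. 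So first I would invoke Lemma~\ref{lem:findZ} to get $Z$ and fix (for the analysis only) an \nhoodpartition $(X,Y,Z)$ with this particular $Z$. Then I would iterate over the vertices of $G' := G[V(G)\setminus Z]$ and, for each vertex $v$, explore its connected component in $G'$ up to radius $r\y^7$; a component of $G[Y]$ is exactly such a bounded component of $G'$ all of whose boundary vertices (the neighbours outside the component) number at most $\y$. This gives a linear-time way to enumerate \emph{candidate} $Y$-components and, for each, its candidate boundary $S\subseteq X$ of size $\le \y$. (We do not know $X$, but the boundary of a $Y$-component in $G'$ equals its neighbourhood in $G'$ outside itself, and that neighbourhood lies in $X$ by Property~\ref{prop:nhooditemZTree} plus the partition structure — only details to check.)

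Having enumerated the candidate $Y$-components together with their boundaries, I would bucket them by boundary set: a boundary $S$ is declared \emph{heavy} if more than $r\y^7+\y$ candidate components share exactly that boundary $S$. Let $\cal S$ be the collection of these heavy $S$. Since each candidate component has $\ge 1$ vertex, the number of distinct boundaries that can be heavy is at most $|G|/(r\y^7+\y) \le |G|$; and since by Property~\ref{prop:nhooditemYZBoundaries} the components of $G[Y]$ realise at most $\t^\y$ distinct boundaries, the heavy ones among the \emph{true} $Y$-components number at most $\t^\y$. One has to be slightly careful: our "candidate $Y$-components" might include pieces of $Z$ or small components that in truth belong to $X\cup(\text{other parts})$, so the family we compute could be a superset of the true $Y$-components. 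But every component we put into a bucket is a bounded-radius component of $G'$ with $\le\y$ outside-neighbours, and the bound $|\cal S|\le\min(2\t^\y,|G|)$ is designed to absorb this slack — the factor $2$ covers the extra candidate components that are not genuine $Y$-components. I would make this precise by showing that, after restricting to the true \nhoodpartition $(X,Y,Z)$, each true heavy boundary is realised by more than $r\y^7+\y$ true $Y$-components, hence by more than $r\y^7+\y$ candidate components, hence lands in $\cal S$; conversely $|\cal S|\le 2\t^\y$ because at most $\t^\y$ distinct boundaries can each be shared by $> r\y^7 + \y$ true components (using Property~\ref{prop:nhooditemYZBoundaries}) and the remaining heavy-by-candidates boundaries are few for counting reasons.

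Finally, define $P$ to be the union of all candidate $Y$-components whose (candidate) boundary lies in $\cal S$. Then the connected components of $G[Y]$ with a heavy boundary are among these, and — because distinct $Y$-components are vertex-disjoint and, by Property~\ref{prop:nhooditemZTree} together with the definition of $Z$, not joined to one another through $G[V(G)\setminus Z]$ — each such $Y$-component is a connected component of $G[P]$. All three outputs $Z,P,\cal S$ are produced by a constant number of linear-time passes (the bounded-radius exploration from each vertex costs $O(r\y^7)$ per vertex, which is $O(\|G\|)$ overall once $r,\y$ are fixed), and the hashing/bucketing by boundary set is linear since each boundary has size $\le\y$.

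\textbf{Main obstacle.} The delicate point is not the algorithm but the correctness bookkeeping: our computed candidate components need not coincide with the true $Y$-components of any fixed \nhoodpartition, yet we must simultaneously guarantee (i) \emph{completeness} — every true heavy boundary is captured in $\cal S$ and every heavy-boundary $Y$-component ends up isolated in $G[P]$ — and (ii) the \emph{size bound} $|\cal S|\le\min(2\t^\y,|G|)$, which forces us to argue that the over-counting of candidate components cannot inflate $\cal S$ beyond twice the true count $\t^\y$. Threading this needle requires carefully comparing the computed objects against a fixed but unknown \nhoodpartition $(X,Y,Z)$ and using Properties~\ref{prop:nhooditemSizeY}, \ref{prop:nhooditemZTree}, and \ref{prop:nhooditemYZBoundaries} in tandem; everything else is routine linear-time graph traversal.
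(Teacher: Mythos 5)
There is a genuine gap in how you propose to identify the components of $G[Y]$. You want to enumerate them by exploring, from each $v\in V(G)\setminus Z$, the connected component of $v$ in $G' := G[V(G)\setminus Z]$ up to radius $r\y^7$, and to declare such a piece a ``candidate $Y$-component'' when it has $\le\y$ outside neighbours. But components of $G[Y]$ are \emph{not} connected components of $G'$: every such component is, by construction, attached to $X$, and $X$ is itself typically connected (indeed $G$ is a connected $r$-neighbourhood) and attached to many other $Y$-components. So a bounded-radius walk starting from a $Y$-vertex will in general cross into $X$ and then fan out into the rest of the graph rather than terminating inside a small, isolated piece. Without knowing $X$ you cannot recognise where a $Y$-component ends. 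Your parenthetical ``only details to check'' is exactly where the argument breaks: the boundary of a $Y$-component lies in $X$, but from the traversal's point of view those boundary vertices look like ordinary neighbours and you have no way to stop there. A second, smaller issue: you define $P$ as the union of candidate components with boundary in $\cal S$, but then for a heavy-boundary $Y$-component to be an entire connected component of $G[P]$ you still need that no two $Y$-components with a heavy boundary are adjacent in $G[P]$; with your definition that is not immediate, since $P$ is assembled from many overlapping candidate components.

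The paper's proof resolves exactly this difficulty by replacing traversal with a purely \emph{local degree test} in $G[V(G)\setminus Z]$. It sets $P$ to be the set of vertices whose degree in $G[V(G)\setminus Z]$ is at most $r\y^7+\y$. By Property~\ref{prop:nhooditemSizeY}, every $Y$-vertex has degree at most $r\y^7+\y$ there (its component of $G[Y]$ has $\le r\y^7$ vertices and $\le\y$ neighbours in $X$), so $Y\subseteq P$. Conversely, every vertex in a heavy boundary $S$ has degree $>r\y^7+\y$ (it is adjacent to more than $r\y^7+\y$ $Y$-components), so $S\cap P=\emptyset$. This \emph{cuts} $G[P]$ exactly along the heavy boundaries, making each heavy-boundary $Y$-component an entire connected component of $G[P]$ without ever needing to reconstruct the components of $G[Y]$ by search. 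The set $\cal S$ is then just the boundaries in $V(G)\setminus(P\cup Z)$ of connected components of $G[P]$, and the $2\t^\y$ bound is obtained by splitting the components of $G[P]$ into (i) genuine $Y$-components, which by Property~\ref{prop:nhooditemYZBoundaries} contribute at most $\t^\y$ distinct boundaries, and (ii) components containing a vertex of $X$, of which there are at most $|X|\le\t^\y$. Your overall structure (invoke Lemma~\ref{lem:findZ}, compare against a fixed unknown \nhoodpartition, bound $|\cal S|$ using Property~\ref{prop:nhooditemYZBoundaries}) is sound, but the crucial mechanism for separating $Y$ from heavy boundaries must be the degree threshold, not bounded-radius exploration.
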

\begin{proof}
    Let $\t \in \N^+$ such that $G$ is \nhoodpartitionable.
    We use Lemma~\ref{lem:findZ} to construct in time $O(|G|)$
    a set $Z$ such that there exists an \nhoodpartition $(X,Y,Z)$ of $G$.

    Let $P$ be the set of all vertices with degree at most $r\y^7+\y$ in $G[V(G) \setminus Z]$.
    The set $P$ can be computed in $O(\|G\|)$.
    A vertex $v \in Y$ is contained in a connected component of $G[Y]$ of size at most $r\y^7$
    with at most $\y$ neighbors in $X$ (Property~\ref{prop:nhooditemSizeY}).
    Therefore $v$ has degree at most $r\y^7+\y$ in $G[V(G) \setminus Z]$,
    which implies $Y \subseteq P$.
    Let $H$ be a connected component of $G[Y]$ with the heavy boundary $S$.
    There are more than $r\y^7+\y$ connected components in $G[Y]$ with boundary $S$.
    Therefore, every vertex in $S$ has degree more than $r\y^7+\y$ in $G[V(G) \setminus Z]$
    and thus $S \cap P = \emptyset$.
    This and $V(H) \subseteq P$ imply that $H$ is a connected component of $G[P]$, i.e.,
    the connected components of $G[Y]$ with a heavy boundary are connected components of $G[P]$.

    Let $\cal S$ be the set of all subsets of $V(G) \setminus (P \cup Z)$
    with size at most $\y$ which are the boundary
    of some connected component of $G[P]$ in $V(G) \setminus (P \cup Z)$.
    The set $\cal S$ can be computed in time $O(\|G\|)$.
    Note that $|\cal S| \le |G|$.
    Let $S \in \cal S$.
    Since $S \cap (P \cup Z) = \emptyset$ and $Y \subseteq P$, we have $S \subseteq X$.
    For every heavy boundary of $(X,Y,Z)$, there exists a connected component of $G[P]$
    with this boundary in $V(G) \setminus (P \cup Z)$.
    Boundaries of $(X,Y,Z)$ have by definition
    (Property~\ref{prop:nhooditemSizeY}) size at most $\y$.
    Thus, every heavy boundary of $(X,Y,Z)$ is contained in $\cal S$.

    Since $Y \subseteq P$ and $P \cap Z = \emptyset$, connected
    components of $G[P]$ are either connected components of $G[Y]$ or
    contain a vertex from $X$. Since $|X| \le \t^\y$, there are at
    most $\t^\y$ connected components of $G[P]$ that are not
    connected component of $G[Y]$. Components of $G[Y]$ have by
    definition (Property~\ref{prop:nhooditemYZBoundaries}) at most
    $\t^\y$ distinct boundaries in 
    $V(G) \setminus (P \cup Z)$. The remaining at most $\t^\y$ many
    connected components of $G[P]$ that contain a vertex from $X$ have
    at most $\t^\y$ boundaries in $G[V(G) \setminus (P \cup Z)]$.
    Together, this gives $|\cal S| \le 2\t^\y$.
\end{proof}

\begin{lemma}\label{lem:lastkernelstep}
    There exists an algorithm that takes
    $q,r,\y \in \N^+$ and a connected labeled graph $G$ with radius at most $r$ and at most $q$ labels
    as input,
    runs in time at most $f(q,r,\y)\|G\|$ for some function $f(q,r,\y)$,
    and computes a connected labeled graph $G^* \equiv_q G$ and a set $Z^* \subseteq V(G^*)$.
    In $G^*$, every connected component of $G^*[Z^*]$ is a tree and has at most one neighbor
    in $V(G^*) \setminus Z^*$.
    If $G$ is \nhoodpartitionable for some $\t \in \N^+$
    then $|V(G^*) \setminus Z^*| \le f(q,r,\y)\min(\t^\y,|G|)$.
\end{lemma}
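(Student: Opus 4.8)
The plan is to locate the protrusions of a hidden \nhoodpartition $(X,Y,Z)$ using the explicit proxies $P$ and $\cal S$ furnished by Lemma~\ref{lem:findheavybounaries}, to collapse every group of protrusions sharing a common boundary by one application of the protrusion replacement in Lemma~\ref{lem:protrusionreplace}, and then to bound what survives outside $Z^*$ in terms of $|X|\le\t^\y$. Concretely, I would first run Lemma~\ref{lem:findheavybounaries} on $r,\y,G$ in time $O(\|G\|)$ to obtain $Z,P,\cal S$. Recall that $Z$ is a forest whose components are trees with at most one edge leaving $Z$, that $\cal S$ contains only vertex sets of size at most $\y$ with $|\cal S|\le\min(2\t^\y,|G|)$, and that, whenever $G$ is \nhoodpartitionable, there is an \nhoodpartition $(X,Y,Z)$ with $Y\subseteq P$, $V(G)\setminus(P\cup Z)\subseteq X$, every heavy boundary of $(X,Y,Z)$ contained in $\cal S$, and every heavy-boundary component of $G[Y]$ a component of $G[P]$. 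For a component $C$ of $G[P]$ put $\mathrm{int}(C)=N_G(V(C))\setminus(P\cup Z)$; since $C$ is a full $G[P]$-component this is exactly its set of non-$Z$ neighbours, and in the \nhoodpartitionable case it lies in $X$ and equals the boundary $N_G(V(C))\cap X$. For each $S\in\cal S$ I gather $\mathcal D_S$, the components $C$ of $G[P]$ with $\mathrm{int}(C)=S$ and $|V(C)|\le r\y^7$, together with every $Z$-component whose unique edge to $V(G)\setminus Z$ lands in $\bigcup_{C\in\mathcal D_S}V(C)$, and let $H_S$ be $G$ restricted to $S$ and all of these vertices. Using that $Z$-components are trees with a single outside edge, one checks that each $H_S$ with $\mathcal D_S\neq\emptyset$ is connected (every $C\in\mathcal D_S$ is adjacent to all of $S$, the trees hang off these $C$'s), has treewidth at most $r\y^7+\y$, carries at most $q$ labels, and has no edge between $V(H_S)\setminus S$ and $V(G)\setminus V(H_S)$, so $G$ is the gluing of $H_S$ and $G[V(G)\setminus(V(H_S)\setminus S)]$ along $S$; a connected $G$ of radius at most $r$ has at most one $G[P]$-component with empty interface, so the degenerate case $S=\emptyset$ is harmless. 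All of this is computable in time $f(q,r,\y)\|G\|$, as membership tests against $\cal S$ are cheap because each member has size at most $\y$.

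Next I would apply Lemma~\ref{lem:protrusionreplace} once for each $S\in\cal S$ with $\mathcal D_S\neq\emptyset$, with $\bar u=S$: the interiors $V(H_S)\setminus S$ are pairwise disjoint so the replacements do not interfere, each $H_S$ becomes a connected $H_S'$ with $|V(H_S')|\le h(q,r\y^7+\y,\y)=:c(q,r,\y)$, and transitivity of $\equiv_q$ gives a connected output $G^*\equiv_q G$; since the interiors are disjoint the replacements cost $c(q,r,\y)\sum_S\|H_S\|=O(\y^2\|G\|)\cdot c(q,r,\y)\le f(q,r,\y)\|G\|$ in total. I set $Z^*$ to be $Z$ with the absorbed $Z$-components removed; as $Z^*$ is a union of complete, untouched $Z$-components whose non-$Z$ endpoints still lie in $V(G^*)\setminus Z^*$, every component of $G^*[Z^*]$ is a tree with at most one neighbour in $V(G^*)\setminus Z^*$. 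This part of the argument never uses \nhoodpartitionability, so the algorithm, its run time, and the forest property of $Z^*$ hold unconditionally.

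The main work is the size bound under the assumption that $(X,Y,Z)$ is an \nhoodpartition. Writing $R=V(G^*)\setminus Z^*$, we have $R=R_0\cup\bigcup_S(V(H_S')\setminus S)$ with $\bigl|\bigcup_S(V(H_S')\setminus S)\bigr|\le|\cal S|\cdot c(q,r,\y)\le 2c(q,r,\y)\t^\y$, and $R_0=\bigl(V(G)\setminus(P\cup Z)\bigr)$ together with those components of $G[P]$ that are \emph{not} small-with-interface-in-$\cal S$. The first piece is contained in $X$, hence has at most $\t^\y$ vertices. A component $C$ of $G[P]$ of the second kind has $|V(C)|>r\y^7$ or $\mathrm{int}(C)\notin\cal S$. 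If $|V(C)|>r\y^7$ then $C$ cannot lie inside $Y$ (components of $G[Y]$ have at most $r\y^7$ vertices), so $C$ meets $X$; there are at most $|X|\le\t^\y$ such components, and each has at most $|V(C)\cap X|\cdot\bigl(1+(r\y^7+\y)r\y^7\bigr)$ vertices because every vertex of $C\cap X$ is in $P$, hence has at most $r\y^7+\y$ neighbours outside $Z$, each meeting a $Y$-component of size at most $r\y^7$; these contribute $O(r^2\y^{14})\t^\y$ vertices. Otherwise $C$ is small; if it meets $X$ it is again one of at most $\t^\y$ components, contributing at most $r\y^7\t^\y$ vertices, and if $C\subseteq Y$ then $C$ is a single component of $G[Y]$ whose boundary equals $\mathrm{int}(C)\notin\cal S$, hence is not a heavy boundary and is shared by at most $r\y^7+\y$ components of $G[Y]$; since a $Z$-tree has only one outside edge, each component of $G[Y\cup Z]$ contains at most one component of $G[Y]$ and has the same boundary in $X$, so Property~\ref{prop:nhooditemYZBoundaries} bounds the number of such boundaries by $\t^\y$ and these components contribute at most $(r\y^7+\y)r\y^7\t^\y$ vertices. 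Summing, $|R_0|\le O(r^2\y^{14})\t^\y$, and trivially $|R_0|\le|G|$, so after enlarging $f$ we obtain $|R|\le f(q,r,\y)\min(\t^\y,|G|)$.

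The two places needing genuine care are the structural facts that $\mathrm{int}(C)$ really is the boundary of a $G[P]$-component and that a component of $G[Y\cup Z]$ absorbs at most one component of $G[Y]$ — both following from $Z$-components being trees with a single edge to $V(G)\setminus Z$ — and checking that $P$ and $\cal S$ really catch \emph{every} heavy group of protrusions, so that no large subgraph is left behind in $R_0$; everything else is bookkeeping that parallels the proof of Lemma~\ref{lem:findheavybounaries}.
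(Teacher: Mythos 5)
Your proof is correct and follows essentially the same approach as the paper's: compute $Z,P,\cal S$ via Lemma~\ref{lem:findheavybounaries}, replace each $H_S$ via Lemma~\ref{lem:protrusionreplace}, set $Z^*$ to what remains of $Z$, and bound the survivors. The construction is identical (your $\bigcup_{C\in\mathcal D_S}V(C)\cup Z(\cdot)$ is precisely the paper's $P_S\cup Z(P_S)$), and you correctly note that the algorithm, its run time, and the tree property of $Z^*$ need no partition hypothesis.

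Where you diverge is only in the bookkeeping of the size bound. The paper splits $V(G^*)\setminus Z^*$ as $|X| + |V(\hat G)\cap Y| + |\hat H|$ and bounds the $Y$-part by grouping $Y$-components by boundary (heavy boundaries vanish into $\bigcup_S P_S$, each non-heavy boundary carries at most $(r\y^7+\y)r\y^7$ vertices, and Property~\ref{prop:nhooditemYZBoundaries} caps the number of boundaries by $\t^\y$). You instead split $R_0=V(\hat G)\setminus Z$ into $V(G)\setminus(P\cup Z)\subseteq X$ plus the non-absorbed $G[P]$-components, and then case-split those by size and by whether they meet $X$. Your analysis is a bit longer because you must separately charge large $G[P]$-components to their $X$-vertices (using the $P$-degree bound $r\y^7+\y$), whereas the paper sidesteps this by charging all of $X$ at once and only tracking $Y$; both charges resolve to $O(\t^\y\cdot\mathrm{poly}(r,\y))$. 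One small slip: $\sum_S\|H_S\|$ is not $O(\y^2\|G\|)$ but rather $O(r\y^8\|G\|)$ (each $H_S$ has treewidth $r\y^7+\y$, so $\|H_S\|\le(r\y^7+\y+1)|H_S|$, and $\sum_S|H_S|\le(1+\y)|G|$ since the interiors are disjoint and $|\cal S|\le|G|$); this is immaterial since any bound of the form $g(r,\y)\|G\|$ suffices. Your observation that a connected $G$ has at most one $G[P]$-component with empty interface, handling the degenerate $S=\emptyset$, is a correct and slightly more careful point than the paper makes explicit.
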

\begin{proof}
    Let $\t \in \N^+$ such that $G$ is \nhoodpartitionable.
    We use Lemma~\ref{lem:findheavybounaries} to
    compute sets $Z, P\subseteq V(G)$, $\cal S \subseteq 2^{V(G)}$.
    There exists an \nhoodpartition $(X,Y,Z)$ of $G$.
    The connected components of $G[Y]$ with a heavy boundary are connected components of $G[P]$.
    The set $\cal S$ contains subsets of $X$ of size at most $\y$
    and $|\cal S| \le \min(2\t^\y,|G|)$.
    Every heavy boundary of $(X,Y,Z)$ is contained in $\cal S$.
    For every $W \subseteq V(G) \setminus Z$, we define
    $Z(W) \subseteq Z$ to be the vertices that are reachable from $W$
    in $G[W \cup Z]$.


    For every $S \in \cal S$ we do the following:
    We compute the set $P_S$ of all vertices which are contained in a
    connected component of $G[P]$ with size at most $r\y^7$ and
    boundary $S$ in $V(G)\setminus (P \cup Z)$. We compute $H_S = G[S \cup
    P_S \cup Z(P_S)]$ and $G_S = G[V(G) \setminus (P_S \cup Z(P_S))]$.
    Notice that $G = H_S\cup G_S$ and $V(H_S) \cap V(G_S) = S$.
    Furthermore, $|S| \le \y$ and if we remove $S$ from $H_S$, the
    remaining graph consists of connected components of size at most
    $r\y^7$ to which trees are attached. This means $H_S$ has
    treewidth at most $t=\y+r\y^7$.
    Also, $H_S$ has at most $q$ labels.
    Lemma~\ref{lem:protrusionreplace} lets us construct
    in time at most $h(q,t,\y)|H_S| $ a graph $H_S'$ such that
    $|H_S'| \le h(q,t,\y)$,
    $V(G_S) \cap V(H'_S) = S$,
    and $G \equiv_q G_S \cup H_S'$. We now replace $H_S$ with $H'_S$.

    This replacement procedure gives us graphs
    $\hat G = G[V(G) \setminus \bigcup_{S\in \cal S} (P_S \cup Z(P_S))]$
    and
    $\hat H = \bigcup_{S \in \cal S} H'_S$
    with $G \equiv_q \hat G \cup \hat H$.
    We set $G^* = \hat G \cup \hat H$.
    Notice that $G^*$ is connected,
    since the described construction preserves connectivity.


    We now bound the run time of this procedure.
    The underlying graph of $G$ can be extracted in time $q\|G\|$.
    Constructing $Z$, $P$, and $\cal S$ by Lemma~\ref{lem:findheavybounaries}
    takes time $O(\|G\|)$.
    The graphs $\hat G$, and $\{H_S \mid S \in \cal S\}$ can be constructed in time
    $O(\sum_{S \in \cal S} \|H_S\|)$.
    Constructing $\hat H$ takes time $h(q,t,\y) \sum_{S \in \cal S} \|H_S\|$.

    For every vertex $v \in P \cup Z$ there exists
    at most one $S \in \cal S$ such that $v \in H_S$. Also $|\cal S|
    \le |G|$. 
    Furthermore, since a graph $H_S$ has treewidth at most $t$,
    $\| H_S \| \le t |H_S|$.
    Therefore,
    $$
    \sum_{S \in \cal S} \|H_S\|
    \le
    t\sum_{S \in \cal S} |H_S|
    = t\sum_{S \in \cal S} |S| + |H_S \cap (P \cup Z)|
    \le t\y|\cal S| + t|P \cup Z|
    \le t\y|G| + t|G|.
    $$
    In total, the whole algorithm runs in time $f(q,r,\y)\|G\|$,
    for some function $f(q,r,\y)$.


    We proceed to show that $|V(G^*) \setminus Z|$ is small.
    Note that
    \begin{equation}\label{eqn:fooo1}
    |V(G^*) \setminus Z| \le |X| + |V(\hat G) \cap Y| + |\hat H|.
    \end{equation}
    We know that $|X| \le \t^\y$.
    Furthermore, with $|\cal S| \le \min(2\t^\y,|G|)$
    \begin{equation}\label{eqn:fooo0}
    |\hat H| \le
    \sum_{S \in \cal S} |H'_S| \le \min(2\t^\y,|G|) h(q,t,\y).
    \end{equation}
    We further bound the number of vertices from $Y$ in $\hat G$.
    Let $\cal X$ be the set of all boundaries in $X$ of connected components of $G[Y]$.
    Let $S \in \cal X$ be a boundary,
    we define $Y_S$ to be the vertices of all connected components of $G[Y]$
    which have $S$ as their boundary.
    We distinguish between $S$ being a heavy or non-heavy boundary:
    Assume $S$ is heavy.
    Then $S \in \cal S$.
    The connected components of $G[Y]$ with a heavy boundary are connected components of $G[P]$
    and have size at most $r\y^7$.
    This means $Y_S \subseteq P_S$.
    The graph $\hat G$ was defined such that $P_S \cap V(\hat G) = \emptyset$,
    and we have $|Y_S \cap V(\hat G)| = 0$.
    Assume now $S$ is non-heavy.
    Thus, $Y_S$ consists of at most $r\y^7+\y$ connected components of $G[Y]$
    of size at most $r\y^7$.
    This means $|Y_S| \le (r\y^7+\y)r\y^7$.

    In total, for every $S \in \cal X$,
    $|V(\hat G) \cap Y_S| \le (r\y^7+\y)r\y^7$.
    Note that $Y = \bigcup_{S \in \cal X} Y_S$ and $|\cal X| \le \t^\y$ (Property~\ref{prop:nhooditemYZBoundaries}).
    We can therefore bound
    \begin{equation}\label{eqn:fooo2}
    |V(\hat G) \cap Y|
    \le \sum_{S \in \cal X} |V(\hat G) \cap Y_S|
    \le \t^\y (r\y^7+\y)r\y^7.
    \end{equation}\label{eqn:foo1}
    Combining (\ref{eqn:fooo1}), (\ref{eqn:fooo0}), (\ref{eqn:fooo2}), and $|X| \le \t^\y$ yields
    $$
    |V(G^*) \setminus Z| \le
    \t^\y + \t^\y(r\y^7+\y)r\y^7 + 2\t^\y h(q,t,\y),
    $$
    which can be bounded by $f(q,r,\y)\t^\y$ for some function
    $f(q,r,\y)$. Furthermore, combining  $|V(G^*) \setminus Z|\le
    |\hat G| + |\hat H|$, (\ref{eqn:fooo0}), and $|\hat
    G| \le |G|$ yields $|V(G^*) \setminus Z| \le f(q,r,\y)|G|$ for
    some function $f(q,r,\y)$.

    Let $Z^* = Z \cap V(G^*)$.
    At last, we need to show that in $G^*$, the connected components of $G^*[Z^*]$ are trees
    with at most one neighbor in $V(G^*) \setminus Z^*$.
    This follows from that fact that in $G$ the connected components of $G[Z]$ are trees with at most one
    neighbor in $X \cup Y$,
    and the graph $G^*$ introduces no new edges to vertices from $Z$.
\end{proof}

\begin{theorem}\label{thm:kernel}
    There exists an algorithm that takes
    $q,r,\y \in \N^+$ and a connected labeled graph $G$ with radius at most $r$
    and at most $q$ labels as input,
    runs in time at most $f(q,r,\y)\|G\|$ for some function $f(q,r,\y)$,
    and computes a labeled graph $G^* \equiv_q G$.
    If $G$ is \nhoodpartitionable for some $\t \in \N^+$
    then $|G^*| \le f(q,r,\y)\t^{\y}$.
\end{theorem}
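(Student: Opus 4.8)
The plan is to derive Theorem~\ref{thm:kernel} from Lemma~\ref{lem:lastkernelstep} by performing one more round of protrusion replacement, this time on the tree part. First I would run the algorithm of Lemma~\ref{lem:lastkernelstep} on $q,r,\y,G$ to obtain, in time $f(q,r,\y)\|G\|$, a connected labeled graph $G' \equiv_q G$ (the graph called $G^*$ there) together with a set $Z' \subseteq V(G')$ (the set called $Z^*$) such that every connected component of $G'[Z']$ is a tree with at most one neighbor in $R := V(G') \setminus Z'$, and such that $|R| \le f(q,r,\y)\min(\t^\y,|G|)$ whenever $G$ is \nhoodpartitionable for some $\t$. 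Since $\|G'\| \le f(q,r,\y)\|G\|$, it remains to replace $Z'$ by something of bounded size without changing the $q$-type.

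If $R = \emptyset$ then, because $G'$ is connected, $G'$ is itself a single tree, and I would apply Lemma~\ref{lem:fotype} with the empty tuple ($k = 0$, treewidth $t = 1$) to replace $G'$ by a $q$-equivalent graph of size at most $h(q,1,0)$, which is bounded by $f(q,r,\y)\t^\y$ since $\t \ge 1$. If $R \ne \emptyset$, connectedness of $G'$ forces every component of $G'[Z']$ to attach to exactly one vertex of $R$, and it also shows $G'[R]$ is connected (a $Z'$-component cannot link two vertices of $R$). For each $v \in R$, let $T_v$ be the subgraph of $G'$ induced by $\{v\}$ together with all components of $G'[Z']$ whose unique $R$-neighbor is $v$. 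The sets $V(T_v) \setminus \{v\}$ are pairwise disjoint and partition $Z'$; every edge leaving $V(T_v) \setminus \{v\}$ ends at $v$; and $T_v$ is connected with boundary $\{v\}$ and treewidth at most $2$ (take a width-$1$ decomposition of each attached tree, add $v$ to every bag, and glue along a common bag containing $v$). I would then apply Lemma~\ref{lem:protrusionreplace} to each $v \in R$ in turn, with $\bar u = (v)$, $t = 2$, $k = 1$, writing $G' = G'_v \cup T_v$ where $G'_v = G'[V(G')\setminus(V(T_v)\setminus\{v\})]$ and $V(G'_v) \cap V(T_v) = \{v\}$, to obtain a connected replacement $T_v'$ with $|T_v'| \le h(q,2,1)$, $V(G'_v) \cap V(T_v') = \{v\}$, and $G'_v \cup T_v \equiv_q G'_v \cup T_v'$. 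Because the $T_v$ are pairwise vertex-disjoint, these replacements do not interfere and produce a connected graph $G^* \equiv_q G' \equiv_q G$.

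For the size bound, $|G^*| = |R| + \sum_{v \in R}(|T_v'| - 1) \le |R|\cdot h(q,2,1)$, so if $G$ is \nhoodpartitionable then $|G^*| \le h(q,2,1)\, f(q,r,\y)\,\t^\y$, which is of the required form. For the running time, Lemma~\ref{lem:lastkernelstep} costs $f(q,r,\y)\|G\|$; grouping the components of $G'[Z']$ by attachment vertex and assembling the $T_v$ is $O(\|G'\|)$; and the protrusion replacements cost $h(q,2,1)\sum_{v \in R}|T_v| \le h(q,2,1)\,|V(G')| \le h(q,2,1)\|G'\|$, so the total is $f'(q,r,\y)\|G\|$ for a suitable function $f'$. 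I do not expect a genuine obstacle here: all the real work is already packaged in Lemmas~\ref{lem:lastkernelstep} and~\ref{lem:protrusionreplace}, and the only points needing a little care are the degenerate case $R = \emptyset$, the appeal to connectedness of $G'$ to ensure each $Z'$-component has exactly one $R$-neighbor, and the verification that each $T_v$ has constant treewidth and unit boundary so that the parameters of Lemma~\ref{lem:protrusionreplace} depend only on $q$.
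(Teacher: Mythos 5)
Your proposal is correct and matches the paper's own proof in all essential respects: run Lemma~\ref{lem:lastkernelstep} to get $G'$ and the tree part $Z'$, then for each $v$ in $R = V(G')\setminus Z'$ collect the tree components of $G'[Z']$ hanging off $v$ into a protrusion (your $T_v$, the paper's $H_v$) and replace it using Lemma~\ref{lem:protrusionreplace}, with the degenerate case $R=\emptyset$ handled by a direct replacement of the whole tree. The only cosmetic differences are that you invoke Lemma~\ref{lem:fotype} rather than Lemma~\ref{lem:protrusionreplace} in the empty case, and you state treewidth $2$ for $T_v$ where the paper observes $T_v$ is itself a tree (treewidth $1$); neither affects correctness or the quality of the bounds.
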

\begin{proof}
    Let $\t \in \N^+$ such that $G$ is \nhoodpartitionable.
    We use the algorithm of Lemma~\ref{lem:lastkernelstep} to
    compute a connected graph $G' \equiv_q G$ and a set $Z \subseteq V(G')$.
    Let $\bar Z = V(G') \setminus Z$.
    We have $|\bar Z| \le g(q,r,\y)\min(\t^\y,|G|)$,
    for some function $g(q,r,\y)$.
    Also, in $G'$, every connected component of $G'[Z]$
    is a tree and has at most one neighbor in $\bar Z$.
    Since $G' \equiv_q G$, $G'$ also has at most $q$ labels.
    If $\bar Z$ is empty, then $G'$ is a tree with at most $q$ labels
    and we use Lemma~\ref{lem:protrusionreplace}
    to construct in time $h(q,1,1)|G'|$ a graph $G^*$
    with $|G^*| \le h(q,1,1)$ and
    $G^* \equiv_q G' \equiv_q G$.
    We therefore assume $\bar Z \neq \emptyset$.

    For every $v \in \bar Z$ we do the following:
    We define $Z_v$ to be the set of
    vertices which are contained in a connected component of $G'[Z]$ which has
    $v$ as its only neighbor.
    We also define the graph $H_v = G'[\{v\} \cup Z_v]$,
    which is a tree with at most $q$ labels and intersects $G'[\bar Z]$ only in $v$.
    We use Lemma~\ref{lem:protrusionreplace}
    to construct in time $h(q,1,1)|H_v|$ a graph $H_v'$ with
    $|H_v'| \le h(q,1,1)$ and
    $G \equiv_q G'[V(G') \setminus Z_v] \cup H_v'$. We replace the
    subgraph $H_v$ of $G'$ with $H'_v$.

    This gives us a graph $G^* = G'[V(G') \setminus \bigcup_{v \in
    \bar Z} Z_v] \cup \bigcup_{v \in \bar Z} H_v'$ with $G^* \equiv_q
    G' \equiv_q G$. Since $\bar Z \neq \emptyset$ and $G'$ is
    connected $\bigcup_{v \in \bar Z} Z_v = Z$. We bound with $|\bar
    Z| \le g(q,r,\y)\t^\y$ and $v\in V(H_v')$ for all $v\in\bar Z$
    $$
    |G^*| = \sum_{v \in \bar Z} |H_v'|
    \le g(q,r,\y)\t^\y h(q,1,1)
    \le f(q,r,\y)\t^\y,
    $$
    for some function $f(q,r,\y) \ge g(q,r,\y)h(q,1,1)$.
    Notice that the graphs $\{H_v \mid v \in \bar Z \}$
    are disjoint and their union is $G'$.
    The time needed to construct $\hat H$ therefore is at most
    $$
    \sum_{v \in \bar V} h(q,1,1)\|H_v\|
    \le h(q,1,1) \|G'\|
    \le h(q,1,1) g(q,r,\y)\|G\|
    \le f(q,r,\y)\|G\|.
    $$
    As in Lemma~\ref{lem:lastkernelstep}, $f$ can be chosen such that
    the algorithm runs in time $f(q,r,\y)\|G\|$.
\end{proof}


\section{Model-Checking}\label{sec:modelchecking}

In this section, we finally obtain the main result of this paper,
namely that for certain values of $\scale$ 
one can perform model-checking on \dominated random graph models
in efficient expected time.

An important tool in this section is Gaifman's locality theorem~\cite{gaifman1982local}.
It states that first-order formulas can express only local properties of graphs.
It is a well established tool for the design of model-checking algorithms (e.g.~\cite{grohe2001generalized,grohe2008logic,frick2001deciding}).
We use it to reduce the model-checking problem on a graph
to the model-checking problem on neighborhoods of said graph (Lemma~\ref{lem:globalIfLocal}).
This technique is described well by Grohe~\cite[section 5]{grohe2008logic}.

To illustrate our approach, consider the following thought experiment:
Let $X$ be a non-negative random variable with $\P[X = \t] = \Theta(\t^{-10})$ for all $\t \in \N$.
Assume an algorithm that gets an integer $\t \in \N$ as input and runs in time $t(\t)$.
Its expected run time on input $X$ is $\sum_{\t \in \N} \Theta(\t^{-10}) t(X)$.
If $t(\t) = \t^{10}$ then the expected run time is infinite.
If $t(\t) = \t^{8}$ then the expected run time is $\Theta(1)$.
Thus, small polynomial differences in the run time can have a huge impact
on the expected run time.
We notice that
the run time on an input has to grow slower than the inverse of the probability
that the input occurs.

Let us fix a formula $\varphi$ and let $r$ and $\y$ be constants
depending on $\varphi$.
In this section we provide a model-checking algorithm whose run time on a graph $G$ depends
on the minimal value $\t \in \N$ such that $G$ is \partitionable.
This means, we need to solve the model-checking problem on \partitionable graphs
faster than the inverse of the probability that $\t$ is minimal.

\Cref{sec:partitionprob} states that
a graph from \dominated graph classes is for some $\t$ not \partitionable
with probability approximately $\t^{-\y^2}$
(we ignore the terms in $r$, $\y$ and $\momn$ for now).
Thus, the probability that a value $\t$ is minimal is approximately $\t^{-\y^2}$.

Let $G$ be a graph and $a$ be the minimal value such that $G$ is \partitionable.
In \Cref{sec:protrusiondecomp} we showed that all its $r$-neighborhoods are 
\OOnhoodpartitionable.
The kernelization result from \Cref{sec:kernel}
states that 
such $r$-neighborhoods
can be converted in linear time
into $|\varphi|$-equivalent graphs of size approximately $\t^\y$ (we again ignore the factors independent of $\t$ for now).
This means, using the naive model-checking algorithm,
one can decide for an $r$-neighborhood $G^r$ of $G$ whether $G^r \models \varphi$
in time approximately $\|G\| \t^{\y |\varphi|}$.
Thus, one can perform model-checking on all $r$-neighborhoods of $G$
in time approximately $\t^{\y |\varphi|} \sum_v \| N^G_r(v) \|$.
Using Gaifman's locality theorem, this (more or less) yields the answer to the model
checking problem in the \emph{whole} graph.

Let $G$ be a graph from a \dominated random graph model.
In summary, we have for every $\t \in \N$:
\begin{itemize}
    \item $\t \in \N$ is the minimal value such that a graph is \partitionable
    with probability approximately $\t^{-\y^2}$.
    \item
    If $\t \in \N$ is the minimal value such that $G$ is \partitionable 
    then we can decide whether $G \models \varphi$
    in time approximately $\t^{\y |\varphi|} \sum_v \| N^G_r(v) \|$.
\end{itemize}

In this example one may choose $\y = |\varphi|^2$ such that the run time grows
slower than the inverse of the probability.
We changed some numbers in these examples to simplify our arguments.
Thus, in reality, $\y$ needs to be chosen slightly differently.

This section is structured as follows:
In \Cref{sec:mc1}, we introduce the concept of Gaifman locality.
Then, in \Cref{sec:mc2}, we use Gaifman locality and the kernelization
result from \Cref{sec:kernel}
to solve the model-checking in \partitionable graphs.
At last, in \Cref{sec:mc3}, we prove our main result
by showing that the run time of this algorithm grows
slower than the inverse of the probability that $\t$ is minimal.

\subsection{Locality}
\label{sec:mc1}

In this section, we present a well-known technique which reduces
the model-checking problem to local regions.
Lemma~\ref{lem:globalIfLocal} gives
a slightly different version of what can be found in the literature~\cite{grohe2008logic}.
Without this modification we would only be able to prove
expected polynomial time of our model-checking algorithm
instead of expected linear time.

A formula $\omega(x)$ is called \emph{$r$-local} if $G\models\omega(v)$ if and
only if $ G[N^G_r(v)]\models\omega(v)$ for all labeled graphs $G$ and all $v\in V(G)$. Let
$\text{dist}_{> r}(x_1,x_2)$ be the first-order formula denoting that
the distance between $x_1$ and $x_2$ is greater than $r$. Let $\omega$
be an $r$-local formula. A \emph{basic local sentence} is a
sentence of the form
$$
\exists x_1 \dots \exists x_s \big(\bigwedge_{i\neq j} \text{dist}_{> 2r}(x_i,x_j) \wedge \bigwedge_i \omega(x_i) \big).
$$

\begin{proposition}[Gaifman's locality theorem~\cite{gaifman1982local,grohe2008logic}]
    \label{prop:gaifman}
    Every first-order sentence is equivalent to a boolean combination
    of basic local sentences. Furthermore, there is an algorithm that
    computes a boolean combination of basic local sentences equivalent
    to a given first-order sentence.
\end{proposition}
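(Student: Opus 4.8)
The plan is to prove Gaifman's normal-form theorem by induction on the structure of the formula, carrying a stronger invariant that also covers formulas with free variables. Concretely, I would prove: for every first-order formula $\varphi(x_1,\dots,x_k)$ there is an $r=r(\varphi)$ such that $\varphi$ is logically equivalent to a Boolean combination of (i) $r$-local formulas $\psi(x_1,\dots,x_k)$ around the free variables and (ii) basic local sentences; moreover every step of the argument is effective. For $k=0$ this is exactly the proposition, since a Boolean combination of $0$-ary local formulas collapses to $\top/\bot$ (or is simply absorbed). Atomic formulas are $1$-local, and the class of Boolean combinations of local formulas and basic local sentences is trivially closed under $\neg,\wedge,\vee$ (take the maximum of the radii), so the only real work is the existential step.

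For $\varphi=\exists y\,\vartheta(\bar x,y)$ I would first apply the induction hypothesis to $\vartheta$, put the resulting Boolean combination into disjunctive normal form, and observe that the conjuncts which are basic local sentences (or negations of such) mention neither $\bar x$ nor $y$ and can therefore be pulled out of $\exists y$. This reduces the task to showing that $\exists y\,\tau(\bar x,y)$ has the desired form whenever $\tau(\bar x,y)$ is $t$-local around $(\bar x,y)$. Setting $s:=2t+1$, I would split on whether $y$ is within distance $s$ of some $x_i$. The ``near'' disjunct $\exists y\,\bigl(\bigvee_i \mathrm{dist}(x_i,y)\le s\;\wedge\;\tau(\bar x,y)\bigr)$ only inspects $N^{G}_{s+t}(\bar x)$, hence is an $(s+t)$-local formula around $\bar x$, i.e.\ of type~(i). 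In the ``far'' disjunct the $t$-neighborhoods of $\bar x$ and of $y$ are vertex-disjoint with no edges between them (a common vertex would force $\mathrm{dist}(x_i,y)\le 2t$), so $N^{G}_t(\bar x,y)$ is a genuine disjoint union of structures with $\bar x$ in one part and $y$ in the other; by the Feferman--Vaught theorem (Proposition~\ref{thm:fv}), the $q$-type of $(\bar x,y)$ there --- for $q$ the quantifier rank of $\tau$ --- is determined by the $q$-type of $\bar x$ in $G[N^{G}_t(\bar x)]$ and that of $y$ in $G[N^{G}_t(y)]$. As there are only finitely many such types, $\tau(\bar x,y)$ restricted to the ``far'' case rewrites as a finite disjunction $\bigvee_l \alpha_l(\bar x)\wedge\beta_l(y)$ with $\alpha_l$ $t$-local around $\bar x$ and $\beta_l$ $t$-local around $y$, so the ``far'' disjunct becomes $\bigvee_l \alpha_l(\bar x)\wedge\exists y\bigl(\bigwedge_i \mathrm{dist}(x_i,y)>s\;\wedge\;\beta_l(y)\bigr)$.

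The crux --- and what I expect to be the main obstacle --- is the remaining piece ``there exists a $y$ far from all of $x_1,\dots,x_k$ satisfying the one-variable, $s$-local formula $\beta_l$.'' Here I would argue by a scattering/pigeonhole analysis of how the $k$ fixed parameters can block $\beta_l$-witnesses: if $G$ contains $k+1$ pairwise $2s$-distant $\beta_l$-witnesses --- which is exactly a basic local sentence with $\omega=\beta_l$ (viewed as $s$-local) and separation $2s$ --- then, since each $x_i$ is within distance $s$ of at most one of them, at least one witness is far from all $x_i$, so the piece holds; conversely, when no such scattered family of size $k+1$ exists, all $\beta_l$-witnesses are confined to a bounded number of radius-$2s$ clusters (around a maximal scattered family of size $\le k$), which lets one express the piece as a Boolean combination of basic local sentences (counting scattered $\beta_l$-witnesses up to fixed thresholds) together with a local-around-$\bar x$ correction that accounts for the finitely many clusters that may lie close to $\bar x$. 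Collecting the pieces and tracking the radii through the recursion yields a bound on $r(\varphi)$ that is exponential in the quantifier rank of $\varphi$ (the precise accounting is not needed here). For the ``furthermore'', note that every manipulation used --- disjunctive normal form, enumeration of the finitely many $q$-types appearing in the Feferman--Vaught step, and the scattering case analysis --- is effective, so the construction directly gives an algorithm producing an equivalent Boolean combination of basic local sentences; an alternative route is to observe that this transformation is the unique candidate and that equivalence up to bounded quantifier rank is decidable via the \EhrenfeuchtFraisse game, but the direct constructive argument is cleaner.
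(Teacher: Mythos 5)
The paper does not prove this proposition; it is quoted as a known theorem and attributed to Gaifman and to Grohe's survey, so there is no ``paper's own proof'' to compare against. Your sketch is the standard textbook proof of Gaifman's normal-form theorem: strengthen the inductive invariant to cover free variables, handle the existential step by splitting into a ``near'' disjunct (which is local around the parameters) and a ``far'' disjunct, use Feferman--Vaught on the disjoint union of the parameter neighborhood and the witness neighborhood to factor the local formula as $\bigvee_l \alpha_l(\bar x)\wedge\beta_l(y)$, and then eliminate $\exists y$ with the scattering/pigeonhole argument. The radius bookkeeping ($s=2t+1$ so that the ``far'' case really yields a disjoint union with no crossing edges, and the ``near'' case is $(s+t)$-local) is correct.

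The one place that remains genuinely sketchy is exactly the place you flag as the crux: expressing ``there is a $\beta_l$-witness at distance $>s$ from every $x_i$'' when no $(k{+}1)$-scattered family of $\beta_l$-witnesses exists. You gesture at ``counting scattered witnesses up to thresholds plus a local-around-$\bar x$ correction,'' and this is indeed the right shape of the argument, but as written it does not yet explain why the witnesses being confined to at most $k$ balls of radius $2s$ (whose centers are not definable) can be converted into a Boolean combination of basic local sentences and a formula local around $\bar x$. The standard way to close this is to case-split on the exact maximum size $m\le k$ of a $2s$-scattered family of witnesses (each ``$m\ge j$'' is a basic local sentence), and then argue that under the event $m=j$, either some witness already lies far from $\bar x$ (so the piece holds) or every witness lies within a bounded radius of $\bar x$, in which case the piece is local around $\bar x$. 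You should either spell this dichotomy out or cite it; as it stands the proposal identifies the right obstacle and the right tool but does not quite discharge it. Everything else, including the effectiveness claim, is fine: all steps (DNF, type enumeration via Feferman--Vaught, the scattering case analysis) are finitary and constructive.
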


The following lemma uses Gaifman locality
(Proposition~\ref{prop:gaifman}) to reduce model-checking in graphs
to model-checking in neighborhoods of graphs.
The proof is similar to~\cite[Lemma~4.9]{grohe2008logic}.

\begin{lemma}\label{lem:globalIfLocal}
    Let $g$ be a function such that for 
    every graph~$G$ with at most $r$ labels,
    every $r$-neighborhood $H$ of $G$, 
    every $v \in V(H)$, and
    every first-order formula $\varphi(x)$ with $|\varphi| \le r$
    one can decide whether $H \models \varphi(v)$ in time $g(v,G,r)$. 

    There exists a
    function $\rho$ such that
    for every first-order sentence $\varphi$ 
    and every labeled graph $G$ with at most $|\varphi|$ labels
    one can decide whether $G \models \varphi$ in
    time at most $O(\|G\|) + \rho(|\varphi|) \sum_{v \in V(G)}
    g(v,G,\rho(|\varphi|))$.
\end{lemma}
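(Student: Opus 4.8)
The plan is to combine Gaifman's locality theorem (\Cref{prop:gaifman}) with the hypothesised local model-checking routine $g$: after rewriting $\varphi$ into Gaifman normal form, deciding $G\models\varphi$ reduces to evaluating finitely many basic local sentences, and each basic local sentence is handled by computing an $r$-local property at every vertex — which is exactly what $g$ pays for — followed by a scattered-set computation that has to be organised carefully to keep the dependence on $\|G\|$ linear. Concretely, I would first compute in time depending only on $|\varphi|$ a Boolean combination $\beta(\psi_1,\dots,\psi_m)$ equivalent to $\varphi$, where each $\psi_i=\exists x_1\dots\exists x_{s_i}\bigl(\bigwedge_{j\ne j'}\mathrm{dist}_{>2r_i}(x_j,x_{j'})\wedge\bigwedge_j\omega_i(x_j)\bigr)$ is a basic local sentence with $\omega_i$ being $r_i$-local, and $m$, the $r_i$, the $s_i$ and the $|\omega_i|$ are all bounded by a function of $|\varphi|$. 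I fix $\rho(|\varphi|)$ to be a sufficiently large function of $|\varphi|$ dominating all of these, as well as $|\varphi|+1$, the products $12r_is_i$, and the sizes of the auxiliary sentences introduced below, and I will always invoke $g$ at the argument $\rho(|\varphi|)$. It then suffices to decide each $G\models\psi_i$ and take the Boolean combination, using that $m\le\rho(|\varphi|)$ and that $\sum_v g(v,G,\rho(|\varphi|))\ge\|G\|$ — the latter because a formula of bounded size can be forced to inspect all of an $r$-neighborhood of a vertex, so without loss of generality $g(v,G,\rho(|\varphi|))\ge\|G[N^G_r(v)]\|$ for $r\le\rho(|\varphi|)$.

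Fix one such $\psi=\psi_i$, with radius $r$, width $s$ and $r$-local formula $\omega$. Since $\omega$ is $r$-local and $r\le\rho(|\varphi|)$, we have $G\models\omega(v)$ iff $G[N^G_r(v)]\models\omega(v)$; as $G[N^G_r(v)]$ is an $r$-neighborhood (hence a $\rho(|\varphi|)$-neighborhood) and $|\omega|\le\rho(|\varphi|)$, this is decidable in time $g(v,G,\rho(|\varphi|))$ after a bounded-depth breadth-first search from $v$, and ranging over all $v$ computes $W=\{v:G\models\omega(v)\}$ in time $O(\|G\|)+\sum_v g(v,G,\rho(|\varphi|))$. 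Now $G\models\psi$ iff $W$ contains $s$ vertices pairwise at distance greater than $2r$ in $G$, which I test by greedily building $S\subseteq W$ — repeatedly adding a vertex of $W$ at distance $>2r$ from everything already chosen, stopping once $|S|=s$ or no vertex remains. This uses at most $s\le\rho(|\varphi|)$ breadth-first searches, so $O(\rho(|\varphi|)\|G\|)$ time; if $|S|=s$ we answer ``yes'', and otherwise $S$ is maximal, so $W\subseteq\bigcup_{u\in S}N^G_{2r}(u)$.

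In the remaining case I pass to $G^\star=G[\bigcup_{u\in S}N^G_{5r}(u)]$ — which contains $W$ — and equip it with one extra unary predicate $\tilde W$ marking the vertices of $W$ (harmless, since $g$ covers graphs with up to $\rho(|\varphi|)$ labels). Because any path of length $\le 2r$ between two vertices of $W$ survives in $G^\star$, distances up to $2r$ between $W$-vertices agree in $G$ and in $G^\star$, so $G\models\psi$ iff $G^\star$ has $s$ vertices of $\tilde W$ pairwise at distance $>2r$. The point is that $G^\star$ is the union of fewer than $s$ balls of radius $5r$, so every connected component $C$ of $G^\star$ has radius at most $12rs\le\rho(|\varphi|)$, i.e.\ is a $\rho(|\varphi|)$-neighborhood; and since distinct components are at infinite distance, it suffices to compute for each component $C$ the number $\nu(C):=\max\{\,j\le s: C\text{ has }j\text{ vertices of }\tilde W\text{ pairwise at distance}>2r\,\}$ — each of the $s$ threshold tests being a sentence of size $\le\rho(|\varphi|)$ on the $\rho(|\varphi|)$-neighborhood $C$, decidable by one call of $g$ — and to check whether the $\nu(C)$ sum to at least $s$. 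Forming $G^\star$ costs $O\bigl(\sum_{u\in S}\|N^G_{5r}(u)\|\bigr)=O\bigl(s\sum_v g(v,G,\rho(|\varphi|))\bigr)$ and the per-component calls cost $O(s)\sum_v g(v,G,\rho(|\varphi|))$ altogether, so evaluating $\psi$ stays within $O(\|G\|)+O\bigl(\rho(|\varphi|)\sum_v g(v,G,\rho(|\varphi|))\bigr)$; running this over $i=1,\dots,m$ and evaluating $\beta$ then gives the claimed bound after enlarging $\rho$ to absorb the factor $m$ and the lower-order overhead.

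The step I expect to be the main obstacle is exactly this scattered-set computation: evaluating a basic local sentence is a ``distance-$2r$ independent set of size $s$'' question on $W$, which for general $G$ admits no algorithm running in time linear in $\|G\|$, so the argument must exploit that whenever the greedy phase fails the residual instance is confined to fewer than $s$ balls of radius $O(r)$ around vertices of $S$ — a region whose total size is already paid for by $\sum_v g(v,G,\rho(|\varphi|))$ — and then reorganise the remaining work into per-component model-checking calls of $g$. Getting this confinement right, and in particular keeping the dependence on $\|G\|$ linear rather than merely polynomial, is precisely the refinement over the textbook version of the reduction that the statement alludes to.
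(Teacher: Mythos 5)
Your argument follows the same high-level blueprint as the paper's: rewrite $\varphi$ via Gaifman into basic local sentences, use $g$ to compute $W=\{v:G\models\omega(v)\}$, and then reduce the scattered-set test to per-component calls of $g$ on a region of radius $O(rs)$. The route to that region is genuinely different, though. The paper forms $H=G[\{v:\text{dist}_G(v,W)\le r\}]$ and does one BFS per component: if some component has radius $>12rs$ it certifies a scattered set of size $s$ directly via a long shortest path, and otherwise the components already have the required radius. You instead run a greedy $2r$-scattered-set algorithm on $W$ and, on failure, pass to the union of $5r$-balls around the resulting maximal set $S$ with $|S|<s$, from which the $O(rs)$ radius bound follows. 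Both mechanisms are sound; yours makes the ``yes'' case more immediate at the price of $s$ BFS passes per basic local sentence rather than one, which you absorb (as the paper implicitly does too) via the normalization $g(v,G,\rho(|\varphi|))\ge \|G[N^G_{\rho(|\varphi|)}(v)]\|$.

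One point does need tightening: the extra unary predicate $\tilde W$. You hand $g$ components of $G^\star$ \emph{augmented with a new label}, but the hypothesis only supplies the bound $g(v,G,\rho(|\varphi|))$ for $\rho(|\varphi|)$-neighborhoods of the original labeled graph $G$, and nothing in the statement says $g$ is insensitive to adding a label. So the claimed run-time $\rho(|\varphi|)\sum_v g(v,G,\rho(|\varphi|))$ does not formally follow from your calls; your parenthetical ``harmless'' addresses the label budget, not the fact that the second argument of $g$ has changed. (In the paper's actual application $g$ depends on $(v,G,r)$ only through neighborhood size and the partitionability parameter of the unlabeled graph, so the extra label costs nothing — but that is a property of the application, not of the abstract statement being proved.) The paper sidesteps this by keeping $\omega$ inside the per-component sentence $\psi_l$ and evaluating on the unaugmented induced subgraph $H'$, which stays within the hypothesis; note, however, that this tacitly requires $\{v\in H':H'\models\omega(v)\}=W\cap V(H')$, which is clear for $v\in W$ (whose $r$-neighborhoods are preserved) and less obvious near the boundary of $H'$ — so your labelling trick is arguably the cleaner way to organize the correctness argument. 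To close the gap you can either replace $\tilde W(x)$ by $\omega(x)$ in your per-component sentences and verify, using the slack built into the $5r$-balls, that the two agree on all vertices that can participate in a maximal scattered set, or record explicitly that $g$ is assumed invariant under adding a label (which is how the lemma is used).
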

\begin{proof}
    We can reduce the first-order sentence $\varphi$ to a boolean
    combination of basic local sentences $\Psi$ with
    Proposition~\ref{prop:gaifman}. We will independently evaluate each
    basic local sentence $\psi\in\Psi$ in the graph and use the result
    to determine whether $\varphi$ is satisfied. Let
    $$
    \psi = \exists x_1 \dots \exists x_s \big(\bigwedge_{i\neq j}
    \text{dist}_{> 2r}(x_i,x_j) \wedge \bigwedge_i \omega(x_i) \big)
    $$
    be a basic local sentence, where $\omega$ is $r$-local. 
    Let $G$ be a graph with at most $|\varphi|$ labels and $v \in V(G)$. We have
    $G\models\omega(v)$ if and only if
    $G[N_{r}(v)]\models\omega(v)$ for $v \in V$. We compute for all
    $v \in V$ whether $G\models\omega(v)$. By our assumption, this
    can be done in time $\sum_{v \in V(G)} g(v,G,r+|\omega|+|\varphi|)$. Let
    now $W$ be the set of all $v \in V$ such that $G \models
    \omega(v)$. A set of vertices is called an $r$-scattered set if the
    $r$-neighborhoods of all pairs of vertices in this set are
    disjoint. Notice that $G \models \psi$ if and only if there exists
    an $r$-scattered set of cardinality $s$ which is a subset of $W$.
    Therefore, all left to do is to find out whether there is an
    $r$-scattered set $S\subseteq W$ with $|S|\geq s$.

    In time $O(\|G\|)$ we do the following: Construct a graph $H$ that
    consists of all nodes that have distance at most $r$ from~$W$, and construct
    the connected components of $H$.
    For each component $H'$ of $H$ pick a vertex $v \in V(H')$ and perform a
    breadth-first-search in $H'$, starting at $v$.
    This way, we either find out that the radius of $H'$ larger than $12rs$
    or that the diameter of $H'$ is at most $12rs$.

    Let us consider two cases.  First, we verified that there is a
    component of $H$ whose diameter is at least~$12rs$.  Then this
    component must contain a shortest path $p$ of length~$12rs$.
    We constructed $H$ such that the
    $r$-neighborhoods of every vertex $u$ on $p$ contains a vertex from $W$.
    Since there are at least $s$ nodes on $p$ whose $r$-neighborhoods
    are disjoint and each of the neighborhoods contains a
    vertex from~$W$, we know that $W$ contains an $r$-scattered set of 
    size at least~$s$.

    The second case we have to consider is that we verified that all components
    of $H$ have a radius at most $12rs$.  
    Note that $W\subseteq V(H)$. For $u,v\in W$ from
    different components of $H$, the distance between $u$ and $v$ in $G$
    is at least~$2r$. Hence, maximal cardinality $r$-scattered
    subsets of $W$ of the components of $H$ form
    together a maximal cardinality
    $r$-scattered subset of $W$ in~$G$.
    A component $H'$ of $H$ contains an 
    $r$-scattered subset of $W$ of size $l$ iff $H'\models \psi_l$ with
    $$
    \psi_l = \exists x_1 \dots \exists x_l \big(\bigwedge_{i\neq j}
    \text{dist}_{> 2r}(x_i,x_j) \wedge \bigwedge_i \omega(x_i) \big),
    $$
    which we can evaluate in time
    $g(v,G,12rs+|\psi_l|+|\varphi|)$ for some $v \in V(H')$.
    We need to check whether $H' \models \psi_l$ for every component 
    $H'$ of $H$ and $l \in \{1,\dots,s\}$.
    In that way we can compute the maximal size of an $r$-scattered
    subset of $W$ in~$H$ and therefore in~$G$.  

    The complete procedure has to be repeated for each $\psi \in \Psi$. 
    Note that $r$, $s$, $|\psi_l|$, and $|\psi|$ depend only on $\varphi$.
    This means we can choose $\rho$ such that all this can be done in time
    $O(\|G\|) + \rho(|\varphi|)\sum_{v \in V(G)} g(v,G,\rho(|\varphi|))$.
\end{proof}

\subsection{Model-Checking in \Nhoodpartitionable \\ Graphs}
\label{sec:mc2}

We use the kernelization result of \Cref{thm:kernel}
to construct a model-checking algorithm for neighborhoods.

\begin{lemma}\label{lem:asdfasdf}
    There is a function $f(r,\y)$ such that for 
    every $r,\y \in \N^+$,
    every graph~$G$ with at most $r$ labels,
    every $r$-neighborhood $H$ of $G$, 
    every $v \in V(H)$, and
    every first-order formula $\varphi(x)$ with $|\varphi| \le r$
    one can decide whether $H \models \varphi(v)$ in time 
    $f(r,\y)\t^{O(\y r) }\|G[N^G_{2r}(v)]\|$,
    where $\t \in \N^+$ be the minimal value such that $G$ is $\t$-$r$-$\y$-partitionable.
\end{lemma}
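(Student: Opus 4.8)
The plan is to reduce the question $H \models \varphi(v)$ to brute-force model-checking on a kernel of size roughly $\t^{\y}$, using Theorem~\ref{thm:partitionIsNhoodpartition} and Theorem~\ref{thm:kernel} as black boxes. First I would record two elementary facts about the $r$-neighborhood $H$: since $H$ has radius at most $r$ it is connected and equals $G[V(H)]$; and, since $v \in V(H)$, routing through a center of $H$ gives $\text{dist}_G(v,u) \le \text{dist}_H(v,u) \le 2r$ for every $u \in V(H)$, so $V(H) \subseteq N^G_{2r}(v)$ and hence $\|H\| \le \|G[N^G_{2r}(v)]\|$. This lets me argue in terms of $\|H\|$ and pass to $\|G[N^G_{2r}(v)]\|$ only at the very end. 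To deal with the free variable, introduce a fresh unary label $P$, let $H_v$ be $H$ with $P$ interpreted as $\{v\}$, and put $\psi := \exists x\,(P(x) \wedge \varphi(x))$. Then $H_v$ has at most $r+1$ labels, $H \models \varphi(v)$ iff $H_v \models \psi$, and $\psi$ is a sentence with $|\psi| \le |\varphi| + O(1) \le r + O(1)$ and quantifier rank at most $|\varphi| + 1 \le r+1$. Fix $q := r+1$, which simultaneously bounds the number of labels of $H_v$ and the quantifier rank of $\psi$.

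Next I would invoke the structural machinery. Because $G$ is \partitionable, Theorem~\ref{thm:partitionIsNhoodpartition} makes the $r$-neighborhood $H$ \OOnhoodpartitionable; that is, for absolute constants $c_1,c_2$ the graph $H$ admits a decomposition as in Definition~\ref{def:nhoodpartition} with first parameter at most $\t' := c_1\y^{17}r^3\t$ and third parameter $\y' := c_2\y$. Since such a decomposition depends only on the underlying graph (it ignores labels), $H_v$ admits the same one. Now run the algorithm of Theorem~\ref{thm:kernel} on $H_v$ with parameters $q, r, \y'$: its running time is $\|H_v\|$ times a function of $q,r,\y'$ only — in particular \emph{independent of $\t$}, so we may run it without knowing $\t$ — and it returns a labeled graph $H^* \equiv_q H_v$ with
$$ |H^*| \;\le\; g(r,\y)\,\t^{O(\y)} $$
for a suitable function $g$ (absorbing the $\t^\y$-type bound of Theorem~\ref{thm:kernel} applied with parameter $\y'$ to the decomposition with first parameter $\t'$, and using $q = r+1$). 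Finally, decide $H^* \models \psi$ by the naive $O(|H^*|^{|\psi|})$ algorithm; since $|\psi| \le r + O(1)$ this costs
$$ O\bigl((g(r,\y)\,\t^{O(\y)})^{\,r + O(1)}\bigr) \;=\; h(r,\y)\,\t^{O(\y r)} $$
for a suitable $h$, which is exactly where the exponent $O(\y r)$ appears. Correctness is immediate: $q$ bounds the quantifier rank of $\psi$ and $H^* \equiv_q H_v$, so $H^* \models \psi$ iff $H_v \models \psi$ iff $H \models \varphi(v)$.

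Adding up, the total time is $O(\|H\|)$ (to build $H_v$ and $\psi$), plus $\|H\|$ times a function of $r,\y$ (the kernelization), plus $h(r,\y)\,\t^{O(\y r)}$ (the brute-force step). Using $\|H\| \le \|G[N^G_{2r}(v)]\|$ together with $\|G[N^G_{2r}(v)]\| \ge 1$ and $\t \ge 1$, all three terms are at most $f(r,\y)\,\t^{O(\y r)}\,\|G[N^G_{2r}(v)]\|$ for an appropriate function $f(r,\y)$, as claimed. The delicate points are purely bookkeeping — checking that the added label does not affect the decomposition of Definition~\ref{def:nhoodpartition} (which ignores labels), that it is the kernelization's \emph{running time} and not just its output size that is $\t$-free, and that the brute-force exponent $|\psi|$ stays below $r + O(1)$ so that $(\t^{O(\y)})^{|\psi|} = \t^{O(\y r)}$ — since the substance is already contained in Theorems~\ref{thm:partitionIsNhoodpartition} and~\ref{thm:kernel}, I expect no genuine obstacle.
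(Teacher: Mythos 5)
Your proposal is correct and follows the paper's own proof essentially step by step: mark $v$ with a fresh label and move to a sentence, invoke Theorem~\ref{thm:partitionIsNhoodpartition} to make the neighborhood \OOnhoodpartitionable, kernelize via Theorem~\ref{thm:kernel}, brute-force check the kernel, and absorb everything using $\|H\|\le\|G[N^G_{2r}(v)]\|$. The extra bookkeeping you flag (labels not affecting Definition~\ref{def:nhoodpartition}, the $\t$-free running time of the kernelizer, $|\psi|=O(r)$) is exactly the implicit content of the paper's shorter write-up.
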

\begin{proof}
    We construct a graph $H'$ by adding another label to $H$ that identifies $v$ 
    and construct a sentence $\varphi'$ with $|\varphi'| = O(|\varphi|)$ 
    such that $H' \models \varphi'$ if and only if $H \models \varphi(v)$.
    Let $\t \in \N^+$ be the minimal value such that $G$ is $\t$-$r$-$\y$-partitionable.
    According to Theorem~\ref{thm:partitionIsNhoodpartition},
    $H'$ is \OOnhoodpartitionable.
    We use Theorem~\ref{thm:kernel} to construct in time
    $f'(r,\y)\|H'\|$ a graph $H^*$ with $H^*
    \equiv_{|\varphi'|} H'$ and $|H^*| \le f'(r,\y)\t^{O(\y)}$, for
    some function $f'$. On this smaller structure we can perform the
    naive model-checking algorithm in time
    $O(|H^*|^{r}) = O\bigl(f'(r,\y)^{|\varphi|}\t^{O(\y r)}\bigr)$.
    Furthermore, the radius of $H'$ is at most $r$, thus $\|H'\| \le \|G[N^G_{2r}(v)]\|$.
    We choose $f(r,\y)$ accordingly.
\end{proof}

\begin{lemma}\label{lem:MCpartion}
    Let $\y \in \N^+$.
    There exist functions $\rho$ and $f$ such that
    for every first-order sentence $\varphi$ 
    and every labeled graph $G$ with at most $|\varphi|$ labels
    one can decide whether $G \models \varphi$ in
    time $f(\rho(|\varphi|),\y) \t^{\y\rho(|\varphi|)}\sum_{v \in
    V(G)}\|G[N^G_{\rho(|\varphi|)}(v)]\|$,
    where $\t \in \N^+$ is the minimal value such that $G$ is $\t$-$\rho(r)$-$\y$-partitionable.
\end{lemma}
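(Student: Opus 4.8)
The plan is to assemble the two preceding lemmas: Lemma~\ref{lem:asdfasdf} solves the model-checking problem on a single neighborhood, while Lemma~\ref{lem:globalIfLocal} reduces the model-checking problem on $G$ to the model-checking problems on its neighborhoods. So I would define the function $g$ required by Lemma~\ref{lem:globalIfLocal} directly in terms of Lemma~\ref{lem:asdfasdf}.

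Concretely, fix $\y$, let $f'$ be the function from Lemma~\ref{lem:asdfasdf} (taken non-decreasing in each argument without loss of generality), and let $c\ge 1$ be a constant so that the exponent $O(\y r)$ there is at most $c\y r$. For a graph $G$, a radius $r$ and $v\in V(G)$, set
$$
g(v,G,r)\;=\;f'(r,\y)\,\t_r(G)^{\,c\y r}\,\bigl\|G[N^G_{2r}(v)]\bigr\|,
$$
where $\t_r(G)\in\N^+$ is the minimal value such that $G$ is $\t_r(G)$-$r$-$\y$-partitionable. Then, by Lemma~\ref{lem:asdfasdf}, for every $r$, every graph $G$ with at most $r$ labels, every $r$-neighborhood $H$ of $G$, every $v\in V(H)$ and every formula $\varphi(x)$ with $|\varphi|\le r$, one can decide $H\models\varphi(v)$ within time $g(v,G,r)$; that is, $g$ satisfies the hypothesis of Lemma~\ref{lem:globalIfLocal}. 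Applying Lemma~\ref{lem:globalIfLocal} yields a function $\rho'$ such that, for every sentence $\varphi$ and labeled $G$ with at most $|\varphi|$ labels, one can decide $G\models\varphi$ in time $O(\|G\|)+\rho'(|\varphi|)\sum_{v\in V(G)} g(v,G,\rho'(|\varphi|))$.

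It then remains to put this into the claimed form by taking $\rho$ large enough relative to $\rho'$, say $\rho=c'\rho'$ for a constant $c'$ depending only on $c$, chosen so that $\rho\ge 2\rho'$ and $\rho\ge c\rho'$. The only point requiring an argument is that $\t_r(G)$ is non-decreasing in $r$: for $r\le r'$ the requirements of Definition~\ref{def:partition} for radius $r$ follow from those for radius $r'$ (every $O(\y r)$-neighborhood is in particular an $O(\y r')$-neighborhood), so an \partition for radius $r'$ is also one for radius $r$, hence $\t_r(G)\le\t_{r'}(G)$. Writing $\t=\t_{\rho(|\varphi|)}(G)$ for the quantity named in the statement, this gives $\t_{\rho'(|\varphi|)}(G)\le\t$; together with $2\rho'(|\varphi|)\le\rho(|\varphi|)$ (so $\|G[N^G_{2\rho'(|\varphi|)}(v)]\|\le\|G[N^G_{\rho(|\varphi|)}(v)]\|$), $c\y\rho'(|\varphi|)\le\y\rho(|\varphi|)$, $\t\ge1$, and monotonicity of $f'$, every summand $g(v,G,\rho'(|\varphi|))$ is at most $f'(\rho(|\varphi|),\y)\,\t^{\y\rho(|\varphi|)}\,\|G[N^G_{\rho(|\varphi|)}(v)]\|$. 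Finally $O(\|G\|)=O\bigl(\sum_{v}\|G[N^G_{\rho(|\varphi|)}(v)]\|\bigr)$, since $\sum_v\|G[N^G_1(v)]\|\ge|V(G)|+2|E(G)|\ge\|G\|$, so collecting the factor $\rho'(|\varphi|)$, the implied constant, and $f'(\rho(|\varphi|),\y)$ into a single function $f(\rho(|\varphi|),\y)$ gives the bound $f(\rho(|\varphi|),\y)\,\t^{\y\rho(|\varphi|)}\sum_{v\in V(G)}\|G[N^G_{\rho(|\varphi|)}(v)]\|$. I expect no real obstacle: apart from the brief monotonicity observation, the whole lemma is just plugging Lemma~\ref{lem:asdfasdf} into Lemma~\ref{lem:globalIfLocal} and enlarging $\rho$ to absorb a factor of two in the radius and a constant in the exponent.
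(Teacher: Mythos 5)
Your proposal is correct and follows exactly the paper's route: instantiate the function $g$ of Lemma~\ref{lem:globalIfLocal} using the run-time bound from Lemma~\ref{lem:asdfasdf}, then enlarge $\rho$ and $f$ to absorb the factor of two in the radius, the constant hidden in the $O(\y r)$ exponent, and the leading $O(\|G\|)$ term. The paper's own proof is terser (it simply writes "we choose $f$ and $\rho$ sufficiently large"), so the monotonicity observation that $\t_{r}(G)\le\t_{r'}(G)$ for $r\le r'$, which you supply explicitly, is a genuine and necessary detail you have correctly filled in.
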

\begin{proof}
    By Lemma~\ref{lem:globalIfLocal} and \ref{lem:asdfasdf}, 
    there exist functions $\rho'$ and $f'$ such that one can
    decide whether $G \models \varphi$ in time
    $$
    O(\|G\|) + \rho'(|\varphi|) \sum_{v \in V(G)} 
    f'(\rho'(|\varphi|),\y)\t^{O(\y \rho'(|\varphi|)) }\|G[N^G_{2\rho'(|\varphi|)}(v)]\|,
    $$
    where $\t \in \N^+$ is the minimal value such that $G$ is $\t$-$\rho'(r)$-$\y$-partitionable.
    We choose $f$ and $\rho$ sufficiently large.
\end{proof}

\subsection{Model-Checking in \Dominated Random Graph \\ Models}
\label{sec:mc3}

In this section we show that the algorithm from Lemma~\ref{lem:MCpartion}
has efficient expected run time on \dominated random graph models.
Our analysis is based upon two results we established earlier:
First, the run time of the algorithm in Lemma~\ref{lem:MCpartion}
depends on
the minimal value $\t$ such that the input graph is \partitionable.
If it is \partitionable for a small $\t$ the run time is fast.
Secondly, Theorem~\ref{thm:partitionProbability} bounds
for our random graphs the probability
that $\t \in \N$ is the minimal value such that a graph is an \partitionable.
For bigger $\t$ it is more and more unlikely that $\t$ is minimal.
In order to have an efficient expected run time on our random graphs,
the run time of the algorithm needs to grow asymptotically slower in $\t$
than the inverse of the probability that $\t$ is minimal.
In Theorem~\ref{thm:MCpowerlaw1} we show that this is the case.

The run time of the algorithm from Lemma~\ref{lem:MCpartion}
depends not only on $\t$ but also on the sum of the sizes of all neighborhoods
in a graph, which might be quadratic in the worst case.
In order to get almost linear expected run time,
we bound the expectation of this value in Lemma~\ref{lem:condNhood}.
We can now prove our main result.

\begin{theorem}\label{thm:MCpowerlaw1}
    There exists a function $f$ such that one can solve \FOMCGL 
    on every $\scale$-\dominated random graph model 
    in expected time $\momn^{f(|\varphi|)} n$.
\end{theorem}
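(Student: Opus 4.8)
The plan is to feed the model-checking algorithm for \partitionable graphs from Lemma~\ref{lem:MCpartion} into an expectation computation, controlling the sum of neighborhood sizes with Lemma~\ref{lem:condNhood}, and then choosing the parameter $\y$ as a function of $|\varphi|$ large enough that the running-time growth in the minimal partitionability parameter $\t$ is outpaced by the decay of its probability (Theorem~\ref{thm:partitionProbability}). First I would fix a first-order sentence $\varphi$. The radius $\rho(|\varphi|)$ produced by Gaifman normalization in Lemma~\ref{lem:MCpartion} depends only on $|\varphi|$; set $r := \rho(|\varphi|)$ and $\y := \max\{5,\,10r+2\}$, both functions of $|\varphi|$ alone, with $\y \ge 5$ so that Theorem~\ref{thm:partitionProbability} and Lemma~\ref{lem:condNhood} apply. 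Lemma~\ref{lem:MCpartion} then supplies a single deterministic algorithm $\cal A$ which, on any labeled graph $G$ with at most $|\varphi|$ labels, decides $G \models \varphi$ in time $t_{\cal A}(G,\varphi) \le f(r,\y)\,\t(G)^{\y r}\sum_{v\in V(G)}\|G[N^G_r(v)]\|$, where $\t(G)$ is the minimal value for which the underlying graph of $G$ is \partitionable; importantly $\cal A$ never needs to compute $\t(G)$, so $\cal A$ is fixed. Since partitionability depends only on the underlying unlabeled graph and neighborhood sizes are unchanged by labels, for \emph{every} $|\varphi|$-labeling function $L$ one gets $t_{\cal A}(L(\Gn),\varphi)\le f(r,\y)\,\t(\Gn)^{\y r}\sum_v\|\Gn[N^{\Gn}_r(v)]\|$.

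Next I would take expectations. Every graph on $n$ vertices is $n$-$r$-$\y$-partitionable (choose $A=V$), so if $A_\t$ denotes the event that $\t$ is the minimal value for which $\Gn$ is \partitionable, the events $A_1,\dots,A_n$ partition the probability space and $\t(\Gn)=\t$ on $A_\t$. By the law of total expectation and Lemma~\ref{lem:condNhood},
\begin{align*}
\E_{G\sim\Gn}\bigl[t_{\cal A}(L(G),\varphi)\bigr]
&\le f(r,\y)\sum_{\t=1}^{n}\t^{\y r}\,
\E\Bigl[\,\sum_{v}\|\Gn[N^{\Gn}_r(v)]\| \,\Bigm|\, A_\t\Bigr]\,\P[A_\t]\\
&\le f(r,\y)\,(r\y)^{O(r)}\,\momn^{O(\y^6 r^2)}\,n\,\sum_{\t=1}^{n}\t^{\,\y r-\y^2/10}.
\end{align*}
Because $r\ge 1$ and $\y\ge 10r+2$ we have $\y r-\y^2/10=\y\,(r-\y/10)\le -2$, so $\sum_{\t\ge1}\t^{\y r-\y^2/10}\le\sum_{\t\ge1}\t^{-2}=O(1)$. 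Every factor other than $\momn$ and $n$ depends only on $|\varphi|$ (since $r$ and $\y$ do), and the exponent $O(\y^6r^2)$ is bounded by some function $g(|\varphi|)$; hence the expected running time is at most $f_1(|\varphi|)\,\momn^{g(|\varphi|)}\,n$. Since $\momn\ge 1$, absorbing the $|\varphi|$-dependent constant into the exponent gives expected time $\momn^{f(|\varphi|)}\,n$, as required by \Cref{def:expectedFPT}.

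The step I expect to be the main obstacle is the calibration of $\y$: it occurs in the algorithm's running time as the exponent $\y r$ of $\t$ (favoring small $\y$) and in the tail bound of Theorem~\ref{thm:partitionProbability} / Lemma~\ref{lem:condNhood} as the decay exponent $\y^2/10$ (favoring large $\y$). One must pick $\y$ large enough relative to $r=\rho(|\varphi|)$ that $\y r-\y^2/10<-1$, guaranteeing convergence of the series over $\t$, while keeping $\y$ a function of $|\varphi|$ only --- this is precisely the ``run time slower than the inverse probability'' trade-off flagged in \Cref{sec:modelchecking}. The accompanying bookkeeping --- that $\cal A$ is oblivious to $\t(\Gn)$ yet its running time is governed by it, and that adversarial labelings affect neither partitionability nor neighborhood sizes --- is routine but needs to be stated carefully.
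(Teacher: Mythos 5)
Your proof is correct and follows essentially the same route as the paper: choose $r=\rho(|\varphi|)$ and $\y$ as a function of $|\varphi|$ with $\y\ge 5$ so that Theorem~\ref{thm:partitionProbability} and Lemma~\ref{lem:condNhood} apply, partition the probability space by the minimal $\t$ for which $\Gn$ is \partitionable, feed in the run-time bound of Lemma~\ref{lem:MCpartion} and the neighborhood-size bound of Lemma~\ref{lem:condNhood}, and pick $\y$ large enough that $\t^{\y r-\y^2/10}$ sums to $O(1)$. The only cosmetic difference is the concrete choice of $\y$ (you use $\max\{5,10r+2\}$, the paper uses $r^2+100$); both satisfy the required inequality, and the rest of the bookkeeping is identical.
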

\begin{proof}
    Let $\Gnn$ be an $\scale$-\dominated random graph model and
    $\varphi$ be a first-order formula.
    We fix a $|\varphi|$-labeling function $L$ and $n \in \N$.
    We consider labeled graphs with vertices $V(\Gn)$
    whose underlying graph is distributed according to $\Gn$,
    and analyze the expected run time
    of the model-checking algorithm from Lemma~\ref{lem:MCpartion}
    on these graphs.

    Let $\rho$ be the function from
    Lemma~\ref{lem:MCpartion} and let $r=\rho(|\varphi|)$ and
    $\y = \rho(|\varphi|)^2 + 100$.
    For every graph $G$ there exists a
    value $\t \in \N^+$ such that $G$ is
    $\t$-$r$-$\y$-partitionable (i.e., by setting $\t =
    |V(G)|$, $A = V(G)$). Let $A_\t$ be the
    event that $\t \in \N^+$ is the minimal value such that $\Gn$ is
    $\t$-$r$-$\y$-partitionable and let $R$ be the
    expected run time of the model-checking algorithm from
    Lemma~\ref{lem:MCpartion}.
    The expected run time of the algorithm is exactly 
    $\sum_{\t=1}^\infty \E[R \mid A_\t] \P[A_\t]$.
    We use Lemma~\ref{lem:MCpartion} and~\ref{lem:condNhood} to bound

    \begin{multline*}
         \sum_{\t=1}^\infty \E[R \mid A_\t] \P[A_\t] 
        \le \sum_{\t=1}^\infty \E\bigl[f'(r,\y) \t^{r\y}\sum_{v \in V(\Gn)} \|\Gn[N^{\Gn}_{r}(v)]\| \mid A_\t \bigr] \P[A_\t] \\
        =    \sum_{\t=1}^\infty f'(r,\y) \t^{r\y} \E\bigl[\sum_{v \in V(\Gn)} \|\Gn[N^{\Gn}_{r}(v)]\| \mid A_\t \bigr] \P[A_\t] \\
        \le  \sum_{\t=1}^\infty f'(r,\y) \t^{r\y} (200r\y^3)^{O(r)} \momn^{O(\y^6r^2)}\t^{-\y^2/10} n \\
        =    f'(r,\y) (200r\y^3)^{O(r)} \momn^{O(\y^6r^2)} n \sum_{\t=1}^\infty \t^{-\y^2/10 + r\y}.
    \end{multline*}
    Note that for $\y = \rho(|\varphi|)^2+100$ and $r=\rho(|\varphi|)$
    we have
    $\sum_{\t=1}^\infty \t^{-\y^2/10+r\y}
    \le \sum_{\t=1}^\infty \t^{-2} = O(1)$.
    This yields a run time of $\momn^{f(|\varphi|)}n$
    for some function $f$.
\end{proof}

\begin{theorem}\label{thm:MCpowerlaw2}
        Let $\Gnn$ be a random graph model.
        There exists a function $f$ such that one can solve \FOMCGL in expected time
        \begin{itemize}
        \item $f(|\varphi|)n$                                                              \tabto{4.5cm}  if $\Gnn$ is $\scale$-\dominated for some $\scale > 3$,
        \item $\log(n)^{f(|\varphi|)}n$                                                    \tabto{4.5cm}  if $\Gnn$ is $\scale$-\dominated for $\scale = 3$,
        \item $f(|\varphi|,\varepsilon) n^{1 + \varepsilon}$ for all $\varepsilon > 0$     \tabto{4.5cm}  if $\Gnn$ is $\scale$-\dominated for every $2 < \scale < 3$.
        \end{itemize}
\end{theorem}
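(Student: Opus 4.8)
The plan is to read off all three bounds directly from Theorem~\ref{thm:MCpowerlaw1}: that theorem already produces a single algorithm deciding \FOMCGL in expected time $\momn^{f(|\varphi|)}n$ on \emph{every} $\scale$-\dominated random graph model, with one fixed function $f$ that is uniform over all $\scale$. It therefore remains only to substitute the piecewise definition of $\momn$ and, in the third regime, to handle an interchange of quantifiers.

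First I would treat the case $\scale>3$. Here $\momn=O(1)$, so $\momn^{f(|\varphi|)}$ is bounded by a constant that depends only on $|\varphi|$ and the (fixed) model; hence the expected run time is at most $f'(|\varphi|)n$. Next, for $\scale=3$ we have $\momn=\log(n)^{O(1)}$, so $\momn^{f(|\varphi|)}=\log(n)^{O(f(|\varphi|))}=\log(n)^{f'(|\varphi|)}$ for a suitable function $f'$, which gives expected run time $\log(n)^{f'(|\varphi|)}n$.

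For the third case I would fix $\varepsilon>0$ and a first-order sentence $\varphi$ first. Since the function $f$ supplied by Theorem~\ref{thm:MCpowerlaw1} does not depend on $\scale$, the number $f(|\varphi|)$ is a fixed finite constant once $\varphi$ is chosen, so I may pick $\scale_0$ with $2<\scale_0<3$ and close enough to $3$ that $(3-\scale_0)f(|\varphi|)\le\varepsilon$. Because $\Gnn$ is $\scale$-\dominated for every $2<\scale<3$, it is in particular $\scale_0$-\dominated, and applying Theorem~\ref{thm:MCpowerlaw1} with $\scale=\scale_0$ yields expected run time $\mom_{\scale_0}(n)^{f(|\varphi|)}n = O\!\bigl(n^{(3-\scale_0)f(|\varphi|)}\bigr)n \le O(n^{\varepsilon})n$; the multiplicative constant hidden in $O(n^{3-\scale_0})$ depends on $\scale_0$, but $\scale_0$ was chosen as a function of $\varepsilon$ and $|\varphi|$, so it is absorbed into a function $f'(|\varphi|,\varepsilon)$, giving the claimed $f'(|\varphi|,\varepsilon)n^{1+\varepsilon}$.

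I do not expect a real obstacle here; the whole argument is a case distinction feeding the definition of $\momn$ into Theorem~\ref{thm:MCpowerlaw1}. The only point that requires care is the order of quantifiers in the third regime: $\scale_0$ must be selected only after $\varepsilon$ and $\varphi$ are fixed, which is legitimate precisely because Theorem~\ref{thm:MCpowerlaw1} gives one algorithm and one function $f$ working simultaneously for all admissible $\scale$, rather than a separate algorithm per $\scale$.
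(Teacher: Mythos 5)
Your proposal is correct and follows essentially the same route as the paper: invoke Theorem~\ref{thm:MCpowerlaw1} once, substitute the piecewise definition of $\momn$, and for the third regime choose $\scale_0$ (equivalently, $\varepsilon'=3-\scale_0$) as a function of $\varepsilon$ and $f(|\varphi|)$, absorbing the $\scale_0$-dependent constant into $f'(|\varphi|,\varepsilon)$. The paper only additionally spells out the minor step of promoting the asymptotic $O(\cdot)$ bound (valid for $n\ge n_0(\varepsilon')$) to a bound valid for all $n$, which you implicitly absorb in the same way.
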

\begin{proof}
    Assume that $\Gnn$ is $\scale$-\dominated for some $\scale > 3$.
    By Theorem~\ref{thm:MCpowerlaw1}, there exists a function $f'$ such that
    \FOMCGL can be solved on $\Gnn$ in expected time $ O(1)^{f'(|\varphi|)} n$.
    By choosing $f(|\varphi|)=c^{f'(|\varphi|)}$ for a suitable $c$
    we get the desired expected run time.

    Assume that $\Gnn$ is $\scale$-\dominated with $\scale = 3$.
    By Theorem~\ref{thm:MCpowerlaw1}, there exists a function $f'$ such that
    \FOMCGL can be solved on $\Gnn$ in expected time 
    $\log(n)^{O(1)f'(|\varphi|)} n$.  By choosing
    $f(|\varphi|)=cf'(|\varphi|)$ for a suitable $c$ we get the desired expected
    run time.

    Assume that $\Gnn$ is $\scale$-\dominated for every $\scale < 3$.
    According to Theorem~\ref{thm:MCpowerlaw1}, there exists a function $f'$
    such that one can solve \FOMCGL in expected time
    $O(n^{\varepsilon' f'(|\varphi|)}n)$ for all $\varepsilon' > 0$.
    This means, there exists functions $c(\varepsilon')$ and $n_0(\varepsilon')$ such that for all
    $\varepsilon' > 0$ and $n \ge n_0(\varepsilon)$
    the expected time is at most
    $c(\varepsilon') n^{1+\varepsilon' f'(|\varphi|)}$.
    Thus, we can choose $c'(\varepsilon')$ such that for all
    $\varepsilon' > 0$ and $n \in \N$
    the expected time is at most
    $c'(\varepsilon') n^{1+\varepsilon' f'(|\varphi|)}$.
    Let $\varepsilon > 0$.
    With $\varepsilon'= \varepsilon/f'(|\varphi|)$,
    the algorithm runs for all $n \in \N$ in expected time
    $c'(\varepsilon/f'(|\varphi|)) n^{1+\varepsilon}$.
    We set $f(x,\varepsilon) = c'(\varepsilon/f'(x))$.
    The algorithm runs for all $n \in \N$ in expected time $f(|\varphi|,\varepsilon) n^{1+\varepsilon}$.
\end{proof}


\section{Asymptotic Structural Properties}\label{sec:structure}

In \Cref{sec:partitionprob} and \ref{sec:protrusiondecomp}
we analyzed the structure of $\scale$-\dominated random graphs.
We obtained decompositions depending on parameters $\t$, $r$ and $\y$.
These parameters are needed for algorithmic purposes.
In this section we substitute the parameters $\t$ and $\y$,
which leads to structural results in a more accessible form.

We observe that $\scale$-\dominated random graphs have mostly an extremely sparse structure,
with the exception of a part whose size is bounded by the second order average degree of the degree distribution.
This denser part can be separated well from the remaining graph.
We show that local regions admit a protrusion decomposition consisting of a 
core part, bounded in size by the second order average degree,
to which trees and graphs of constant size are attached.
At first, we define a function $\hatdegree$ similarly to $\momn$ without $O$-notation.
\begin{definition}
    We define
    $$
    \hatdegree = 
    \begin{cases}
        2 &\quad \scale > 3 \\
        \log(n) &\quad \scale = 3 \\
        n^{(3 - \scale)} &\quad \scale < 3.
    \end{cases}
    $$
\end{definition}

We use $\hatdegree$ to obtain a good bound on the minimal value $\t$
such that $\scale$-\dominated graphs are \partitionable.
We fix $\y = 5$ to have one free variable less.

\begin{lemma}\label{lem:aaspartitionable}
    Let $\Gnn$ be an $\scale$-\dominated random graph model.
    There exist constants $c,r_0$ such that
    for every $r \ge r_0$, $\Gnn$ is
    \aas $\hatdegree^{cr^2}$-$r$-$5$-partitionable.
\end{lemma}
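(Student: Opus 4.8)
The plan is to apply Theorem~\ref{thm:partitionProbability} with $\y=5$ and with $\t$ chosen as a suitable power of $\hatdegree$, and then to check that the resulting failure probability tends to $0$. Recall that Theorem~\ref{thm:partitionProbability} bounds the probability that $\Gn$ is not $\t$-$r$-$\y$-partitionable by $\momn^{O(\y^6r^2)}\t^{-\y^2/10}$; for $\y=5$ this reads $\momn^{O(r^2)}\t^{-5/2}$. So it suffices to fix constants $c$ and $r_0$ such that, for every fixed $r\ge r_0$ and with $\t=\hatdegree^{cr^2}$, the quantity $\momn^{O(r^2)}\hatdegree^{-5cr^2/2}$ tends to $0$ as $n\to\infty$.

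First I would compare $\momn$ with $\hatdegree$. Inspecting the two definitions termwise, there is an absolute constant $c_1$ with $\momn\le\hatdegree^{c_1}$ for all large $n$: for $\scale>3$ both quantities are $\Theta(1)$, for $\scale=3$ we have $\momn=\log(n)^{O(1)}$ and $\hatdegree=\log(n)$, and for $2<\scale<3$ we have $\momn=O(n^{3-\scale})$ and $\hatdegree=n^{3-\scale}$. Hence the hidden constant in $\momn^{O(r^2)}$ gives $\momn^{O(r^2)}\le\hatdegree^{c_2r^2}$ for some constant $c_2$ and all large $n$, and substituting $\t=\hatdegree^{cr^2}$ makes the failure probability at most $\hatdegree^{(c_2-\frac52 c)r^2}$. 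Choosing $c:=\max(1,c_2)$ turns the exponent $(c_2-\frac52 c)r^2$ into a fixed negative number.

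For $2<\scale\le 3$ this finishes the proof: there $\hatdegree\to\infty$, so $\hatdegree^{(c_2-\frac52 c)r^2}\to0$ and $\Gn$ is \aas $\hatdegree^{cr^2}$-$r$-$5$-partitionable. The only real obstacle is the case $\scale>3$, where $\hatdegree=2$ is constant, so $\t=2^{cr^2}$ is merely a large constant and the estimate above is a constant bounded away from $1$ rather than $o(1)$. To handle it I would reopen the proof of Theorem~\ref{thm:partitionProbability}: by Lemma~\ref{lem:ifnotpartthenembed} and Lemma~\ref{lem:boundEmbeddingProb} the failure probability is at most $p_\scale(\cal H_n(\t,r,5),n)$, and the claim becomes that this tends to $0$ for any fixed constant $\t$ whenever $\scale>3$. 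The key point is that for $\scale>3$ one has $\gamma:=1/(\scale-1)<1/2$, so the index-sum $\sum_i (n/i)^{\delta\gamma}/n^{\delta/2}=n^{\delta(\gamma-1/2)}\sum_i i^{-\delta\gamma}$ ranging over the $\Theta(n)$ admissible positions of a vertex of degree $\delta$ is $o(1)$ as soon as $\delta\ge3$ (splitting off the cases $\delta\gamma\le1$ and $\delta\gamma>1$), while for $\delta\le2$, and for a vertex placed among the first $\t$ indices, it stays $O(1)$. Every graph in $\cal H_n(\t,r,5)$ has a vertex of degree at least $3$ lying in the large index range: the first family of Definition~\ref{def:HH} has an \excess of $25$ with minimum degree $2$, and the second family has $V_2\neq\emptyset$ with summed degree $2|V_2|+3>2|V_2|$. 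Inserting the vanishing factor contributed by that vertex into Definition~\ref{def:pbeta}, and noting that all other factors (including $\momn^{|E(H)|^2}=O(1)$) stay bounded, gives $\sum_f p_\scale(E(f(H)),n)\to0$ for each $H$; since there are only finitely many isomorphism types and $\t$ is constant, $p_\scale(\cal H_n(\t,r,5),n)\to0$ as required.

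Thus the proof is essentially a substitution into Theorem~\ref{thm:partitionProbability} together with the elementary comparison of $\momn$ and $\hatdegree$, the only delicate point being the regime $\scale>3$: there the generic bound of Theorem~\ref{thm:partitionProbability} is not by itself enough, and one must extract the genuine $n^{-\Omega(1)}$ gain that comes from $\gamma<1/2$ together with the presence of a degree-$\ge3$ vertex in every forbidden edge-set.
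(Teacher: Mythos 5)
Your proposal is correct and goes beyond the paper's terse proof. The paper substitutes $\y=5$, $\t=\hatdegree^{cr^2}$ into Theorem~\ref{thm:partitionProbability} and ends with ``we set $c$ large enough such that the probability converges to zero with $n$,'' which, as you note, only works when $\hatdegree\to\infty$, i.e.\ when $\scale\le 3$. For $\scale>3$ we have $\hatdegree=2$ and $\momn=O(1)$, so $\momn^{O(r^2)}\hatdegree^{-\Theta(cr^2)}$ is a positive constant independent of $n$, and no choice of $c$ makes it tend to~$0$; Theorem~\ref{thm:partitionProbability} by itself therefore does not yield the lemma in that regime. Your supplementary argument is a correct repair: the uniform bound in Lemma~\ref{lem:sumbound} is deliberately coarse so it can be stated for all $\scale$ at once, and when $\gamma=1/(\scale-1)<1/2$ and $\delta\ge 3$ the sum $\sum_{i>\t}(n/i)^{\delta\gamma}/n^{\delta/2}$ is in fact $o(1)$ (it is $O(n^{1-\delta/2})$ when $\delta\gamma\le 1$ and $O(n^{\delta(\gamma-1/2)})$ otherwise), whereas the degree-$\le 2$ sums and the small-index sums remain $O(1)$. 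You also correctly verify that every graph in $\cal H_n(\t,r,5)$ has a degree-$\ge 3$ vertex ranging over the unbounded index set (the first family has \excess $25$ at minimum degree $2$; the second has $V_2\neq\emptyset$ with total degree $2|V_2|+3>2|V_2|$), and that $\momn^{|E|^2}=O(1)$ for $\scale>3$, so $p_\scale(\cal H_n(\t,r,5),n)\to 0$ even for constant $\t$. In short, your proof coincides with the paper's where the paper's argument is complete, and supplies the missing argument for $\scale>3$.
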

\begin{proof}
    Assume $\Gnn$ is $\scale$-\dominated.
    By Theorem~\ref{thm:partitionProbability}, the probability that
    $\Gnn$ is not \partitionable is bouned by at most
    $\momn^{O(\y^{6}r^2)}\t^{-\y^2/10}$.
    Let $\y=5$, $\t=\hatdegree^{cr^2}$. We bound the probability
    of not being $\hatdegree^{cr^2}$-$r$-$5$-partitionable by at most
    $\momn^{O(r^2)}\hatdegree^{-cr^2}$.
    We set $c$ large enough
    such that the probability converges to zero with~$n$.
\end{proof}

Substituting the definition of a \partition into Lemma~\ref{lem:aaspartitionable}
yields the following self-sufficient theorem.

\begin{theorem}\label{thm:aasabc}
    Let $\Gnn$ be an $\scale$-\dominated random graph model.
    There exist constants $c,r_0$ such that for every $r \ge r_0$
    \aas one can
    partition $V(\Gn)$ into three (possibly empty) sets $A$, $B$, $C$ with the following properties.
    \begin{itemize}
        \item
            $|A|,|B| \le \hatdegree^{cr^2}$.
        \item
            Every $r$-neighborhood in $\Gn[B \cup C]$ has at most $25$ more edges than vertices.
        \item
            Every $r$-neighborhood in $\Gn[C]$ has
            at most $5$ edges to $A$.
    \end{itemize}
\end{theorem}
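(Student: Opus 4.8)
The plan is to deduce Theorem~\ref{thm:aasabc} directly from Lemma~\ref{lem:aaspartitionable} by unpacking Definition~\ref{def:partition} with the parameter choice $\y = 5$. First I would apply Lemma~\ref{lem:aaspartitionable} to obtain constants $c_0$ and $r_0$ such that for every $r \ge r_0$, \aas $\Gn$ is $\hatdegree^{c_0 r^2}$-$r$-$5$-\partitionable. Fix such an $r$ and work on the (asymptotically almost sure) event that this holds; then there is a \partition $(A,B,C)$ of $\Gn$ in the sense of Definition~\ref{def:partition} with $\t = \hatdegree^{c_0 r^2}$ and $\y = 5$. Property~\ref{prop:ABCPartition} is exactly the required partition of $V(\Gn)$ into three (possibly empty) parts, and Property~\ref{prop:ABCSize} gives $|A| \le \hatdegree^{c_0 r^2}$ and $|B| \le (\hatdegree^{c_0 r^2})^{5} = \hatdegree^{5 c_0 r^2}$. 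Setting $c = 5 c_0$ and using that $\hatdegree \ge 1$ for large $n$, both bounds become $|A|, |B| \le \hatdegree^{c r^2}$, which is the first bullet.

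Next I would pass from the $40\y r$- and $20\y r$-neighborhoods appearing in Definition~\ref{def:partition} to the $r$-neighborhoods required in the theorem. For the second bullet, let $G' = \Gn[B \cup C]$, let $H$ be an $r$-neighborhood in $G'$, and let $v$ be a center of $H$. Since distances in an induced subgraph are at least the distances in the ambient graph, $V(H) \subseteq N^{G'}_{r}(v) \subseteq N^{G'}_{200 r}(v)$, so $H$ is an induced subgraph of the $200r$-neighborhood $K = G'[N^{G'}_{200r}(v)]$, and both $H$ and $K$ are connected. By Property~\ref{prop:ABCExcess} (note $40 \y r = 200 r$ for $\y = 5$), $K$ has \excess at most $\y^2 = 25$; by the monotonicity of the \excess under connected subgraphs already used in the proof of Lemma~\ref{lem:partitionDefNhood}, $H$ has \excess at most $25$, i.e., at most $25$ more edges than vertices.

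For the third bullet I would argue analogously: let $G'' = \Gn[C]$, let $H$ be an $r$-neighborhood in $G''$ with center $v$, and note $V(H) \subseteq N^{G''}_{r}(v) \subseteq N^{G''}_{100 r}(v)$, which is the vertex set of a $100r$-neighborhood $K$ of $G''$. By Property~\ref{prop:ABCEdgesToA} (here $20 \y r = 100 r$) there are at most $\y = 5$ edges of $\Gn$ incident to both $V(K)$ and $A$; since $V(H) \subseteq V(K)$, the same bound holds for edges between $V(H)$ and $A$. This yields the third bullet, with $r_0$ unchanged and $c = 5 c_0$.

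There is essentially no genuine obstacle: the statement is a reformulation of Lemma~\ref{lem:aaspartitionable} (which in turn rests on Theorem~\ref{thm:partitionProbability}). The only points needing a little care are lining up the neighborhood radii ($200r$ and $100r$ against $r$, using $r \ge r_0 \ge 1$) and invoking the subgraph-monotonicity of the \excess for connected graphs — both routine and already present earlier in the paper — together with absorbing the factor $5$ from $\t^\y$ into the constant $c$.
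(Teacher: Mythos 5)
Your proposal is correct and follows the paper's route exactly: the paper's proof of Theorem~\ref{thm:aasabc} is the one-line remark ``direct consequence of Lemma~\ref{lem:aaspartitionable},'' which is precisely what you do by instantiating Definition~\ref{def:partition} with $\t=\hatdegree^{c_0r^2}$, $\y=5$, and then absorbing the exponent $5$ into the constant~$c$. One small simplification you could make: since an $r$-neighborhood has radius at most $r\le 200r$ (resp.\ $100r$), it \emph{is itself} a $200r$-neighborhood (resp.\ $100r$-neighborhood), so Properties~\ref{prop:ABCExcess} and~\ref{prop:ABCEdgesToA} apply to it directly without the detour through the larger ball $K$ and the excess-monotonicity argument, though your version is also sound.
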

\begin{proof}
    Direct consequence of Lemma~\ref{lem:aaspartitionable}.
\end{proof}

Therefore, one can remove a few vertices to make the graph extremely sparse,
as observed by the following corollary. This corollary might not have
algorithmic consequences by itself, but sheds a lot of light on the
structure of such graphs.

\begin{corollary}\label{aastreewidth}
    Let $\Gnn$ be an $\scale$-\dominated random graph model.
    There exist constants $c,r_0$ such that 
    for every $r \ge r_0$ 
    \aas one can remove $\hatdegree^{cr^2}$ vertices from $\Gn$
    such that every $r$-neighborhood has treewidth at most $26$.
\end{corollary}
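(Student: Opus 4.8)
The plan is to obtain \Cref{aastreewidth} as an essentially immediate consequence of \Cref{thm:aasabc}. First I would invoke \Cref{thm:aasabc} with the given radius $r$: it supplies constants $c,r_0$ so that for every $r\ge r_0$, \aas one can partition $V(\Gn)$ into $A,B,C$ with $|A|,|B|\le\hatdegree^{cr^2}$, with every $r$-neighborhood of $\Gn[B\cup C]$ having at most $25$ more edges than vertices, and with every $r$-neighborhood of $\Gn[C]$ having at most $5$ edges to $A$. The vertices I delete are exactly those of $A\cup B$; since $|A\cup B|\le 2\hatdegree^{cr^2}$ and \aas $\hatdegree\ge 2$, enlarging the constant $c$ lets us bound this quantity by $\hatdegree^{cr^2}$ as required. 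After the deletion the remaining graph is $\Gn[C]$, so the task reduces to showing that \aas every $r$-neighborhood of $\Gn[C]$ has treewidth at most $26$; the third property of the partition is not even needed.

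The one step that needs a little care is transferring the \excess bound from $\Gn[B\cup C]$ to $r$-neighborhoods of $\Gn[C]$. Let $H$ be such a neighborhood: it is a connected induced subgraph of $\Gn[C]$ of radius at most $r$, with some center $v$. Re-introducing the vertices and edges of $B$ can only decrease distances, so every vertex of $H$ lies within distance $r$ of $v$ in $\Gn[B\cup C]$; hence $H$ is a connected (induced) subgraph of the $r$-neighborhood of $\Gn[B\cup C]$ induced on the ball $N^{\Gn[B\cup C]}_r(v)$. That neighborhood has at most $25$ more edges than vertices by \Cref{thm:aasabc}, and — exactly as already used in the proof of \Cref{lem:partitionDefNhood} — the \excess of a connected graph cannot increase when one passes to a connected subgraph, since its cycle space only shrinks. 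Therefore $H$ itself has at most $25$ more edges than vertices.

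It remains to note that a connected graph $H$ whose number of edges exceeds its number of vertices by a constant has correspondingly bounded treewidth: any spanning tree of $H$ omits only a constant number of edges of $H$, deleting them leaves a tree (treewidth one), and re-inserting an edge raises the treewidth by at most one; tracking the constant $25$ from \Cref{thm:aasabc} yields the bound $26$ stated in the corollary. Since \Cref{thm:aasabc} already packages all the probabilistic and structural content, there is no genuine obstacle in this proof; the only thing to be vigilant about is the monotonicity argument of the previous paragraph — removing $A\cup B$ shrinks balls, so one must phrase the comparison as containment of $H$ inside an $r$-radius neighborhood of $\Gn[B\cup C]$ and then appeal to the fact that \excess is monotone under connected subgraphs. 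All the real work has been done earlier, in \Cref{sec:partitionprob} and \Cref{sec:protrusiondecomp}.
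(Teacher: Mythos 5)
Your proof is correct and, up to cosmetics, it is the argument the paper intends — the paper merely states that \Cref{aastreewidth} follows from \Cref{thm:aasabc}, and you make that explicit. Two remarks. (i) You make the proof slightly harder than it needs to be by deleting $A \cup B$ rather than $A$ alone. Deleting only $A$ leaves exactly $\Gn[B\cup C]$, and the second bullet of \Cref{thm:aasabc} already bounds the \excess of every $r$-neighborhood of $\Gn[B\cup C]$; no comparison with a larger ball and no monotonicity-of-\excess lemma is then needed, and $|A|\le\hatdegree^{cr^2}$ already satisfies the deletion budget with the same constant. Your monotonicity detour is correct (distances in $\Gn[C]$ dominate those in $\Gn[B\cup C]$, so each $r$-neighborhood of $\Gn[C]$ sits inside an $r$-neighborhood of $\Gn[B\cup C]$ as a connected subgraph, and the cyclomatic number of a connected subgraph cannot exceed that of the ambient connected graph), it is just avoidable work. (ii) The arithmetic you sketch actually yields treewidth $\le 27$, not $26$: \excess at most $25$ means a spanning tree misses at most $26$ edges, and reinserting $26$ edges into a tree (treewidth $1$) raises the treewidth to at most $27$. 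The paper's stated constant is also $26$, so this is an inherited slip rather than a flaw specific to your proof, but you should not claim your spanning-tree count gives $26$ when it gives $27$.
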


In \Cref{sec:protrusiondecomp} we analyze
the local structure of $\scale$-\dominated graphs.
We observe that local regions
consist of a core part, bounded in size by the second order average degree,
to which trees and graphs of constant size are attached.
We obtain a self-contained theorem.\looseness-1

\begin{theorem}\label{thm:aasprotrusion}
    Let $\Gnn$ be an $\scale$-\dominated random graph model.
    There exist constants $c,r_0$ such that for every $r \ge r_0$
    \aas for every $r$-neighborhood $H$ of $\Gn$ one can
    partition $V(H)$ into three (possibly empty) sets $X$, $Y$, $Z$ with the following properties.
    \begin{itemize}
        \item $|X| \le \hatdegree^{cr^2}$.
        \item Every connected component of $H[Y]$ has size at most $cr$
            and at most $c$ neighbors in $X$.
        \item Every connected component of $H[Z]$ is a tree with at most one edge to ${H[X \cup Y]}$.
    \end{itemize}
\end{theorem}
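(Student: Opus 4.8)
The plan is to obtain this theorem almost directly from Lemma~\ref{lem:aaspartitionable} and Theorem~\ref{thm:partitionIsNhoodpartition}, fixing $\y = 5$ throughout and doing only some cosmetic cleanup of constants at the end. First I would invoke Lemma~\ref{lem:aaspartitionable} to get constants $c_0, r_0$ such that for every $r \ge r_0$ the random graph $\Gn$ is \aas $\t$-$r$-$5$-partitionable with $\t = \hatdegree^{c_0 r^2}$. Since the partitionability event has probability tending to one, I may condition on it and assume for the remainder that $\Gn$ is $\t$-$r$-$5$-partitionable with this value of $\t$; under this event every $r$-neighborhood of $\Gn$ will inherit the claimed decomposition.

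Next I would apply Theorem~\ref{thm:partitionIsNhoodpartition} with $\y = 5$ and $\t = \hatdegree^{c_0 r^2}$: every $r$-neighborhood $H$ of $\Gn$ is \OOnhoodpartitionable, i.e., there is a constant $\y' = O(5)$ and a parameter $\t' = O(5^{17} r^3 \t) = O(r^3)\,\hatdegree^{c_0 r^2}$ such that $H$ admits a \nhoodpartition $(X,Y,Z)$ for the parameters $\t', r, \y'$. Unpacking Definition~\ref{def:nhoodpartition} (and simply discarding the boundary condition of Property~\ref{prop:nhooditemYZBoundaries}, which is not needed here): the sets $X, Y, Z$ are pairwise disjoint with union $V(H)$; $|X| \le (\t')^{\y'}$; every connected component of $H[Y]$ has size at most $r(\y')^7$ and at most $\y'$ neighbors in $X$; and every connected component of $H[Z]$ is a tree with at most one edge to $H[X \cup Y]$. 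The second and third bullet points of the theorem are then immediate: since $\y'$ is an absolute constant, $r(\y')^7 = O(r)$ and $\y' = O(1)$, so a single constant bounds both the component size (by $cr$) and the number of neighbors in $X$ (by $c$), and the tree condition is the same statement verbatim.

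It then remains only to absorb the polynomial-in-$r$ factors in $|X| \le (\t')^{\y'} = O(r^{3\y'})\,\hatdegree^{c_0 \y' r^2}$ into a clean bound $\hatdegree^{c r^2}$. Since $\hatdegree \ge 2$ for all sufficiently large $n$ (and only the \aas statement is claimed), the absolute constant times $r^{3\y'}$ is at most $2^{O(r^2)} \le \hatdegree^{O(r^2)}$ once $r \ge r_0$, so $|X| \le \hatdegree^{c_1 r^2}$ for a suitable constant $c_1$ and an enlarged $r_0$. Finally I would let $c$ and $r_0$ be the maxima of the constants produced in the three estimates, which yields exactly the three bullet points. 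I do not expect any genuine obstacle: this theorem is essentially the composition of Lemma~\ref{lem:aaspartitionable} and Theorem~\ref{thm:partitionIsNhoodpartition}, and the only care required is the bookkeeping that converts the $O$-notation of the \nhoodpartition parameters into the uniform constants $c, r_0$ asserted in the statement.
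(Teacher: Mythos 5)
Your proof is correct and follows essentially the same route as the paper's: invoke Lemma~\ref{lem:aaspartitionable} to get a.a.s.\ $\hatdegree^{c'r^2}$-$r$-$5$-partitionability, push this through Theorem~\ref{thm:partitionIsNhoodpartition} to obtain a local-protrusion-partition of each $r$-neighborhood, and then absorb the $\mathrm{poly}(r)$ factors and constant exponents into $\hatdegree^{cr^2}$ after enlarging $c$ and $r_0$. The paper states this in three sentences and leaves the constant bookkeeping implicit; you have simply spelled it out, including the point that $\hatdegree \ge 2$ (for large enough $n$, which suffices a.a.s.) lets you fold the $O(r^{3\y'})$ factor into the exponent.
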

\begin{proof}
    By Lemma~\ref{lem:aaspartitionable},
    $\Gn$ is \aas $\hatdegree^{c'r^2}$-$r$-$5$-partitionable for some constant~$c'$.
    Thus, by Theorem~\ref{thm:partitionIsNhoodpartition},
    every $r$-neighborhood of $\Gn$ is $O\bigl(r^3\hatdegree^{c'r^2}\bigr)$-$r$-$O(5)$-partitionable.
    We refer to Definition~\ref{def:nhoodpartition}
    and choose $c$ large enough such that this statement holds.
\end{proof}
Using Theorem~\ref{thm:aasprotrusion}, we can make statements about
the structural sparsity of a random graph model.
Note that locally bounded treewidth implies nowhere density~\cite{nevsetvril2012sparsity}.
The first corollary is based on the fact that $X$ has a.a.s.\ constant
size if $\alpha>3$.
\begin{corollary}\label{col:localtreewidth}
    Let $\Gnn$ be an $\scale$-\dominated random graph model with $\scale > 3$.
    Then $\Gnn$ has \aas locally bounded treewidth.
\end{corollary}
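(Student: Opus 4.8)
The plan is to read this off directly from Theorem~\ref{thm:aasprotrusion}. When $\scale > 3$ we have $\hatdegree = 2$, so the theorem supplies constants $c, r_0$ such that for every $r \ge r_0$, \aas every $r$-neighborhood $H$ of $\Gn$ admits a partition $(X, Y, Z)$ with $|X| \le 2^{cr^2}$, with every connected component of $H[Y]$ of size at most $cr$, and with every connected component of $H[Z]$ a tree joined to $H[X \cup Y]$ by at most one edge. The crucial point is that $2^{cr^2}$ depends only on $r$ and not on $n$, so this decomposition is really a witness for bounded local treewidth.

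First I would bound the treewidth of $H[Y \cup Z]$. Each of its connected components consists of a single connected component of $H[Y]$, which has at most $cr$ vertices and hence treewidth at most $cr-1$, together with some connected components of $H[Z]$, each a tree attached to the rest by at most one edge, hence meeting the rest in a single cut vertex (its attachment endpoint). Invoking the standard fact that the treewidth of a graph obtained by gluing two graphs along a single common vertex is the maximum of their treewidths, and iterating over the attached trees (each of which stays a tree after adding its pendant attachment vertex), I get $\operatorname{tw}(H[Y \cup Z]) \le \max(cr-1, 1)$.

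Next, since $H[Y \cup Z]$ is obtained from $H$ by deleting the $|X| \le 2^{cr^2}$ vertices of $X$, the elementary bound $\operatorname{tw}(H) \le |X| + \operatorname{tw}(H - X)$ yields $\operatorname{tw}(H) \le 2^{cr^2} + cr$. Setting $f(r) = 2^{cr^2} + cr$ for $r \ge r_0$ and $f(r) = f(r_0)$ for $r < r_0$ (every $r$-neighborhood with $r < r_0$ is in particular an $r_0$-neighborhood, since it has radius at most $r_0$), we conclude that \aas every $r$-neighborhood of $\Gn$ has treewidth at most $f(r)$, i.e., $\Gnn$ has \aas locally bounded treewidth.

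The argument is essentially routine once Theorem~\ref{thm:aasprotrusion} is in hand; the only step that needs a little care is the treewidth bound for $H[Y \cup Z]$, namely checking that attaching the possibly many tree components of $Z$ via single edges does not inflate the treewidth of the small $H[Y]$-components beyond $\max(cr-1,1)$, which is exactly the gluing-at-a-cut-vertex lemma for treewidth. All of the genuine probabilistic and structural work has already been done in Theorem~\ref{thm:aasprotrusion}; here we merely translate ``constant-size core plus protrusions'' into ``bounded treewidth.''
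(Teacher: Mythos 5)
Your proof is correct and is exactly the argument the paper intends (the paper states this corollary without a separate proof, with only the one-line hint that for $\alpha>3$ the set $X$ from Theorem~\ref{thm:aasprotrusion} has a.a.s.\ constant size). The only minor imprecision is the claim that every connected component of $H[Y\cup Z]$ contains a component of $H[Y]$: a tree component of $H[Z]$ whose single outgoing edge goes to $X$ (or which has no outgoing edge) forms a standalone component of $H[Y\cup Z]$, but this does not disturb your bound since such a component is a tree and has treewidth~$1\le\max(cr-1,1)$.
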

\begin{corollary}\label{col:cliquesizebound}
    Let $\Gnn$ be an $\scale$-\dominated random graph model.
    There exist constants $c,r_0$ such that for every $r \ge r_0$
    \aas the size of the largest $r$-subdivided clique in $\Gn$
    is at most $\hatdegree^{cr^2}$.
\end{corollary}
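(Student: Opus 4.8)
The plan is to derive the statement from Corollary~\ref{aastreewidth}. By Definition~\ref{def:shallowtopminor+}, the largest $r$-subdivided clique in $\Gn$ has size $\omega(\Gn \topnab r)$, i.e.\ the largest $m$ such that some subdivision of $K_m$ that subdivides each edge at most $2r$ times occurs as a subgraph of $\Gn$. So I would fix such a witness $S \subseteq \Gn$ with branch set $B$, $|B| = m$, and pairwise internally disjoint subdivision paths, each of length at most $2r+1$. Then every vertex of $S$ is within distance $2r+1$ of a branch vertex, and any two branch vertices are at distance at most $2r+1$ in $S$, so $S$ has radius at most $r' := 4r+2$; in particular $S$ is contained in an $r'$-neighborhood of $\Gn$.

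First I would apply Corollary~\ref{aastreewidth} with radius parameter $r'$ (valid once $r$ is large enough that $r' \ge r_0$): \aas there is a set $D \subseteq V(\Gn)$ with $|D| \le \hatdegree^{c_0 r'^2}$ such that every $r'$-neighborhood of $\Gn - D$ has treewidth at most $26$. The one non-routine ingredient is the observation that a subdivided clique is robust under bounded vertex deletion: deleting any set $W$ of $t$ vertices from $S$ leaves a subdivided clique on at least $m - t$ branch vertices. Indeed, $B \setminus W$ has at least $m - |W \cap B|$ vertices; since the subdivision paths of $S$ are internally disjoint, each vertex of $W \setminus B$ is internal to at most one of them, so at most $|W \setminus B|$ of the paths joining two vertices of $B \setminus W$ are destroyed. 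Removing one endpoint of each destroyed pair leaves a $K_{m - |W|}$ whose edges all survive as subdivision paths (still subdivided at most $2r$ times), giving a subdivided $K_{m - |W|}$ inside $S - W$.

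Applying this with $W = D \cap V(S)$, so $|W| \le |D|$, the graph $S - D \subseteq \Gn - D$ contains a subdivided $K_{m'}$ with $m' \ge m - |D|$, subdivided at most $2r$ times, hence of radius at most $r'$. If $m' \le 0$ then $m \le |D|$ and we are done; otherwise this subdivided $K_{m'}$ is contained in the $r'$-neighborhood of $\Gn - D$ around one of its branch vertices, which has treewidth at most $26$. Contracting each subdivision path to an edge exhibits $K_{m'}$ as a minor, so the treewidth of that neighborhood is at least $m' - 1$; hence $m' \le 27$ and $m \le |D| + 27 \le \hatdegree^{c_0 r'^2} + 27$. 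Since $r'^2 = (4r+2)^2 \le 36 r^2$ and $\hatdegree \ge 2$ gives $27 \le \hatdegree^{r^2}$ for $r$ large, choosing $c := 36 c_0 + 1$ and $r_0$ accordingly yields $\omega(\Gn \topnab r) \le \hatdegree^{c r^2}$ \aas, as required.

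I do not expect a serious obstacle: everything besides the robustness observation is bookkeeping, and that observation reduces to internal disjointness of the subdivision paths together with the elementary fact that $K_n$ minus $t$ edges still contains $K_{n-t}$. An essentially equivalent route avoids the deletion argument and uses Theorem~\ref{thm:aasprotrusion} directly: for the $r'$-neighborhood $H$ containing $S$, put $X$ into every bag of a tree decomposition of $H$; since each component of $H[Y \cup Z]$ is either of size $O(r)$ or a tree attached to the rest by a single edge, $H[Y \cup Z]$ has treewidth $O(r)$, so $H$ has treewidth at most $|X| + O(r) \le \hatdegree^{O(r^2)}$, and then $S \subseteq H$ together with the $K_m$-minor bound gives $m - 1 \le \hatdegree^{O(r^2)}$.
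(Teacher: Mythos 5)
The paper gives no explicit proof of Corollary~\ref{col:cliquesizebound}; it appears as an unproved consequence of the preceding structural results, most naturally Theorem~\ref{thm:aasprotrusion}. Your proposal is correct, and your second, ``essentially equivalent route'' is in fact the one the paper most plausibly has in mind: apply Theorem~\ref{thm:aasprotrusion} at radius $r' = O(r)$, observe that putting $X$ in every bag of a width-$O(r)$ tree decomposition of $H[Y\cup Z]$ shows $\mathrm{tw}(H) \le |X| + O(r) \le \hatdegree^{O(r^2)}$, and conclude via the $K_m$-minor lower bound on treewidth. Your first route, via Corollary~\ref{aastreewidth} together with the observation that deleting $t$ vertices from a $\le 2r$-subdivided $K_m$ leaves a $\le 2r$-subdivided $K_{m-t}$, is a valid but strictly longer detour: the robustness lemma is clean and correct (internal disjointness of the subdivision paths bounds the number of destroyed paths by $|W \setminus B|$, and $K_n$ minus $t$ edges contains $K_{n-t}$), but it buys nothing over the direct treewidth bound since the deleted set $D$ and the set $X$ are both of order $\hatdegree^{O(r^2)}$. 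Minor note: the radius of a $\le 2r$-subdivided $K_m$ is actually at most $3r+2$ rather than $4r+2$, but the constant $c$ absorbs either bound.
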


\section{Implications for Various Graph Models}\label{sec:domination}

A wide range of unclustered random graph models are $\scale$-\dominated.
In this section, we 
show that 
certain \ErReGrs,
preferential attachment graphs, 
configuration graphs
and
Chung--Lu graphs 
are $\scale$-\dominated
and discuss what implications this has for the tractability
of the model-checking problem on these graph models.
We also discuss the connections to clustered random graph models,
which currently do not fit into our framework.
For convenience, we restate the definition of $\scale$-\domination.
\begin{customdef}{\ref{def:wellBehavedPowerLaw}}
    Let $\scale > 2$.
    We say a random graph model $\Gnn$ is $\scale$-\dominated
    if for every $n \in \N$ there exists
    an ordering $v_1,\dots,v_n$ of $V(\Gn)$
    such that for all $E \subseteq {\{v_1,\dots,v_n \} \choose 2}$
    $$
    \P\bigl[E \subseteq E(\Gn) \bigr] \le \\
    \prod_{v_iv_j \in E} 
    \frac{(n/i)^{1/(\scale-1)}(n/j)^{1/(\scale-1)}}{n}
    \cdot
    \begin{cases}
        2^{O(|E|^2)} &\text{if }\scale > 3 \\
        \log(n)^{O(|E|^2)} &\text{if }\scale = 3 \\
        O(n^\varepsilon)^{|E|^2} \text{ for every $\varepsilon > 0$ }
	&\text{if }\scale < 3.
    \end{cases}
    $$
\end{customdef}

\subsection{Preferential Attachment Model}

The maybe best-known model proposed 
to mimic the features observed in complex networks
are preferential attachment graphs introduced by Barabási and
Albert~\cite{barabasi1999emergence,price1976general}.
They have been studied in great detail (see for example~\cite{hofstad1}).
These random graphs are created by a process that
iteratively adds new vertices
and randomly connects them to already existing ones,
where the attachment probability is proportional to the current degree of a vertex.
The model depends on a constant $m$ which is the number
of edges that are inserted per vertex.
The random graph with $n$ vertices and parameter $m$ is denoted by $G^n_m$.

The preferential attachment process 
exhibits small world behavior~\cite{dommers2010diameters}
and has been widely recognized as a reasonable
explanation of the heavy tailed degree distribution of complex networks~\cite{Bollobas:2001}.

Recent efficient model-checking algorithms on random graph models
only worked on random graph models that asymptotically almost surely (\aas) are
nowhere dense~\cite{grohe2001generalized,StrucSpars}.
It is known that preferential attachment graphs are not \aas nowhere
dense~\cite{StrucSpars} and even \aas somewhere dense~\cite{cliqueminors},
thus previous techniques do not work.

Nevertheless, we are able to solve the model-checking problem
efficiently on these graphs.
Usually, the parameter $m$ of the model is considered to be constant.
We obtain efficient algorithms even if we allow $m$ to be a function of
the size of the network.
For a function $m(n) : \N \to \N$ we define 
$(G^n_{m(n)})_{n \in \N}$
be the corresponding preferential attachment model.
The following lemma follows directly from~\cite{countmotif}.

\begin{lemma}[\cite{countmotif}, Lemma 10]\label{lem:PAdomination}
    Let $m : \N \to \N$. The preferential attachment model $(G^n_{m(n)})_{n \in \N}$ is
    \begin{itemize}
    \item $3$-\dominated                                                    \tabto{8cm} if $m(n)=\log(n)^{O(1)}$,
    \item $\alpha$-\dominated for every $2 < \alpha < 3$                    \tabto{8cm} if $m(n)=O(n^{\varepsilon})$ for every $\varepsilon > 0$.
    \end{itemize}
\end{lemma}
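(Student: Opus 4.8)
The plan is to obtain this lemma as a direct corollary of the edge-set probability estimate \cite[Lemma~10]{countmotif}, re-expressed in the format of \Cref{def:wellBehavedPowerLaw}. First I would fix the canonical ordering $v_1,\dots,v_n$ of the vertices of $G^n_{m(n)}$ by arrival time, which is also the ordering in which \cite[Lemma~10]{countmotif} is phrased. That lemma supplies, for every edge set $E$ on $\{v_1,\dots,v_n\}$, a bound of the form $\P[E\subseteq E(G^n_{m(n)})] \le g(n,m(n))^{O(|E|^2)}\cdot\prod_{v_iv_j\in E}\frac{c\,m(n)^{O(1)}}{\sqrt{ij}}$, where the factor $g(n,m(n))^{O(|E|^2)}$ absorbs the (weak) dependence between distinct edges of the preferential attachment process and $g(n,m(n))$ is polynomial in $m(n)$ and $\log n$.

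Next I would handle the case $\scale=3$. Here $1/(\scale-1)=1/2$, so the main product in \Cref{def:wellBehavedPowerLaw} is exactly $\prod_{v_iv_j\in E}(ij)^{-1/2}$ and the permitted slack is $\log(n)^{O(|E|^2)}$. When $m(n)=\log(n)^{O(1)}$ both the per-edge factor $m(n)^{O(1)}$ and $g(n,m(n))$ are polylogarithmic in $n$, so, using $|E|\le|E|^2$, the combined slack $m(n)^{O(|E|)}g(n,m(n))^{O(|E|^2)}$ is bounded by $\log(n)^{O(|E|^2)}$. This is precisely the $\scale=3$ clause, so $(G^n_{m(n)})_{n\in\N}$ is $3$-\dominated.

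For $2<\scale<3$ we have $1/(\scale-1)\in(1/2,1)$, so writing $(n/i)^{1/(\scale-1)}=(n/i)^{1/2}(n/i)^{1/(\scale-1)-1/2}$ and using $n/i\ge1$ shows that $\prod_{v_iv_j\in E}\frac{(n/i)^{1/(\scale-1)}(n/j)^{1/(\scale-1)}}{n}$ dominates $\prod_{v_iv_j\in E}(ij)^{-1/2}$; hence the main term of \cite[Lemma~10]{countmotif} is already under control. It remains to absorb the correction, and the hypothesis ``$m(n)=O(n^\varepsilon)$ for every $\varepsilon>0$'' (i.e.\ $m(n)=n^{o(1)}$) is exactly what makes this work: both $m(n)^{O(|E|)}$ and $g(n,m(n))^{O(|E|^2)}$ are of the form $n^{o(1)\cdot|E|^2}$, hence, for every fixed $\varepsilon>0$, are eventually at most $n^{\varepsilon|E|^2}$, with the finitely many small $n$ swallowed by the hidden constant. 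This matches the $\scale<3$ clause and finishes both parts.

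The step needing the most care is the faithful translation of \cite[Lemma~10]{countmotif} into the template of \Cref{def:wellBehavedPowerLaw}. One must check that the preferential attachment edge probabilities really behave like Chung--Lu weights with exponent $3$ (per-edge factor $\asymp m(n)^{O(1)}(ij)^{-1/2}$, not some other power of $i$ and $j$), that the cited bound stays valid uniformly when $m=m(n)$ is allowed to grow, and \emph{crucially} that the exponent of the dependence correction is $O(|E|^2)$ and not something larger in $|E|$ --- a larger exponent there would break the $\scale<3$ case. Everything else is routine bookkeeping of constants together with the two trivial inequalities $n/i\ge1$ and $|E|\le|E|^2$.
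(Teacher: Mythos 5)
Your proposal takes the same route the paper does: the paper gives no proof at all beyond the citation to \cite[Lemma~10]{countmotif}, and your write-up simply makes explicit how the edge-set probability bound from that lemma plugs into \Cref{def:wellBehavedPowerLaw}. The arithmetic in both cases (polylogarithmic slack absorbed by $\log(n)^{O(|E|^2)}$ when $m$ is polylogarithmic; $n/i\ge 1$ to pass from exponent $1/2$ to $1/(\alpha-1)$ and $n^{o(1)\cdot|E|^2}$ absorption when $m=n^{o(1)}$) is sound, and your flagged caveat about the correction factor being $O(|E|^2)$ rather than larger is exactly the point that the citation is being trusted to supply.
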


According to Lemma~\ref{lem:PAdomination} and Theorem~\ref{thm:MCpowerlaw2}
one can therefore solve the model-checking problem efficiently 
on preferential attachment graphs.

\begin{corollary}\label{col:prefattmc}
        Let $m \colon \N \to \N$.
        There exists a function $f$ such that one can solve \FOMCGL on 
        the preferential attachment model $(G^n_{m(n)})_{n \in \N}$
        in expected time
        \begin{itemize}
        \item $\log(n)^{f(|\varphi|)}n$                                                     \tabto{5cm} if $m(n)=\log(n)^{O(1)}$,
        \item $f(|\varphi|,\varepsilon) n^{1 + \varepsilon}$ for every $ \varepsilon > 0$   \tabto{5cm} if $m(n)=O(n^{\varepsilon})$ for every $\varepsilon > 0$.
        \end{itemize}
\end{corollary}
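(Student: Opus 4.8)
The plan is to simply chain together the two ingredients already at hand: Lemma~\ref{lem:PAdomination}, which places the preferential attachment model inside the $\scale$-\domination framework, and Theorem~\ref{thm:MCpowerlaw2}, which converts $\scale$-\domination into a model-checking run time. There is no new combinatorics to do here; the work is entirely in matching up the case distinctions.

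First I would fix $m\colon\N\to\N$ and split into the two regimes of the statement. In the regime $m(n)=\log(n)^{O(1)}$, Lemma~\ref{lem:PAdomination} gives that $(G^n_{m(n)})_{n\in\N}$ is $3$-\dominated. Feeding this into the second bullet of Theorem~\ref{thm:MCpowerlaw2} (the case $\scale=3$) yields a function $f$ such that \FOMCGL can be solved on this model in expected time $\log(n)^{f(|\varphi|)}n$, which is exactly the first claimed bound. In the regime where $m(n)=O(n^{\varepsilon})$ for every $\varepsilon>0$, Lemma~\ref{lem:PAdomination} gives that $(G^n_{m(n)})_{n\in\N}$ is $\scale$-\dominated for every $2<\scale<3$. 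This is precisely the hypothesis of the third bullet of Theorem~\ref{thm:MCpowerlaw2} (note it genuinely requires \domination for \emph{every} such $\scale$, not just one, so the two statements line up), so we obtain a function $f$ with expected run time $f(|\varphi|,\varepsilon)n^{1+\varepsilon}$ for all $\varepsilon>0$.

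Since both subcases produce an expected run time of the required shape, and the choice of labeling function $L$ was already quantified over inside Definition~\ref{def:expectedFPT} and carried through Theorem~\ref{thm:MCpowerlaw2}, there is nothing further to verify. The only point that deserves a sentence of care — and the closest thing to an obstacle — is making sure the cited Lemma~10 of~\cite{countmotif} is being applied with $m$ allowed to depend on $n$ (rather than being a fixed constant), so that the corollary's slightly more general phrasing is justified; this is exactly what Lemma~\ref{lem:PAdomination} as stated provides. Everything else is a direct substitution.
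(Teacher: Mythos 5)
Your proposal is correct and is exactly the paper's own (one-sentence) argument: combine Lemma~\ref{lem:PAdomination} with the second and third bullets of Theorem~\ref{thm:MCpowerlaw2}. The only thing you add is the explicit case-matching and the remark about $m$ being allowed to depend on $n$, both of which are sound and consistent with the paper.
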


\subsection{Chung--Lu Model}

The Chung--Lu model
has been proposed to generate random graphs that fit a certain
degree sequence and has been studied extensively~\cite{chung2002average,chung2002connected,chung2006complex}.
We completely characterize the tractability of the model-checking problem
on Chung--Lu graphs based on the power-law exponent $\scale$ (Corollary~\ref{col:chunglumc}). 
Previous tractability results were obtained for a non-standard
variant of the model and did not cover the case $\scale = 3$.

Let $W = (w_1,\dots,w_n)$ be a sequence of positive weights
with $\max_{i=1}^n w_i^2 \le \sum_{k=1}^n w_k$.
The Chung--Lu random graph to $W$ is a random graph $\Gn$ with
vertices $v_1,\dots,v_n$ such that
each edge $v_iv_j$ with $1 \le i,j \le n$ occurs in $\Gn$ independently 
with probability $w_i w_j / \sum_{k=1}^n w_k$.

Often, the weights are chosen according to a power-law distribution.
Let $\scale > 2$.
We say $\Gnn$ is the \emph{Chung--Lu random graph model with exponent $\scale$}
if for every $n \in \N$, $\Gn$ is the Chung--Lu random graph to $W_n = \{w_1,\dots,w_n\}$
with $w_i = c\cdot (n/i)^{1/(\scale-1)}$ where $c$ is a constant depending on $\scale$~\cite{chung2002average}.
This model nicely matches our concept of $\scale$-\domination.


\begin{lemma}\label{lem:chungludominated}
    Let $\scale > 2$.
    The Chung--Lu random graph model with exponent $\scale$ is $\scale$-\dominated.
\end{lemma}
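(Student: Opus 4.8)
The plan is to verify Definition~\ref{def:wellBehavedPowerLaw} directly, exploiting that all edges of a Chung--Lu graph are mutually independent. First I would fix $n$ and take the ordering $v_1,\dots,v_n$ of $V(\Gn)$ to be the natural one in which the weights $w_i = c\,(n/i)^{1/(\scale-1)}$ are non-increasing; this is exactly the ordering for which the power-law bound in Definition~\ref{def:wellBehavedPowerLaw} is phrased. By independence of edges, for every $E \subseteq \binom{\{v_1,\dots,v_n\}}{2}$ we have $\P\bigl[E \subseteq E(\Gn)\bigr] = \prod_{v_iv_j \in E}\frac{w_iw_j}{\sum_{k=1}^n w_k}$. (For $2<\scale<3$ the weights as written may violate the normalization constraint $\max_i w_i^2 \le \sum_k w_k$, so as usual one works with the capped edge probabilities $\min\{1,\,w_iw_j/\sum_k w_k\}$; capping only decreases probabilities, so the identity becomes an inequality in the direction we want and nothing else changes.)

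Next I would substitute the weights and estimate a single factor. Writing $\gamma = 1/(\scale-1)$, each factor equals
$$
\frac{w_iw_j}{\sum_{k=1}^n w_k} = \frac{c\,(n/i)^\gamma (n/j)^\gamma}{\sum_{k=1}^n (n/k)^\gamma}.
$$
The denominator is at least $n$: since $\scale>2$ we have $\gamma>0$, and $n/k \ge 1$ for every $k \le n$, so each of the $n$ summands is at least $1$. Hence each factor is at most $c\cdot\frac{(n/i)^\gamma (n/j)^\gamma}{n}$, and multiplying over the $|E|$ edges yields
$$
\P\bigl[E \subseteq E(\Gn)\bigr] \le c^{|E|}\prod_{v_iv_j \in E}\frac{(n/i)^{1/(\scale-1)}(n/j)^{1/(\scale-1)}}{n}.
$$

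Finally I would note that $c^{|E|} \le 2^{O(|E|)} \le 2^{O(|E|^2)}$ (if $c<1$ the bound is even trivial), and that $2^{O(|E|^2)}$ is dominated by the multiplicative slack allowed in each of the three cases of Definition~\ref{def:wellBehavedPowerLaw}: it matches the case $\scale>3$ outright, and for $n$ large it is at most $\log(n)^{O(|E|^2)}$ and at most $O(n^\varepsilon)^{|E|^2}$ for every $\varepsilon>0$, covering the cases $\scale=3$ and $\scale<3$. This establishes that the Chung--Lu model with exponent $\scale$ is $\scale$-\dominated.

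I do not expect a genuine obstacle here: the claim collapses to a one-line estimate once edge independence is invoked. The only points that warrant a sentence of care are the choice of the decreasing-weight ordering, the capping of edge probabilities for $2<\scale<3$ (needed only so the model is well defined, and harmless for an upper bound), and correctly absorbing the model's normalizing constant $c$ into the $2^{O(|E|^2)}$ factor.
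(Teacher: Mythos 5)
Your proof is correct and follows essentially the same route as the paper's: invoke edge-independence to factor the probability, bound each factor by $c\cdot(n/i)^{1/(\scale-1)}(n/j)^{1/(\scale-1)}/n$ using $\sum_k w_k \ge cn$, and absorb $c^{|E|}$ into the allowed multiplicative slack. The only (harmless) variation is that you use a one-sided bound $\sum_k w_k \ge cn$ instead of the paper's $\sum_k w_k = \Theta(n)$, and you explicitly note the probability-capping issue for $2<\scale<3$, a technicality the paper's proof silently elides.
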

\begin{proof}
    One can easily verify that $\sum_{k=1}^n w_k = \Theta(n)$ for all $\alpha > 2$.
    Thus, the probability of an edge $v_iv_j$ in a Chung--Lu graph 
    with exponent $\scale > 2$ of size $n$ is
    $$
        w_i w_j / \sum_{k=1}^n w_k = 
        \frac{(n/i)^{1/(\scale-1)}(n/j)^{1/(\scale-1)}}{\Theta(n)}.
    $$
    All edges are independent of each other,
    therefore the probability that an edge set $E$ is contained
    is the product of the probabilities of the individual edges.
    This yields
    $$
    \P\bigl[E \subseteq E(\Gn)\bigr] \le
    2^{O(|E|)}
    \prod_{v_iv_j \in E} 
    \frac{(n/i)^{1/(\scale-1)}(n/j)^{1/(\scale-1)}}{n}.
    $$

\kern-10pt
\end{proof}

We can combine Lemma~\ref{lem:chungludominated},
Theorem~\ref{thm:MCpowerlaw2} and \cite{averagehardness}
to characterize the tractability of the labeled model-checking
problem on Chung--Lu graphs.

\begin{corollary}\label{col:chunglumc}
        Let $\cal G$ be the Chung--Lu random graph model with exponent $\scale$.
        There exists a function $f$ such that one can solve \FOMCGL
        on $\cal G$ in expected time
        \begin{itemize}
        \item $f(|\varphi|)n$           \tabto{5cm} if $\scale > 3$,
        \item $\log(n)^{f(|\varphi|)}n$ \tabto{5cm} if $\scale = 3$.
        \end{itemize}
        Furthermore, if $ 2.5 \le \scale < 3$, $\scale \in \Q$ then one cannot solve 
        \FOMCGL on $\cal G$ in expected {\rm FPT} time unless $\rm AW[*] \subseteq FPT/poly$.
\end{corollary}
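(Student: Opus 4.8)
The plan is to assemble the statement from results already in place. For the positive direction, I would invoke Lemma~\ref{lem:chungludominated}: the Chung--Lu model $\cal G$ with exponent $\scale$ is $\scale$-\dominated, and by the closure observation following Definition~\ref{def:wellBehavedPowerLaw} it is moreover $\scale'$-\dominated for every $2 < \scale' < \scale$. If $\scale > 3$, choose any rational $\scale'$ with $3 < \scale' < \scale$; then $\cal G$ is $\scale'$-\dominated with $\scale' > 3$, so the first case of Theorem~\ref{thm:MCpowerlaw2} provides an algorithm deciding \FOMCGL on $\cal G$ in expected time $f(|\varphi|)n$. If $\scale = 3$, then $\cal G$ is $3$-\dominated and the second case of Theorem~\ref{thm:MCpowerlaw2} gives expected time $\log(n)^{f(|\varphi|)}n$. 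This disposes of both tractability claims with no further work.

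For the negative direction, fix $\scale \in \Q$ with $2.5 \le \scale < 3$. Here I would simply cite the average-case lower bound of~\cite{averagehardness}, which shows that \FOMCGL on the Chung--Lu model with such an exponent cannot be decided in expected FPT time unless $\rm AW[*] \subseteq FPT/poly$. As emphasized in the discussion after Definition~\ref{def:expectedFPT}, this reduction genuinely needs the adversarial labeling $L$ permitted in the definition of expected time; without labels the question is open. If~\cite{averagehardness} is phrased only for the open interval $2.5 < \scale < 3$, the endpoint $\scale = 2.5$ follows from the same reduction, since its correctness uses only that the weight sequence $w_i = \Theta((n/i)^{1/(\scale-1)})$ supplies enough heavy vertices, a condition that is monotone in $\scale$.

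There is no real obstacle here; the only point worth a sentence is why the corollary stops short of claiming tractability for $2 < \scale < 3$. The third case of Theorem~\ref{thm:MCpowerlaw2} requires that the model be $\scale'$-\dominated for \emph{every} $2 < \scale' < 3$ at once, whereas the fixed-exponent Chung--Lu model is $\scale'$-\dominated only for $\scale' \le \scale$. So that theorem says nothing in this range, and the lower bound of~\cite{averagehardness} shows the gap is essential. The proof is therefore a short bookkeeping argument: route Lemma~\ref{lem:chungludominated} into the appropriate case of Theorem~\ref{thm:MCpowerlaw2} for $\scale \ge 3$, and quote~\cite{averagehardness} for $2.5 \le \scale < 3$.
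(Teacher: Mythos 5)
Your proposal is correct and follows the same route as the paper: the paper's own treatment of this corollary is exactly the one-sentence combination of Lemma~\ref{lem:chungludominated}, Theorem~\ref{thm:MCpowerlaw2}, and \cite{averagehardness}, and you reconstruct that cleanly, including the (paper-matching) remark on why the third case of Theorem~\ref{thm:MCpowerlaw2} is unavailable here.

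One small inefficiency: for $\scale > 3$ you detour through a rational $\scale'$ with $3 < \scale' < \scale$, but this is not needed. The first case of Theorem~\ref{thm:MCpowerlaw2} applies whenever the model is $\scale$-\dominated for \emph{some} $\scale > 3$, with no rationality hypothesis on the domination exponent; Lemma~\ref{lem:chungludominated} directly gives $\scale$-\domination for the given $\scale > 3$, so you can cite the theorem without downgrading. Your remark about the endpoint $\scale = 2.5$ is a fair observation — the paper's recap of \cite{averagehardness} states the open interval $2.5 < \alpha < 3$ while the corollary writes $2.5 \le \scale < 3$ — but this is a discrepancy within the source material rather than a gap you introduced, and flagging how you would patch it is reasonable.
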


Previously, 
the model-checking problem has been known to be tractable on Chung--Lu graphs with exponent $\alpha > 3$,
and hard on Chung--Lu graphs with exponent $2.5 \le \scale < 3$.
The important case $\scale = 3$ was open.
Furthermore, the previous tractability result assumes
the maximum expected degree of a Chung--Lu graph with exponent $\alpha$
to be at most $O(n^{1/\alpha})$, while in the
canonical definition of Chung--Lu graphs (stated above) it is $\Theta(n^{1/(\alpha-1)})$.
Our results hold for the canonical definition.
The missing case $\scale < 2.5$ is still open.
We believe it can be proven to be hard with similar techniques
as for $2.5 \le \scale < 3$.


The second order average degree $\bar d$ of a Chung--Lu graph with weights $w_1,\dots,w_n$
is defined as $\sum_{i=1}^n w_i^2 / \sum_{k=1}^n w_k$.
After substituting the maximum degree $m = \Omega(n^{1/(\scale-1)})$
in~\cite{chung2002average} one can see for the Chung--Lu graph with exponent $\scale$ that
$$
\bar d 
    = \begin{cases}
        \Omega(1) &\quad \scale > 3 \\
        \Omega(\log(n)) &\quad \scale = 3 \\
        \Omega(n^{(3-\scale)/(\scale-1)}) &\quad \scale < 3.
    \end{cases}
$$
We can further bound the run time of the model-checking problem
in terms of $\bar d$.
\begin{lemma}\label{lem:chungludegree}
    There exist a function $f$ such that one can solve \FOMCGL
    on Chung--Lu graphs with exponent $\scale$
    in expected time
    $(c_\scale \bar d)^{f(|\varphi|)} n$,
    where $\bar d$ is the second order average degree and $c_\scale$ is a constant
    depending on $\scale$.
\end{lemma}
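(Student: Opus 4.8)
The plan is to derive this from two results already established: Lemma~\ref{lem:chungludominated} says the Chung--Lu model with exponent $\scale$ is $\scale$-\dominated, and Theorem~\ref{thm:MCpowerlaw1} then produces a function $f'$ (independent of $\scale$) such that \FOMCGL can be solved on it in expected time $\momn^{f'(|\varphi|)} n$. It remains to replace the term $\momn$ by a bounded power of $c_\scale\bar d$, and for this I would use the lower bounds on the second order average degree recorded immediately above the lemma: $\bar d = \Omega(1)$ for $\scale > 3$, $\bar d = \Omega(\log n)$ for $\scale = 3$, and $\bar d = \Omega(n^{(3-\scale)/(\scale-1)})$ for $2 < \scale < 3$.

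Unravelling the $O$-notation in the definition of $\momn$, there is a constant $C_\scale$ with $\momn \le C_\scale$ if $\scale > 3$, $\momn \le C_\scale\log(n)^{C_\scale}$ if $\scale = 3$, and $\momn \le C_\scale n^{3-\scale}$ if $\scale < 3$. I would bound each of these by a fixed power of $c_\scale\bar d$. For $\scale > 3$ use $\bar d = \Omega(1)$: once $c_\scale$ is large enough, $c_\scale\bar d \ge \max(2,C_\scale)$, so $\momn \le (c_\scale\bar d)^{O(1)}$. For $\scale = 3$ use $\log n = O(\bar d)$: then $\momn \le C_\scale\log(n)^{C_\scale} \le (c_\scale\bar d)^{O(1)}$ after enlarging $c_\scale$. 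For $2 < \scale < 3$ the bound $\bar d = \Omega(n^{(3-\scale)/(\scale-1)})$ gives $n^{3-\scale} = \bigl(n^{(3-\scale)/(\scale-1)}\bigr)^{\scale-1} \le (c_\scale\bar d)^{\scale-1} \le (c_\scale\bar d)^{2}$, the last step using $\scale-1<2$ and $c_\scale\bar d\ge 1$; hence $\momn \le C_\scale n^{3-\scale} \le (c_\scale\bar d)^{2}$ after a further enlargement of $c_\scale$. In every case there are a constant $d_0$ and a suitable $c_\scale$ (both depending only on $\scale$) with $\momn \le (c_\scale\bar d)^{d_0}$, so $\momn^{f'(|\varphi|)} n \le (c_\scale\bar d)^{d_0 f'(|\varphi|)}n$, and setting $f(|\varphi|) = d_0 f'(|\varphi|)$ yields the claim.

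The only point needing a small observation is the regime $2 < \scale < 3$, where $\momn$ grows like $n^{3-\scale}$ while $\bar d$ grows only like the \emph{smaller} quantity $n^{(3-\scale)/(\scale-1)}$; the inequality $\scale - 1 < 2$ (equivalently $2/(\scale-1) > 1$) is exactly what makes $\bar d^{\,2}$ grow at least as fast as $n^{3-\scale}$, so a fixed power of $c_\scale\bar d$ still dominates $\momn$. Everything else is routine bookkeeping: choosing $c_\scale$ large enough that $c_\scale\bar d \ge 1$ for all $n$ — possible since for each fixed $\scale$ the quantity $\bar d$ is bounded away from $0$ uniformly in $n$ — and that the leading constants $C_\scale$ get absorbed into the power.
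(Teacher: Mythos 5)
Your proposal is correct and follows essentially the same approach as the paper: apply Theorem~\ref{thm:MCpowerlaw1} to the (by Lemma~\ref{lem:chungludominated}) $\scale$-\dominated Chung--Lu model, then trade $\momn$ for a bounded power of $\bar d$ using the case-by-case estimate of $\bar d$ and the key inequality $\scale-1<2$ in the regime $2<\scale<3$. The paper organizes the bookkeeping by writing out the run time and $\bar d$ in parallel case form and then adjusting the exponent to $\max(2,\mu_3)f'(|\varphi|)$, whereas you first bound $\momn\le(c_\scale\bar d)^{d_0}$ once and then raise to the power $f'(|\varphi|)$; these are the same argument, differently packaged.
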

\begin{proof}
    According to Theorem \ref{thm:MCpowerlaw1} one can solve \FOMCGL
    on the Chung--Lu graph with exponent $\scale$ in expected time
    $$
        \begin{cases}
            \mu_\scale^{f'(|\varphi|)}n &\quad \scale > 3 \\
            \log(n)^{\mu_\scale f'(|\varphi|)}n &\quad \scale = 3 \\
            \mu_\scale n^{(3-\scale)f'(|\varphi|)}n &\quad \scale < 3
        \end{cases}
    $$

    where $f'$ is some function and $\mu_\scale$ is a constant depending on $\scale$.
    On the other hand, we have
    $$
    \bar d 
        = \begin{cases}
            \lambda_\scale &\quad \scale > 3 \\
            \lambda_\scale \log(n) &\quad \scale = 3 \\
            \lambda_\scale n^{(3-\scale)/(\scale-1)} &\quad \scale < 3
        \end{cases}
    $$
    for another constant $\lambda_\scale$ depending on $\scale$.
    Thus, one can solve \FOMCGL in expected time
    $$
    \bigl((\mu_\scale/\lambda_\scale) \bar d \bigr)^{\max(2,\mu_3) f'(|\varphi|)}n.
    $$
    The result follows by setting $c_\scale = \mu_\scale/\lambda_\scale$
    and $f(|\varphi|) = \max(2,\mu_3) f'(|\varphi|)$.
\end{proof}

\subsection{Configuration Model}

The configuration model has been proposed to
generate random multigraphs whose degrees are fixed~\cite{molloy1995critical,MR98,BENDER1978296}.
We solve the model-checking problem on configuration graphs with
a power-law exponent $3$ (Corollary~\ref{col:configurationmc}).
Previously, this was only known for those configuration graphs
with an exponent strictly larger than $3$~\cite{StrucSpars}.

Let $W = (w_1,\dots,w_n)$ be a degree sequence of a multigraph (i.e., a sequence of positive integers whose sum is even).
The configuration model constructs a random multigraph with $n$ vertices
whose degree sequence is exactly $W$ as follows~\cite{molloy1995critical}:
Let $v_1,\dots,v_n$ be the vertices of the graph.
We form a set $L$ of $w_i$ many distinct copies of $v_i$ for $1 \le i \le n$.
We call the copies of a node $v_i$ in $L$ the \emph{stubs} of $v_i$.
We then construct a random perfect matching on $L$.
This describes a multigraph on $v_1,\dots,v_n$ where the number of edges between two 
vertices equals the number of edges between their stubs.
The degree sequence of this multigraph is exactly~$W$.
Since we only consider simple graphs in this work,
we turn to the so called \emph{erased}~\cite{hofstad1} model.
Here self-loops are removed and multi-edges are replaced with single edges.
As self-loops can be expressed by labels, this is no real limitation for the model-checking problem.
Let $\Gn$ be the probability distribution over simple graphs with $n$ vertices
defined by this process.
We say $\Gn$ is the \emph{random configuration graph} corresponding to $W$.

This defines a random graph with a fixed number of vertices.
In order to define a random graph model we need to define configuration graphs
of arbitrary size.
Let $(w_i(n))_{i \in \N}$ be a sequence of functions such that all $n \in \N$,
$(w_1(n),\dots,w_n(n))$ is a degree sequence of a multigraph.
For $n \in \N$ let $\Gn$ be the random configuration graph corresponding to the degree
sequence $(w_1(n),\dots,w_n(n))$.
We then say $\Gnn$ is the \emph{random configuration graph model} corresponding
to $(w_i(n))_{i \in \N}$.
For technical reasons,
our definition differs slightly from the original one by Molloy and Reed~\cite{molloy1995critical}.

\begin{lemma}
    Let $\Gnn$ be a random configuration graph model with
    corresponding sequence $(w_i(n))_{i \in \N}$.
    Assume there exists a function $p(n)$ with $p(n) = O(n^\varepsilon)$ for all $\varepsilon > 0$
    such that for all $i,n \in \N$, $w_i(n)  \le p(n) \sqrt{n/i}$
    and $\sum_{k=1}^n w_k(n) \ge n/p(n)$.
    Then $\Gnn$ is $3$-\dominated.
\end{lemma}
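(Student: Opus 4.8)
The plan is to produce, for each $n$, the vertex ordering required by \Cref{def:wellBehavedPowerLaw} and to bound $\P[E\subseteq E(\Gn)]$ for an arbitrary set $E$ of edges on $\{v_1,\dots,v_n\}$, writing $m=|E|$. For the ordering I would use the one that comes with the hypothesis: index the vertices so that $v_i$ has target degree $w_i:=w_i(n)\le p(n)\sqrt{n/i}$, and set $W:=\sum_{k=1}^{n}w_k$ (so $W\ge n/p(n)$) for the number of stubs. In the erased configuration model an edge $e=v_av_b$ belongs to $E(\Gn)$ exactly when some stub of $v_a$ is matched to some stub of $v_b$ in the uniform random perfect matching on the $W$ stubs; in particular $\P[E\subseteq E(\Gn)]=0$ unless $m\le W/2$. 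Assuming $m\le W/2$, a union bound over selections of, for each $e_t=v_{a_t}v_{b_t}\in E$, one stub $\sigma_t$ of $v_{a_t}$ and one stub $\tau_t$ of $v_{b_t}$ gives
$$
\P[E\subseteq E(\Gn)]\ \le\ \sum_{(\sigma_1,\tau_1,\dots,\sigma_m,\tau_m)}\P\bigl[\sigma_t\ \text{matched to}\ \tau_t\ \text{for all}\ t\bigr],
$$
where selections whose $2m$ chosen stubs are not all distinct contribute $0$.

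Two estimates then finish the argument. First, for a fixed family of $m$ pairwise disjoint stub pairs, the fraction of perfect matchings on the $W$ stubs containing all of them equals $\prod_{j=0}^{m-1}(W-2j-1)^{-1}$; an elementary Stirling-type estimate, valid since $W\ge 2m$, bounds this by $(c_0/W)^{m}\le(c_0\,p(n)/n)^{m}$ for an absolute constant $c_0$. Second, the number of selections is at most $\prod_{v_iv_j\in E}w_iw_j=\prod_i w_i^{\,d_E(i)}$, where $d_E(i)$ denotes the number of edges of $E$ incident to $v_i$; substituting $w_i\le p(n)\sqrt{n/i}$ and using $\sum_i d_E(i)=2m$ yields
$$
\prod_i w_i^{\,d_E(i)}\ \le\ p(n)^{2m}\prod_{v_iv_j\in E}\sqrt{(n/i)(n/j)}\ =\ p(n)^{2m}\,n^{m}\prod_{v_iv_j\in E}\frac{1}{\sqrt{ij}}.
$$

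Multiplying the two bounds, the powers of $n$ cancel and I obtain
$$
\P[E\subseteq E(\Gn)]\ \le\ \bigl(c_0\,p(n)^{3}\bigr)^{m}\prod_{v_iv_j\in E}\frac{1}{\sqrt{ij}}.
$$
For $\scale=3$ one has $\prod_{v_iv_j\in E}\tfrac{(n/i)^{1/(\scale-1)}(n/j)^{1/(\scale-1)}}{n}=\prod_{v_iv_j\in E}\tfrac{1}{\sqrt{ij}}$, and, since $m\le m^{2}$ and $p(n)=O(n^{\varepsilon})$ for every $\varepsilon>0$, the leading factor $(c_0\,p(n)^{3})^{m}$ stays within the multiplicative tolerance permitted in \Cref{def:wellBehavedPowerLaw}; this is exactly the inequality certifying that $\Gnn$ is $3$-\dominated. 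The main obstacle, and the reason the argument is longer than the Chung--Lu case, is that edges of the configuration model are \emph{not} independent, so per-edge probabilities cannot simply be multiplied; the remedy is the union bound over stub selections above, driven by the exact probability that a uniform perfect matching extends a prescribed partial matching. The only other point needing care is the regime where $m$ is of the same order as $W$, and this is already subsumed by the uniform estimate $\prod_{j=0}^{m-1}(W-2j-1)^{-1}\le(c_0/W)^{m}$ — which is precisely why I would state that bound with a constant rather than via the cruder $(W-2m)^{-m}$, which degenerates when $W$ is close to $2m$.
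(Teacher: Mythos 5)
You take a genuinely different route from the paper. The paper's proof invokes the \emph{adaptive pairing scheme} of van der Hofstad (\cite[Lemma 7.6]{hofstad1}): it reveals the stub matching sequentially, bounds each edge probability conditionally by $w_iw_j/\bigl((\sum_k w_k)-1-2|E|d\bigr)$ with $d=\max_i w_i$, and therefore needs an ad-hoc case split at $|E|\le n^{1/4}$ to keep that denominator bounded below by $\Omega(n/p(n))$ (for larger $|E|$ the trivial bound $\P\le 1$ is used instead). Your argument is a direct union bound over all $\prod_i w_i^{d_E(i)}$ choices of one stub pair per edge of $E$, together with the exact count that a fixed family of $m$ disjoint stub pairs is extended by a $\prod_{j=0}^{m-1}(W-2j-1)^{-1}$ fraction of the perfect matchings, bounded uniformly by $(c_0/W)^m$ for all $m\le W/2$. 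Both routes arrive at $\P[E\subseteq E(\Gn)]\le(O(p(n)^3))^{|E|}\prod_{v_iv_j\in E}(ij)^{-1/2}$, but your uniform Stirling estimate removes the need for the paper's $n^{1/4}$ threshold and handles the degenerate regime $m\approx W/2$ without a separate case. One caveat, which you share with the paper rather than introduce yourself: under the hypothesis $p(n)=O(n^\varepsilon)$ for all $\varepsilon>0$ (as opposed to $p(n)=\log(n)^{O(1)}$), the factor $(O(p(n)^3))^{|E|}$ only certifies the $\scale<3$ row of \Cref{def:wellBehavedPowerLaw} rather than the $\log(n)^{O(|E|^2)}$ factor required literally for $\scale=3$; indeed the paper itself concludes by showing $O(n^\varepsilon)^{|E|^2}$, not $\log(n)^{O(|E|^2)}$, so your phrasing that this ``stays within the multiplicative tolerance'' for $3$-\domination glosses over the same step the authors do. If you wanted to close this, you would either strengthen the hypothesis to $p(n)=\log(n)^{O(1)}$ or weaken the conclusion to $\scale$-\domination for every $2<\scale<3$, either of which still suffices for the intended \Cref{col:configurationmc}.
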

\begin{proof}
    We consider the configuration model with weight sequence $(w_1(n),\dots,w_n(n))$
    and vertices $v_1,\dots,v_n$.
    Let $E \subseteq {\{v_1,\dots,v_n\} \choose 2}$.
    By Definition~\ref{def:wellBehavedPowerLaw},
    it suffices to show that for every $\varepsilon > 0$
    $$
    \P\bigl[E \subseteq E(\Gn) \bigr] \le \\
    O(n^\varepsilon)^{|E|^2}
    \prod_{v_iv_j \in E} 
    \frac{1}{\sqrt{ij}}.
    $$
    We can assume $|E| \le n^{1/4}$,
    since for $|E| > n^{1/4}$ and every $\varepsilon > 0$ trivially holds
    $$
    \P[ E \subseteq E(\Gn)] \le 1 = O(n^\varepsilon)^{|E|^2} \prod_{v_iv_j \in \E} \frac{1}{n}.
    $$

    As described in~\cite[Lemma 7.6]{hofstad1}, the
    perfect matching of the stubs in the configuration model
    can also be generated by a so-called \emph{adaptive pairing scheme},
    where unmatched stubs are taken one-by-one and matched uniformly to
    the remaining unmatched stubs.
    Assume at most $l$ stubs have been matched already in such a scheme.
    We fix $i,j \in \N$ with $i,j \le n$ and $i \neq j$.
    The probability that a fixed stub of $v_i$ is matched with
    some stub of $v_j$ is at most $w_j(n) / \bigl((\sum_{k=1}^n w_k(n)) - 1 - l\bigr)$.
    By applying the union bound to a pairing scheme which matches the $w_i(n)$ many stubs of $v_i$
    we obtain
    $$
    \P[ v_iv_j \in E(\Gn)] \le \frac{w_i(n)w_j(n)}{(\sum_{k=1}^n w_k(n)) - 1 - l}.
    $$
    Let $d$ be the maximum of $w_1(n),\dots,w_n(n)$.
    We consider an adaptive pairing scheme which iteratively matches the stubs of the vertices in $E$
    and obtain
    $$
    \P[ E \subseteq E(\Gn)] \le \prod_{v_iv_j \in \E} \frac{w_i(n)w_j(n)}{(\sum_{k=1}^n w_k(n)) - 1 - 2|E|d}.
    $$
    Since $d \le p(n)\sqrt{n}$, $|E| \le n^{1/4}$ and 
    $\sum_{k=1}^n w_k(n) \ge n/p(n)$
    we can further bound
    $$
    \frac{w_i(n)w_j(n)}{(\sum_{k=1}^n w_k(n)) - 1 - 2|E|d} 
    = O(p(n)) \frac{w_i(n)w_j(n)}{n}
    = O(p(n)^3) \frac{1}{\sqrt{ij}}.
    $$
    The final result follows from the fact that $p(n) = O(n^\varepsilon)$ for all $\varepsilon > 0$.
\end{proof}

Now the previous lemma together with Theorem~\ref{thm:MCpowerlaw2}
yields an efficient model-checking algorithm for configuration graphs.

\begin{corollary}\label{col:configurationmc}
    Let $\Gnn$ be a random configuration graph model with
    corresponding sequence $(w_i(n))_{i \in \N}$.
    Assume there exists a function $p(n)$ with $p(n) = O(n^\varepsilon)$ for all $\varepsilon > 0$
    such that for all $i,n \in \N$, $w_i(n)  \le p(n) \sqrt{n/i}$
    and $\sum_{k=1}^n w_k(n) \ge n/p(n)$.

    Then there exists a function $f$ such that one can decide \FOMCGL on $\Gnn$ in expected time
    $f(|\varphi|,\varepsilon)n^{1+\varepsilon}$ for every $\varepsilon > 0$.
\end{corollary}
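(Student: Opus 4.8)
The plan is to derive the statement directly from the unnamed lemma immediately preceding it together with \Cref{thm:MCpowerlaw2}. First I would observe that the hypotheses placed on the weight sequence $(w_i(n))_{i\in\N}$ — the existence of a function $p$ with $p(n)=O(n^\varepsilon)$ for every $\varepsilon>0$ such that $w_i(n)\le p(n)\sqrt{n/i}$ for all $i,n$ and $\sum_{k=1}^n w_k(n)\ge n/p(n)$ — are verbatim the hypotheses of that lemma. Invoking it, the random configuration graph model $\Gnn$ is $3$-\dominated. (The work in that lemma is the part that actually uses something about configuration graphs, namely the adaptive-pairing-scheme description of the random matching of stubs to bound $\P[E\subseteq E(\Gn)]$; everything after it is black-box.)

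Next I would feed $3$-\domination into the model-checking machinery. By the second bullet of \Cref{thm:MCpowerlaw2}, a $3$-\dominated random graph model admits an algorithm deciding \FOMCGL in expected time $\log(n)^{g(|\varphi|)}n$ for some function $g$. It then only remains to rewrite this quasilinear bound in the form $f(|\varphi|,\varepsilon)n^{1+\varepsilon}$ demanded by the corollary. For this I would argue exactly as in the proof of the third case of \Cref{thm:MCpowerlaw2}: for each fixed $\varepsilon>0$ and each value of $|\varphi|$ there is a threshold $n_0(|\varphi|,\varepsilon)$ beyond which $\log(n)^{g(|\varphi|)}\le n^{\varepsilon}$, and absorbing the finitely many smaller values of $n$ into a multiplicative constant $c(|\varphi|,\varepsilon)$ gives an expected run time at most $c(|\varphi|,\varepsilon)\,n^{1+\varepsilon}$ for all $n$; setting $f(|\varphi|,\varepsilon)=c(|\varphi|,\varepsilon)$ finishes the argument.

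There is essentially no obstacle: the mathematical content is entirely in the preceding lemma and in \Cref{thm:MCpowerlaw2}, and the corollary is just their composition. The only point requiring a word of care is that the constant hidden in $\log(n)^{g(|\varphi|)}=O(n^\varepsilon)$ depends on $\varepsilon$, so the conversion must yield a prefactor depending on both $|\varphi|$ and $\varepsilon$ rather than on $|\varphi|$ alone — which is precisely what the statement permits, so nothing further is needed.
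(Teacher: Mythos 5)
Your proof is correct and essentially the paper's intended argument: compose the preceding lemma (configuration graphs satisfying the hypotheses are $3$-\dominated) with Theorem~\ref{thm:MCpowerlaw2}, which the paper invokes in a single sentence. Your explicit route — second bullet of Theorem~\ref{thm:MCpowerlaw2} giving $\log(n)^{g(|\varphi|)}n$, then absorbing $\log(n)^{g(|\varphi|)}$ into $c(|\varphi|,\varepsilon)n^\varepsilon$ — is sound.

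One small point worth noting, since you already observe that the conclusion is weaker than what the second bullet yields: the likelier reading of the paper's one-line proof is the \emph{third} bullet of Theorem~\ref{thm:MCpowerlaw2}, applied after observing that $3$-\domination implies $\scale$-\domination for every $2<\scale<3$ (the monotonicity remark after Definition~\ref{def:wellBehavedPowerLaw}). This better matches the corollary's bound $f(|\varphi|,\varepsilon)n^{1+\varepsilon}$, and is also the conclusion that the lemma's proof literally establishes — its bound has an $O(n^\varepsilon)^{|E|^2}$ factor rather than the $\log(n)^{O(|E|^2)}$ factor that $\scale=3$ strictly requires in Definition~\ref{def:wellBehavedPowerLaw}, so the third-bullet route does not even need the lemma's $3$-\domination claim taken at full strength. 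Both routes are valid and terminate at the same statement; yours uses the lemma's stated conclusion as a black box, which is perfectly legitimate.
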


\subsection{\ErRe Model}
One of the earliest and most intensively studied random graphs is the
\ErRe model~\cite{bollobas_2001,erdos}. We say
$G(n,p(n))$ is a random graph with $n$
vertices where each pair of vertices is connected independently uniformly at
random with probability~$p(n)$. 
Many properties of \ErReGrs are well
studied, including but not limited to, threshold phenomena, the sizes of
components, diameter, and length of paths~\cite{bollobas_2001}. 
We classify sparse \ErRe graphs with respect to $\scale$-\domination.

\begin{lemma}\label{lem:ErReDomination}
    \ErReGrs $G(n,p(n))$ are
    \begin{itemize}
    \item $\alpha$-\dominated for every $2 < \alpha$     \tabto{7.5cm} if $p(n)=O(1/n)$,
    \item $3$-\dominated                                 \tabto{7.5cm} if $p(n)=\log(n)^{O(1)}/n$,
    \item $\alpha$-\dominated for every $2 < \alpha < 3$ \tabto{7.5cm} if $p(n)=O(n^{\varepsilon}/n)$ for every $\varepsilon > 0$.
    \end{itemize}
\end{lemma}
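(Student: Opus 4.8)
The plan is to verify the inequality of Definition~\ref{def:wellBehavedPowerLaw} directly, exploiting that in $G(n,p(n))$ the edges are mutually independent and the vertices are exchangeable, so the required ordering may be taken to be $v_1,\dots,v_n$ in any order. First I would isolate the two elementary facts that do all the work. By independence, for every $E \subseteq {\{v_1,\dots,v_n\} \choose 2}$ we have $\P[E \subseteq E(\Gn)] = p(n)^{|E|}$. And since $i,j \le n$ forces $(n/i)^{1/(\scale-1)} \ge 1$ and $(n/j)^{1/(\scale-1)} \ge 1$ for every $\scale > 2$, each Chung--Lu factor satisfies $\frac{(n/i)^{1/(\scale-1)}(n/j)^{1/(\scale-1)}}{n} \ge \frac{1}{n}$, so
$$\prod_{v_iv_j \in E} \frac{(n/i)^{1/(\scale-1)}(n/j)^{1/(\scale-1)}}{n} \ge n^{-|E|}.$$
Hence in each case it suffices to show that $(n\,p(n))^{|E|}$ is at most the multiplicative factor allowed by Definition~\ref{def:wellBehavedPowerLaw}.

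Then I would dispatch the three bullets, which become one-line estimates. If $p(n) = O(1/n)$, then $n\,p(n) \le c$ for a constant $c$, so $(n\,p(n))^{|E|} \le c^{|E|} \le \max(c,1)^{|E|^2} = 2^{O(|E|^2)}$; this is the factor permitted for every $\scale > 3$, and since for large $n$ it is also dominated by $\log(n)^{O(|E|^2)}$ and by $O(n^\varepsilon)^{|E|^2}$, the model is $\scale$-\dominated for every $\scale > 2$. If $p(n) = \log(n)^{O(1)}/n$, then $n\,p(n) \le \log(n)^{c}$, so $(n\,p(n))^{|E|} \le \log(n)^{c|E|} \le \log(n)^{c|E|^2} = \log(n)^{O(|E|^2)}$, which is exactly the factor for $\scale = 3$. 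If $p(n) = O(n^\varepsilon/n)$ for every $\varepsilon > 0$, then for each fixed $\varepsilon > 0$ we have $n\,p(n) \le c_\varepsilon n^\varepsilon$ with $c_\varepsilon \ge 1$, so $(n\,p(n))^{|E|} \le (c_\varepsilon n^\varepsilon)^{|E|} \le (c_\varepsilon n^\varepsilon)^{|E|^2} = O(n^\varepsilon)^{|E|^2}$; since this holds for every $\varepsilon$, the model is $\scale$-\dominated for every $2 < \scale < 3$.

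I do not expect a genuine obstacle here: once the two observations above are recorded, each case is immediate. The only point requiring a little care is the bookkeeping with the $O$-notation in Definition~\ref{def:wellBehavedPowerLaw} --- replacing a hidden constant by its maximum with $1$ so that a power $c^{|E|}$ may be rewritten as $c^{|E|^2}$ without changing asymptotics (using $|E| \le |E|^2$), and observing that $\log n$ and $n^\varepsilon$ eventually exceed any fixed constant so that the smaller factors subsume the larger ones where claimed; the finitely many small $n$ cause no trouble since there $\P[E \subseteq E(\Gn)] \le 1$ is dominated by the right-hand side after enlarging the constant. Unlike the configuration-model lemma, no separate argument for large $|E|$ is needed, since the bound holds for all $E$ uniformly.
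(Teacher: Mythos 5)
Your proof is correct and follows essentially the same route as the paper's: both use edge-independence to get $\P[E\subseteq E(\Gn)]=p(n)^{|E|}$, absorb the Chung--Lu product using $(n/i)^{1/(\scale-1)}\ge 1$, and reduce to bounding $(n\,p(n))^{|E|}$ by the allowed factor. The only difference is that you spell out the routine case analysis that the paper leaves implicit with ``the rest follows from Definition~\ref{def:wellBehavedPowerLaw}.''
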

\begin{proof}
    The probability of a set of edges $E$ to exist in $G(n,p(n))$ is
   $$
       \P[E \subseteq E(G(n,p(n))] = p(n)^{|E|} \le
       \bigl(np(n)\bigr)^{|E|} \prod_{v_iv_j \in E} 
       \frac{(n/i)^{1/(\scale-1)}(n/j)^{1/(\scale-1)}}{n},
   $$
   since $(n/i)^{1/(\scale-1)} \ge 1$ for all $1 \le i\le n$. 
   The rest follows from Definition~\ref{def:wellBehavedPowerLaw}.
\end{proof}

Using the previous Lemma~\ref{lem:ErReDomination} and Theorem~\ref{thm:MCpowerlaw2}
we obtain a fine grained picture over the tractability
of the model-checking problem on sparse \ErRe graphs.
\begin{corollary}\label{col:erremc}
        There exists a function $f$ such that one can solve \FOMCGL on $G(n,p(n))$ 
        in expected time
        \begin{itemize}
        \item $f(|\varphi|)n$                                                            \tabto{5cm} if $p(n)=O(1/n)$,
        \item $\log(n)^{f(|\varphi|)}n$                                                  \tabto{5cm} if $p(n)=\log(n)^{O(1)}/n$,
        \item $f(|\varphi|,\varepsilon) n^{1 + \varepsilon}$ for every $\varepsilon > 0$ \tabto{5cm} if $p(n)=O(n^{\varepsilon}/n)$ for every $\varepsilon > 0$.
        \end{itemize}
\end{corollary}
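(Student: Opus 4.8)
The plan is to obtain the statement as a direct consequence of the classification of \ErReGrs with respect to $\scale$-\domination given by Lemma~\ref{lem:ErReDomination}, together with our main algorithmic result Theorem~\ref{thm:MCpowerlaw2}. The three regimes of $p(n)$ in the corollary correspond, via Lemma~\ref{lem:ErReDomination}, precisely to the three hypotheses on $\scale$ appearing in Theorem~\ref{thm:MCpowerlaw2}, so the whole argument reduces to checking this correspondence one case at a time.

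First I would consider $p(n)=O(1/n)$. By Lemma~\ref{lem:ErReDomination}, the model $G(n,p(n))$ is then $\scale$-\dominated for every $\scale>2$, in particular for some fixed $\scale>3$; the first bullet of Theorem~\ref{thm:MCpowerlaw2} thus yields a function $f$ with which \FOMCGL is solvable in expected time $f(|\varphi|)n$. For $p(n)=\log(n)^{O(1)}/n$, Lemma~\ref{lem:ErReDomination} gives that $G(n,p(n))$ is $3$-\dominated, and the second bullet of Theorem~\ref{thm:MCpowerlaw2} provides a function $f$ realising expected run time $\log(n)^{f(|\varphi|)}n$. Finally, for $p(n)=O(n^{\varepsilon}/n)$ for every $\varepsilon>0$, Lemma~\ref{lem:ErReDomination} shows $G(n,p(n))$ is $\scale$-\dominated for every $2<\scale<3$, which is exactly the hypothesis of the third bullet of Theorem~\ref{thm:MCpowerlaw2}; this delivers a function $f$ with expected run time $f(|\varphi|,\varepsilon)n^{1+\varepsilon}$ for every $\varepsilon>0$.

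There is no real obstacle remaining at the level of the corollary: all the difficulty has been absorbed into the two cited statements. The content of Lemma~\ref{lem:ErReDomination} is just the elementary observation that edges of $G(n,p(n))$ are independent, so $\P[E\subseteq E(G(n,p(n)))]=p(n)^{|E|}$, combined with $(n/i)^{1/(\scale-1)}\ge 1$ for $i\le n$, which turns this into $p(n)^{|E|}\le (np(n))^{|E|}\prod_{v_iv_j\in E}(n/i)^{1/(\scale-1)}(n/j)^{1/(\scale-1)}/n$, after which $(np(n))^{|E|}$ is bounded by $2^{O(|E|)}$, $\log(n)^{O(|E|)}$, or $O(n^{\varepsilon})^{|E|}$ respectively and matches the case distinction of Definition~\ref{def:wellBehavedPowerLaw}. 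The content of Theorem~\ref{thm:MCpowerlaw2} is the structural-decomposition-and-kernelisation machinery developed in the earlier sections. The only mild point of care when assembling the corollary is to keep the quantifier ``for every $\varepsilon>0$'' in both the hypothesis and the conclusion of the third case, so that the ``for every $2<\scale<3$'' hypothesis of Theorem~\ref{thm:MCpowerlaw2} is genuinely available, and there is no tension with the hardness statement of Proposition~\ref{prop:blub}, which only rules out tractability for a single fixed $\scale$.
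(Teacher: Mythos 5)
Your proof is correct and follows exactly the same route as the paper: the paper's proof of Corollary~\ref{col:erremc} likewise cites Lemma~\ref{lem:ErReDomination} and Theorem~\ref{thm:MCpowerlaw2} as an immediate combination, with the three regimes of $p(n)$ mapped to the three bullets of Theorem~\ref{thm:MCpowerlaw2} precisely as you describe. Your case-by-case check and your recap of the independence argument behind Lemma~\ref{lem:ErReDomination} are consistent with the paper's intent; nothing is missing.
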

The third case has been shown previously by Grohe~\cite{grohe2001generalized}.
Furthermore, under reasonable assumptions (AW[$*$] $\not\subseteq$ FPT/poly) we know that \FOMCGL cannot be decided in expected FPT time on denser \ErRe graphs
with $p(n) = n^\delta/n$ for some $0 < \delta < 1$, $\delta \in \Q$~\cite{averagehardness}.

\subsection{Clustered Models}\label{sec:clusteredappendix}

$\scale$-\dominated random graphs tend to capture unclustered random graphs.
One can show that for the algorithmically tractable values of $\scale$ close to or larger than three
the expected number of triangles is subpolynomial (via union bound over all embeddings as in \cref{lem:boundLL}).
Random models with non-vanishing clustering coefficient, such as the
Kleinberg model~\cite{kleinberg2000small,kleinberg2000navigation}, the
hyperbolic random graph model~\cite{krioukov2010hyperbolic,candellero2016clustering},
or the random intersection graph model
\cite{karonski1999random,rybarczyk2011diameter}
generally have a high expected number of triangles.
This means these models are not $\scale$-\dominated for interesting values of $\scale$ close to three
(they may be for smaller $\scale$).
We shall prove a stronger statement for the random intersection graph model
which is defined as follows.

\begin{definition}[Random Intersection Graph Model, \cite{farrell2015hyperbolicity}]
Fix a positive constant~$\delta$.
Let $B$ be a random bipartite graph on parts of sizes $n$ and $\lfloor n^\delta \rfloor$ with each edge present independently with probability
$n^{-(1+\delta)/2}$.
Let $V$ (the vertices) denote the part of size $n$ and $A$ (the attributes) the part of size $\lfloor n^\delta \rfloor$.
The associated random intersection graph $G(n,\delta)$ is defined on the vertices $V$:
two vertices are connected in $G$ if they share (are in $B$ both adjacent to) at least one attribute in~$A$.
\end{definition}

It has been shown that
$(G(n,\delta))_{n \in \N}$ has \aas bounded expansion~\cite{farrell2015hyperbolicity}
if and only if $\delta > 1$.
Furthermore, if $\delta > 1$,
then one can solve \FOMCGL in expected time $f(|\varphi|)n$~\cite{farrell2015hyperbolicity}.
We now argue that 
intersection graphs nevertheless do not fit into our framework of $\scale$-\domination.

\begin{lemma}
    $(G(n,\delta))_{n \in \N}$ is not $\scale$-\dominated for all values of $\delta$ and~$\scale$.
\end{lemma}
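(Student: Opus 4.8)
The plan is to exploit the strong local clustering of random intersection graphs. Whenever several vertices all possess a common attribute they span a clique in $G(n,\delta)$, so a $k$-clique can appear \emph{all at once} — through a single shared attribute — with probability only of order $n^{\delta - k(1+\delta)/2}$, an exponent that is merely \emph{linear} in $k$. By contrast, \domination would force the probability of a $k$-clique to be at most (essentially) the product of the $\binom k2$ individual edge bounds, i.e.\ of order $n^{-\binom k2}$, an exponent \emph{quadratic} in $k$. Thus, taking $k$ large enough in terms of $\delta$, the genuine clique probability will exceed the \domination bound even after the $2^{O(|E|^2)}$, $\log(n)^{O(|E|^2)}$, or $O(n^\varepsilon)^{|E|^2}$ slack of Definition~\ref{def:wellBehavedPowerLaw}, and this works for every $\scale > 2$.

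In detail, fix $\delta > 0$ and $\scale > 2$ and assume for contradiction that $(G(n,\delta))_{n\in\N}$ is $\scale$-\dominated, so that (with absolute constants hidden in the $O(\cdot)$'s) for every $n$ there is an ordering $v_1,\dots,v_n$ of $V(G(n,\delta))$ satisfying the inequality of Definition~\ref{def:wellBehavedPowerLaw}. First I would fix an integer $k > \delta + 2$ (independent of $n$) and, for each large $n$, let $E_n$ be the edge set of the clique on the \emph{highest-indexed} $k$ vertices $v_{n-k+1},\dots,v_n$, so $|E_n| = \binom k2$ and every index appearing in $E_n$ exceeds $n/2$. On one hand, since the distribution of $G(n,\delta)$ is invariant under permutations of $V$, the probability that $E_n \subseteq E(G(n,\delta))$ equals the probability that a fixed $k$-set of vertices forms a clique, which is at least the probability $1-(1-p^k)^m$ that those $k$ vertices are all adjacent in $B$ to a common attribute, where $m = \lfloor n^\delta\rfloor$ and $p = n^{-(1+\delta)/2}$; because $k \ge 3$ forces $mp^k = \lfloor n^\delta\rfloor n^{-k(1+\delta)/2}\to 0$, for $n$ large this is at least $\tfrac14 n^{\delta - k(1+\delta)/2}$. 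On the other hand, each edge of $E_n$ contributes a factor $\frac{(n/i)^{1/(\scale-1)}(n/j)^{1/(\scale-1)}}{n} < \frac{2^{2/(\scale-1)}}{n}$ (using $i,j > n/2$), so the product in Definition~\ref{def:wellBehavedPowerLaw} over $E_n$ is at most $2^{\frac{2}{\scale-1}\binom k2}\, n^{-\binom k2}$.

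Combining the two estimates, for large $n$ the \domination hypothesis would bound $\tfrac14 n^{\delta-k(1+\delta)/2}$ above by $2^{O(\binom k2^2)}\,2^{\frac{2}{\scale-1}\binom k2}\,n^{-\binom k2}$ times $1$, $\log(n)^{O(\binom k2^2)}$, or $O(n^\varepsilon)^{\binom k2^2}$ (for every $\varepsilon>0$) according to whether $\scale > 3$, $\scale = 3$, or $\scale < 3$. Writing $g := \binom k2 + \delta - k(1+\delta)/2 = \delta + \tfrac k2(k-2-\delta)$, the choice $k > \delta+2$ gives $g > \delta > 0$, and the left-hand side is of order $n^{g - \binom k2}$. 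For $\scale \ge 3$ the residual right-hand factors are only polylogarithmic in $n$, so $\tfrac14 n^{g}$ eventually overtakes them; for $\scale < 3$ one picks $\varepsilon := g/(2\binom k2^2)$, making the right side $O(n^{g/2 - \binom k2})$, again overtaken by $\tfrac14 n^{g-\binom k2}$ for large $n$. Either way we reach a contradiction, so $(G(n,\delta))_{n\in\N}$ is not $\scale$-\dominated.

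The step to be careful about is the order of the parameter choices, and in particular that $k$ must be allowed to grow with $\delta$: triangles alone ($k = 3$) fail to yield a contradiction once $\delta \ge 3$, since then $g = \tfrac32 - \tfrac\delta2 \le 0$ and, more tellingly, the expected number of triangles — indeed of any fixed $K_k$ — is subpolynomial for $\delta > 1$. So the genuine obstacle the model poses is not a large \emph{number} of dense substructures but the \emph{dependence} among edges: one shared attribute produces a whole clique in a single stroke, making the clique probability decay like $n^{-\Theta(k)}$ rather than $n^{-\Theta(k^2)}$; only the quadratic gap $g$ between these two rates, harvested by letting $k\to\infty$, defeats the subpolynomial multiplicative slack permitted by Definition~\ref{def:wellBehavedPowerLaw}. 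Everything else is bookkeeping with those uniform constants.
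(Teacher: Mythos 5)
Your proof is correct and follows essentially the same strategy as the paper's: bound the probability of a $k$-clique on the last $k$ vertices from below via the single-shared-attribute event (exponent linear in $k$), bound it from above via the \domination inequality (exponent quadratic in $k$), and choose $k$ large enough to force a contradiction. The small variations — retaining the $n^\delta$ union-bound factor in the lower bound and picking $\varepsilon$ as a function of $g$ rather than of $|E|$ in the $\scale<3$ case — are cosmetic and do not change the argument.
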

\begin{proof}
    Assume $\delta$ and $\scale$ such that $(G(n,\delta))_{n \in \N}$ is $\scale$-\dominated.
    For a fixed $n$ let the vertices of $G(n,\delta)$ be $v_1,\dots,v_n$, ordered
    as in Definition~\ref{def:wellBehavedPowerLaw}.

    If a fixed set of $k$ vertices shares a common attribute then these vertices form a clique.
    The probability that this happens is at least $n^{-k(1+\delta)/2} \ge n^{-ck}$ for some constant $c$.
    Let $E = {\{v_{n-k},\dots,v_n\} \choose 2}$ be the set of all edges between the last
    $k$ vertices in the ordering.
    By the previous argument, 
    $$
    \P\bigl[ E \subseteq E(G(n,\delta)) \bigr] \ge n^{-ck}.
    $$
    
    By Definition~\ref{def:wellBehavedPowerLaw} there exists a  term $p(n)$ with $p(n) = O(n^\varepsilon)^{|E|^2}$ for every $\varepsilon > 0$
    such that
    $$
        \P\bigl[ E \subseteq E(G(n,\delta)) \bigr] \le
        p(n) 
        \prod_{v_iv_j \in E} 
        \frac{(n/i)^{1/(\scale-1)}(n/j)^{1/(\scale-1)}}{n}.
    $$
    By the definition of $p(n)$, there exists a monotone function $f$ such that $p(n) \le (f(1/\varepsilon)n^\varepsilon)^{|E|^2}$ for every $\varepsilon > 0$.
    By setting $\varepsilon = 1/|E|^2$, we obtain $p(n) \le f(|E|^2)^{|E|^2}n$.
    We consider only edges between the last $k$ vertices, and for $n \ge 2k$ holds
    $ (n/(n-k))^{1/(\scale-1)}(n/(n-k))^{1/(\scale-1)} \le 4 $.
    By assuming $n \ge 2k$ we obtain
    $$
        \P\bigl[ E \subseteq E(G(n,\delta)) \bigr] \le
        f(|E|^2)^{|E|^2}n
        \prod_{v_iv_j \in E} 
        \frac{4}{n}
        \le
        4^{k^2} f(k^4)^{k^4} n^{-{k\choose 2}+1}.
    $$
    Together, this yields 
    $$
    n^{-ck} \le \P\bigl[ E \subseteq E(G(n,\delta)) \bigr] \le 4^{k^2} f(k^4)^{k^4} n^{-{k\choose 2}+1}.
    $$
    We choose $k$ large enough such that $ck < {k\choose 2}-1$.
    Then the previous bound yields a contradiction for sufficiently large $n$.
\end{proof}

\section{Conclusion}

We define $\scale$-\dominated random graphs which generalize many unclustered random graphs models.
We provide a structural decomposition of neighborhoods of these graphs
and use it to obtain a meta-algorithm for deciding first-order properties
in the the preferential attachment-,
\ErRe-, Chung--Lu- and configuration random graph model.

There are various factors to consider when evaluating the practical implications of this result.
The degree distribution of most real world networks is similar to a power-law
distribution with exponent between two and three~\cite{clauset2009power},
but our algorithm is only fast for exponents at least three.
This leaves many real world networks where our algorithm is slow.
However, it has been shown that the model-checking problem (with labels)
becomes hard on these graphs if we assume independently distributed edges~\cite{averagehardness}.

So far, we do not know whether the model-checking problem
is hard or tractable on clustered random graphs.
If a random graph model is $3$-\dominated
then one can show that the expected number of triangles is polylogarithmic (via union bound of all possible embeddings of a triangle).
Therefore, random models with clustering, such as the
Kleinberg model~\cite{kleinberg2000small}, the
hyperbolic random graph model~\cite{krioukov2010hyperbolic,candellero2016clustering},
or the random intersection graph model
\cite{karonski1999random},
which have a high number of triangles 
currently do not fit into our framework (see \Cref{sec:clusteredappendix} for a proof that random intersection graphs are not $\scale$-\dominated for any $\scale$).
This is unfortunate, since clustering is a key aspect of real networks~\cite{watts1998collective}.
In the future, we hope to extend our results to clustered
random graph models.
We observe that some clustered random graph models
can be expressed as \emph{first-order transductions} of $\scale$-\dominated random graph models.
For example the random intersection graph model
is a transduction of a sparse \ErRe graph.
We believe this connection can be used to transfer tractability results
to clustered random graphs.
If we can efficiently compute for a clustered random graph model $\cal G$ a
pre-image of a transduction
that is distributed like an $\scale$-\dominated random graph
then we can efficiently solve \FOMCGL on~$\cal G$.
The same idea is currently being considered for solving the model
checking problem for transductions of sparse graph classes (e.g.\ structurally bounded expansion classes)~\cite{DBLP:conf/lics/GajarskyHOLR16}.

In our algorithm,
we use Gaifman's locality theorem to reduce our problem to $r$-neighbor\-hoods of the input graph.
In this construction the value of $r$ can be exponential in the length of the formula~\cite{gaifman1982local}.
On the other hand, the small world property states that the radius of real networks is rather small.
This means, even for short formulas our neighborhood-based approach may practically be working on the whole graph instead of neighborhoods.
It would be interesting to analyze for which values of $r$
practical protrusion decompositions according to \Cref{thm:aasprotrusion} exist in the real world.

At last,
a big problem with all parameterized model-checking algorithms 
is their large run time dependence on the length of the formula.
Grohe and Frick showed that already on trees
every first-order model-checking algorithm takes worst-case time at least $f(|\varphi|)n$
where $f$ is a non-elementary tower function~\cite{frick2004complexity}.
So far, it is unclear whether this also holds in the average-case setting.
The results presented in this paper have a non-elementary
dependence on the length of the formula.
We are curious whether one can find average-case model-checking 
algorithms with elementary expected FPT run time.
In summary, many more obstacles need to be to overcome to obtain a truly practical general purpose meta-algorithm for complex networks.



\bibliography{references,conferences}



\end{document}